\newenvironment{claimproof}{\paragraph{Proof of Claim:}}{\hfill$\blacksquare$}
\newtheorem{theorem}{Theorem}[section]
\newtheorem{observation}[theorem]{Observation}
\newtheorem{proposition}[theorem]{Proposition}
\newtheorem{claim}{Claim}
\newtheorem*{claim*}{Claim}
\crefname{claim}{Claim}{Claims}
\newtheorem{lemma}[theorem]{Lemma}
\theoremstyle{definition}
\newtheorem{example}{Example}
\newtheorem{definition}{Definition}
\newtheorem{remark}{Remark}
\tikzstyle{vertex}=[draw, circle, fill, inner sep = 2pt]
\newlength{\RoundedBoxWidth}
\newsavebox{\GrayRoundedBox}
\newenvironment{GrayBox}[1]%
   {\setlength{\RoundedBoxWidth}{.93\columnwidth}
    \def\boxheading{#1}
    \begin{lrbox}{\GrayRoundedBox}
       \begin{minipage}{\RoundedBoxWidth}}%
   {   \end{minipage}
    \end{lrbox}
    \begin{center}
    \begin{tikzpicture}%
       \node(Text)[draw=black!20,fill=white,rounded corners,inner sep=2ex,text width=\RoundedBoxWidth]
             {\usebox{\GrayRoundedBox}};
        \coordinate(x) at (current bounding box.north west);
        \node [draw=white,rectangle,inner sep=3pt,anchor=north west,fill=white]
        at ($(x)+(6pt,.75em)$) {\boxheading};
    \end{tikzpicture}
    \end{center}}
\newenvironment{defproblemx}[1]{\noindent\ignorespaces%
                                \FrameSep=6pt%
                                \parindent=0pt%
                \begin{GrayBox}{#1}%
                \begin{tabular*}{\columnwidth}{!{\extracolsep{\fill}}@{\hspace{.1em}} >{\itshape} p{1.1cm} p{0.89\columnwidth} @{}}%
            }{
                 \vspace*{-1em}
                 \end{tabular*}%
                \end{GrayBox}%
                \ignorespacesafterend
            }
\newcommand{\defProblemTask}[3]{%
  \begin{defproblemx}{#1}
    Input: & #2 \\
    Task: & #3
  \end{defproblemx}
}
\newcommand{\mdsmtuples}{\textsc{MDSR-ML-Sets}}
\newcommand{\tdsmtuples}{\textsc{3DSR-ML-Sets}}
\newcommand{\mdsr}{\textsc{MDSR}}
\newcommand{\tdsrp}{\textsc{3DSR-Poset}}
\newcommand{\disunion}{\mathbin{\dot{\cup}}}
\DeclareMathOperator{\ML}{{ML}}
\newcommand{\LogicTRUE}{\mathsf{true}}
\newcommand{\LogicFALSE}{\mathsf{false}}
\newcommand{\NP}{\ensuremath{\mathsf{NP}}}
\newcommand{\Wone}{\ensuremath{\mathsf{W[1]}}}
\newcommand{\FPT}{\ensuremath{\mathsf{FPT}}}
\newcommand{\XP}{\ensuremath{\mathsf{XP}}}
\newcommand{\upperBoundDistance}{2\kappa d^2 +  4\kappa + 3d +1}
\DeclareMathOperator{\instable}{instable}
\DeclareMathOperator{\pend}{
    \succ 
    \overset{\raise0.3em\hbox{\text{\scriptsize{(rest)}}}}{\ldots}}
\newcommand{\barc}{\bar{c}}
\newcommand{\bard}{\bar{d}}
\newcommand{\barf}{\bar{f}}
\newcommand{\barg}{\bar{g}}
\newcommand{\dsmpo}{\textsc{MDSR-Poset}\xspace}
\newcommand{\CO}{\text{CO}}
\newcommand{\betterEdgese}{E^{\succeq e}}
\newcommand{\succml}{\succ_{\ML}}
\newcommand{\convexpath}[2]{
[
    create hullnodes/.code={
        \global\edef\namelist{#1}
        \foreach [count=\counter] \nodename in \namelist {
            \global\edef\numberofnodes{\counter}
            \node at (\nodename) [draw=none,name=hullnode\counter] {};
        }
        \node at (hullnode\numberofnodes) [name=hullnode0,draw=none] {};
        \pgfmathtruncatemacro\lastnumber{\numberofnodes+1}
        \node at (hullnode1) [name=hullnode\lastnumber,draw=none] {};
    },
    create hullnodes
]
($(hullnode1)!#2!-90:(hullnode0)$)
\foreach [
    evaluate=\currentnode as \previousnode using \currentnode-1,
    evaluate=\currentnode as \nextnode using \currentnode+1
    ] \currentnode in {1,...,\numberofnodes} {
  let
    \p1 = ($(hullnode\currentnode)!#2!-90:(hullnode\previousnode)$),
    \p2 = ($(hullnode\currentnode)!#2!90:(hullnode\nextnode)$),
    \p3 = ($(\p1) - (hullnode\currentnode)$),
    \n1 = {atan2(\y3,\x3)},
    \p4 = ($(\p2) - (hullnode\currentnode)$),
    \n2 = {atan2(\y4,\x4)},
    \n{delta} = {-Mod(\n1-\n2,360)}
  in
    {-- (\p1) arc[start angle=\n1, delta angle=\n{delta}, radius=#2] -- (\p2)}
}
-- cycle
}
\newcommand{\myone}{\textsf{true}}
\newcommand{\myzero}{\textsf{false}}
\begin{document}

\title{Multidimensional Stable Roommates\\ with Master List\thanks{Main work done while all authors were affiliated with TU~Berlin.}\thanks{An extended abstract of this work appears in
the \emph{Proceedings of the 16th International Conference on Web and Internet Economics}~\cite{DBLP:conf/wine/BredereckHKN20}. This full version now contains full proofs of all results.}}

\newcommand{\acktext}{
KH was supported by DFG Research Training Group 2434 ``Facets of Complexity''.
DK was partially supported by DFG project NI 369/19 while at TU~Berlin.
We thank the anonymous reviewers of the conference version of this paper.
}

\author[1]{Robert~Bredereck}
\author[2]{Klaus~Heeger}
\author[3]{Du\v{s}an~Knop}
\author[2]{Rolf~Niedermeier}

\affil[1]{
  Humboldt-Universit\"at zu Berlin, Institut f\"ur Informatik, Algorithm Engineering, Germany\protect\\
  \texttt{robert.bredereck@hu-berlin.de}
}
\affil[2]{
	Technische Universit\"at Berlin, Fakult\"{a}t~IV, Algorithmics and Computational Complexity, Germany\protect\\
  \texttt{\{heeger,rolf.niedermeier\}@tu-berlin.de}
  }
\affil[3]{
  Czech Technical University in Prague, Prague, Czech Republic\protect\\
  \texttt{dusan.knop@fit.cvut.cz}}
\date{}

\maketitle

\begin{abstract}
Since the early days of research in algorithms and complexity, the
computation of stable matchings is a core topic. While in the classic
setting the goal is to match up two agents (either from different
``gender'' (this is \textsc{Stable Marriage}) or ``unrestricted'' (this
is \textsc{Stable Roommates})), Knuth~[1976] triggered the study
of three- or multidimensional cases. Here, we focus on the study
of \textsc{Multidimensional Stable Roommates}, known to be
\NP-complete since the early~1990's. Many \NP-completeness results, however,
rely on general input instances that do not occur in at least
some of the specific application scenarios. With the quest for identifying
islands of tractability for \textsc{Multidimensional Stable Roommates}, 
we study the case of master
lists. Here, as natural in applications where agents express their
preferences based on ``objective'' scores, one roughly speaking assumes
that all agent preferences are ``derived from'' a central master list,
implying that the individual agent preferences shall be similar.
Master lists have been frequently studied in the two-dimensional (classic)
stable matching case, but seemingly almost never for the multidimensional
case. This work, also relying on methods from parameterized algorithm
design and complexity analysis, performs a first systematic study
of \textsc{Multidimensional Stable Roommates} under the assumption of
master lists.

\medskip

\noindent
\textbf{Keywords.} Stable matching, partially ordered sets, NP-hardness,
	parameterized complexity, distance-from-triviality parameterization
\end{abstract}

\section{Introduction}
Computing stable matchings is a core topic in the intersection of algorithm design, algorithmic game theory, and computational social choice.
It has numerous applications such as higher education admission in several countries~\cite{BALINSKI99,BraunDK10},
kidney exchange~\cite{RothSU04}, assignment of dormitories~\cite{PerachPR08}, P2P-networks~\cite{GaiLMMRV07}, wireless three-sided networks~\cite{CuiJ13}, and spatial crowdsourcing~\cite{LiCYWC19}.
The research started in the 1960's with the seminal work of Gale and Shapley~\cite{GaleShapley1962}, introducing the \textsc{Stable Marriage} problem:
given two different types of agents, called ``men'' and ``women'',
each agent of one gender has preferences (i.e., strict orders aka rankings) over the agents of the opposite gender.
Then, the task is to find a matching which is stable.
Informally, a matching is \emph{stable} if no pair of agents can improve by
breaking up with their currently assigned partners and
instead matching to each other.

Many variations of this problem have been studied; \textsc{Stable Roommates}, with only one type of agents, is among the most
prominent ones.
Knuth~\cite{Knuth76} asked for generalizing \textsc{Stable Marriage} to dimension three, i.e., having three types of agents and having to match the agents to groups of size three, where any such group contains exactly one agent of each type.
Here, a matching is called \emph{stable} if there is no group of three agents which would improve by being matched together.
We focus on the \textsc{Multidimensional Stable Roommates} problem.
Here, there is only one type of agents, now having preferences over 
\emph{$(d-1)$-sets} (that is, sets of size $d-1$) of (the other) agents.

\begin{example}
 \label{example:intro}
 Consider the following instance of \textsc{3-Dimensional Stable Roommates} with six agents $a$, $b$, $c$, $d$, $e$, and~$f$.
 \begin{align*}
  a &: \{b, d\} \succ \{b, c\} \succ \{b, e\} \succ \{b, f\} \succ \{c, d\} \succ \{c, e\} \succ \{c, f\} \succ \{d, e\} \succ \{d, f\} \succ \{e, f\}\\
  b &: \{a, d\} \succ \{a, c\} \succ \{a, e\} \succ \{a, f\} \succ \{c, d\} \succ \{c, e\} \succ \{c, f\} \succ \{d, e\} \succ \{d, f\} \succ \{e, f\}\\
  c &: \{a, b\} \succ \{a, d\} \succ \{a, e\} \succ \{b, d\} \succ \{a, f\} \succ \{b, e\} \succ \{b, f\} \succ \{d, e\} \succ \{d, f\} \succ \{e, f\}\\
  d &: \{a, b\} \succ \{a, c\} \succ \{a, e\} \succ \{a, f\} \succ \{b, c\} \succ \{b, e\} \succ \{b, f\} \succ \{c, e\} \succ \{c, f\} \succ \{e, f\}\\
  e &: \{a, b\} \succ \{a, c\} \succ \{a, d\} \succ \{a, f\} \succ \{b, c\} \succ \{b, d\} \succ \{b, f\} \succ \{c, d\} \succ \{c, f\} \succ \{d, f\}\\
  f &: \{a, b\} \succ \{a, c\} \succ \{a, d\} \succ \{a, e\} \succ \{b, c\} \succ \{b, d\} \succ \{b, e\} \succ \{c, d\} \succ \{c, e\} \succ \{d, e\}
 \end{align*}
 Matching~$M_1: = \{\{a, b, c\}, \{d, e, f\}\}$ is not stable, as $\{a, b, d\}$ is blocking because $a $ prefers~$\{b, d\}$ to $\{b, c\}$, agent~$b$ prefers $\{a, d\}$ to $\{a, c\}$, and $d$ prefers $\{a, b\}$ to $\{e, f\}$.
 However, matching~$M_2 := \{ \{a, b, d\}, \{c, e, f\}\}$ is stable.
\end{example}

As this problem is \NP-complete in general~\cite{NH91}, we focus on
the case where the preferences of all agents are derived from an ordered master list.
For instance, master lists naturally arise when the agent preferences are based on scores, e.g., when assigning junior doctors to medical posts in the UK~\cite{IMS08} or when allocating students to dormitories~\cite{PerachPR08}. Master lists have been frequently used in the context
of (two-dimensional) stable matchings~\cite{BiroIS11,IMS08,OMalley07,PerachPR08} or the related \textsc{Popular Matching} problem~\cite{KavithaNN14}.
We generalize master lists to the multidimensional setting in two natural ways.
First, following the above spirit of preference orders, we assume that
the master list consists of sets of size~$d-1$.
Each agent then derives its preferences from the master list by just deleting all $(d-1)$-sets containing the agent itself.
For example, in \Cref{example:intro} agents~$d$, $e$, and $f$ derive their preferences from the list $ \{a, b\} \succ \{a, c\} \succ \{a, d\} \succ \{a, e\} \succ \{a, f\} \succ \{b, c\} \succ \{b, d\} \succ \{b, e\} \succ \{b, f\} \succ \{c, d\} \succ \{c, e\} \succ \{c, f\} \succ \{d, e\} \succ \{d, f\} \succ \{e, f\}$, while $a$, $b$, and $c$ do not.
In the second way we study, the master list is a poset over the set of agents.
In this case, any agent~$a$ shall prefer a $(d-1)$-set~$t$ to a $(d-1)$-set~$t'$ if $t$ is ``better'' than $t'$ according to the master list, where ``better'' means that $a$ does not prefer the $k$-th best agent of $t'$ to the $k$-th best agent from $t$ (according to the master list).
For any tuples~$t$,~$t'$ for which neither $t$ is ``better'' than $t'$ nor $t'$ is ``better'' than $t$, an agent may prefer~$t$ to $t'$ or $t'$ to $t$ independently of the other agents.
More formally, we require that any agent prefers a set of $d-1$ agents $t$ to any set of $d-1$ agents $t'$ dominated by $t$, where we say that $t= \{a_1, \dots, a_{d-1}\}$ dominates $t' = \{b_1, \dots, b_{d-1}\}$ if $a_i = b_i$ or the master list prefers~$a_i$ to $b_i$ for all $i\in [d-1]$ and $a_i \neq b_i$ for some $i\in [d-1]$.
Then for any two sets~$\{a_1, \dots, a_{d-1}\}$ and $\{b_1, \dots, b_{d-1}\}$ of $d-1$ agents with~$\{a_1, \dots, a_{d-1}\}$ dominating~$\{b_1, \dots, b_{d -1}\}$, the preferences of any agent must fulfill that the set $\{a_1, \dots, a_{d-1}\}$ is before $\{b_1, \dots, b_{d-1}\}$.
In \Cref{example:intro}, the preferences of $c$, $d$, $e$, and $f$ are derived from the list of agents $a \succ b \succ c \succ d \succ e \succ f$.
For such master lists, we also relax the condition that the master list is a strict order (that is, for every two different agents $a$ and $b$, either $a$ is better than $b$ or $b$ is better than $a$ in the master lits) by the condition that the master list is a partially ordered set (poset), and consider the parameterized complexity with respect to parameters measuring the similarity to a strict order.
Preferences where such a parameter is small might arise if there are few similar rankings, and each agent derives its ranking from these orders, or if the objective score consists of several attributes and each agent weights these attributes slightly differently.
From these rankings of each agent, a master poset arises by saying that agent~$a$ is better than agent~$b$ if and only if all agents (except for~$a$ and $b$) agree on this.

\subsection{Related work}
\textsc{Stable Roommates} can be solved in linear time~\cite{Irving85}.
If the preferences are incomplete (that is, two agents may prefer being unmatched to being matched together) and derived from a strict master list, then both \textsc{Stable Marriage} and \textsc{Stable Roommates} admit a unique stable 
matching~\cite{IMS08}.\footnote{Actually, Irving et al.~\cite{IMS08} only state this for \textsc{Stable Marriage}, but the generalization to \textsc{Stable Roommates} is trivial (actually, the whole statement is straightforward).}

If the preferences are complete but contain ties, then there are three different generalizations of stability studied in the literature.
\emph{Weak stability} considers a pair to be blocking if both agents in this pair prefer each other to their assigned partner in the matching.
\emph{Strong stability} considers a pair to be blocking if one agent prefers the pair to the agent assigned to it, and the other agent does not prefer the pair to the agent assigned to it.
\emph{Super-stability} considers a pair to be blocking if both agents in this pair do not prefer their assigned partner to each other.
Finding a weakly stable matching in a \textsc{Stable Roommates} instance is \NP-complete~\cite{Ronn90}.
However, if the preferences are complete, derived from a master list, and contain ties, then one can decide whether a given pair of agents in a \textsc{Stable Marriage} instance is matched together in some weakly stable matching in linear time~\cite{IMS08}
(which is \NP-complete for general complete preferences~\cite{ManloveIIMM02}), and a weakly stable matching in a \textsc{Stable Roommates} instance always exists and can be found in linear time.
For incomplete preferences derived from a master list with ties, an $O(\sqrt{n}m)$-time algorithm for finding a strongly stable matching is known~\cite{OMalley07} (where $n$ is the number of agents and $m$ is the number of acceptable pairs), while for general preferences, only an $O(mn)$-time algorithm is known~\cite{Kunysz16}.
Finding a weakly stable matching in a \textsc{Stable Roommates} instance, however, is \NP-complete if the preferences contain ties, are incomplete, and are derived from a master list~\cite{IMS08}.
Further examples of \textsc{Stable Marriage} problems becoming easier for complete preferences derived from a master list are given by Scott~\cite[Chapter 8]{Scott05}.
There is quite some work for
\textsc{3-Dimensional Stable Marriage}~\cite{Danilov03,OstrovskyR14,Wu16,ZhongB19}, but less so for
\textsc{3-Dimensional Stable Roommates}.

While master lists are a standard setting for finding 2-dimensional stable matchings~\cite{BiroIS11,IMS08,Kamiyama19,OMalley07,PerachPR08}, we are only aware of few works combining multidimensional stable matchings with master lists.
	Escamocher and O'Sullivan~\cite{EO18} gave a recursive formula for the number of 3-dimensional stable matchings for cyclic preferences (i.e., the agents are partitioned into three sets $A_0$, $A_1$, and~$A_2$, and each agent from $A_i$ only cares about the agent from $A_{i+1}$ (modulo~3) it is matched to) derived from master lists.
Cui and Jia~\cite{CuiJ13} showed that if the preferences are cyclic and the preferences of the agents from $A_1$ are derived from a master list, while each agent from~$A_3$ is indifferent between all agents from $A_1$, then a stable matching always exists and can be found in polynomial time, but it is \NP-complete to find a maximum-cardinality stable matching.
There is some work on $d$-dimensional stable matchings and cyclic
preferences (without master lists)~\cite{Hofbauer16,LamP19}.

Deineko and Woeginger~\cite{DeinekoW13} showed that \textsc{3-Dimensional Stable Roommates} is \NP-complete for preferences derived from a metric space.
For the special case of the Euclidean plane, Arkin et al.~\cite{ArkinBEOMP09} showed that a stable matching does not always exist, but
left the complexity of deciding existence open.

Huang~\cite{DBLP:conf/esa/Huang07} showed that \textsc{3-Dimensional Stable Roommates} is \NP-complete even if the preferences are \emph{consistent}, i.e., for each agent~$a$, there exists a strictly ordered preference list~$\succ_a$ over all other agents such that for any two pairs $\{b, c\}$ and $\{d,e\}$ of agents with $b \succ_a d$ and $c \succ_a e$ it holds 
that $a$ prefers $\{b, c\}$ to $\{d, e\}$.
Note that in a \textsc{3-dimensional Stable Roommates} instance, the preferences of all agents are derived from a strict order $\succ$ as master poset if and only if the preferences of every agent are consistent, and this can be witnessed by the strict order $\succ$ for every agent. 

Iwama et al.~\cite{IwamaMO07}
introduced the \NP-complete
\textsc{Stable Roommates with Triple Rooms}, where each agent has preferences over all other agents, and prefers a 2-set $p$ of agents to a 2-set~$p'$ if it prefers the best-ranked agent of $p$ to the best-ranked agent of $p'$, and the second-best agent of $p$ to the second-best agent of $p'$.
They showed that this problem is \NP-complete.

Our scenario of
\textsc{Multidimensional Stable Roommates} can be seen as a special case of finding core-stable outcomes for hedonic games where each agent
prefers size-$d$ coalitions over singleton-coalitions
which are then preferred over all other coalitions~\cite{Rothe15,Woe13}.
Notably, there are fixed-parameter tractability results
for hedonic games (without
fixed ``coalition'' size as we request) with respect to
treewidth (MSO-based)~\cite{HKMO19,Pet16}.
Other research considers hedonic games with fixed coalition size~\cite{CFH19},
but aims for Pareto optimal outcomes instead of core stability which we consider.

To the best of our knowledge, the parameterized complexity of multidimensional stable matching problems has not yet been investigated.

\subsection{Our contributions}
Our results are surveyed in~\Cref{tab:results}.
To our surprise, even if the preferences are derived from a master list of 2-sets of agents (this is the special case of dimension $d=3$), a stable matching is not guaranteed to exist (\Cref{sec:no-stable-matching}).
We
use such an instance not admitting a stable matching to show that \textsc{Three-Dimensional Stable Roommates} is \NP-complete also when restricted to preferences derived from a master list of 2-sets (\Cref{tNPc}).

If the preferences are derived from a strict master list of agents, then a unique stable matching always exists and can be found by a straightforward algorithm (\Cref{tconsistent}).
When relaxing the condition that the master list is strictly ordered to being a poset (i.e., the master list may also declare two agents ``incomparable'' instead of stating that one is better than the other, but if agent~$a$ is better than $b$ and $b$ is better than $c$, then also $a$ is better than $c$), then the problem clearly is \NP-complete, as a master list which ties all agents does not impose any condition on the preferences of the agents, and \textsc{Three-Dimensional Stable Roommates} is \NP-complete.
Consequently, in the spirit of 
``distance from triviality''-parameterization~\cite{GHN04,Nie06}, we investigate the parameterized complexity with respect to several parameters
measuring the distance of the poset to a strict order.
Note that our algorithm can also solve the corresponding search problem.
For the parameter maximum number of agents incomparable to a single agent, we show that \textsc{Multidimensional Stable Roommates} is fixed-parameter tractable~(\FPT) (even
when $d$ is part of the input) (\Cref{thm:FPT-kappa}).
If this parameter is bounded, then this results in a
special case of 3-dimensional stable matching problems which can be solved by an ``efficient'' nontrivial algorithm.
Considering the stronger parameter width of the master poset, we show \textsc{Three-dimensional Stable Roommates} to be \Wone-hard, and this is true also for the orthogonal parameter deletion (of agents) distance to a strictly ordered master poset (\Cref{tWh}).
We also show that \textsc{Three-Dimensional Stable Roommates}
is \NP-complete even with a strict order of the agents as a master poset if each agent is allowed to declare an arbitrary set of $2$-sets unacceptable (\Cref{thm:incomplete-w-h}), contrasting the polynomial-time solvability when every agent accepts every $(d-1)$-set of other agents (\Cref{tconsistent}).
  \begin{table}[t]
\caption{Results overview: six variations of \textsc{Multidimensional Stable Roommates}.
All three studied parameters measure the similarity of the master poset to a strict order.
Note that the parameter ``max.\ number~$\kappa$ of incomparable agents'' is weaker than the parameter ``Width of master poset'', and both parameters are incomparable to the parameter ``Deletion distance to strictly ordered master poset''.}
    \begin{center}
      \begin{tabular}{c | c }
        Setting/Parameter & Complexity\\
        \hline
        \multicolumn{1}{l|}{Master list of 2-sets} & \NP-complete for $d=3$ (\Cref{tNPc}) \\
        \hline\hline
        \multicolumn{1}{l|}{Master poset of agents:} & \\
        \hline
        Linear master poset of agents & linear time (\Cref{tconsistent}) \\
        $max.~number~\kappa$ of incomparable agents & $O(n^2) + (\kappa^2 2^{12\kappa})^{O(\kappa^2 2^{12\kappa})} n$ (\Cref{thm:FPT-kappa}) \\
        Width of master poset & \Wone-hard for $d=3$ (\Cref{thm:width_w-h}) \\
        Incomplete preferences, strictly ordered master poset & \NP-complete for $d \ge 3$ (\Cref{thm:incomplete-w-h})\\
        Deletion distance to strictly ordered master poset & \Wone-hard for $d=3$ (\Cref{tWh})
      \end{tabular}
    \end{center}
\label{tab:results}
  \end{table}

  \subsection{Structure of the paper}
  After introducing basic notation in \Cref{sec:preliminaries}, we consider
  \textsc{3-Dimensional Stable Roommates} with master list of $(d-1)$-sets of agents
  in \Cref{sec:ml-tuples} and show its \NP-completeness.
  Then, we turn to master posets of agents.
  We show in \Cref{sec:strict-order} that \textsc{3-Dimensional Stable Roommates} is easy if the master list is strictly ordered.
  Moreover, we consider the case that preferences are incomplete, or that the master list is a poset and investigate parameters measuring the similarity to a strict order in \Cref{sec:posets,sec:del-dist}.
  In \Cref{sec:incomplete}, we show that \textsc{3-Dimensional Stable Roommates} is \NP-complete if the master list is strict, but every agent may declare an arbitrary subset of $(d-1)$-sets to be not acceptable.
  Finally, we conclude in \Cref{sec:conclusion}.

\section{Preliminaries}
\label{sec:preliminaries}

Let $[n]\coloneqq \{1, 2, 3, \dots, n\}$ and $[n, m]\coloneqq \{n, n+ 1, \dots, m\}$.
A set of cardinality $d$ will also be called \emph{$d$-set}.
For a set $X$ and an integer~$d$, we denote by $\binom{X}{d}$ the set of size-$d$ subsets of $X$.
A \emph{preference list} $\succ$ over a set $X$ is a strict order of $X$.
We call a set of pairwise disjoint $d$-subsets of a set~$A$ of agents a \emph{$d$-dimensional matching}.
Usually, $d$~is clear from the context; if so, then we may 
only write ``matching''.
Given a $d$-dimensional matching $M$ and an agent~$a$, we denote by~$M(a)$ the $(d-1)$ set~$t$ such that $t\cup \{a\} \in M$; if for~$a$ no such $(d-1)$-set exists, then $M(a)\coloneqq \emptyset$.
We say that an agent~$a$ \emph{prefers} a $(d-1)$-set $t$ to a $(d-1)$-set $t'$ if $t \succ_a t$ where $\succ_a$ is the preference list of~$a$. 
Any agent prefers any $(d-1)$-set not containing itself to being unmatched.
A \emph{blocking $d$-set} for a $d$-dimensional matching~$M$ is a set of $d$ agents~$\{a_1, a_2,\dots,  a_d\}$ such that, for all~$i\in [d]$, either $a_i$ is unmatched in $M$ or $\{a_1, a_2, \dots, a_d\} \setminus \{a_i\} \succ_{a_i} \{b_1^i, b_2^i, \dots, b_{d-1}^i\}$, where~$\{b_j^i : j\in [d-1]\}\cup \{a_i\}\in M$.
A matching is called \emph{stable} if it does not admit a blocking $d$-set.
Now, we are ready to define our central problem.

\defProblemTask{\textsc{Multidimensional Stable Roommates} (\textsc{MDSR})}
{An integer $d$, a set $A$ of agents together with a preference list $\succ_a$ over $\binom{A\setminus \{a\}}{d-1}$ for each agent~$a\in A$.}
{Decide whether a stable matching exists.}
Note that we require each agent to list each size-$(d-1)$ set of other agents.
We denote by \textsc{$\ell$-DSR} the restriction of \textsc{MDSR} to instances with $d = \ell$.
It is known that a 3-dimensional stable matching does not always exist, and \textsc{3-DSR} is \NP-complete~\cite{NH91}.

A \emph{master list} $\ML$ is a preference list over $\binom{V}{d-1}$.
A preference list $\succ_v$ for an agent~$a$ is \emph{derived from a master list} $\ML$
by deleting all $(d-1)$-sets containing~$a$.

\begin{example}
 Let $A = \{a_1, a_2, a_3, a_4\}$ be a set of agents, $d=3$, and let $\{a_1, a_2\} \succ \{a_2, a_4\}\succ \{a_1, a_3\} \succ \{a_3, a_4\} \succ \{a_2, a_3\} \succ \{a_1, a_4\}$ be the master list.

 Then the preferences of $a_1$ are $\{a_2, a_4\}\succ_{a_1} \{a_3, a_4\} \succ_{a_1} \{a_2, a_3\} $, the preferences of $a_2$ are $\{a_1, a_3\} \succ_{a_2} \{a_3, a_4\} \succ_{a_2} \{a_1, a_4\}$, the preferences of $a_3$ are $\{a_1, a_2\} \succ_{a_3} \{a_2, a_4\} \succ_{a_3} \{a_1, a_4\}$, and the preferences of $a_4$ are $\{a_1, a_2\} \succ_{a_4} \{a_1, a_3\} \succ_{a_4} \{a_2, a_3\} $.
\end{example}

Next, we define the \textsc{Multidimensional Stable Roommates with Master List of ${(d-1)}$-Sets} problem (\mdsmtuples).
\defProblemTask{\mdsmtuples}
{An integer $d$, a set $A$ of agents, and a master list $\succ_{\ML}$ over $\binom{A}{d-1}$, from which the preference list of each agent is derived.}
{Decide whether a stable matching exists.}
Again, we denote by \textsc{$\ell$-DSR-ML-Sets} the problem \mdsmtuples\xspace restricted to instances with $d = \ell$.

We now turn to the case that the master list orders single agents instead of $(d-1)$-sets of agents.
We first need the definition of a partially ordered set.

  A \emph{partially ordered set (poset)} is a  pair $(V, \succeq)$, where $\succeq$ is a binary relation over the set $V$ such that      (i) $v\succeq v$ for all $v\in V$,
      (ii) $v\succeq w$ and $w\succeq v$ if and only if $ v = w$, and
      (iii) if $u\succeq v$ and $v\succeq w$, then $u\succeq w$.

  If $v\succeq w$ and $v\neq w$, then we write $v \succ w$.
  If neither $v\succeq w$ nor $w\succeq v$, then we say that~$v$ and $w$ are \emph{incomparable}, and write $v \perp w$.
  Instead of $v\succeq w$ or $v\succ w$, we may also write $w\preceq v$ 
  or $w \prec v$.
  A \emph{weak order} is a poset such that for every $a, b$, and $c$ with $a \perp b$ and $b \perp c$ also $a\perp c$ hold.

  A \emph{chain} of~$(V, \succeq)$ is a subset $X=\{x_1, x_2, \dots, x_k\}\subseteq V$ such that $x_i \succ x_{i+1}$ for all~$i\in [k-1]$.
  An \emph{antichain} is a subset $X \subseteq V$ such that for all $v, w\in X$ with $v\neq w$, we have $v\perp w$.
  The \emph{width} of a poset is the size of a maximum antichain.

  For a poset $\succ$ over a set $V$, $\kappa_{\succ} (v) \coloneqq |\{w\in V : v\perp w\}|$ be the number of elements incomparable with~$v$.
  We define $\kappa (\succ)\coloneqq \max_{v\in V} \kappa_{\succ} (v)$.
  As an example, consider the poset $(\{v_1, v_2, v_3, v_4\}, \succ) $ with $v_1 \succ v_2$, $v_2 \succ v_3$, and $v_1 \succ v_4$.
  Here, $v_1$ is comparable to all other agents, $v_2$ and $v_3$ are only incomparable to $v_4$, and $v_4$ is incomparable to $v_2$ and $v_3$, and we have $\kappa (\succ ) = 2$.
Note that if $\bar{G}_\succ$ is the incomparability graph of the poset $(V, \succ)$ (i.e., the graph whose vertex set is~$V$ and there is an edge between $v,w \in V$ if and only if $v\perp w$), then $\Delta (\bar{G}_\succ ) = \kappa (\succ)$, where $\Delta (\bar{G}_\succ)$ is the maximum vertex degree in~$\bar{G}_\succ$.
If $\succ$ is a weak order, then the parameter~$\kappa (\succ)$ equals the maximum size of a tie.

Dilworth's Theorem \cite{Dilworth50} states that the width of a poset is the minimum number of chains such that each element of the poset is contained in one of these chains.

Having defined posets, we now show the connection to \textsc{Multidimensional Stable Roommates} by using preferences derived from a poset of agents.

\begin{definition}
 Given a set~$A$ of agents, a poset $(A, \succ_{\ML})$ (called the \emph{master poset}), and an integer~$d$, a preference list $\succ_a$ on $\binom{A\setminus \{a\}}{d-1}$ \emph{is derived from} $\succ_{\ML}$ if whenever $a_1, \dots, a_{d -1}$ and $b_1,\dots, b_{d -1}$ with $a_i\succeq_{\ML} b_i$ for all $i\in [d-1]$, then $\{a_1, \dots, a_{d -1}\} \succeq_{v} \{b_1, \dots, b_{d -1}\}$.
\end{definition}

\begin{example}
 Let $a_1 \succ a_2 \succ a_3 \succ a_4 \succ a_5$ be a master poset.
 Then $a_1$ has one of the following
 two preferences:
  $\{a_2, a_3\} \succ_{a_1} \{a_2, a_4\} \succ_{a_1}\{a_2, a_5\} \succ_{a_1} \{a_3, a_4\} \succ_{a_1} \{a_3, a_5\} \succ_{a_1} \{a_4, a_5\}$ or
  $\{a_2, a_3\} \succ_{a_1} \{a_2, a_4\} \succ_{a_1} \{a_3, a_4\}\succ_{a_1}\{a_2, a_5\}  \succ_{a_1} \{a_3, a_5\} \succ_{a_1} \{a_4, a_5\}$.
  
 For the master poset~$\succ$ with $a_2 \succ a_3$, $a_2 \succ a_4$, $a_3 \succ a_1$, and $a_4 \succ a_1$ (and $a_3 \perp a_4$), agent $a_1$ has one of the following preferences:
 $\{a_2, a_3\} \succ_{a_1} \{a_2, a_4\} \succ_{a_1} \{a_3, a_4\}$ or
  $\{a_2, a_4\} \succ_{a_1} \{a_2, a_3\} \succ_{a_1} \{a_3, a_4\}$.

\end{example}

The formal definition of \dsmpo reads as follows.
\defProblemTask{\dsmpo}
{An \textsc{MDSR} instance $\mathcal{I} = (A, (\succ_a)_{a\in A}, d)$ and a master poset $\succeq_{\ML}$ such that the preferences $\succ_a$ of each agent $a$ are derived from $\succeq_{\ML}$.}
{Decide whether there exists a stable matching in $\mathcal{I}$.}
Again, we denote by \textsc{$\ell$-DSR-Poset} the problem \dsmpo\ restricted to instances with $d = \ell$.

\paragraph*{Parameterized Complexity.}

A \emph{parameterized language~$L$} over a finite alphabet~$\Sigma$ is a subset~$L\subseteq \Sigma^* \times \mathbb{N}$.
A parameterized language~$L$ is \emph{fixed-parameter tractabile} if there exists an algorithm which correctly decides for every instance $(I, k) \in \mathbb{N}$ whether $(I, k) \in L$ in FPT-time, that is, $f(k) \cdot |(I, k)|^{O(1)}$, where $|(I, k)| := |I|+ k$ with $|I|$ being the size of $I$ and $f$ being a computable function.
The class of all fixed-parameter tractable parameterized languages is denoted by \FPT.

To indicate that a parameterized language is not fixed-parameter tractable, one usually uses the notions of parameterized reductions and \Wone-hardness.
A \emph{parameterized reduction} from a parameterized language $L$ to a parameterized language $L'$ is an algorithm $\mathcal{A}$ which computes, given an instance $(I, k)$ of $L$, an instance $(I', k' ) \in L'$ with $k' \le g(k)$ for some computable function $g$ such that $(I, k) \in L$ if and only if $(I', k') \in L'$ and runs in FPT-time.
\Wone is the class of parameterized problems which are equivalent to \textsc{Clique} parameterized by solution size under parameterized reductions.
To show \Wone-hardness of a parameterized problem~$L$, one usually gives a parameterized reduction to \textsc{Clique} or another \Wone-hard problem.
Under the complexity-theoretic assumption $\FPT \neq \Wone$, \Wone-hardness of a parameterized problem implies that this problem is not in \FPT.
For more details on parameterized complexity, we refer to standard textbooks~\cite{CyganFKLMPPS15,DBLP:series/txcs/DowneyF13,FG06,Nie06}.

\section{Three-dimensional stable roommates with master list of $2$-sets}
\label{sec:ml-tuples}

In this section, we consider the case that the preferences are complete and derived from a master list of $(d-1)$-sets.
In \Cref{sec:no-stable-matching}, we give a small instance with six agents not admitting a stable matching. We use this to show in \Cref{sNPh} that already for dimension $d=3$ and preferences derived from a master list of 2-sets, deciding whether an instance admits a stable matching is \NP-complete.

\subsection{A 3-DSR-ML instance not admitting a stable matching}
\label{sec:no-stable-matching}

Subsequently, we present a \tdsmtuples\xspace instance $\mathcal{I}_{\operatorname{instable}} $ with six agents not admitting a stable matching, showing that stable matchings do not have to exist even in the presence of master lists.
This somewhat surprising observation will be crucial for our 
NP-hardness proof in \Cref{sNPh}.

The instance $\mathcal{I}_{\operatorname{instable}} $ has six agents $a$, $b$, $c$, $d$, $e$, and $f$.
The master list is:
    $\{a, b\}
    \succ \{a, c\} \succ \{a, d\} \succ \{a, f\}  \succ \{b, e\}\succ \{c, d\} \succ \{a, e\} \succ \{b, f\}
    \succ \{c, e\}\succ \{b, d\}\succ \{d, e\} \succ \{b, c\} \succ \{c, f\} \succ \{d, f\} \succ \{e, f\}.$

  \begin{observation}\label{oinstability}
    Instance~$\mathcal{I}_{\operatorname{instable}}$
    does not admit a stable matching.
  \end{observation}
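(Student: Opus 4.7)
The plan is a finite case analysis. Since the preferences are complete and an agent strictly prefers any $2$-set of other agents to being unmatched, any stable matching must match all six agents: otherwise at least three agents are unmatched, and any three of them would immediately form a blocking triple. Hence it suffices to check the $\binom{6}{3}/2=10$ partitions of $\{a,b,c,d,e,f\}$ into two disjoint triples and to exhibit a blocking triple in each case.

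For each candidate matching $M$, I would produce a single blocking triple $\{x,y,z\}$ and verify blockingness by direct comparison of positions in the master list: for each of $x,y,z$, the $2$-set formed by the other two lies strictly earlier in $\ML$ than the $2$-set assigned by $M$. Concretely, I expect the following assignments of blocking triples to suffice: $\{\{a,b,c\},\{d,e,f\}\}$ is blocked by $\{a,d,e\}$; $\{\{a,b,d\},\{c,e,f\}\}$ by $\{a,c,e\}$; $\{\{a,b,e\},\{c,d,f\}\}$ by $\{b,c,d\}$; $\{\{a,b,f\},\{c,d,e\}\}$ and $\{\{a,c,e\},\{b,d,f\}\}$ both by $\{a,c,d\}$; $\{\{a,c,d\},\{b,e,f\}\}$ by $\{a,b,e\}$; $\{\{a,d,e\},\{b,c,f\}\}$ by $\{a,b,d\}$; and finally $\{\{a,c,f\},\{b,d,e\}\}$, $\{\{a,d,f\},\{b,c,e\}\}$, and $\{\{a,e,f\},\{b,c,d\}\}$ each by $\{a,b,c\}$.

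Each individual verification amounts to three rank comparisons in $\ML$. For example, to check that $\{a,d,e\}$ blocks $\{\{a,b,c\},\{d,e,f\}\}$ one observes that $\{d,e\}\succ_{\ML}\{b,c\}$, $\{a,e\}\succ_{\ML}\{e,f\}$, and $\{a,d\}\succ_{\ML}\{d,f\}$, so agents $a$, $d$, and $e$ all strictly improve; the remaining nine checks proceed identically.

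The main (minor) obstacle is the bookkeeping rather than any conceptual difficulty: one must carefully use the correct positions in $\ML$ for all thirty $(current,proposed)$ pairs of $2$-sets that appear across the ten matchings. To keep the proof tidy, I would organise the ten cases as a table whose rows list the matching, the proposed blocking triple, and the three pairwise rank comparisons in $\ML$ that witness blockingness, after which the conclusion that $\mathcal{I}_{\operatorname{instable}}$ admits no stable matching is immediate.
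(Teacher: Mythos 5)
Your proof is correct and follows essentially the same approach as the paper: an exhaustive check of the ten perfect matchings of the six agents, exhibiting one blocking triple for each (the paper's Table of blocking 3-sets). Five of your witness triples differ from the paper's choices, but I verified all ten against the master list and each one is indeed blocking, and your preliminary observation that any stable matching must be one of these ten partitions is also sound.
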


  \begin{proof}
    \Cref{tbt} presents for each of the $\frac{\binom{6}{3}}{2} = 10$ matchings a blocking 3-set.
    \begin{table}[t]

\caption{A blocking 3-set for each matching in instance $\mathcal{I}_{\operatorname{instable}}$ from \Cref{oinstability}.}    \begin{center}
      \begin{tabular}{c | c}
        Matching & Blocking 3-set \\
        \hline
        $\{a, b, c\}, \{d, e, f\}$ & $\{a, d, e\}$ \\
        $\{a, b, d\}$, $\{c, e, f\}$ & $ \{a, c, e\}$\\
        $\{a, b, e\}$, $\{c, d, f\}$ & $\{b, c, d\}$ \\
        $\{a, b, f\}$, $\{c, d, e\}$ & $\{a, c, d\}$\\
        $\{a, c, d\}$, $\{b, e, f\}$ & $\{a, b, e\}$\\
      \end{tabular}~~~~~~~~\begin{tabular}{c | c}
        Matching & Blocking 3-set \\
        \hline
        $\{a, c, e\}$, $\{b, d, f\}$ & $\{a, b, e\}$\\
        $\{a, c, f\}$, $\{b, d, e\}$ & $\{a, b, e\}$\\
        $\{a, d, e\}$, $\{b, c, f\}$ & $\{a, b, e\}$\\
        $\{a, d, f\}$, $\{b, c, e\}$ & $\{a, b, e\}$\\
        $\{a, e, f\}$, $\{b, c, d\}$ & $\{a, c, d\}$
      \end{tabular}

\label{tbt}
    \end{center}
    \end{table}
  \end{proof}

\subsection{NP-completeness of 3-DSR-ML}
\label{sNPh}

Using the instance $\mathcal{I}_{\text{instable}}$ from~\Cref{sec:no-stable-matching}, we show $\NP$-completeness of \tdsmtuples,
reducing from the $\NP$-complete problem
\textsc{1-in-3 Positive 3-Occurrence-SAT}~\cite{Gonzalez85}.
\defProblemTask{\textsc{1-in-3 Positive 3-Occurrence-SAT}}
{A boolean formula in conjunctive normal form, where each clause contains {\emph{ exactly}} three {\emph{pairwise different}} variables, and each variable appears exactly three times and only non-negatedly in the formula.}
{Decide whether there exists a truth assignment satisfying exactly one literal from each clause.}

  The basic idea of the reduction is the following.
  For each clause $C_j$, we have two agents~$c_j$ and~$d_j$. For each variable~$x_i$, we have three agents $x_i^1$, $x_i^2$, and~$x_i^3$, one for each occurrence of the variable.
  Additionally, there are six agents~$z_i^{k, \ell}$, $\ell\in [6]$, for each literal of a clause.
  In any stable matching the agents $c_j$ and $d_j$ are matched to an agent~$x_i^\ell$ corresponding to a variable occurring in clause~$C_j$.
  Now consider a variable~$x_i$.
  If agent~$x_i^k$ (corresponding to the $k$-th occurrence of~$x_i$) is matched to a 2-set $\{c_j, d_j\}$ (corresponding to clause~$C_j$) for all~$k\in [3]$, then a stable matching can match $\{z_i^{k,1}, z_i^{k,2}, z_i^{k,3}\}$ and $\{z_i^{k,4}, z_i^{k,5}, z_i^{k,6}\}$.
  If agent~$x_i^k$ (corresponding to occurrences of variable~$x_i$) is not matched to a 2-set~$\{c_j, d_j\}$ (corresponding to clause~$C_j$) for all~$k\in [3]$, then we can match $\{x_1^i, x_2^i, x_3^i\}$ and then again match $\{z_i^{k,1}, z_i^{k,2}, z_i^{k,3}\}$ and $\{z_i^{k,4}, z_i^{k,5}, z_i^{k,6}\}$.
  If, however, agent~$x_i^k$ is matched to a 2-set of the form~$\{c_j, d_j\}$ for one or two values of~$k$, then an agent $x_i^j$ which is not matched to a 2-set~$\{c_j, d_j\}$ will together with~$z_i^{k,1}$,~$z_i^{k,2}$,~$z_i^{k,3}$,~$z_i^{k,4}$, and~$z_i^{k,5}$ form a subinstance of six agents which does not admit a stable matching, and, thus, the resulting matching will not be stable.

  Summarizing, any stable matching matches $c_j$ and $d_j$ to exactly one variable occurring in clause~$C_j$, and for each variable $x_i$, either all or none of the agents $x_i^k$ are matched to such 2-sets~$\{c_j, d_j\}$.
  In other words, setting those variables~$x_i$ such that $x_i^k$ is matched to a 2-set~$\{c_j, d_j\}$ to $\LogicTRUE$ and all other variables to $\LogicFALSE$, we get a solution of the \textsc{1-in-3 Positive 3-Occurrence-SAT} instance from each stable matching.

  ``Inverting'' this process (i.e., matching each clause to its true variable, and then matching the variable gadgets as described above), allows to construct a stable matching from a solution to the \textsc{1-in-3 Positive 3-Occurrence-SAT} instance.

  Note that \textsc{MDSR} is in $\NP$ as the size of the input is $\Omega(\binom{n}{d-1})$, where $n$ is the number of agents, as the master preference list contains $\binom{n}{d-1}$ sets of size~$d-1$, and, thus, stability can be verified in time polynomial in the input size by just checking for each $d$-set whether it is blocking.
  
  We now formally describe the reduction and prove its correctness.

  \subsubsection{The reduction}
\label{sec:NPh-reduction}

  Let $x_1, \dots, x_n$ be the variables and let $C_1, \dots, C_m$ be the clauses of a \textsc{1-in-3 Positive 3-Occurrence-SAT} instance $\mathcal{I}$.
  We construct a \textsc{3-DSR-ML} instance $\mathcal{I}' = (A, (\succ_a)_{a\in A})$ as follows.

  For each clause $C_j$, we add two agents $c_j$ and $d_j$ to $A$.
  For the $k$-th occurrence ($k \in [3]$) of a variable $x_i$ in a clause, we add an agent $x_i^k$.
  We refer to the agent corresponding to the~$\ell$-th literal of clause $C_j$ as $y_j^\ell$.
  The $k$-th occurrence of $x_i$ is also the $\ell$-th literal of some clause~$C_j$, and we will denote the agent $x_i^k$ also by $y^\ell_j$, i.e., $x^k_i = y^\ell_i$.
  For each agent $x_i^k$, we add six agents~$z_i^{k,1}, \dots, z_i^{k, 6}$ to $A$.

  For each $j\in [m]$, we define $\mathcal{A}_j$ to be the following part of the master list:
  \[
    \{c_j, d_j\} \succ \{y_j^1, d_j\} \succ \{y_j^3, c_j\} \succ \{y_j^2, d_j\} \succ \{y_j^2, c_j\} \succ \{y_j^3, d_j\} \succ \{y_j^1, c_j\}.
  \]

  For each agent $x^k_i$, we define $\mathcal{B}^k_i$ to be the following part of the master list (note that (by renaming $x_i^k$ to $a$, agent $z_i^{k, 1}$ to $b$, agent $z_i^{k, 2}$ to $c$, \dots, and $z_i^{k, 5}$ to $f$) this contains the instance~$\mathcal{I}_{\operatorname{instable}}$ from \Cref{oinstability};
  this ensures that $x_k^i$ has to be matched to a 2-set which is before $\mathcal{B}^k_i$ in the master list):
  \begin{align*}
    \{x^k_i, z_i^{k, 1}\} & \succ \{x^k_i, z_i^{k,2}\} \succ \{x^k_i, z_i^{k,3}\} \succ \{x^k_i, z_i^{k,5}\}  \succ \{z_i^{k,1}, z_i^{k,4}\}\succ \{z_i^{k,2}, z_i^{k,3}\} \\
    &\succ \{x^k_i, z_i^{k,4}\} \succ \{z_i^{k,1}, z_i^{k,5}\} \succ \{z_i^{k,2}, z_i^{k,4}\}\succ \{z_i^{k,1}, z_i^{k,3}\}\succ \{z_i^{k,3}, z_i^{k,4}\}\\
    & \succ \{z_i^{k,1}, z_i^{k,2}\} \succ \{z_i^{k,2}, z_i^{k,5}\} \succ \{z_i^{k,3}, z_i^{k,5}\} \succ \{z_i^{k,4}, z_i^{k,5}\} \succ \{x_j^i, z_i^{k,6}\} \\
    & \succ \{z_i^{k,1}, z_i^{k,6}\}\succ \{z_i^{k,2}, z_i^{k,6}\} \succ \{z_i^{k,3}, z_i^{k,6}\} \succ \{z_i^{k,4}, z_i^{k,6}\} \succ \{z_i^{k,5}, z_i^{k,6}\}.
  \end{align*} 
  We extend this to a sublist $\mathcal{C}_i$ as follows:
  \[
    \{x^1_i, x_i^2\} \succ \{x_i^2, x^3_i\}\succ \{x_i^1, x_i^3\} \succ \mathcal{B}_i^1 \succ \mathcal{B}_i^2 \succ \mathcal{B}_i^3.
  \]
  The complete master list looks as follows.
  \[
    \mathcal{A}_1 \succ \dots \mathcal{A}_m \succ \mathcal{C}_1 \succ \dots \succ \mathcal{C}_n \succ \dots \text{, where the rest is arbitrarily ordered.}
  \]
  We call the constructed \textsc{3-DSR-ML} instance $\mathcal{I}'$.

  \subsubsection{Proof of the forward direction}

  We show how to construct a stable matching from a solution to a
  \textsc{1-in-3 Positive 3-Occurrence-SAT} instance.

\begin{lemma}\label{lforward}
  Let $f: \{x_i\} \rightarrow \{\LogicTRUE, \LogicFALSE\}$ be a solution to the \textsc{1-in-3 Positive 3-Occurrence-SAT} instance $\mathcal{I}$.
  Then $\mathcal{I}'$ admits a stable matching.
\end{lemma}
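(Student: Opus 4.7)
The plan is to exhibit an explicit matching $M$ built from $f$ and rule out blocking triples by a structural case analysis that tracks where each candidate pair sits in the master list.

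\textbf{Construction.} For every clause $C_j$ let $\ell_j^*\in[3]$ be the unique index with $f(y_j^{\ell_j^*})=\LogicTRUE$ and put $\{c_j,d_j,y_j^{\ell_j^*}\}\in M$; for every variable $x_i$ with $f(x_i)=\LogicFALSE$ put $\{x_i^1,x_i^2,x_i^3\}\in M$; and for every pair $(i,k)$ add both $\{z_i^{k,1},z_i^{k,2},z_i^{k,3}\}$ and $\{z_i^{k,4},z_i^{k,5},z_i^{k,6}\}$ to $M$. Since $f$ satisfies exactly one literal per clause, each occurrence agent $x_i^k$ lies in exactly one added triple, so $M$ is a valid 3-dimensional matching.

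\textbf{Confining blocking triples.} The master list has the form $\mathcal{A}_1\succ\cdots\succ\mathcal{A}_m\succ\mathcal{C}_1\succ\cdots\succ\mathcal{C}_n\succ(\text{rest})$, and every current partner under $M$ is a pair that sits inside one of these blocks. Any 2-set whose two agents come from distinct $\mathcal{A}_j,\mathcal{A}_{j'}$, from distinct $\mathcal{C}_i,\mathcal{C}_{i'}$, or which pairs a clause agent $c_j/d_j$ with an agent that is not a literal of $C_j$, does not appear in any block and therefore sits in the rest, strictly below every current partner. Hence for a blocking triple $S=\{p,q,r\}$ all three of $\{p,q\},\{p,r\},\{q,r\}$ must lie inside a single block, which confines $S$ either to $\{c_j,d_j,y_j^\ell\}$ for some $j,\ell$ or to a subset of the variable gadget $\{x_i^1,x_i^2,x_i^3\}\cup\bigcup_{k,m}\{z_i^{k,m}\}$ for some $i$; a spanning case such as $S=\{c_j,y_j^\ell,z\}$ is already ruled out because $\{c_j,z\}$ then lies in the rest.

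\textbf{Within-block analysis and main obstacle.} In the clause case, a direct reading of $\mathcal{A}_j$ shows that $c_j$ prefers $\{y_j^\ell,d_j\}$ to $\{y_j^{\ell_j^*},d_j\}$ precisely when $\ell<\ell_j^*$ in the order $(1,2,3)$, while $d_j$ prefers $\{y_j^\ell,c_j\}$ to $\{y_j^{\ell_j^*},c_j\}$ precisely when $\ell>\ell_j^*$, so the triple is not blocking. In the variable-gadget case, triples contained in $\{x_i^1,x_i^2,x_i^3\}$ either coincide with the current matching (when $f(x_i)=\LogicFALSE$) or are ruled out because each $x_i^k$ is then matched to some $\{c_j,d_j\}$ from $\mathcal{A}_j$, which precedes every pair of $\mathcal{C}_i$; triples mixing an $x_i^k$ with $z$-agents are ruled out because $x_i^k$'s current partner always lies before every pair of $\mathcal{B}_i^k$ that omits $x_i^k$. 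This leaves the delicate step, and the expected main obstacle, of checking that the fixed sub-matching $\{z_i^{k,1},z_i^{k,2},z_i^{k,3}\},\{z_i^{k,4},z_i^{k,5},z_i^{k,6}\}$ on the six $z$-agents of each $(i,k)$ is internally stable; this is non-trivial because $\mathcal{B}_i^k$ was designed to contain the unstable instance $\mathcal{I}_{\operatorname{instable}}$ on the seven-agent set $\{x_i^k,z_i^{k,1},\dots,z_i^{k,5}\}$. The crucial point is that once the four pairs of $\mathcal{B}_i^k$ containing $x_i^k$ are deleted, the six remaining $z$-agents form a different sub-instance in which $z_i^{k,1}$ has $\{z_i^{k,2},z_i^{k,3}\}$ as its top pair (equal to its current partner), immediately blocking every triple containing $z_i^{k,1}$, and in which $z_i^{k,6}$ sits at the tail of the list so that every pair containing it is inferior to every pair without it; these two facts reduce the stability check to a short finite enumeration of candidate triples inside $\{z_i^{k,2},\dots,z_i^{k,6}\}$.
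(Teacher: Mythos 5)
Your construction is exactly the paper's, and your verification is correct; the difference lies in how the bookkeeping for ruling out blocking triples is organized. The paper proves stability via two nested inductions (over the clause index $j$ and then the variable index $i$), each step arguing that agents from earlier blocks are already known not to lie in a blocking triple so that one may pass to a sublist of the master list. You replace this with a single global observation: every agent is matched in $M$ and every matched partner is a 2-set occurring inside one of the named blocks $\mathcal{A}_j$ or $\mathcal{C}_i$, whereas every 2-set straddling two blocks (or pairing $c_j/d_j$ with a non-literal of $C_j$, or pairing $x_i^{k'}$ with $z_i^{k,p}$ for $k'\neq k$) occurs only in the arbitrarily ordered tail and is hence dispreferred by its matched endpoints; so all three 2-sets of a blocking triple must lie in one block. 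This collapses the analysis to the same three terminal cases the paper reaches --- $\{c_j,d_j,y_j^\ell\}$ killed because $c_j$ and $d_j$ rank the literals in opposite orders, triples containing an $x_i^k$ killed because its partner precedes $\mathcal{B}_i^k$, and the all-$z$ triples settled by finite enumeration --- while avoiding the induction entirely. What your route buys is a shorter, non-inductive argument; what it requires, and what you correctly supply, is the check that $M$ leaves no agent unmatched (an unmatched agent would accept even a tail 2-set) and that every matched pair really does sit inside a block.
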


\begin{proof}
  We construct a stable matching $M$ as follows.
  We start with $M= \emptyset$.

  Denote by $S\coloneqq \{i \in [n] : f(x_i) = \LogicTRUE\}$ the set of indices such that the corresponding variables are set to $\LogicTRUE$ by truth assignment~$f$.
  
  For a clause $c_j$, let $y_j^{\ell_j}$ be the the variable in $c_j$ set to $\LogicTRUE$ by $f$.
  For each $j\in [m]$, we add the 3-set~$\{c_{j}, d_{j}, y_j^{\ell_j}\}$ to $M$.
  For all $i\in [n]\setminus S$, we add the 3-set~$\{x_i^1, x_i^2, x_i^3\}$ to~$M$.
  Finally, for each pair $(i,k) \in [n]\times[3]$, we add the 3-sets $\{z_i^{k,1}, z_i^{k,2}, z_i^{k,3}\}$ and $\{z_i^{k,4}, z_i^{k,5}, z^{k, 6}_i\}$.

  Since $f$ assigns exactly one true variable to each clause, $M$ is indeed a matching.
  It remains to show that~$M$ is stable.
  We do so by showing for each agent that it is not contained in a blocking 3-set.

  \begin{claim}\label{claim:c-d}
    For all $j\le m$, neither $c_j$ nor $d_j$ is contained in a blocking 3-set.
  \end{claim}
  \begin{claimproof}
	  We prove the claim by induction on~$j$.
    For $j= 0$, there is nothing to show, as there are no agents $c_0$ or $d_0$.

    For the induction step, first note that we can ignore all 2-sets containing an agent $c_p$ or $d_p$ for some $p < j$, as we already know that they are not contained in a blocking 3-set.
    Thus, we consider the sublist $\ML'$ of $\ML$ arising by deleting all such 2-sets.
    The first 2-set of~$\ML'$ is~$\{c_j, d_j\}$.
    Let $y_j^{\ell_j} = x_i^k$ such that $\{c_j, d_j, y_j^\ell\} \in M$.
    The variable $y_j^{\ell_j}$ is not contained in any blocking 3-set, as it is matched to the first 2-set of sublist $\ML'$.
    If $c_j$ is contained in a blocking~3-set, then the blocking 3-set is $\{c_j, y_j^p, d_j\}$ for some $p< \ell_j$, as $\{y_j^p, d_j\}$ for $p < \ell_j$ are the only 2-sets $c_j$ prefers to $\{y^\ell_j, d_j\}$.
    However, $d_j$ does not prefer $\{y^p_j, c_j\}$ to $\{y^\ell_j, c_j\}$, and thus, $\{c_j, y_j^p, d_j\}$ is not a blocking 3-set.
    By symmetric arguments, $d_j$ also cannot be contained in a blocking~3-set.
  \end{claimproof}
  \begin{claim}\label{claim:x-z}
    No agent $x_i^k$ or $z_i^{k, p}$ for $i \le n$, $k\in [3]$ and $p\in [6]$ is contained in a blocking 3-set.
  \end{claim}
  \begin{claimproof}
  We prove the claim by induction on $i$.
  For $i= 0$, there is nothing to show.
  
  Note that all 2-sets from $\bigcup_{\ell \in [m]} \mathcal{A}_\ell$ contain an agent of the type $c_\ell$ or $d_\ell$, and thus, no blocking 3-set contains a 2-set from $\bigcup_{\ell \in [m]} \mathcal{A}_\ell$ by \Cref{claim:c-d}.
  Furthermore, by the induction hypothesis, no 2-set from~$\mathcal{C}_q$ for $q < i$ can be contained in a blocking 3-set.
  Thus, it is enough to consider the sublist~$\ML_i$ of the master list arising through the deletion of $\mathcal{A}_j$ for all $j\in [m]$ and $\mathcal{C}_q$ for $q < i$.

  If $x_i$ is set to $\LogicTRUE$, then all agents $x_i^k$ are matched better than any 2-set from $\ML_i$, and thus, cannot be part of a blocking 3-set.
  Otherwise, all of $x^1_i$, $x^2_i$, and $x_i^3$ are matched to the first 2-set not containing itselves in~$\ML_i$, and thus are not contained in a blocking 3-set.

  Considering the sublist $\ML_i'$ arising from $\ML_i$ by deleting all 2-sets containing an agent~$x_i^k$, one can check that the first 15 sets of two agents of this sublist only consider agents of $z_i^{k, p}$ for~$p\in [6]$, and all agents $z_i^{k, p}$ are matched to one of these 15 sets of size two.
  Thus, any blocking 2-set containing an agents $z_i^{k, p}$ consists only of agents from $\{z_i^{k, q}: q\in [6]\}$.
  By enumerating all~20 such 3-sets, one easily verifies that none of them is blocking.
  \end{claimproof}

  Since the set~$A$ of agents in $\mathcal{I}'$ is $\{c_j, d_j : j\in [m]\} \cup \{x_i^k, z_i^{k, p} : i \in [n], k\in [3], p \in [6]\}$, the lemma now directly follows from \Cref{claim:c-d,claim:x-z}.
\end{proof}

\subsubsection{Proof of the backward direction}

Now, we show how to construct a solution to the \textsc{1-in-3 Positive 3-Occurrence-SAT}
instance~$\mathcal{I}$ from a stable matching.
First, we identify small subsets $A'$ of agents such that $\mathcal{I}'$ restricted to the agents from $A'$ does not admit a stable matching, implying that at least one agent of $A'$ must be matched to at least one vertex outside~$A'$.

  \begin{lemma}\label{linstability}
    For any $i\in  [n]$ and $k\in [3]$, the subinstance~$\mathcal{I}_i^k$ of $\mathcal{I}'$, which results from~$\mathcal{I}'$ by deleting all but the agents $x_i^k$ and $\{z_i^{k, p}: p\in [6]\}$, does not admit a stable matching.
  \end{lemma}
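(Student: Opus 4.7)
The plan is to reduce the statement to Observation~\ref{oinstability} (which exhibits a six-agent instance $\mathcal{I}_{\operatorname{instable}}$ admitting no stable matching). First I would verify that under the renaming $x_i^k \mapsto a$, $z_i^{k,1} \mapsto b$, $z_i^{k,2} \mapsto c$, $z_i^{k,3} \mapsto d$, $z_i^{k,4} \mapsto e$, $z_i^{k,5} \mapsto f$, the first $15 = \binom{6}{2}$ pairs of $\mathcal{B}_i^k$ (namely those not involving $z_i^{k,6}$) coincide, in order, with the master list of $\mathcal{I}_{\operatorname{instable}}$. The remaining six pairs of $\mathcal{B}_i^k$ all involve $z_i^{k,6}$ and form a suffix. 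Crucially, this means every pair containing $z_i^{k,6}$ is ranked worse than every pair drawn from $\{x_i^k, z_i^{k,1}, \ldots, z_i^{k,5}\}$.

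Since $\mathcal{I}_i^k$ has exactly seven agents, every matching covers $0$, $3$, or $6$ of them. If a matching $M$ covers at most three agents, pick any three from the (at least four) unmatched ones: this is a blocking $3$-set, because each unmatched agent prefers any triple not containing itself to remaining unmatched. The nontrivial case is therefore a matching $M$ consisting of two triples together with a single unmatched agent~$u$.

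I then split this case according to whether $u = z_i^{k,6}$. If $u = z_i^{k,6}$, then the two triples of $M$ partition $\{x_i^k, z_i^{k,1}, \ldots, z_i^{k,5}\}$, and via the renaming above $M$ corresponds to a matching of $\mathcal{I}_{\operatorname{instable}}$, which by Observation~\ref{oinstability} admits a blocking triple $T$. Since every agent's preferences over pairs inside $\{x_i^k, z_i^{k,1}, \ldots, z_i^{k,5}\}$ are, up to renaming, identical to those in $\mathcal{I}_{\operatorname{instable}}$, $T$ remains blocking in $\mathcal{I}_i^k$. Otherwise $u \neq z_i^{k,6}$, so $z_i^{k,6}$ lies in some triple $\{z_i^{k,6}, p, q\}$; then $\{p, q, u\}$ is blocking because $u$ is unmatched, and by the tail-placement of all pairs involving $z_i^{k,6}$, agent $p$ strictly prefers $\{q, u\}$ to $\{z_i^{k,6}, q\}$ (and symmetrically $q$ prefers $\{p, u\}$ to $\{z_i^{k,6}, p\}$).

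The main obstacle is purely the bookkeeping in the first step: verifying that the first $15$ pairs of $\mathcal{B}_i^k$ line up exactly with the master list of $\mathcal{I}_{\operatorname{instable}}$ under the renaming. Once that correspondence is established, everything else is essentially forced by the deliberate placement of $z_i^{k,6}$ at the end of the sublist $\mathcal{B}_i^k$, which makes the extra agent inert in the stability analysis.
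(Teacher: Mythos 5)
Your proposal is correct and follows essentially the same route as the paper: identify the six agents $x_i^k, z_i^{k,1},\dots,z_i^{k,5}$ with $\mathcal{I}_{\operatorname{instable}}$ via the renaming, conclude from Observation~\ref{oinstability} that $z_i^{k,6}$ must be matched, and then use the tail-placement of all pairs containing $z_i^{k,6}$ to exhibit a blocking $3$-set with the unmatched agent. Your version merely spells out the case analysis (partial matchings, $u = z_i^{k,6}$ vs.\ $u \neq z_i^{k,6}$) that the paper compresses into two sentences.
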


  \begin{proof}
    Assume for the sake of contradiction that there exists a stable matching~$M$.
    Note that deleting~$z_{i}^{k, 6}$ from $\mathcal{I}_i^k$ results in an instance identical to the instance~$\mathcal{I}_{\instable}$ (this can be seen by renaming agent~$x_i^k$ to $a$, agent~$z_i^{k, p}$ to $b$, agent $z_i^{k, 2}$ to $c$, agent~$z_i^{k, 3}$ to~$d$, agent~$z_{i}^{k, 4}$ to~$e$, and agent~$z_{i}^{k, 5}$ to $f$).
    Thus, \Cref{oinstability} implies that $M$ matches agent $z_{i}^{k, 6}$ to a 2-set~$\{x,y\}$.
    As all 2-sets containing~$z_i^{k, 6}$ appear on the end of the preference lists, $x$ and $y$ together with any unmatched agent form a blocking 3-set, a contradiction to the stability of~$M$.    
  \end{proof}

  We now show that the two agents $c_j$ and $d_j$ created for clause~$C_j$ have to be matched to an agent corresponding to a literal in this clause;
  indeed, we will later see that this literal satisfies the clause $C_j$ in the found solution.

\begin{lemma}\label{lcidi}
  In any stable matching $M$ and for each $j\in [m]$, there is an $\ell\in [3]$ such that~$\{c_j, d_j, y^\ell_j\}\in M$.
\end{lemma}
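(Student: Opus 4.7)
My plan is to prove \Cref{lcidi} by induction on $j$, in the same spirit as \Cref{claim:c-d}, exploiting the structural facts that $\{c_j, d_j\}$ is the top 2-set of $\mathcal{A}_j$, that $\{y_j^1, d_j\}$ is the first 2-set of $\mathcal{A}_j$ not containing $c_j$, and that $\{y_j^3, c_j\}$, $\{y_j^2, c_j\}$, $\{y_j^1, c_j\}$ are the first three 2-sets of $\mathcal{A}_j$ not containing $d_j$, in that order.

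For the inductive step, I assume the statement holds for all $j' < j$, so that $c_{j'}, d_{j'}, y_{j'}^{\ell_{j'}}$ form a matched triple for each such $j'$. I then pass to the sublist $\ML'$ obtained from the master list by deleting every 2-set that contains some $c_{j'}$ or $d_{j'}$ with $j' < j$. Because every 2-set in each $\mathcal{A}_{j'}$ involves $c_{j'}$ or $d_{j'}$, the sublists $\mathcal{A}_1, \dots, \mathcal{A}_{j-1}$ disappear from $\ML'$, leaving $\{c_j, d_j\}$ as the very first entry, followed by the remaining 2-sets of $\mathcal{A}_j$ in their original order. The base case $j=1$ is subsumed by this argument, as the induction hypothesis is vacuous and $\ML'$ equals the master list.

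Suppose, for a contradiction, that $M(c_j) \neq \{d_j, y_j^\ell\}$ for every $\ell \in [3]$. I claim that $\{c_j, d_j, y_j^1\}$ is a blocking 3-set. For $c_j$: either $d_j \notin M(c_j)$, in which case $M(c_j) \neq \{y_j^1, d_j\}$ and $M(c_j)$ lies strictly after $\{y_j^1, d_j\}$ in $\ML'$; or $M(c_j) = \{d_j, x\}$ with $x \notin \{y_j^1, y_j^2, y_j^3\}$, and then $\{d_j, x\}$ sits beyond $\mathcal{A}_j$ in $\ML'$ because no $\mathcal{A}_{j''}$ with $j''\neq j$ and no $\mathcal{C}_i$ contains $d_j$. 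For $d_j$: note that $M(d_j) \in \{\{y_j^\ell, c_j\} : \ell \in [3]\}$ would force $M(c_j) = \{d_j, y_j^\ell\}$, contradicting the case assumption, so $M(d_j)$ lies strictly after $\{y_j^1, c_j\}$ in $\ML'$. For $y_j^1$: the fact that each variable occurrence lies in exactly one clause implies that $y_j^1$ is distinct from every $y_{j'}^{\ell_{j'}}$ with $j' < j$, so the triple of $M$ containing $y_j^1$ avoids all agents pinned down by the induction hypothesis and $M(y_j^1)$ therefore lies in $\ML'$; since $M(y_j^1) \neq \{c_j, d_j\}$ (otherwise $M(c_j) = \{d_j, y_j^1\}$), its position in $\ML'$ is strictly after the top entry.

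The main bookkeeping obstacle is verifying that none of $M(c_j), M(d_j), M(y_j^1)$ can lie inside $\mathcal{A}_{j'}$ for some $j' < j$; this follows because every 2-set of such $\mathcal{A}_{j'}$ contains $c_{j'}$ or $d_{j'}$, agents whose $M$-triples are pinned down by the induction hypothesis to involve precisely $\{c_{j'}, d_{j'}, y_{j'}^{\ell_{j'}}\}$, ruling out any collision with the three agents of interest for clause $C_j$. Once this is settled, the three strict preferences needed for the blocking 3-set are direct consequences of the position of $\{c_j, d_j\}$, $\{y_j^1, d_j\}$, and $\{y_j^1, c_j\}$ at the top of $\ML'$.
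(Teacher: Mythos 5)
Your proof is correct and follows essentially the same route as the paper's: induction on $j$, using the induction hypothesis to rule out $c_j$, $d_j$, and $y_j^\ell$ being matched to any 2-set preceding $\mathcal{A}_j$, and then exhibiting $\{c_j, d_j, y_j^1\}$ as a blocking 3-set (the paper notes that $\{c_j,d_j,y_j^\ell\}$ blocks for every $\ell$, but one suffices). Your version just spells out the bookkeeping that the paper leaves implicit.
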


\begin{proof}
  We prove the lemma by induction on $j$.

  \bfseries Base case: \mdseries
  If $M$ does not contain the 2-set $\{c_1, d_1, y_1^\ell\}$ for all $\ell \in [3]$, then $\{c_1, d_1, y^\ell_1\}$ is a blocking 2-set for every~$\ell\in [3]$.

  \bfseries Induction step: \mdseries
  By the induction hypothesis, no agent~$c_p$ or $d_p$ for $p < j$ is matched to~$c_j$,~$d_j$, or some $y^\ell_j$.
	Hence, neither $c_j$ nor $d_j$ nor $y^\ell_j$ is matched to a 2-set which comes before~$\mathcal{A}_j$ in the master list.
  Thus, if $M$ does not contain $\{c_j, d_j, y_j^\ell\}$ for all $\ell \in [3]$, then $\{c_j, d_j, y^\ell_j\}$ is a blocking 3-set for every~$\ell\in [3]$, contradicting the stability of $M$.
\end{proof}

We now want to show that for any $i \in [n]$, a stable matching contains either $\{x_i^1, x_i^2, x_i^3\}$ or matches $x_i^k$ to 2-sets of the form $\{c_j, d_j\}$.
In order to do so, we first show that for any $i\in [n]$ and $k \in [3]$, agents $z_i^{k, 1}, \dots, z_i^{k, 6}$ are matched to two 3-sets in any stable matching~$M$ unless at least one agent~$z_i^{k, p}$ is matched to a 2-set containing an agent $c_j$, $d_j$, $x_{i'}^{k'}$, or $z_{i'}^{k', q}$ for some $j\in [m]$, $(i' , k')  \in [n]\times [3]$ with $i' <i$ or $i' = i$ and $k' \le k$, and $q\in [6]$.

\begin{lemma}
\label{lem:zik}
  Let $M$ be any stable matching, and let $i\in [n]$ and $k \in [3]$.
  Let $X\coloneqq \{c_j, d_j : j\in [m]\} \cup \{ x_{i'}^{k'}, z_{i'}^{k', q} : i' < i, k' \in [3], q\in [6]\} \cup \{ x_{i}^{k'}, z_{i}^{k', q} : k' \le k, q\in [6]\}$, and let $Z_i^k \coloneqq \{z_i^{k, p} : p \in [6]\}$.
  If no agent $z_{i}^{k, p}$ is matched to a 2-set containing an agent from $X$, then $M$ contains two 3-sets $t_1$ and $t_2$ which are subsets of $Z_i^k$.
\end{lemma}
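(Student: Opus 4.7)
My plan is to argue by contradiction: fix a stable matching $M$ satisfying the hypothesis, assume that $M$ does \emph{not} contain two 3-sets entirely inside $Z_i^k$, and then exhibit a blocking 3-set contained in $Z_i^k$ itself.

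I would first pin down where the partner 2-set of any $z_i^{k,p}$ can sit in the master list. By the hypothesis of the lemma, this partner 2-set contains no agent of $X$, so its two elements are either both members of $Z_i^k$ or at least one of them is a ``later'' agent—concretely, an agent of the form $x_{i'}^{k'}$ or $z_{i'}^{k',q}$ with $(i',k')$ strictly after $(i,k)$ in the lexicographic order governing $\mathcal{C}_1 \succ \dots \succ \mathcal{C}_n$. The key auxiliary observation is that any such ``late'' 2-set lies strictly after the last entry of $\mathcal{B}_i^k$ in the master list: it cannot appear in $\mathcal{A}_\ell$ (whose entries involve only $c_\ell, d_\ell, y_\ell^r$), nor in $\mathcal{C}_{i'}$ for $i' < i$ (different agent set), nor inside $\mathcal{C}_i$ up through $\mathcal{B}_i^k$ (whose entries involve only $x_i^{k'}$ and $z_i^{k',\cdot}$ with $k' \le k$). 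Hence every such ``late'' 2-set lies in some $\mathcal{B}_i^{k''}$ with $k'' > k$, in a later $\mathcal{C}_{i''}$, or in the ``rest'' portion, and therefore strictly after every 2-set of $\binom{Z_i^k}{2}$, all of which appear inside $\mathcal{B}_i^k$.

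Armed with this, I would do a tiny case distinction on $n_{\mathrm{in}}$, the number of agents of $Z_i^k$ that belong to a 3-set of $M$ entirely contained in $Z_i^k$. Because such 3-sets have size three, $n_{\mathrm{in}} \in \{0,3,6\}$, and $n_{\mathrm{in}} = 6$ is precisely the conclusion of the lemma. Otherwise $n_{\mathrm{in}} \le 3$, so at least three agents $\{z_\alpha, z_\beta, z_\gamma\} \subseteq Z_i^k$ fail to sit in such an internal 3-set, and by the previous paragraph each of them is either unmatched or matched to a 2-set appearing strictly after $\mathcal{B}_i^k$ in the master list. The 3-set $\{z_\alpha, z_\beta, z_\gamma\}$ is then blocking: each of its members strictly prefers the corresponding partner 2-set in this 3-set—an element of $\binom{Z_i^k}{2}$ appearing inside $\mathcal{B}_i^k$—to its current situation. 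This contradicts the stability of $M$ and finishes the proof.

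The main obstacle I expect is the master-list bookkeeping of the first paragraph: one must carefully check that no 2-set involving an agent of $Z_i^k$ together with a strictly later agent can sneak in before some pair of $\binom{Z_i^k}{2}$ in the master list. This is a direct consequence of the definitions of $\mathcal{A}_j$, $\mathcal{B}_i^k$, and $\mathcal{C}_i$ together with the ordering $\mathcal{A}_1 \succ \dots \succ \mathcal{A}_m \succ \mathcal{C}_1 \succ \dots \succ \mathcal{C}_n \succ \text{rest}$; notably, the argument does \emph{not} rely on the delicate internal ordering of $\mathcal{B}_i^k$ (which encodes $\mathcal{I}_{\operatorname{instable}}$), only on the coarse fact that $\mathcal{B}_i^k$ lists every pair in $\binom{Z_i^k}{2}$ before anything relevant outside $\mathcal{B}_i^k$.
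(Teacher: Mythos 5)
Your proof is correct and follows essentially the same route as the paper's: observe that every 2-set avoiding $X$ and not contained in $Z_i^k$ appears after $\mathcal{B}_i^k$ in the master list, note that having fewer than two internal 3-sets leaves at least three agents of $Z_i^k$ that are unmatched or matched strictly worse than any pair in $\binom{Z_i^k}{2}$, and conclude that these three agents form a blocking 3-set. Your explicit counting $n_{\mathrm{in}}\in\{0,3,6\}$ and the separate treatment of unmatched agents are just more detailed renderings of the paper's one-line ``then there exist three such agents.''
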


\begin{proof}
  Note that every 2-set before $\mathcal{B}_i^k$ in the master list contains an agent from $X$.
	List~$\mathcal{B}_i^k$ contains all 2-sets $\{z_i^{k,p}, z_i^{k,q}\}$ for $p, q\in [6]$ with $p\neq q$ (as well as some 2-sets containing~$x_i^k \in X$).
  Now assume for a contradiction that the lemma does not hold, i.e., there exists some~$z\in Z$ which is not matched to a 2-set $\{z', z''\}$ with $z' , z'' \in Z_i^k$.
  Then there exist three such agents~$z_1, z_2, z_3 \in Z_i^k$.
  Hence, $z_1,z_2$, and $z_3$ are matched to 2-sets which appear after~$\mathcal{B}^k_i$ in the master list.
  Consequently, $\{z_1, z_2, z_3\}$ is a blocking 3-set, contradicting the stability of~$M$.
\end{proof}

Now we can show the following structural statement about the agents $x_i^1$, $x_i^2$, and $x_i^3$, essentially stating that if one of these agents is matched to a 2-set~$\{c_j, d_j\}$ (corresponding to setting variable~$x_i$ to $\LogicTRUE$), then all three of them are.
From this statement, the backward direction of the correctness proof for the reduction 
will then easily follow.

\begin{lemma}\label{lvertexAgents}
  Let $M$ be any stable matching.
  Then for all $i\in [n+1]$, the following holds:
  
	For all $i^* < i$ either~$\{x_{i^*}^1, x_{i^*}^2, x_{i^*}^3\} \in M$ or for each $k\in [3]$, there exists some $j\in [m]$ such that $\{x_{i^*}^k, c_j, d_j\} \in M$.
  Furthermore, for each $k\in [3]$, matching~$M$ contains two 3-sets $t_1$ and~$t_2$ with $t_1, t_2\subseteq \{z_{i^*}^{k, p} : p\in [6]\}$.

\end{lemma}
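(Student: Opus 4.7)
The plan is to proceed by induction on $i$, with the vacuous base case $i = 1$. For the inductive step, assume the statement holds for $i$ and prove both conclusions for $i^{*} = i$.

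Combining the inductive hypothesis with \Cref{lcidi}, in $M$ each pair $c_j, d_j$ is matched together with some $y_j^{\ell_j}$, and every $x_{i'}^{k'}, z_{i'}^{k',q}$ with $i' < i$ is matched inside its own variable-gadget. Consequently, no agent of block $i$ (that is, $x_i^k$ or $z_i^{k, q}$) can share a 3-set of $M$ with an agent of an earlier block or with any $c_j, d_j$, except in the configuration $\{x_i^k, c_{j_k}, d_{j_k}\}$ where $C_{j_k}$ is the clause containing the $k$-th occurrence of $x_i$.

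Set $S_k \coloneqq \{x_i^k\} \cup Z_i^k$ and let $M|_{S_k}$ denote the restriction of $M$ to those 3-sets contained in $S_k$. The key technical step is a transfer principle: any blocking 3-set $T \subseteq S_k$ of $M|_{S_k}$ in the subinstance $\mathcal{I}_i^k$ transfers to a blocking 3-set of $M$, provided every vertex $v \in T$ has its full-instance match $M(v)$ either equal to its partner in $M|_{S_k}$ or lying in ``the rest'' of the master list. For a $z_i^{k, q}$ unmatched in $M|_{S_k}$, the previous paragraph forces $M(z_i^{k,q})$ to involve an agent outside $S_k$ and hence to lie in ``the rest''; for $x_i^k$ unmatched in $M|_{S_k}$, the only obstructions to the transfer are $M(x_i^k) \in \{\{c_{j_k}, d_{j_k}\}, \{x_i^{k'}, x_i^{k''}\}\}$, which lie in $\mathcal{A}_{j_k}$ or $\mathcal{C}_i$, respectively.

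By \Cref{linstability}, $\mathcal{I}_i^k$ admits no stable matching, so $M|_{S_k}$ has a blocking 3-set $T \subseteq S_k$ in the subinstance. Using the freedom in the proof of \Cref{linstability} — which exhibits $T = \{x, y, ?\}$ for any unmatched $?$ — I can, whenever $x_i^k$ is matched in $M$ to $\{c_{j_k}, d_{j_k}\}$ or $\{x_i^{k'}, x_i^{k''}\}$, pick $T \subseteq Z_i^k$ to avoid the obstructions. The transfer principle then produces a blocking 3-set of $M$, contradicting stability, unless $M|_{S_k}$ already contains two 3-sets on $Z_i^k$. This proves the second part of the lemma for $i^* = i$. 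Given this, $M(x_i^k)$ contains no $z_i^{k, q}$, so $M(x_i^k) \in \{\{c_{j_k}, d_{j_k}\}, \{x_i^{k'}, x_i^{k''}\}\}$ or lies in ``the rest''. The last case is eliminated by applying the transfer principle with $T = \{x_i^k, z_i^{k, p}, z_i^{k, q}\}$ from \Cref{linstability}. Either $M(x_i^k) = \{x_i^{k'}, x_i^{k''}\}$ for some $k$, forcing $\{x_i^1, x_i^2, x_i^3\} \in M$, or $M(x_i^k) = \{c_{j_k}, d_{j_k}\}$ for all $k$, completing the dichotomy.

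The main obstacle is justifying that the blocking 3-set provided by \Cref{linstability} can always be chosen to preserve its blocking status when lifted from the subinstance to $M$. This requires a careful case analysis exploiting the explicit unstable certificates in \Cref{oinstability} and \Cref{linstability}, together with the rigid structure of $\mathcal{B}_i^k$.
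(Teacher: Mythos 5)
Your overall skeleton matches the paper's: induction on $i$, using \Cref{lcidi} and the induction hypothesis to localize the possible matches of block-$i$ agents, and invoking \Cref{linstability} to extract a blocking 3-set. However, the transfer principle as you state it has a genuine gap, and the gap sits exactly where the real work of the proof is. You claim that a $z_i^{k,q}$ unmatched in $M|_{S_k}$ must have $M(z_i^{k,q})$ lying in ``the rest'', and that the only obstructions for $x_i^k$ are $\{c_{j_k},d_{j_k}\}$ and $\{x_i^{k'},x_i^{k''}\}$. Both claims ignore intra-block, cross-gadget 3-sets, which your opening paragraph does not exclude. The critical example is $\{x_i^{k_1},x_i^{k_2},z_i^{k,q}\}\in M$: here $M(z_i^{k,q})=\{x_i^{k_1},x_i^{k_2}\}$ sits at the top of $\mathcal{C}_i$, i.e.\ \emph{before} $\mathcal{B}_i^k$, so $z_i^{k,q}$ will not join any blocking 3-set inside $S_k$ and the lifting fails for any $T$ containing it. Similarly, for $k\ge 2$ the match $M(x_i^k)$ could lie in $\mathcal{B}_i^{k'}$ with $k'<k$ (e.g.\ via $\{x_i^2,z_i^{1,3},z_i^{1,4}\}\in M$), an obstruction you do not list and whose exclusion needs an inner induction on $k$ that you never set up. Your fallback --- choosing $T\subseteq Z_i^k$ ``using the freedom in \Cref{linstability}'' --- does not rescue this: the 6-agent restriction to $Z_i^k$ admits the stable matching $\{z_i^{k,1},z_i^{k,2},z_i^{k,3}\},\{z_i^{k,4},z_i^{k,5},z_i^{k,6}\}$ (this is exactly what \Cref{claim:x-z} relies on), and the restriction to $S_k\setminus\{z_i^{k,1}\}$ admits the stable matching $\{x_i^k,z_i^{k,2},z_i^{k,3}\},\{z_i^{k,4},z_i^{k,5},z_i^{k,6}\}$, so a blocking 3-set of the subinstance that avoids the ``happy'' agents need not exist.

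The paper resolves precisely this configuration in a dedicated case (Case~3(b) of its proof of \Cref{lvertexAgents}), by exhibiting the blocking 3-set $\{x_i^1,z_i^{1,1},z_i^{1,2}\}$ --- or $\{x_i^1,z_i^{1,3},z_i^{1,5}\}$ when the offending $z$ is $z_i^{1,1}$ or $z_i^{1,2}$ --- an argument that depends on the specific interleaving of the pairs $\{x_i^k,z_i^{k,p}\}$ with $\{z_i^{k,1},z_i^{k,4}\}$ and $\{z_i^{k,2},z_i^{k,3}\}$ inside $\mathcal{B}_i^k$, not on \Cref{linstability} used as a black box. Your closing sentence defers exactly this case analysis; since it is the crux of the argument rather than routine verification, the proof is not complete as proposed.
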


\begin{proof}
  We prove the lemma by induction on $i$.
  For $i = 1$, there is nothing to show.

  By the induction hypothesis, for every $q < i$, no agent $x_q^k$ or $z_q^{k, p}$ is matched to a 2-set containing an agent~$x_q^r$ or $z_q^{r,s}$ with $r \neq q$, and by \Cref{lcidi}, no agent $z_i^{k, p}$ is matched to a 2-set containing an agent $c_j$ or $d_j$.

  Let $S\coloneqq \{k\in[3] : \exists j\in [m] \text{ s.t.\,} \{x_i^k, c_j, d_j\}\in M\}$ be the set of indices $k\in [3]$ such that $x_i^k$~is matched to a 2-set of the form $\{c_j, d_j\}$.

  {\bfseries Case 1:} $|S| = [3]$.

  By induction on $k$, we can apply \Cref{lem:zik} for every $k\in [3]$, showing that $M$ contains two 3-sets $t_1,t_2 \subseteq \{z_i^{k, p}: p \in [6]\}$. 

  {\bfseries Case 2:} $S = \emptyset$.

  Then $M$ contains $\{x_i^1, x_i^2, x_i^3\}$, as all 2-sets before sublist~$\mathcal{C}_i$ contain an agent which is not matched to an agent $x_i^k$ in $M$.
  As in Case 1, induction on $k$ together with \Cref{lem:zik} implies that for every $k\in [3]$, matching~$M$ contains two 3-sets $t_1$ and $t_2$ with $t_1, t_2\subset \{z_i^{k, p} : p\in [6]\}$.

  {\bfseries Case 3:} $|S| \in \{1, 2\}$.

  We show that this case leads to a contradiction and, therefore, cannot occur.

  {\bfseries Case 3 (a):} No 3-set in $M$ contains two agents of the form $x_i^k$ for $k\in [3]$.

  If there is no $j\in [m]$ such that $\{x_i^1, c_j, d_j\} \in M$, then the agents $x_i^1$, $z_i^{1, 1}$, \dots, $z_i^{1, 5}$ have to be matched to two 3-sets,  as all agents appearing in a 2-set before $\mathcal{B}_i^1$ in the master list cannot be matched to $x_i^1$ or $z_i^{1, p}$ because of \Cref{lcidi} and the induction hypothesis.
  By \Cref{linstability}, this implies that $M$ contains a blocking 3-set inside $x_i^1$, $z_i^{1, 1}$, \dots, $z_i^{1, 5}$, contradicting the stability of~$M$.
  Thus, there exists some $j\in [m]$ such that $\{x_i^1, c_j, d_j\} \in M$.
  Then $M$ contains two 3-sets~$t_1$ and $t_2$ with $t_1, t_2 \subseteq \{z_i^{1, p} : p \in [6]\}$ by \Cref{lem:zik}.

  We can conclude then by the same argument that $x_i^2$ is matched to a 2-set $\{c_{j'}, d_{j'}\}$ for some~$j'\in [m]$, and from this that $x_i^3$ is matched to a 2-set $\{c_{j''}, d_{j''}\}$ for some~$j''\in [m]$, a contradiction.

  {\bfseries Case 3 (b):} There is a 3-set $\{x_i^{k_1}, x_i^{k_2}, z\}\in M$ with $z\notin \{x_i^{k}\}$.

  If $\{x_i^1, c_j, d_j\}\in M$ for some $j\in [m]$, then $M$ contains two tuples inside $z_i^{1, 1}, \dots, z_i^{1,6}$ by \Cref{lcidi,lem:zik} and the induction hypothesis.
  It follows that there exists a blocking 3-set inside $x_i^2, z_i^{2, 1}, \dots, z_i^{2,5}$ by \Cref{linstability}.\

  Consequently, we can assume in the following that $\{x_i^1, c_j, d_j\} \notin M$ for all $j\in [m]$.
  We may also assume that $k_1 =1 $ or $k_2 = 1$, since otherwise $\{x_i^1, x_i^2, x_i^3\}$ was a blocking 3-set.
  Then $x_i^{1}$~prefers to be matched to the 2-set $\{z_i^{1,1}, z_i^{1,2}\}$ or $\{z_i^{1,3}, z_i^{1,5}\}$.
	Every 2-set which agent~$z_i^{1,1} $ respectively $z_i^{1,2}$ prefers to $\{x_1^1, z_i^{1 , 2}\}$ respectively $\{x_1^1, z_i^{1 , 1}\}$ is one of the 2-sets~$\{x_i^1, x_i^2\}$, $\{x_i^1, x_i^3\}$, or $\{x_i^2, x_i^3\}$, contains an agent $x^j_k$ with $j< i$, or contains an agent $c_j$ or $d_j$ for some~$j\in [m]$.
  Thus, unless $z_i^{1,1} = z$ or $z_i^{1,2} = z$ holds, $z_i^{1,1}$ and $z_i^{1,2}$ also prefer to be matched by $\{x_i^{1}, z_i^{1,1}, z_i^{1,2}\}$.
  If $z = z_i^{1,1}$ or $z = z_i^{1,2}$, then $z_i^{1,3}\neq z$ and $z_i^{1,5} \neq z$, and by arguments symmetrically to the above ones, $\{x_i^1, z_i^{1,3}, z_i^{1,5}\}$ is blocking.
\end{proof}

The backward direction now easily follows.

\begin{lemma}\label{lbackward}
  If there exists a stable matching $M$, then there is a truth assignment for $\mathcal{I}$ satisfying exactly one literal in each clause.
\end{lemma}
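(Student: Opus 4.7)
The plan is to define the truth assignment directly from the stable matching~$M$ and then show, using the structural results of \Cref{lcidi,lvertexAgents}, that this assignment sets exactly one literal true in every clause.

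First I would define $f\colon \{x_1,\dots,x_n\}\to\{\LogicTRUE,\LogicFALSE\}$ by setting $f(x_i) = \LogicTRUE$ iff $\{x_i^1,x_i^2,x_i^3\}\notin M$. By \Cref{lvertexAgents} applied with $i = n+1$, this dichotomy is exhaustive: for every~$i$ either the triangle $\{x_i^1,x_i^2,x_i^3\}$ is in $M$, or each of the three occurrence-agents $x_i^k$ ($k\in[3]$) is matched with some clause pair $\{c_j,d_j\}$. Hence $f(x_i)=\LogicTRUE$ is equivalent to: for each $k\in[3]$ there exists $j=j(i,k)\in[m]$ with $\{x_i^k,c_j,d_j\}\in M$.

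Next I would use \Cref{lcidi} to pin down for each clause $C_j$ the unique literal index $\ell_j\in[3]$ for which $\{c_j,d_j,y_j^{\ell_j}\}\in M$. Writing $y_j^{\ell_j}=x_{i}^{k}$ for the corresponding occurrence-agent, the characterization above yields $f(x_i)=\LogicTRUE$, so at least one literal of $C_j$ is set to true.

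The key remaining step, and the only one requiring a small argument beyond mere bookkeeping, is uniqueness: no clause can contain two true literals. Suppose for contradiction that some other literal $y_j^{\ell'}$ with $\ell'\neq\ell_j$ corresponds to a variable $x_{i'}$ with $f(x_{i'})=\LogicTRUE$; write $y_j^{\ell'}=x_{i'}^{k'}$. By the characterization of $f$, the agent $x_{i'}^{k'}$ is matched in $M$ to a clause pair $\{c_{j'},d_{j'}\}$ for some $j'$. Since occurrence-agents are in bijection with literal positions (each $x_i^k$ appears as $y^\ell_{j''}$ for a single $(j'',\ell)$), and since \Cref{lcidi} forces $\{c_{j'},d_{j'}\}$ to be matched to one of $y_{j'}^1,y_{j'}^2,y_{j'}^3$, we must have $j'=j$ (the clause containing the occurrence $x_{i'}^{k'}$). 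Hence $\{c_j,d_j,y_j^{\ell'}\}\in M$ as well, contradicting the fact that $\{c_j,d_j,y_j^{\ell_j}\}\in M$ and $\ell'\neq\ell_j$, since $M$ is a matching. Therefore $f$ sets exactly one literal of every clause to true, which is precisely a solution to $\mathcal{I}$. The main obstacle is the tight use of the bijection between occurrence-agents and literal positions together with \Cref{lcidi} to prevent a true variable from ``straying'' to a different clause's pair; once that is observed, the argument is essentially a one-line counting.
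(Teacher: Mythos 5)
Your proposal is correct and follows essentially the same route as the paper: both define the assignment from the matching (your definition via $\{x_i^1,x_i^2,x_i^3\}\notin M$ is equivalent to the paper's via \Cref{lvertexAgents}), both derive existence of a true literal per clause from \Cref{lcidi,lvertexAgents}, and both derive uniqueness from the fact that \Cref{lcidi} pins each clause pair to one of its own literals combined with the disjointness of the matching. The only difference is cosmetic phrasing of the uniqueness step (you force $j'=j$ and contradict disjointness; the paper argues the second variable's occurrence agent cannot be matched to any clause pair and hence is false).
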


\begin{proof}

  Consider the assignment $f: \{x_i\} \rightarrow \{\LogicTRUE, \LogicFALSE\}$, with $f (x_i) := \LogicTRUE$ if and only if all $x_i^k$ are matched to a 2-set of the form $\{c_j, d_j\}$.

  Assume that $f$ is not a solution to $\mathcal{I}$. We distinguish two cases.

  \bfseries Case 1: \mdseries
  There is a clause $C_j$ which is not satisfied by $f$.

  By \Cref{lcidi}, for each $j\in [m]$, there exists some $\ell \in [3]$ such that $\{c_j, d_j, y^\ell_j\}\in M$.
  Let~$y^\ell_j = x_i^k$.
  By \Cref{lvertexAgents}, for every $k' \in [3]$, agent $x_i^{k'}$ is matched to 2-set $\{c_p, d_p\}$ for some~$p \in [m]$.
  Thus, the clause $C_j$ is satisfied by $f$, a contradiction.

  \bfseries Case 2: \mdseries
  There is a clause $C_j$ which is satisfied by at least two variables $x_i$ and $x_{i'}$.

  Matching~$M$ can only contain one 3-set containing $c_j$ and $d_j$, and so without loss of generality $x_i^k$ is not matched to $\{c_j, d_j\}$ for any $k\in [3]$.
  By \Cref{lcidi}, literal $x_i^{k'}$ contained in $C_j$ does not match to any 2-set $\{c_q, d_q\}$ for $q\in [m]$, and thus, we have $f (x_i ) = \LogicFALSE$, a contradiction.

  Altogether, we conclude that $f$ is a solution to $\mathcal{I}$.
\end{proof}

  Finally, we are ready to prove the main theorem of this section.

\begin{theorem}\label{tNPc}
  \textsc{3-DSR-ML} is $\NP$-complete.
\end{theorem}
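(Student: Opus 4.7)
The plan is to assemble the proof by combining the ingredients already established in this section. Essentially all of the heavy lifting has been done: the reduction is spelled out in \Cref{sec:NPh-reduction}, the forward direction is \Cref{lforward}, and the backward direction is \Cref{lbackward}. So the theorem itself amounts to checking containment in $\NP$, observing that the reduction is polynomial, and quoting the two lemmas for equivalence of instances.

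First, I would dispatch $\NP$-membership. As noted just before the reduction, the size of the input to any \textsc{MDSR-ML} instance is $\Omega\bigl(\binom{n}{d-1}\bigr)$, since the master list explicitly enumerates all $(d-1)$-subsets of the $n$ agents. Thus stability of a proposed matching can be verified by checking, for every $d$-subset of agents, whether it is blocking, which is polynomial in the input size. In particular \textsc{3-DSR-ML} lies in $\NP$.

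Next, I would establish $\NP$-hardness by recalling that \textsc{1-in-3 Positive 3-Occurrence-SAT} is $\NP$-complete~\cite{Gonzalez85} and that the construction in \Cref{sec:NPh-reduction} produces, from an instance $\mathcal{I}$ with $n$ variables and $m$ clauses, a \textsc{3-DSR-ML} instance $\mathcal{I}'$ with $2m + 3n + 18n = O(n+m)$ agents and a master list of size $\binom{2m + 21n}{2}$, which is clearly polynomial in $|\mathcal{I}|$ and computable in polynomial time. I would remark here that the construction is well-defined: the sublists $\mathcal{A}_j$, $\mathcal{B}_i^k$, and $\mathcal{C}_i$ are composed of pairwise disjoint 2-sets (they involve disjoint sets of agents, up to the $x_i^k = y_j^\ell$ identification, but each 2-set is inserted in exactly one sublist corresponding to the ``highest'' structural role of its agents), and the remaining 2-sets are appended arbitrarily, so the overall master list is indeed a strict linear order on $\binom{A}{2}$.

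Finally, I would combine \Cref{lforward} and \Cref{lbackward} to conclude that $\mathcal{I}$ is a yes-instance of \textsc{1-in-3 Positive 3-Occurrence-SAT} if and only if $\mathcal{I}'$ is a yes-instance of \textsc{3-DSR-ML}: \Cref{lforward} gives the forward implication, and \Cref{lbackward} gives the reverse. Together with $\NP$-membership this yields $\NP$-completeness.

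I do not foresee any serious obstacle here, since the technical work is all in the preceding lemmas. The only mildly delicate point is making sure the reduction is polynomial-time and that the constructed master list is well-formed; both are immediate from inspection of the construction.
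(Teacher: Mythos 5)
Your proposal is correct and matches the paper's own proof of \Cref{tNPc}: membership in $\NP$ via checking all 3-sets, polynomial (indeed linear) running time of the reduction from \Cref{sec:NPh-reduction}, and correctness via \Cref{lforward} and \Cref{lbackward}. The extra remarks on the agent count and well-formedness of the master list are harmless elaborations of the same argument.
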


\begin{proof}
  Observe that \textsc{3-DSR-ML} is in $\NP$ (the stability of a matching can be checked in $O(n^3)$~time by enumerating all 3-sets, where $n$ is the number of agents).

  The reduction described in \Cref{sec:NPh-reduction} can be performed in linear time.
  Thus, the $\NP$-completeness of \textsc{3-DSR-ML} follows directly from Lemma~\ref{lforward} and Lemma~\ref{lbackward}.
\end{proof}

We have seen that the strong restriction that the preferences of each agent are complete and derived from a common master list of $d$-sets presumably does not lead to an efficient algorithm, even for $d=2$.
Thus, in order to get tractable cases, other restrictions of the preferences are needed.
One possibility would be to additionally require that every agent has consistent preferences.
In this case, the master list implies that the preferences of every agent is indeed derived from a master poset which is a strict order, and we will see in the beginning of \Cref{sec:ML-agents} that this implies that there exists a unique stable matching which can be found efficiently.

\section{Master poset of agents}
\label{sec:ML-agents}

In this section, we consider the case when there does not exist a master list of $(d-1)$-sets of agents, but a master poset~$\succ_{\ML}$ of single agents; in other words, we study the complexity 
of the problem \dsmpo{}.
Each agent can derive its preferences from this master list, meaning that if for two $(d-1)$-sets~$t\neq t'$, one can find a bijection $\sigma$ from the elements of~$t$ to the elements of $t'$ such that $a\succeq_{\ML} \sigma (a)$ for all~$a\in t$, then any agent (not occurring in $t$ or $t'$) shall prefer~$t$ to $t'$.
We show that this problem is easily polynomial-time solvable if the master poset is a strict order (\Cref{sec:strict-order}).
Afterwards, following the approach of distance-from-triviality parameterization~\cite{GHN04,Nie06},
we generalize this result by showing fixed-parameter tractability for the parameter $\kappa$, the ``maximum number of agents incomparable to a single agent'' (\Cref{sec:fpt}).
On the contrary, for the stronger parameter width of the poset, we show $\Wone$-hardness (\Cref{sec:width}), leaving open whether it can be solved in polynomial time for constant width (in parameterized complexity known as the question for containment in~\XP).
Afterwards, in \Cref{sec:del-dist}, again employing a distance-from-triviality parameterization we show \Wone-hardness of \dsmpo parameterized by the number of agents one needs to delete in order to have the preferences derived from a strict order.
Finally, we show that the variation of \dsmpo where agents may declare an arbitrary part of $(d-1)$-sets as unacceptable (that is, this agent may not be matched to such a $(d-1)$-set) is \NP-complete even if the master poset is a strict order (\Cref{sec:incomplete}).
Note that in order to distinguish from the master list of $(d-1)$-sets in \Cref{sec:ml-tuples}, we will always refer to the master poset as a poset, even if it is a strict order.

\subsection{Strict orders}
\label{sec:strict-order}
We consider the case that the master poset is a strict order.
Then, an easy algorithm solves the problem:
Just match the first $d$ agents from the master poset together, delete them, and recurse.
Note that the preferences of any agent cannot be directly derived from the master poset, as e.g. an agent may prefer either $\{a_1, a_4\}$ to~$\{a_2, a_3\}$ or $\{a_2, a_3\}$ to~$\{a_1, a_4\}$.
Thus, the input contains the complete preferences of all agents, and the input size is $\Theta(d\binom{n}{d-1})$.
In this sense, the running time
of our algorithm is sublinear.

\begin{proposition}\label{tconsistent}
  If $\succeq_{\ML}$ is a strict order, then any \dsmpo instance admits a uniquely determined 
  stable matching.
  Assuming that the poset is given as a ranking $a_1 \succ_{\ML} a_2 \succ_{\ML} \dots \succ_{\ML} a_n$, this stable matching can be found in $O(n)$~time, where $n$ is the number of agents.
  If the poset is given via pairwise comparisions, then the unique stable matching can be found in $O( n^2)$ time.
\end{proposition}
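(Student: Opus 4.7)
The plan is to prove by induction on $n$ that the greedy matching, which groups the top $d$ agents of the master poset together and recurses on the remaining $n-d$ agents, is the unique stable matching. The central ingredient, which I would establish first, is the following claim: every stable matching $M$ contains the $d$-set $T := \{a_1, \dots, a_d\}$. Fix $i \in [d]$ and set $T_i := T \setminus \{a_i\}$. For any other $(d-1)$-set $S \subseteq A \setminus \{a_i\}$ with $S \neq T_i$, order both sets in decreasing $\succ_{\ML}$-rank as $T_i = \{t_1 \succ_{\ML} \dots \succ_{\ML} t_{d-1}\}$ and $S = \{s_1 \succ_{\ML} \dots \succ_{\ML} s_{d-1}\}$. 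Since $T_i$ consists of the $d-1$ top-ranked agents of $A \setminus \{a_i\}$, one has $t_j \succeq_{\ML} s_j$ for every $j \in [d-1]$, and the inequality is strict at the smallest index on which the two sets differ. By the definition of preferences derived from $\succ_{\ML}$, this means $a_i$ strictly prefers $T_i$ to $S$. Hence, if $T \notin M$, then for each $a_i \in T$ either $a_i$ is unmatched in $M$ or $M(a_i) \neq T_i$ and $a_i$ strictly prefers $T_i$ to $M(a_i)$; in either case $T$ is a blocking $d$-set, contradicting stability.

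Uniqueness and existence then follow by a short induction on $n$. If $n < d$, the only (and trivially stable) matching is the empty one, since no $d$-set exists. If $n \ge d$, the claim forces $T \in M$; restricting to the agents $\{a_{d+1}, \dots, a_n\}$ with the induced master poset yields a smaller \dsmpo instance, which by induction admits a unique stable matching $M'$. I would then check that $M := M' \cup \{T\}$ is stable in the full instance: any blocking $d$-set either lies entirely inside $\{a_{d+1}, \dots, a_n\}$ (contradicting stability of $M'$) or contains some $a_i \in T$, which is impossible since $a_i$ is matched to $T_i$, a $(d-1)$-set strictly dominating every alternative by the argument above. This pins down the unique stable matching $\{\{a_1, \dots, a_d\}, \{a_{d+1}, \dots, a_{2d}\}, \dots\}$, leaving the last $n \bmod d$ agents (if any) unmatched.

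For the running times: given the ranking $a_1 \succ_{\ML} \dots \succ_{\ML} a_n$ explicitly, outputting the groups in one left-to-right pass costs $O(n)$ time. If the poset is supplied only as pairwise comparisons, the input itself has size $\Theta(n^2)$, and one can extract the ranking in $O(n^2)$ time by a direct topological sort (repeatedly identify and delete the maximum of the remaining agents), after which the $O(n)$ procedure applies. The only delicate step is the strict-dominance argument in the key claim---namely checking that $T_i$ strictly dominates every $(d-1)$-set $S \neq T_i$ in the precise sense required by the derivation from a master poset---but once that is in place, the rest is routine induction and bookkeeping.
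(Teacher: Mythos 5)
Your proof is correct and follows essentially the same approach as the paper: both identify the greedy matching $\{\{a_1,\dots,a_d\},\{a_{d+1},\dots,a_{2d}\},\dots\}$ and both rest on the observation that the top $d$ remaining agents componentwise dominate any alternative $(d-1)$-set in the sense required by derivation from the master poset. The only difference is organizational — you peel off the top block via an explicit key lemma and recurse, whereas the paper verifies stability of the whole matching directly and proves uniqueness by a minimal-differing-index argument — but the underlying domination argument and the running-time analysis are the same.
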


\begin{proof}
  We number the agents in such a way that $a_1\succ_{\ML} a_2\succ_{\ML} \dots \succ_{\ML} a_n$.

  We claim that $M \coloneqq\{\{a_{d (i-1) +1}, a_{d(i -1) + 2},\dots, a_{di}\} : 1\le i \le \lfloor\frac{n}{d}\rfloor\}$ is a stable matching.
  We prove this claim by contradiction, so assume that there is a blocking $d$-set $\{a_{i_1}, a_{i_2},\dots,  a_{i_d}\}$ with $i_1 < i_2 < \dots < i_d$.
  Let~$\{a_{i_1}, b_2, b_3, \dots, b_{d}\}\in M$ be the $d$-set containing $a_{i_1}$.
  Note that such a $d$-set exists as~$M$ leaves at most the $d-1$ last agents of the master poset unmatched, and the agents~$a_{i_j}$, $j\in \{2, 3, \dots, d\}$, are ranked after~$a_{i_1}$ in the master poset.
  Since $a_{i_j}$ is after $a_{i_1}$ in the master poset, we have $b_j \succeq_{\ML} a_{i_{j}}$ for all $j\in \{2, 3, \dots, d\}$.
  Thus, $a_{i_1}$ cannot prefer~$\{a_{i_2}, \dots, a_{i_d}\}$ to $M$, a contradiction.
  If the master poset is given as the order $a_1 \succ_{\ML} \dots \succ_{\ML} a_n$, then $M $ can clearly be constructed in $O(n)$ time.
	If the master poset is given via pairwise comparisions, then we compute the strict order $a_1 \succ_{\ML} \dots \succ_{\ML} a_n$ in $O(n ^2)$ time using a sorting algorithm, and from this, we can compute $M$ in $O(n)$ time.

  It remains to show that $M$~is uniquely determined.
  Assume that there is a stable matching~$M' \neq M$.
  Let $i$ be the smallest index such that $a_i$ is matched differently in~$M$ and~$M'$.
  Since only the at most $d-1$ agents with highest index are unmatched in~$M$, we get that $a_i$ is matched in $M$.
  As either all or no agent from a $d$-set $t\in M$ are matched differently in $M$ and $M'$, and all $d$-sets from $M$ are of the form $\{a_{d (j-1) +1}, a_{d (j-1) + 2}, \dots, v_{dj}\}$, it follows
	that $i= d (j-1) +1$ for some $j\in [\lfloor \frac{n}{d}\rfloor]$.

  We claim that $t\coloneqq\{a_i, a_{i+1}, a_{i+2}, \dots, a_{i+d-1}\}$ is a blocking $d$-set for $M'$.
	By the definition of $i$, matching $M'$ does not contain a $d$-set with one agent with index smaller than~$i$ and one agent with index at least $i$, and we have that $t \notin M'$.
  Thus, for any $a\in t$ with $t_a' \coloneqq M' (a)\neq \emptyset$, the bijection $\sigma_v : t\setminus \{a\} \rightarrow t_a'\setminus \{a\}$ matching the agent with the $j$-th-lowest index in $t\setminus \{a\}$ to the agent with the $j$-th-lowest index in $t'_a\setminus\{a\}$ satisfies $b \succeq_{\ML} \sigma (b)$ for all $b\in t\setminus \{a\}$.
  Since the preferences are derived from $\succeq_{\ML}$, it follows that $a$ prefers $t\setminus \{a\}$ to $t_a'$.
  Thus, $t$ is a blocking $d$-set.
\end{proof}

The polynomial-time solvability of the special case from \Cref{tconsistent} motivates three
distance-from-triviality parameterizations studied in the following.

\subsection{Posets}
\label{sec:posets}

In two-dimensional stable (or popular) matching problems with master lists, also reflecting the needs of typical real-world applications, the master list usually contains ties~\cite{BiroIS11,IMS08,KavithaNN14,OMalley07,PerachPR08}.
In the following, we allow the master list not only to contain ties, but to be an arbitrary poset.
In this case, the problem clearly is \NP-complete, as the poset where each agent is incomparable to each other agent does not pose any restrictions on the preferences of the agents~\cite{NH91}.
Hence, we consider several parameters measuring the similarity of the poset to a strict order---this is
our polynomial-time solvable special case of the previous section.
Thus, for the parameter ``maximum number of agents incomparable to a single agent'', we show fixed-parameter tractability in \Cref{sec:fpt}, and for the stronger parameter width of the poset, we show \Wone-hardness in \Cref{sec:width}.
Here, by ``stronger'' we mean that there are posets for which the width is bounded, while the maximum number of agents incomparable to a single agent is not, while the converse is not true (as an antichain of size~$k$ implies an agent with $k-1$ agents incomparable to it).
Consider for example the poset $\succ$ over a set of $2n$ agents $a_1, \dots, a_n , b_1, \dots, b_n$ with $a_i \succ a_j$ and $b_i \succ b_j$ for all $i <j$ as well as $a_i \perp b_j$ for all $i, j\in [n]$.
Then the width of this poset is two (as it can be decomposed in the two chains $a_1 \succ a_2 \succ \dots \succ a_n$ and $b_1 \succ b_2 \succ \dots \succ b_n$), while every agent is incomparable to $n$ other agents.

\subsubsection{Maximum number of agents incomparable to a single agent}
\label{sec:fpt}

In this section,
we show that \dsmpo is fixed-parameter tractable when parameterized by~$\kappa (\succ_{\ML})$ (recall that $\kappa (\succ_{\ML})$ denotes the maximum number of agents incomparable to a single agent in the master poset~$\succ_{\ML}$).
As a first step of the algorithm, we show how to ``approximate'' the given poset by a strict order, meaning that for any two agents $a$ and $b$ with~$a$ being before~$b$ in the strict order, we have $a \succ_{\ML} b$ or $a \perp_{\ML} b$, and if $a$ is ``much earlier'' in the strict order than $b$, we have that~$a \succ_{\ML} b$.

\begin{lemma}\label{lpo}
  For any poset $(A, \succeq)$, there is an order $a_1, a_2, \dots, a_n$ of\/ $A$~such that
    (i) for all~$i < j$, we have that $a_i \succ a_j$ or $a_i \perp a_j$, and
    (ii) for all $j > i + 2\kappa (\succeq)$, we have $a_i \succ a_j$.
  Moreover, such an order can be found in $O(|A|^2)$ time.
\end{lemma}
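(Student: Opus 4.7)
The plan is to show that \emph{any} linear extension of $\succeq$ already satisfies both conditions, so no sophisticated tie-breaking is required. First, I would compute such a linear extension $a_1,\ldots,a_n$ by a standard topological sort; given pairwise comparisons between the $n = |A|$ elements this can be carried out in $O(n^2)$ time. Property~(i) is then immediate from the definition of a linear extension, since $a_j \succ a_i$ with $i < j$ would contradict the topological order.

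The substance of the proof is property~(ii), which I would establish by contraposition: fix $i < j$ with $a_i \perp a_j$ and show that necessarily $j - i \le 2\kappa(\succeq) - 1$. The key structural observation is that every element $a_k$ with $i < k < j$ must be incomparable to at least one of $a_i$ and $a_j$. Indeed, the linear-extension property rules out $a_k \succ a_i$ and $a_j \succ a_k$, so we have $a_i \succ a_k$ or $a_i \perp a_k$, and similarly $a_k \succ a_j$ or $a_k \perp a_j$. If both $a_i \succ a_k$ and $a_k \succ a_j$ held, transitivity would yield $a_i \succ a_j$, contradicting $a_i \perp a_j$. Hence $a_k \in I(a_i) \cup I(a_j)$, where $I(v) \coloneqq \{w \in A : v \perp w\}$.

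To finish, I would combine this observation with the definition of $\kappa$. We have $|I(a_i)|, |I(a_j)| \le \kappa(\succeq)$, and moreover $a_i \in I(a_j)$ and $a_j \in I(a_i)$ because $a_i \perp a_j$. The $j - i - 1$ elements strictly between positions $i$ and $j$ therefore lie in $(I(a_i) \cup I(a_j)) \setminus \{a_i, a_j\}$, which has size at most $|I(a_i)| + |I(a_j)| - 2 \le 2\kappa(\succeq) - 2$. Hence $j - i \le 2\kappa(\succeq) - 1$; contrapositively, $j > i + 2\kappa(\succeq)$ forces $a_i \not\perp a_j$, and then the linear-extension property yields $a_i \succ a_j$, as desired.

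I do not anticipate any real obstacle here: the argument is a one-observation proof, and the only points requiring minor care are the correct book-keeping when subtracting $a_i, a_j$ from $I(a_i) \cup I(a_j)$, and the routine $O(n^2)$ bound for topological sort when the poset is presented via pairwise comparisons. What makes the statement neat is that elements sandwiched between an incomparable pair in any linear extension are already forced to be close, so no tailored choice of linear extension is needed.
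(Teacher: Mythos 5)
Your proof is correct, and it takes a genuinely different (and arguably cleaner) route than the paper. The paper proceeds by induction on $|A|$: it repeatedly extracts a maximal element $a_1$, recurses on the rest, and then argues that the set $A'$ of elements incomparable to $a_1$ must all sit among the first $2\kappa(\succeq)$ positions, because nothing comparable to $a_1$ can lie strictly above an element of $A'$ and hence at most $|A'| + \kappa_{\succ}(a') \le 2\kappa(\succeq)$ elements can precede any $a' \in A'$. Your argument instead fixes an arbitrary linear extension and observes directly that any element sandwiched between an incomparable pair $a_i \perp a_j$ must itself be incomparable to $a_i$ or to $a_j$ (else transitivity would force $a_i \succ a_j$), so the gap is bounded by $|(I(a_i) \cup I(a_j)) \setminus \{a_i, a_j\}| \le 2\kappa(\succeq) - 2$. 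The two constructions produce the same objects---the paper's greedy extraction of maximal elements is exactly a topological sort, so every linear extension arises this way---but your correctness argument is non-inductive, symmetric in $a_i$ and $a_j$, and makes explicit that no tailored choice of extension is needed; it even yields the marginally sharper bound $j - i \le 2\kappa(\succeq) - 1$ for incomparable pairs, which of course still implies condition~(ii) as stated. The $O(|A|^2)$ running-time claim via topological sort from pairwise comparisons matches the paper's.
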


\begin{proof}
  We prove the statement by induction on~$|A|$.
  If $|A| = 1$, then there is nothing to show.
  So assume $|A| > 1$.
  
  Let $a_1 \in A$ be an element such that $a_1 \succeq a$ or $a_1 \perp a$ for all $a\in A$.
  Such an element~$a_1$ has to exist in any poset. 
  By induction, we can find an order $a_2, \dots, a_{|A|}$ of $A\setminus \{a_1\}$ satisfying the lemma; note that $\kappa (A) \ge \kappa (A\setminus \{a_1\})$.
  We then add $a_1$ at the beginning of $a_2, \dots, a_{|A|}$.
  Let $A'$ be the set of elements incomparable with $a_1$.
  It remains to show that the elements from $A'$ are among the $2 \kappa (\succeq)$ first elements.
  Note that there is no $a' \in A' $ and $a\in A\setminus A'$ with $a \succ a'$, as otherwise $a_1 \succeq a \succ a'$ and thus~$a_1 \succ a'$, but by the definition of $A'$ we have $a_1 \perp a'$.
  Thus, for any $a'\in A'$, there are at most $|A'| + \kappa_\succ (a') \le 2\kappa (\succeq)$ elements before $a'$, and thus, $a_1, \dots, a_{|A|}$ satisfies the lemma.

  Since we can compute an element~$a_1$ with $a_1 \succeq a$ or $a_1 \perp  a$ for all $a\in A\setminus \{a_1\}$ in linear time, the order can be found in quadratic time.
\end{proof}

For the remainder of \Cref{sec:fpt}, we fix an instance $\mathcal{I} = (A, (\succeq_a)_{a\in A}, \succeq_{\ML})$ of \dsmpo, and an order $a_1, \dots, a_n$ of the agents in~$A $ fulfilling the conditions of \Cref{lpo} for the poset $(A, \succeq_{\ML})$.
Let $\kappa \coloneqq \kappa (\succeq_{\ML})$.
Furthermore, let $A[{\le i}] := \{a_1, \dots, a_i\}$, let $A {[i,j]} := \{a_i, a_{i+1}, \dots, a_j\}$, and let $A[{\ge i}] := \{a_i, a_{i+1}, \dots, a_n\}$ for any $i\in [n]$.

We now show that the agents contained in a $d$-set of a stable matching are close to each other in the order $a_1, \dots, a_n$.

\begin{lemma}\label{lem:small-dist}
  Let $\mathcal{I} = (A, (\succeq_a)_{a\in A}, \succeq_{\ML})$ be an \dsmpo-instance and let $a_1, \dots, a_n$ be an order of the agents in $A$ such that this order fulfills \Cref{lpo} for the poset $(A, \succeq_{\ML})$.

  For any stable matching $M$ and any $d$-set $\{a_{i_1}, a_{i_2},\dots, a_{i_d}\} \in M$ with $i_1 < i_2 < \dots < i_d$, it holds that $i_{j+1} - i_{j} \le \upperBoundDistance$ for all $j\in [d-1]$.
\end{lemma}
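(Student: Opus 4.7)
My plan is to proceed by contradiction: assume $i_{j+1} - i_j > \upperBoundDistance$ for some $j$ and construct an explicit blocking $d$-set for $M$. The candidate will be of the form
\[
  T := \{a_{i_1}, \ldots, a_{i_j}\} \cup \{a_{q_1}, \ldots, a_{q_{d-j}}\},
\]
where the pool indices $q_1 < \dots < q_{d-j}$ are drawn from the ``safe pool''
$P^* := \{q : i_j < q \le i_{j+1} - 2\kappa - 1\}$.
Under the assumption, $|P^*| \ge 2\kappa d^2 + 2\kappa + 3d + 1$. The reason for working inside $P^*$ is that, by \Cref{lpo}, every $a_q$ with $q \in P^*$ satisfies $a_q \succ_{\ML} a_{i_\ell}$ for all $\ell \ge j+1$, since $i_\ell \ge i_{j+1} > q + 2\kappa$.

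I would first dispatch the easy half: for \emph{any} choice of $q_1, \ldots, q_{d-j}$ in $P^*$, every $a_{i_\ell}$ with $\ell \le j$ strictly prefers $T \setminus \{a_{i_\ell}\}$ over its current partners $\{a_{i_1},\dots,a_{i_d}\} \setminus \{a_{i_\ell}\}$. The bijection that fixes the unchanged agents $\{a_{i_1}, \ldots, a_{i_j}\} \setminus \{a_{i_\ell}\}$ and maps each new $a_{q_k}$ to the replaced $a_{i_{j+k}}$ witnesses strict dominance via the inequality above.

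The harder half is to choose the $q_k$ so that each $a_{q_k}$ also strictly prefers $T \setminus \{a_{q_k}\}$ over $M(a_{q_k}) \setminus \{a_{q_k}\}$, i.e., so that $T \setminus \{a_{q_k}\}$ dominates $M(a_{q_k}) \setminus \{a_{q_k}\}$ in the master poset. My plan is to first characterize the ``undominable'' agents, namely those $b \in A$ that are not $\preceq_{\ML}$ any element of $\{a_{i_1}, \ldots, a_{i_j}\} \cup \{a_q : q \in P^*\}$. A direct case analysis using \Cref{lpo} shows that there are only $O(\kappa d)$ such agents (essentially the very-early agents and those lying inside a $2\kappa$-incomparability window of some $a_{i_\ell}$ or of the left boundary of $P^*$). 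Since each undominable agent belongs to exactly one $d$-set of $M$, this rules out at most $O(\kappa d)$ pool agents of $P^*$, leaving $\Omega(\kappa d^2)$ ``good'' ones whose $M(a_q)$ contains no undominable element.

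The main obstacle will be the \emph{simultaneous} selection of $d - j$ good pool agents, because committing to $a_{q_{k'}}$ changes the set of candidate dominators available for each $M(a_{q_k})$, so the choices interact. I plan to handle this by iteratively picking $q_k$ one at a time, using a Hall-type marriage argument at each step to verify that the required bijection $T \setminus \{a_{q_k}\} \to M(a_{q_k}) \setminus \{a_{q_k}\}$ with $\succeq_{\ML}$-entries can be completed: the remaining $\Omega(\kappa d^2)$ slack in $P^*$ is much larger than the $O(\kappa d)$ ``problematic'' positions that could arise in any single $M(a_{q_k})$, which is precisely where the $d^2$ factor in $\upperBoundDistance$ is consumed. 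Once all $d-j$ good pool agents are selected, $T$ is a blocking $d$-set, contradicting the stability of $M$.
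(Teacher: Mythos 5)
Your strategy differs from the paper's: you try to exhibit a single blocking $d$-set $T=\{a_{i_1},\dots,a_{i_j}\}\cup\{a_{q_1},\dots,a_{q_{d-j}}\}$ with the $q_k$ chosen from a pool left of $i_{j+1}-2\kappa$, whereas the paper never builds such a mixed set except when an entire $d$-set of $M$ lies inside the window $A[i_j+2\kappa+1,\,i_{j+1}-2\kappa-1]$; instead it bounds the number of $d$-sets of $M$ straddling either boundary of that window by showing that any two $d$-sets straddling the same boundary must contain a mutually incomparable pair (otherwise the $d$ lowest-index agents of their union form a blocking $d$-set), which caps each family at $\kappa d+1$ and yields the length bound by counting. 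Your ``easy half'' (the agents $a_{i_1},\dots,a_{i_j}$ prefer $T$) is correct. The gap is in the hard half, at the claim that there are only $O(\kappa d)$ undominable agents. Every agent $a_p$ with $p<i_1$ is undominable: all elements of $\{a_{i_1},\dots,a_{i_j}\}\cup\{a_q : q\in P^*\}$ have index larger than $p$, so by \Cref{lpo}(i) none of them is $\succeq_{\ML} a_p$. Since $i_1$ can be $\Theta(n)$, the number of undominable agents is not bounded by any function of $\kappa$ and $d$, and the inference ``each undominable agent belongs to exactly one $d$-set, so at most $O(\kappa d)$ pool agents are ruled out'' collapses: a priori every pool agent could sit in a $d$-set of $M$ that also contains an agent of index below $i_1$, in which case no bijection into $M(a_q)$ with $\succeq_{\ML}$-entries exists and the master poset does not force $a_q$ to prefer $T$.

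What your argument actually needs is a bound on the number of $d$-sets of $M$ containing both a pool agent and a low-index agent. That is exactly the paper's $\mathcal{T}^-$ bound, proved via the pairwise-incomparability observation above, and your proposal contains no substitute for it; without it the iterative/Hall-type selection cannot even begin, because the set of ``good'' pool agents is not shown to be nonempty. (The Hall-type completion of the individual bijections $T\setminus\{a_{q_k}\}\to M(a_{q_k})$ is also only sketched and is delicate near the boundaries of $P^*$, but that is secondary: the counting claim is the step that fails.)
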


\begin{proof}
  Let $M$ be a stable matching, and $\{a_{i_1}, a_{i_2},\dots, a_{i_d}\} \in M$ be a $d$-set contained in~$M$.
  We assume $i_1 < i_2 < \dots < i_d$, and fix some $j\in [d-1]$.

	Let $\mathcal{T}^+$ be the set of $d$-sets in $M$ containing at least one agent from $A {[i_{j} + 2 \kappa + 1, i_{j+1} - 2 \kappa -1]}$ and at least one agent from $A [{\ge i_{j + 1} - 2 \kappa}]$ and
  let $\mathcal{T}^-$ be the set of $d$-sets in $M$ containing at least one agent from $A {[i_{j} + 2 \kappa + 1, i_{j+1} - 2 \kappa -1]}$, and at least one agent from $A [{\le i_{j} + 2 \kappa}]$.
  We now give an example for the definitions of $\mathcal{T}^+$ and $\mathcal{T}^-$.

  \begin{example}
   Let $d = 4$, $\kappa =5 $, and $M$ be a stable matching.
   Assume that $M$ contains the $4$-set~$\{a_3, a_{14}, a_{50}, a_{157}\}$.
   Thus, $i_1 = 3$, $i_2 = 14$, $i_3 = 50$, and $i_4 = 157$.
   For instance, taking $j = 3$, the set $\mathcal{T}^+$ contains all $4$-sets from~$M$ containing an agent from $\{a_{61}, a_{62}, \dots, a_{146}\}$, an agent from $\{a_{147}, a_{148}, \dots, a_n\}$, and two more arbitrary agents.
   The set $\mathcal{T}^-$ contains all $4$-sets from~$M$ containing an agent from $\{a_1, a_2, \dots, a_{60}\}$, an agent from $\{a_{61}, a_{62}, \dots, a_{146}\}$, and two more arbitrary agents.
  \end{example}

  Now, let $t$ be a $d$-set from $ \mathcal{T}^+$.
  We claim that for every $d$-set $t'\in \mathcal{T}^+$ other than~$t$, there exist agents~$a \in t$ and $a'\in t'$ with $a\perp_{\ML} a'$.
  Assume for a contradiction that there are two~$d$-sets~$t, t'\in \mathcal{T}^+$ such that there do not exist $a\in t$ and $a'\in t'$ with $a\perp_{\ML} a'$.
  Let $t^*$ contain the $d$ agents from $t \cup t'$ with minimum index.
  By the definition of~$\mathcal{T}^+$, any $d$-set from~$\mathcal{T}^+$ contains an agent from~$A [{\le i_{j+1} - 2\kappa -1}]$ and one agent from~$A [{\ge i_{j+1} -2\kappa}]$.
  Therefore, at least one agent of~$t^*$ is contained in $t$, and at least one agent of~$t^*$ is contained in $t'$.
  For any agent $a_p\in t \setminus t^*$ and any $a_q\in t' \cap t^*$, it holds by the definition of $t^*$ that $q < p$.
  By \Cref{lpo}, it follows that $a_q \succ_{\ML} a_p$ or $a_q \perp_{\ML} a_p$.
  However, the latter is not possible, since we assumed that there are no two agents $a \in t$ and $a'\in t'$ with $a \perp_{\ML} a'$.
  Thus, we have that each $a\in t \cap t'$ prefers $t^*$ to~$t$, and by symmetric arguments also each $a'\in t'\cap t^*$ prefers $t^*$ to $t'$.
  It follows that the $d$-set~$t^*$ is blocking, contradicting the assumption that $M$ is stable.

  As any agent is incomparable to at most $\kappa $ other agents, it follows that $|\mathcal{T}^+|\le \kappa d + 1$.
  By analogous arguments, one can show that $|\mathcal{T}^-| \le \kappa d + 1$.

  Any $d$-set $s\in M$ consisting solely of agents from
	$A {[i_j + 2\kappa + 1, i_{j+1} - 2\kappa -1]}$ directly implies a blocking $d$-set $\{a_{i_1},\dots, a_{i_j}\} \cup s_{d-j}$, where $s_{d-j}$ is an arbitrary subset of~$s$ containing exactly $d -j$ agents.

  Hence, $M$~contains at most $2(\kappa d + 1)$ sets containing an agent from $A {[i_{j} + 2 \kappa + 1, i_{j+1} - 2 \kappa -1]}$, implying that $(i_{j+1} - 2\kappa -1) - (i_j + 2\kappa + 1) \le d \cdot 2(\kappa d + 1) + d -1$, where $d-1$ is added since there can be at most $d-1$ unmatched agents.
  It follows that $i_{j+1} - i_j \le 2\kappa d^2 +  4\kappa + 3d +1$.
\end{proof}

\Cref{lem:small-dist} implies that in order to find a stable matching, we only have to consider matchings~$M$ such that for every $d$-set $t\in M$, we have for any two agents $a_i, a_j\in t$ that $|i -j| \le d (2\kappa d^2 +  4\kappa + 3d +1)$.
We will call such matchings \emph{local}.
We now develop a dynamic program which decides whether there is a local and stable matching using agents $a_1, \dots, a_i$ for every $i\in [n]$, resulting in an \FPT-algorithm for the combined parameter $\kappa + d$.

\begin{proposition}\label{thm:fptk+d}
  \dsmpo can be solved in $O(n^2) + (\kappa d^4)^{O(\kappa d^4)} n$ time, where $\kappa$ is the maximum number of agents incomparable to a single agent, $d$ is the dimension (i.e., the group size), and $n$ is the number of agents.
\end{proposition}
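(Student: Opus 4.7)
The plan is to first apply \Cref{lpo} in $O(n^2)$ time to obtain an ordering $a_1, \dots, a_n$ of the agents along which the master poset is ``almost compatible'': agents more than $2\kappa$ apart are strictly comparable. Setting $W := \upperBoundDistance$ from \Cref{lem:small-dist}, every $d$-set in any stable matching has total index span at most $(d-1)W = O(\kappa d^3)$. I would fix a window size $L = \Theta(\kappa d^3)$ large enough to contain any matching $d$-set and, as argued below, also any blocking $d$-set, and build a left-to-right dynamic program maintaining one state per position~$i \in [n]$ that captures the status of the window $\{a_{i-L+1}, \dots, a_i\}$.

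At position~$i$, the state will record, for each agent in the window, one of the following labels: \emph{finalized}, i.e.\ already matched in a completed $d$-set; \emph{pending in partial $d$-set~$s$}, i.e.\ tentatively in one of at most $L/d$ open partial $d$-sets (each with fewer than $d$ members, all located in the current window); or \emph{provisionally unmatched}. I would also maintain a global counter of permanently unmatched agents, which must not exceed $d-1$ in any stable matching. The transition from position~$i$ to~$i+1$ appends $a_{i+1}$ by either adjoining it to an existing pending $d$-set (possibly closing it), opening a new pending $d$-set consisting only of $a_{i+1}$, or leaving it provisionally unmatched; simultaneously, the agent at position $i - L + 1$ leaves the window and must have been definitively resolved to a finalized or permanently unmatched status, since by \Cref{lem:small-dist} it cannot participate in any $d$-set together with an agent of index $>i$.

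The hard part will be ensuring that the matching assembled by the DP is globally stable, using only local per-transition checks. To that end I plan to prove that in any local matching, a blocking $d$-set must itself be local: if $\{a_{j_1}, \dots, a_{j_d}\}$ with $j_1 < \dots < j_d$ were a blocking $d$-set with $j_d - j_1 > L$, then the match $M(a_{j_1})$ lies within index~$W$ of~$j_1$ (by \Cref{lem:small-dist}), so by \Cref{lpo} every agent of $M(a_{j_1})$ is $\succ_{\ML}$ almost every agent of $\{a_{j_2}, \dots, a_{j_d}\}$ except possibly for at most $\kappa$ incomparabilities; by a Hall-type pigeonhole argument leveraging the choice of $L$, one can then construct a bijection from $M(a_{j_1})$ to $\{a_{j_2}, \dots, a_{j_d}\}$ witnessing $\succeq_{\ML}$-dominance on every pair, forcing $a_{j_1}$ to strictly prefer its current match and contradicting the blocking property. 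Once this is established, stability reduces to the absence of blocking $d$-sets entirely contained in the current window, and such $d$-sets can be enumerated and tested during each DP transition in time depending only on $L$.

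Finally, counting the state space: each of the $L = O(\kappa d^3)$ agents in the window carries one of $O(L)$ possible labels (its resolved status or the pending $d$-set it belongs to), giving $L^{O(L)}$ configurations, and transitions between configurations can also be enumerated in $L^{O(L)}$ time since they are determined by how the new agent is slotted in and which pending $d$-sets it closes. Combined with $n$ positions and the $O(n^2)$ preprocessing from \Cref{lpo}, this yields a total running time of $O(n^2) + L^{O(L)} n = O(n^2) + (\kappa d^4)^{O(\kappa d^4)} n$, matching the claimed bound.
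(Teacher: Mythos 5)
Your overall architecture matches the paper's: compute the order from \Cref{lpo}, use \Cref{lem:small-dist} to restrict attention to local matchings, and sweep a window of size $\Theta(\kappa d^{O(1)})$ with a dynamic program whose states are (essentially) partitions of the window, giving the claimed $L^{O(L)}n$ bound. The gap is in the one step you yourself flag as "the hard part". Your reduction of global stability to window-local checks rests on the claim that every blocking $d$-set of a local matching is itself local, and the proposed proof of that claim does not work. \Cref{lpo} only yields $b \succ_{\ML} a_{j_\ell}$ when $j_\ell$ exceeds the index of $b$ by more than $2\kappa$; it says nothing when $a_{j_\ell}$ sits at or below that threshold. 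Since up to $d-2$ of the indices $j_2,\dots,j_{d-1}$ may lie arbitrarily close to $j_1$ while the agents of $M(a_{j_1})$ spread over an interval of width $(d-1)(\upperBoundDistance)$ around $j_1$, the required dominance bijection can fail at many positions simultaneously --- for instance $t=\{a_{j_1},a_{j_1+1},a_{j_d}\}$ with $M(a_{j_1})=\{a_{j_1+2},a_{j_1+3}\}$ and $a_{j_1+1}$ incomparable to both partners: neither $(d-1)$-set dominates the other, the master poset leaves $a_{j_1}$ free to prefer $\{a_{j_1+1},a_{j_d}\}$, and no contradiction arises. The obstruction is index proximity, not a count of ``at most $\kappa$ incomparabilities'', so the Hall-type argument has nothing to leverage; indeed I see no reason the locality-of-blocking-sets statement is true at all.

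What the paper proves instead is weaker but sufficient: if a blocking $d$-set $t$ reaches outside the window, then (because the window has size $2d(d-1)(\upperBoundDistance) = \Theta(\kappa d^4)$, a factor $d$ larger than your $L$) the $d$ agents of $t$ must skip over some interval of width $2(d-1)(\upperBoundDistance)$ that contains an entire matched $d$-set $t'$, and the $d$ lowest-indexed agents of $t\cup t'$ then form a blocking $d$-set lying wholly in the already-processed prefix --- contradicting the DP invariant, which must therefore assert the absence of \emph{all} blocking $d$-sets in the prefix, not merely the local ones. Your write-up is missing this ``induced earlier blocking set'' argument, and without it the local checks do not certify stability. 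A second, more repairable issue: with $L=(d-1)(\upperBoundDistance)$ a completed $d$-set can straddle the left window boundary, so a bare ``finalized'' label does not retain enough of $M(a_p)$ to evaluate whether $a_p$ prefers a candidate blocking set; the paper avoids this by keying states on all $d$-sets meeting the (larger) window.
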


\begin{proof}
  We first apply \Cref{lpo} to the poset $(A, \succeq_{\ML})$ to get an order $v_1, \dots , v_n$ of the agents in $O(n^2)$ time.
  Let $k\coloneqq 2d(d - 1) (\upperBoundDistance) $.

Our dynamic programming table~$\tau$ has an entry $\tau[i, M]$ for each $i\in [n]$ and each local matching $M$ such that any $d$-set~$t\in M$ contains at least one agent of $a_i, \dots, a_{i+k}$.
  This entry shall be $\myone$ if and only if $M$ can be extended to a local matching $M^*$ not admitting a blocking $d$-set consisting solely of agents from $a_1, \dots, a_{i+k}$. 
  By \Cref{lem:small-dist}, there exists a stable matching if and only if $\tau[n-k, M] = \myone$ for some local matching $M$.
  Thus, it remains to show how to compute these values.

  For $i = 1$, we set $\tau [ 1, M] \coloneqq \myone$ if and only if $M$ does not contain a blocking~$d$-set inside $a_1, \dots, a_{k}$.

  To compute $\tau[i, M]$ for $i > 1$, we need to determine whether we can extend $M$ to a local matching~$M^*$ on $a_1, \dots, a_{i+k}$ such that every blocking $d$-set involves an agent $a_j$ with $ j > i+ k$.
  Any such extension~$M^*$ induces a matching~$M_{i-1}$ by taking all $d$-sets containing an agent from $a_{i-1}, \dots, a_{i - 1 +k}$, and $M^*$ also witnesses that $\tau [i-1, M_{i-1}] = \myone$.
  However, given a matching~$M_{i-1}$ with $\tau [i-1, M_{i-1}] = \myone$ and $M_{i-1} (a_j) = M (a_j)$ for all $j \in \{i, i+1, \dots, i +k-1\}$ does not imply that $\tau [i, M] = \myone$, as there might be a blocking $d$-set involving $a_{i + k}$ and $d-1$ other agents from $a_1, \dots, a_{i -1 + k}$.
  Since $M_{i-1} $ does not store how $a_1, \dots, a_{i-2}$ are matched, we cannot just enumerate all such $d$-sets and check whether they are blocking, but only can do this for $d$-sets consisting solely of agents from $a_{i-1}, \dots, a_{i+k}$.
  We will show that for any matching~$M_{i-1}^*$ witnessing that $\tau[i-1, M_{i-1}] = \myone$ that no blocking $d$-set containing $a_{i+k}$ and at least one agent from $a_1, \dots, a_{i-1+k}$ can occur;
  therefore, it is enough to check for blocking $d$-sets containing only agents from $a_{i-1}, \dots, a_{i+k}$.
	Thus, in order to compute $\tau[i, M]$,
	we look up whether there exists a local matching $M_{i - 1}$ with $\tau[i-1, M_{i-1}]\coloneqq \myone$ and $M(a_j ) = M_{i-1}(a_j)$ for all $j\in [i, i+k -1]$ such that $M_{i-1}\cup M$ does not admit a blocking $d$-set consisting of agents from~$a_{i - 1}, \dots, a_{i+k}$.
  If this is the case, then we set $\tau[i, M] = \myone$, and otherwise we set $\tau[i, M] = \myzero$.

  Since there are at most $k^{O(k)}$ partitions of a $k$-element set~\cite{deBrujin58}, table~$\tau$ contains at most $n k^{O(k)}$ entries.
  Each entry can be computed in $k^{O(k)}$ time, resulting in an overall running time of $k^{O(k)} n = (\kappa d^4)^{O(\kappa d^4)}n$.

  It remains to show the correctness of this dynamic program, which we do by induction.
  For~$i=1$, the values $\tau[i, M]$ are computed correctly by definition.
  Let $i > 1$.
  First assume that $\tau [ i, M] = \myone$.
  Let $M_{i - 1}$ be the local matching with $\tau[ i-1, M_{i-1}] = \myone$ and $M(a_j ) = M_{i-1}(a_j)$ for all $j\in [i, i+k -1]$ and $M_{i-1}\cup M$ not admitting a blocking $d$-set consisting of agents from~$a_{i-1}, \dots, a_{i-k}$.
  By the induction hypothesis, $\tau[i-1, M_{i-1}]$ was computed correctly, implying that there exists a local matching~$M_{i-1}^*$ on $a_1, \dots, a_{i + k-1}$ such that for each $j\in [i-1, i + k - 1]$, we have $M_{i-1} (a_j) = M_{i-1}^* (a_j)$, and no blocking $d$-set consists solely of agents from $a_1, \dots, a_{i+k -1}$.
  We define $M^*\coloneqq M_{i-1}^* \cup M$.
  Note that $M^*$ is indeed a (local) matching since $a_{i-1} $ cannot be matched to agents with index at least $i+k$ and $a_{i+k}$ cannot be matched to agents with index at most $i-1$ by the locality of $M_{i-1}^*$ and $M$.
  Clearly, it holds that $M(a_j) = M^*(a_j) $ for all~$j\in [i, i+k]$.
  Next, we show that~$M^*$ contains no blocking $d$-set consisting solely of agents from~$A[{\le i+ k}]$.
  Any such $d$-set~$t$ must contain~$a_{i+ k}$, since $M_{i-1}^*$ does not admit blocking $d$-sets consisting solely of agents from $a_1, \dots, a_{i-1 + k}$.
  Furthermore, it must contain at least one agent $a_j$ with $j < i -1$ since we checked for all $d$-sets of agents from $a_{i-1}, \dots, a_{i+k}$ whether they are blocking.
  Since $k = 2d (d - 1) \upperBoundDistance$, it follows that there exists some $\ell \in [i + (d-1)\upperBoundDistance, i + k - (d -1) \upperBoundDistance]$ such that $t$ contains no agent from $A {[\ell - (d-1) (\upperBoundDistance), \ell + (d-1) (\upperBoundDistance)]}$, and $a_\ell $ is matched in~$M^*$ (note that such an agent exists since at most $d-1$ agents from $A[{\le i + k -1}]$ can be unmatched in $M^*$).
  Let~$t'\in M^*$ with~$a_\ell\in t'$, and let $t^*$ contain the $d$ agents with minimum index from~$t\cup t'$.
  Every agent~$a \in t^* \cap t$ prefers every agent~$b\in t^* \cap t'$ to every agent~$c \in  t \setminus t^*$ since the index of $b$ is at least $2\kappa $ positions before $c$ in the order $a_1, \dots, a_n$.
  Similarly, every agent~$a \in t^* \cap t'$ prefers every agent~$b\in t^* \cap t$ to every agent~$c \in  t' \setminus t^*$ since the index of $b$ is at least $2\kappa $ positions before $c$ in the order $a_1, \dots, a_n$.
  Hence, $t^*$ forms a blocking $d$-set consisting solely of agents from $A[{\le i + k -1}]$, a contradiction to the definition of $M_{i-1}^*$.

  Now assume that the dynamic program computed $\tau[i, M]$ to be \myzero.
  We assume for a contradiction that the correct value of $\tau [ i, M] $ is \myone.
  Let $M^*$ be a local matching witnessing that the correct value of $\tau [i, M] $ is \myone, i.e., we have $M^* (a_j) = M (a_j) $ for all $j\in [i, i+ k]$, and $M^*$~does not admit a blocking $d$-set consisting solely of agents from $a_1, \dots, a_{i+k}$.
  Let~$M_{i-1}$ be the restriction of $M^*$ to the $d$-sets containing $a_{i-1}, \dots, a_{i + k -1}$.
  The matching $M^*$ also witnesses that $\tau [ i-1, M_{i-1}] = 
  \myone$, and this value is correctly computed by induction.
  Extending $M_{i-1}$ with~$M (a_{i+k})$ does not lead to a blocking $d$-set containing only agents from $a_{i-1}, \dots, a_{i + k}$, as $M^*$~does not contain such a blocking $d$-set.
  It follows that the algorithm computed $\tau [i, M] $ to be \myone, a contradiction.
\end{proof}

Getting rid of parameter~$d$, we now extend \Cref{thm:fptk+d} to an \FPT-algorithm for the single parameter~$\kappa$.
To do so, we show that if $\kappa $ is much smaller than $d$, then there always exists a stable matching.
Note that we prove this by giving an efficient algorithm for \emph{finding} a stable matching.

\begin{lemma}
  \label{lem:big-d}
  If $4\kappa 2^{4\kappa} \le d$, then there exists a stable matching.
\end{lemma}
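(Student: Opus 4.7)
The plan is to construct a stable matching directly, exploiting the linear arrangement given by Lemma~\ref{lpo}. After applying that lemma to $\succeq_{\ML}$ to obtain an order $a_1, \ldots, a_n$ in which every incomparability spans at most $2\kappa$ positions, the natural candidate is $M \coloneqq \{\{a_{(j-1)d+1}, \ldots, a_{jd}\} : 1 \le j \le \lfloor n/d \rfloor\}$, obtained by slicing the order into consecutive blocks of size $d$.

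To verify stability, I would suppose for contradiction that $t = \{a_{i_1}, \ldots, a_{i_d}\}$ with $i_1 < \ldots < i_d$ is a blocking $d$-set. Let $a \coloneqq a_{i_1}$ and let $B$ be the block of $M$ containing $a$. The target is a bijection $\sigma \colon B \setminus \{a\} \to t \setminus \{a\}$ satisfying $b \succeq_{\ML} \sigma(b)$ for all~$b$: this would witness that $B \setminus \{a\}$ dominates $t \setminus \{a\}$ in the master poset, forcing $a$ to (weakly) prefer its partner set in $M$ and thus precluding $t$ from blocking through~$a$. The bijection would be the identity on $B \cap t \setminus \{a\}$, and pair the residuals $B \setminus (t \cup \{a\})$ with the outside agents $t \setminus B$; residuals whose positions are at least $2\kappa$ before the right boundary of~$B$ automatically dominate any outside agent by Lemma~\ref{lpo}(ii).

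The difficulty arises near block boundaries: an outside agent $a_j \in t \setminus B$ at position in $[\max(B)+1,\,\max(B)+2\kappa]$ may be incomparable to the remaining ``shallow'' residuals, breaking the bijection. I would handle this via pigeonhole on local incomparability \emph{types}: the type of position $i$ is $\tau_i \coloneqq \{j - i : a_j \perp_{\ML} a_i,\ 1 \le |j-i| \le 2\kappa\} \subseteq \{-2\kappa, \ldots, 2\kappa\} \setminus \{0\}$, of which there are at most $2^{4\kappa}$. Since $d \ge 4\kappa \cdot 2^{4\kappa}$, the first~$d$ positions contain $4\kappa$ agents of some common type, and this redundancy can be used either to reselect~$a$ inside~$B$ or to redefine~$B$ by swapping a boundary agent with a type-equivalent interior agent so that the dominance bijection succeeds.

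The main technical obstacle is formalizing the swap step: showing that exchanging two agents of the same local type preserves all the relevant dominance relations in~$\succeq_{\ML}$ and thereby repairs the bijection uniformly over any preferences consistent with the master poset. Making this work, and arguing that one can always find such a swap provided $d \ge 4\kappa\,2^{4\kappa}$, is where the specific shape of the bound enters, and is expected to be the most delicate part of the argument.
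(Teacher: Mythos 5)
Your starting point matches the paper's (apply \Cref{lpo} and exploit the slack in $d \ge 4\kappa 2^{4\kappa}$ via pigeonhole), and you correctly identify that the naive block matching fails at block boundaries: if, say, $a_d \perp_{\ML} a_{d+1}$, the poset does not constrain how $a_1,\dots,a_{d-1}$ rank $\{a_2,\dots,a_{d-1},a_{d+1}\}$ against $\{a_2,\dots,a_d\}$, while $a_{d+1}$ is forced to prefer $\{a_1,\dots,a_{d-1}\}$ to $\{a_{d+2},\dots,a_{2d}\}$, so $\{a_1,\dots,a_{d-1},a_{d+1}\}$ can block. The gap is in your repair step, which is both the entire content of the lemma and the part you leave unformalized. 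Pigeonholing on local incomparability \emph{types} does not do what you need: two positions with the same type have the same pattern of \emph{offsets} to incomparable agents, not the same set of incomparable agents, so swapping a boundary agent for a type-equivalent interior one does not transfer any dominance relation $b \succeq_{\ML} \sigma(b)$ involving the specific outside agents of a purported blocking set. Moreover, a blocking set can keep all of $a_1,\dots,a_{d-2\kappa}$ and merely exchange the $2\kappa$ boundary members of the block for incomparable agents in $A[d-2\kappa+1,d+2\kappa]$; in that case both $B\setminus t$ and $t\setminus B$ live entirely inside a window of width $4\kappa$, your ``deep residuals cover shallow outsiders'' reassignment has nothing deep to work with, and no dominance bijection exists. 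Finally, you attempt a single global matching, but local repairs to one block displace agents into neighbouring blocks, and you give no argument that the repairs can be made consistently.

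The paper's pigeonhole is over a different object: for each $i \in [d-2\kappa]$ it takes $t_i := \{a_i\} \cup (\text{first choice of } a_i)$. By \Cref{lpo} every such $t_i$ contains all of $a_1,\dots,a_{d-2\kappa}$ and otherwise only agents from $A[d-2\kappa+1, d+2\kappa]$, so there are at most $\binom{4\kappa}{2\kappa} \le 2^{4\kappa}-2$ possible values, and $d - 2\kappa \ge 4\kappa(2^{4\kappa}-2) + 4\kappa$ yields a single $d$-set $t^*$ that equals $t_i$ for at least $4\kappa$ agents $a_i$ with $i \le d-2\kappa$. Those $4\kappa$ agents are matched to their first choice and can never lie in a blocking set; consequently, for any blocking set $t$ meeting $t^*$, the difference $t^*\setminus t$ is forced to contain $4\kappa$ agents of index at most $d-2\kappa$, and a sorted index comparison then shows the $p$-th smallest index of $t^*\setminus t$ precedes the $p$-th smallest index of $t\setminus t^*$ by more than $2\kappa$ positions, so \Cref{lpo}(ii) gives strict master-poset dominance and no agent of $t^*$ can prefer $t$. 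The matching is then built greedily: extract $t^*$, delete its agents (the order on the remainder still satisfies \Cref{lpo}), and recurse. If you want to salvage your writeup, replace the type-based pigeonhole by this first-choice pigeonhole and the one-shot sliced matching by the greedy extraction.
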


\label{proof:lem:big-d}
\begin{proof}
  We apply \Cref{lpo} to the poset $(A, \succeq_{\ML})$, getting an order $a_1, \dots, a_n$ of the agents in~$A$.

  \begin{claim*}
	  There exists a $d$-set $t^*$ such that, for any matching containing $t^*$, no blocking $d$-set contains an agent from $t^*$.
  \end{claim*}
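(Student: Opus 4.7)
I would apply \Cref{lpo} to the master poset $(A, \succeq_{\ML})$ to obtain a linear extension $a_1, \ldots, a_n$ satisfying the two properties: for $i<j$ either $a_i \succ_{\ML} a_j$ or $a_i \perp_{\ML} a_j$, and for $j > i + 2\kappa$ one has $a_i \succ_{\ML} a_j$. My aim is to exhibit a $d$-subset $t^* \subseteq \{a_1, \ldots, a_{d + O(\kappa)}\}$ of this top region such that, in any matching $M$ containing~$t^*$, no blocking $d$-set~$s$ with $s \cap t^* \neq \emptyset$ exists. The property I would try to arrange is that for every such~$s$, the dominance order on $(d-1)$-sets induced from $\succeq_{\ML}$ already rules out the ``swap'': at least one agent $a \in s \cap t^*$ has $t^* \setminus \{a\}$ dominating $s \setminus \{a\}$, so $a$ strictly prefers its partners in~$M$ to the partners it would get in~$s$, and hence $s$ does not block. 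Since preferences are derived from $\succeq_{\ML}$, this forces the master-poset dominance to carry the argument uniformly over all admissible preference derivations.

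To construct such a $t^*$ of size exactly~$d$, I would exploit the hypothesis $d \ge 4\kappa 2^{4\kappa}$ through a pigeonhole argument on the ``local incomparability types'' of positions. Each position~$i$ has at most $2^{4\kappa}$ possible types, corresponding to subsets of the $4\kappa$ nontrivial offsets in $\{-2\kappa, \ldots, 2\kappa\}$ at which an incomparable neighbour of~$a_i$ can occur, so among the first~$d$ positions some type must repeat at least $4\kappa$ times. From such a repetition I would extract a region in which the incomparability structure is sufficiently regular to locate an up-closed prefix of size exactly~$d$, possibly after absorbing at most $2\kappa$ boundary adjustments to handle incomparable pairs straddling the initial cut. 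The resulting $t^*$ is an up-set of $\succeq_{\ML}$ along whose boundary any outside agent is strictly $\prec_{\ML}$ every agent of~$t^*$.

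Given $t^*$ with these structural properties, the verification reduces to a Hall-style bijection: for any competing $(d-1)$-set $T' \subseteq A\setminus\{a_p\}$ distinct from $t^* \setminus \{a_p\}$, the non-shared elements on the $T'$ side lie in $A\setminus t^*$ and are strictly $\prec_{\ML}$-dominated by the elements of $t^*\setminus\{a_p\}$, so the identity on $T'\cap(t^*\setminus\{a_p\})$ extended arbitrarily on the unshared parts gives the required $\succeq_{\ML}$-respecting bijection. The main obstacle is the pigeonhole step: turning a single-position type repetition into a genuine up-closed $d$-subset with no split incomparable pair can fail in degenerate cases (for example, $\kappa=1$ with all incomparable pairs at positions $(2k, 2k+1)$ and $d$ even), so the argument must genuinely use the $4\kappa$ repetitions to find a viable cut or to tolerate a bounded amount of splitting while still producing a witness agent $a\in s\cap t^*$ for every potential blocking $s$. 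The exponential slack $4\kappa\cdot 2^{4\kappa}$ in the hypothesis is precisely what is meant to provide enough combinatorial room to carry out this manoeuvre.
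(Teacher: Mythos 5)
Your high-level plan---order the agents via \Cref{lpo}, pick $t^*$ from the top $d+O(\kappa)$ positions, and invoke a pigeonhole argument powered by the slack $d \ge 4\kappa 2^{4\kappa}$---matches the paper's in outline, but the construction you actually propose has a genuine gap that you diagnose yourself without repairing. You want $t^*$ to be an up-set of $\succeq_{\ML}$, so that every outside agent is strictly $\prec_{\ML}$ every agent of $t^*$; as your own degenerate example shows ($\kappa=1$, incomparable pairs at positions $(2k,2k+1)$, $d$ even), a size-$d$ up-set need not exist. Repetition of ``local incomparability types'' cannot rescue this: in that example only two types occur and each is repeated $\Theta(d)$ times, yet still no size-$d$ up-set exists. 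The promise that the $4\kappa$ repetitions let you ``tolerate a bounded amount of splitting'' is precisely the missing step, and it is not minor---once $t^*$ is not up-closed, some member of $t^*$ may genuinely prefer an outside agent to one of its partners, so you need a different mechanism to exclude blocking $d$-sets, and your Hall-style bijection (which is fine \emph{given} an up-set) no longer applies.

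The paper pigeonholes over a different object, and that choice supplies the missing mechanism. For each $i\in[d-2\kappa]$ it sets $t_i := \{a_i\}\cup t$ where $t$ is $a_i$'s first-ranked $(d-1)$-set; by \Cref{lpo} every $t_i$ contains all of $a_1,\dots,a_{d-2\kappa}$ and lies in $A[\le d+2\kappa]$, so there are at most $\binom{4\kappa}{2\kappa}\le 2^{4\kappa}$ candidates, and the hypothesis yields one $d$-set $t^*$ with $t^*=t_i$ for at least $4\kappa$ indices $i$. In any matching containing $t^*$ those $4\kappa$ agents are matched to their first choice and hence cannot lie in any blocking $d$-set $t$; this forces every agent of $t\setminus t^*$ to sit at least $2\kappa$ positions after the corresponding (index-sorted) agent of $t^*\setminus t$, whence $a_{j_p}\succ_{\ML} a_{i_p}$ by \Cref{lpo} and every $a^*\in t\cap t^*$ prefers $t^*$ to $t$. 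This argument never needs $t^*$ to be up-closed: the agents locked to their first choice absorb exactly the boundary incomparabilities that defeat your construction. To complete your proof you would have to replace the up-set requirement by a device of this kind.
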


  \begin{claimproof}
	  For each $i\in [d-2\kappa]$, let $t_i \coloneqq \{a_i\} \cup t$, where $t$ is the first $(d-1)$-set in the preferences of~$a_i$.
  By \Cref{lpo}, it holds that $t_i$ contains $a_j$ for each $j\in [d-2\kappa]$, and all agents of $t_i$ are from $A[{\le d + 2\kappa}]$.
  Thus, there are at most $\binom{4\kappa}{2\kappa} \le 2^{4\kappa} - 2$ different classes.
  Since $d - 4\kappa \ge 4\kappa (2^{4\kappa} -1) > 4\kappa (2^{4\kappa} - 2)$, there exists a $d$-set $t^*$ with $t^* = t_i$ for at least~$4\kappa$ agents $a_i$.

  Consider any matching $M$ containing $t^*$, and assume that there is a blocking $d$-set $t$ containing an agent $a^*\in t^*$.
  Let $t \setminus t^* = \{a_{i_1}, \dots, a_{i_r}\}$ with $i_1 < i_2 < \dots < i_r$ and $t^*\setminus t = \{a_{j_1}, \dots, a_{j_r}\}$ with $j_1 < j_2 < \dots j_r$.
  Each agent from $t \setminus t^*$ is contained in $A[{\ge d- 2\kappa + 1}]$ (because $A[{\le d- 2\kappa}] \subseteq t^*$), and $t\setminus t^*$ contains at most $2\kappa $ agents from $A {[d-2\kappa + 1, d+ 2\kappa]}$ (as the other $2\kappa$ agents of this set are contained in $t^*$).
  Thus, we have that $i_p \ge d - 2\kappa + p$ for all~$p \in [r]$, and $i_p \ge d + 2\kappa + (p - 2\kappa) = d+ p$ for all $p > 2\kappa$.
  Every agent $a_i$ with $i\in [d- 2\kappa ]$ such that $t_i = t^*$ is matched to its first choice and thus not contained in $t$.
  Thus, we have $j_{4 \kappa} \le d- 2\kappa$ and therefore, for all $p\in [4\kappa]$, we have that $j_p \le d  - 2\kappa  - (4\kappa -p ) = d - 6\kappa + p < d - 4\kappa + p \le i_p - 2\kappa$.
  For~$p > 4 \kappa$ we have that $j_p \le d + 2\kappa$, while $i_p \ge d + p > j_p + 2\kappa$.
  By \Cref{lpo} it follows that $a_{j_p} \succ_{\ML} a_{i_p}$ for all $p \in [r]$.
  Thus, $a^*$ prefers $t^*$ to $t$, contradicting that $t$ is a blocking $d$-set.
  \end{claimproof}

	\medskip

  From the claim, the lemma follows easily:
  We start with an empty matching $M = \emptyset$ and as long as there are at least $d$ unmatched agents, we successively compute such a $d$-set $t^*$, add $t^*$ to $M$, delete the agents from $t^*$, and repeat.
  The resulting matching is clearly stable, as the agents from the $d$-sets added to $M$ are not part of a blocking $d$-set.
\end{proof}

Finally, it directly follows that \dsmpo is fixed-parameter tractable even when parameterized solely by~$\kappa$.

\begin{theorem}\label{thm:FPT-kappa}
  \dsmpo can be solved in $O(n^2) + (\kappa^5 2^{16\kappa})^{O(\kappa^5 2^{16\kappa})} n$ time, where $\kappa $ is the maximum number of agents an agent is incomparable to, and $n $ is the total number of agents.
\end{theorem}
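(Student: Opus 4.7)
The plan is to reduce \Cref{thm:FPT-kappa} to a straightforward case distinction on the size of $d$ relative to $\kappa$, using \Cref{thm:fptk+d} and \Cref{lem:big-d} as the two building blocks. Set the threshold $d^* \coloneqq 4\kappa 2^{4\kappa}$. We first apply \Cref{lpo} to the master poset in $O(n^2)$ time in order to compute an order of the agents suitable for both subsequent subroutines.

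\textbf{Case 1: $d \ge d^*$.} Here \Cref{lem:big-d} asserts that a stable matching always exists, and its proof is constructive: one repeatedly identifies a ``safe'' $d$-set $t^*$ whose agents cannot belong to a blocking $d$-set in any matching extending it, and iterates on the remaining agents. This loop runs at most $n/d$ times, and in each iteration the safe $d$-set can be located in time polynomial in $n$ and $\kappa$, so in this case a stable matching is produced in polynomial time (much faster than the claimed bound).

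\textbf{Case 2: $d < d^* = 4\kappa 2^{4\kappa}$.} Now $d$ is bounded by a function of $\kappa$, so we simply invoke the algorithm of \Cref{thm:fptk+d}, whose running time is $O(n^2) + (\kappa d^4)^{O(\kappa d^4)} n$. Plugging in the bound $d^4 < (4\kappa 2^{4\kappa})^4 = 256\,\kappa^4 2^{16\kappa}$, we obtain $\kappa d^4 \le 256\,\kappa^5 2^{16\kappa} = O(\kappa^5 2^{16\kappa})$, so the running time is at most
\[
O(n^2) + \bigl(\kappa^5 2^{16\kappa}\bigr)^{O(\kappa^5 2^{16\kappa})} n,
\]
which matches the claimed bound.

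Putting both cases together yields the theorem. There is no real obstacle here: the main technical work has already been done in establishing the FPT algorithm for the combined parameter $\kappa + d$ and the existence result for large $d$; the present theorem only needs the observation that either $d$ is so large that \Cref{lem:big-d} applies, or else $d$ is bounded by a function of $\kappa$ and can be absorbed into the parameter dependence of \Cref{thm:fptk+d}. The only bookkeeping step is the arithmetic that turns $(\kappa d^4)^{O(\kappa d^4)}$ into $(\kappa^5 2^{16\kappa})^{O(\kappa^5 2^{16\kappa})}$ under the bound on $d$.
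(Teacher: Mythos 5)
Your proposal is correct and follows essentially the same route as the paper: a threshold case distinction at $d = 4\kappa 2^{4\kappa}$, answering yes via \Cref{lem:big-d} when $d$ is at least this threshold and otherwise invoking \Cref{thm:fptk+d} with $d$ bounded by a function of $\kappa$, followed by the same arithmetic substitution to obtain the stated running time.
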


\begin{proof}
  If $4\kappa 2^{4\kappa} \le d$, then we can safely answer yes by \Cref{lem:big-d}.
  Otherwise we have $d \le 4\kappa 2^{4\kappa}$ and thus, \Cref{thm:fptk+d} yields an algorithm running in $h(\kappa) + O(n^2)$ time with $h (\kappa) = f(\kappa, 4\kappa 2^{4\kappa})$ where $f(\kappa, d) = (\kappa d^4)^{O(\kappa d^4)}$.
\end{proof}

So far, we considered the decision version of \dsmpo.
However, the algorithms from \Cref{thm:fptk+d,lem:big-d} also solve (in the case of \Cref{thm:fptk+d} after the straightforward modification of storing a local matching in the dynamic program instead of $\LogicTRUE$) the search version of \dsmpo, i.e., find a stable matching if one exists.

There is also a natural generalization of \textsc{Stable Marriage} to dimension~$d$, namely \textsc{$d$-dimensional Stable Marriage}, and it is natural to ask whether our algorithm carries over to this setting.
In \textsc{$d$-dimensional Stable Marriage}, the set~$A$ of agents is partitioned into $d$ sets $A^1, \dots, A^d$ of agents, and each agent
of $A^i$ has preferences over all $(d-1)$-sets containing exactly one agent from $A^j$ for all $j\in [d]\setminus \{i\}$.
This problem is also fixed-parameter tractable parameterized by $\kappa + d$:
The master poset of agents can then be decomposed into $d$ master posets of agents, one for each set $A^i$.
Then, one can apply \Cref{lpo} to each of these $d$ master posets to get a strict order for the agents from $A^i = \{a^i_1, \dots, a^i_n\}$.
Similarly to \Cref{lem:small-dist}, one can show that, for any stable matching~$M$ and any $d$-set $\{a^1_{i_1}, \dots, a^d_{i_d}\}$ (w.l.o.g.\ we have $i_j \le i_{j+1}$), it holds that $i_{j+1} \le i_j + O(\kappa d^2)$.
Now one can apply an algorithm similar to \Cref{thm:fptk+d} (sweeping over the sets $A^1, \dots, A^d$ from top to bottom, considering any matching on $k= f(\kappa, d)$ consecutive agents in each set $A^i$) to get an \FPT-algorithm parameterized by~$\kappa + d$.
However, \Cref{lem:big-d} does not seem to generalize to this case:
for $d= 3$, there exists a small instance with $|A_1| = |A_2| = |A_3| =3$ without a stable matching.
``Cloning'' the agents from one of the sets, say $A_3$, an arbitrary number of times will result in an instance of unbounded $d$ but $\kappa = 3$.
It remains therefore unclear whether \Cref{thm:FPT-kappa} generalizes to \textsc{$d$-dimensional Stable Marriage}.

\begin{remark}
Until now, we assumed that the input is encoded naively, i.e., for each agent, its complete preference list is given as part of the input.
However, this list is of length $\Omega(n^{d-1})$, which would result in a total input size of~$O(n^d)$.
Thus, it might be more reasonable to assume that the input is given by an oracle, which can answer queries about the preferences.
In fact, the \FPT-algorithm with the combined parameter $\kappa$ and~$d$ only needs one type of queries, namely given two $(d-1)$-sets $t$ and $t'$ and an agent $a$, the oracle tells whether $a$ prefers $t $ to~$t'$.
Thus, our \FPT-algorithm parameterized only by $\kappa$ also works when only using this query;
however, in the case that $\kappa$ is much smaller than $d$, it cannot compute a stable matching, but only state its existence.
	In order to also compute a stable matching efficiently, the algorithm would also need to be able to query, given an agent~$a$ and a set~$X$ of agents, what is the first $(d-1)$-set in $a$'s preference list not containing an agent from~$X$.
\end{remark}

Having shown that \dsmpo is fixed-parameter tractable for the parameter~$\kappa$, in \Cref{sec:width} we turn to a stronger parameter, the width of the master poset.

\subsubsection{Width of the master poset}
\label{sec:width}

Reducing from \textsc{Multicolored Independent Set} parameterized by solution size, we next show that \tdsrp\ is \Wone-complete parameterized by the width of the master poset.

In this section, at several points it does not matter how the preferences between a set of 2-sets look, as long as the preferences are derived from the poset (which will be described later in this section).
Thus, whenever we describe the preferences of an agent, and these preferences contain a set~$\mathcal{X}$ of 2-sets, then this means that one gets the preferences of the agent through replacing~$\mathcal{X}$ by an arbitrary strict order~$\succ_{\mathcal{X}}$ of the 2-sets contained in~$\mathcal{X}$ such that~$\succ_{\mathcal{X}}$ is derived from the master poset.
Furthermore, we only describe the beginning of the preferences of an agent, followed by~``$\pend$''.
For example, if we write~$\{\{v, w\} : v, w\in \{v_1, v_2,v_3\}\} \pend$ and the master poset is $v_1 \succ_{\ML} v_i$ for $i \in \{2,3,4\}$, then the preferences start with $\{v_1 , v_2 \} \succ \{v_1, v_3\} \succ \{v_2, v_3\}$ or $\{v_1, v_3\} \succ \{v_1, v_2\} \succ \{v_2, v_3\}$, and end with $\{v_1, v_4\} \succ \{v_2, v_4\} \succ \{v_3, v_4\}$ or $\{v_1, v_4\} \succ \{v_3, v_4\} \succ \{v_2, v_4\}$.
Then, all 2-sets not listed before $\pend$ can be added in an arbitrary way obeying the master poset.

\paragraph{The reduction.}

We provide a parameterized reduction from \textsc{Multicolored Independent Set} parameterized by solution size, which is \Wone-hard~\cite{DBLP:series/txcs/DowneyF13,Pietrzak03}.
\defProblemTask{\textsc{Multicolored Independent Set}}
{
A $k$-partite graph $G = (V^1 \disunion V^2 \disunion \dots \disunion V^k, E)$ with $|V^i| = n$ for all $i\in [k]$.}
{
Decide whether $G$ contains an independent set $I$ such that $I\cap V^i \neq \emptyset$ for all $i\in [k]$.}
Let $V^i= \{v_1^i, \dots, v_n^i\}$ for every $i \in [k]$.
The basic idea of the reduction is as follows.
For every $V^i$, we add a \emph{vertex-selection gadget}, encoding which vertex from $V^i$ shall be part of the multicolored independent set.
For every edge $e \in E$, we add an \emph{edge gadget}, ensuring that not both end vertices can be ``selected'' to be part of the multicolored independent set by the corresponding vertex-selection gadget.
Thereby, the edge gadgets ensure that the vertices ``selected'' to be part of a multicolored independent set indeed form an independent set.
Both the vertex-selection gadget and the edge gadget use a third kind of gadget, which we call \emph{cut-off gadget}.
A cut-off gadget for a given agent $v$ and a 2-set~$t$ ensures that agent~$v$ has to be matched at least as good as~$t$ in every stable matching.

We now describe these three gadgets, starting with the cut-off gadget.

\subparagraph{Cut-off gadget.}
\label{par:cut-off}

For many agents, the reduction contains a \emph{cut-off gadget}.
The cut-off gadget for an agent~$a$ ``cuts off'' the preference list of agent $a$ after a specific 2-set $p_a$, and enforces $a$ to be matched to a 2-set $p$ with $p \succeq_a p_a$.

Given an instance $\mathcal{I}$ of \dsmpo with master poset $\succ_{\ML}$, the cut-off gadget for an agent $a$ and a 2-set $p_a$, denoted by $\CO_a$, contains six agents $z_a^1, \dots, z_a^6$.
The basic idea is that deleting all agents but $a$ and $z_a^1, \dots, z_a^5$ results in the instance~$\mathcal{I}_{\operatorname{instable}}$ (see \Cref{sec:no-stable-matching}).
Because no stable matching will match agent~$z_a^r$ to a 2-set which $z_a^r$ prefers to every 2-set $\{x, y\}$ with~$x, y \in \{a, z_a^1, z_a^2, \dots, z_a^6\}$, agent~$a$ has to be matched better than every 2-set~$\{z_a^r, z_a^{r'}\}$ for $r,r'\in [6]$ in every stable matching.

Given a \tdsrp-instance~$\mathcal{I} = (A, (\succ_a)_{a \in A}, d, \succ_{\ML})$, we construct an instance~$\mathcal{I}'$ \emph{arising from $\mathcal{I}$ by adding a cut-off gadget for an agent~$a\in A$ and a 2-set~$p_a$} if we add six agents~$z_a^1, \dots, z_a^6$ to $\mathcal{I}$, and the preferences of the agents in $\mathcal{I}'$ are as follows (see \Cref{ex:cut-off} for an example).
The preferences of any agent $z_a^r$ start with all 2-sets in $\binom{B}{2} \cup (W\times \{z_a^\ell : \ell \in [6]\})$, where $B \coloneqq \{b, z_b^\ell : b \succ_{\ML} a, \ell \in [6]\}$, and are then followed by (ignoring all 2-sets containing the agent itself)
\begin{align*}
    \mathcal{A}:= \{a, z_a^{1}\} & \succ \{a, z_a^{2}\} \succ \{a, z_a^{3}\} \succ \{a, z_a^{5}\}  \succ \{z_a^{1}, z_a^{4}\}\succ \{z_a^{2}, z_a^{3}\}\succ \{a, z_a^{4}\} \succ \{z_a^{1}, z_a^{5}\} \\
    & \succ \{z_a^{2}, z_a^{4}\}\succ \{z_a^{1}, z_a^{3}\}\succ \{z_a^{3}, z_a^{4}\} \succ \{z_a^{1}, z_a^{2}\} \succ \{z_a^{2}, z_a^{5}\} \succ \{z_a^{3}, z_a^{5}\} \succ \{z_a^{4}, z_a^{5}\} \\
    &\succ \{a, z_a^{6}\} \succ \{z_a^{1}, z_a^{6}\}\succ \{z_a^{2}, z_a^{6}\} \succ \{z_a^{3}, z_a^{6}\} \succ \{z_a^{4}, z_a^{6}\} \succ \{z_a^{5}, z_a^{6}\} \pend.
\end{align*}
The preferences of $a$ in $\mathcal{I}'$ arise from the preferences of $a$ in $\mathcal{I}$ by inserting after the 2-set $p_a$ all 2-sets $\{z_a^r, z_a^{r'}\}$ with $r, r' \in [6]$ in the same order as in~$\mathcal{A}$, and appending all other 2-sets containing an agent $z_a^r$ at the end of the preferences of $a$.
The preferences in~$\mathcal{I}'$ of an agent~$b\in A\setminus \{ a\}$ arise from $b$'s preferences in~$\mathcal{I}$ as follows:
If~$a\succ_{\ML} w$, then we add all 2-sets containing an agent from the cut-off gadget at the beginning of their preference, while otherwise we add all 2-sets containing an agent $z_a^r$ at the end of their preferences in an arbitrary order (consistent with the master poset).

\begin{example}
\label{ex:cut-off}
 Consider the \tdsrp-instance~$\mathcal{I}$ with four agents~$a_1, a_2, a_3$, and $a_4$, master poset $a_1 \succ_{\ML} a_2 \succ_{\ML} a_3 \succ_{\ML} a_4$, and the following preferences:
 \begin{align*}
  a_1 &: \{a_2, a_3\} \succ \{a_2, a_4\} \succ \{a_3, a_4\}, \\
  a_2 &: \{a_1, a_3\} \succ \{a_1, a_4\} \succ \{a_3, a_4\}, \\
  a_3 &: \{a_1, a_2\} \succ \{a_1, a_4\} \succ \{a_2, a_4\}, \\
  a_4 &: \{a_1, a_2\} \succ \{a_1, a_3\} \succ \{a_2, a_3\}.
 \end{align*}
 The instance~$\mathcal{I}'$ arising from $\mathcal{I}$ by adding a cut-off gadget for $a := a_2$ and $p_a := \{a_1, a_4\}$ has the following preferences (where for $z_a^r$ all 2-sets containing $z_a^r$ shall be deleted):
 \begin{align*}
  a_1 &: \{a_2, a_3\} \succ \{a_2, a_4\} \succ \{a_3, a_4\}... \pend,\\
  a_2 &: \{a_1, a_3\} \succ \{a_1, a_4\} \succ \{z_a^{1}, z_a^{4}\}\succ \{z_a^{2}, z_a^{3}\} \succ \{z_a^{1}, z_a^{5}\} \succ \{z_a^{2}, z_a^{4}\}\succ \{z_a^{1}, z_a^{3}\}\succ \{z_a^{3}, z_a^{4}\}, \\
  & \succ \{z_a^{1}, z_a^{2}\} \succ \{z_a^{2}, z_a^{5}\} \succ \{z_a^{3}, z_a^{5}\} \succ \{z_a^{4}, z_a^{5}\} \succ \{a, z_a^{6}\} \succ \{z_a^{1}, z_a^{6}\}\succ \{z_a^{2}, z_a^{6}\} \succ \{z_a^{3}, z_a^{6}\}, \\
  & \succ \{z_a^{4}, z_a^{6}\} \succ \{z_a^{5}, z_a^{6}\} \succ \{a_3, a_4\},\\
  a_3 &: \{\{z_a^r, z_a^{r'}\}: r, r' \in [6], r\neq r'\} \succ \{a_1, a_2\} \succ \{a_1, a_4\} \succ \{a_2, a_4\} \pend,\\
  a_4 &: \{\{z_a^r, z_a^{r'}\}: r, r' \in [6], r\neq r'\} \succ \{a_1, a_2\} \succ \{a_1, a_3\} \succ \{a_2, a_3\} \pend,\\
  z_a^r & : \{a, z_a^{1}\} \succ \{a, z_a^{2}\} \succ \{a, z_a^{3}\} \succ \{a, z_a^{5}\}  \succ \{z_a^{1}, z_a^{4}\}\succ \{z_a^{2}, z_a^{3}\}\succ \{a, z_a^{4}\} \succ \{z_a^{1}, z_a^{5}\} \\
    & \succ \{z_a^{2}, z_a^{4}\}\succ \{z_a^{1}, z_a^{3}\}\succ \{z_a^{3}, z_a^{4}\} \succ \{z_a^{1}, z_a^{2}\} \succ \{z_a^{2}, z_a^{5}\} \succ \{z_a^{3}, z_a^{5}\} \succ \{z_a^{4}, z_a^{5}\} \\
    &\succ \{a, z_a^{6}\} \succ \{z_a^{1}, z_a^{6}\}\succ \{z_a^{2}, z_a^{6}\} \succ \{z_a^{3}, z_a^{6}\} \succ \{z_a^{4}, z_a^{6}\} \succ \{z_a^{5}, z_a^{6}\} \pend.
 \end{align*}

\end{example}

We now show that adding for every agent $a$ a cut-off gadget for~$a$ and a 2-set~$p_a$ enforces that $a$ is matched at least as good as~$p_a$ in every stable matching.

\begin{lemma}\label{lem:cog}
  Let $\mathcal{I}_0$ be an \dsmpo-instance with set~$A$ of agents, and for each $a\in A$, let $p_a$ be a 2-set not containing $a$.
  Let $A = \{a_1, \dots, a_n\}$ with $a_i \succ a_j$ or $a_i \perp a_j$ for all $i < j$.
  Let~$\mathcal{I}_s$ arise from~$\mathcal{I}_{s-1}$ by adding a cut-off gadget for agent $a_s$ and 2-set $p_{a_s}$.
  Let $\mathcal{M}_s$ be the set of stable matchings $M$ in $\mathcal{I}_0$ with $M(a) \succ_a p_a$ for all $a\in \{a_1, \dots, a_s\}$.
  
  Then, for any matching~$M \in \mathcal{M}_s$, the matching $M_s\coloneqq M \cup \{\{z_a^1, z_a^2, z_a^3\}, \{z_a^4, z_a^5, z_a^6\}: a\in \{a_1, \dots, a_s\}\}$ is stable in $\mathcal{I}_s$.
  Furthermore, for every matching~$M'$ which is stable in $\mathcal{I}_s$, we have that $M' \cap \binom{A}{3}\in \mathcal{M}_s$.
\end{lemma}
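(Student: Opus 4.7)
The plan is to prove the lemma by induction on $s$, simultaneously establishing both directions. The base case $s=0$ is trivial, since no cut-off agents have been added, $\mathcal{M}_0$ equals the set of stable matchings of $\mathcal{I}_0$, and $M_0 = M$. For the inductive step, I would assume the statement holds for $s-1$ and consider the addition of the cut-off gadget for $(a_s, p_{a_s})$, exploiting that preferences over 2-sets not involving a $z_{a_s}^r$ are unchanged between $\mathcal{I}_{s-1}$ and $\mathcal{I}_s$.

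For the forward direction, take $M \in \mathcal{M}_s \subseteq \mathcal{M}_{s-1}$, so that by induction $M_{s-1}$ is stable in $\mathcal{I}_{s-1}$, and extend by the two triples inside the new gadget to obtain $M_s$. Any hypothetical blocking 3-set $T$ in $\mathcal{I}_s$ splits into four cases. (i) If $T$ contains no $z_{a_s}^r$, the relative preferences on the relevant 2-sets are unchanged from $\mathcal{I}_{s-1}$, contradicting the induction hypothesis. (ii) If $a_s \in T$, then $T \setminus \{a_s\}$ must contain some $z_{a_s}^r$ and therefore lies after $p_{a_s}$ in $a_s$'s preference list by construction, but $M_s(a_s) = M(a_s) \succ_{a_s} p_{a_s}$. (iii) If $T \subseteq \{z_{a_s}^1, \ldots, z_{a_s}^6\}$, a direct verification of the eighteen candidate 3-sets against the ordering of $\mathcal{A}$ shows that each of $z_{a_s}^1, z_{a_s}^2, z_{a_s}^3$ is matched to its top 2-set among non-$a_s$ options within $\mathcal{A}$, which blocks every such deviation. (iv) Otherwise, $T = \{z_{a_s}^r, x, y\}$ with $x, y$ outside $\{a_s\} \cup \{z_{a_s}^{r'} : r'\in[6]\}$; then $z_{a_s}^r$ prefers $\{x, y\}$ only if $\{x, y\}$ lies in the prefix $\binom{B}{2} \cup (W \times \{z_{a_s}^\ell\})$, forcing $x$ (and $y$) to involve some $b \succ_{\ML} a_s$ or a cut-off agent of such a $b$; but by the ``otherwise'' clause of the construction, the 2-set $\{z_{a_s}^r, y\}$ then lies at the tail of $x$'s preference list, and $M_s(x) = M_{s-1}(x)$ is a strictly earlier 2-set, so $x$ cannot prefer $\{z_{a_s}^r, y\}$.

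For the backward direction, let $M'$ be stable in $\mathcal{I}_s$ and set $M := M' \cap \binom{A}{3}$. The main work is to show $M'(a_s) \in \binom{A}{2}$ with $M'(a_s) \succ_{a_s} p_{a_s}$. Suppose otherwise, so that $M'(a_s)$ lies at or after $p_{a_s}$ in $a_s$'s preference list in $\mathcal{I}_s$. The key structural observation is that the first fifteen 2-sets of $\mathcal{A}$ restricted to $\{a_s, z_{a_s}^1, \ldots, z_{a_s}^5\}$ form a renaming of the preferences of $\mathcal{I}_{\operatorname{instable}}$, so by \Cref{oinstability} no matching of these six agents using only such 2-sets is stable. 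It remains to rule out ``escape'' via 2-sets outside $\mathcal{A}$: any $z_{a_s}^r$ with $r\le 5$ matched to a partner in $\binom{B}{2} \cup (W \times \{z_{a_s}^\ell\})$ or in the $\pend$ tail forces, via the induction hypothesis at step $s-1$ and the placement of cut-off 2-sets at the end of the $B$-agents' lists, a blocking 3-set (either inside the gadget or involving external agents that already have their cut-off guarantees). Once $M'(a_s) = \{u, v\} \succ_{a_s} p_{a_s}$ with $u, v \in A$ is established, removing the 3-sets containing some $z_{a_s}^r$ yields a matching stable in $\mathcal{I}_{s-1}$, and the induction hypothesis at step $s-1$ gives $M \in \mathcal{M}_{s-1}$, which combined with the new constraint on $a_s$ yields $M \in \mathcal{M}_s$.

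The main obstacle is the backward direction, specifically reducing the non-existence of a stable configuration inside the cut-off gadget to \Cref{oinstability} despite the external options available to the $z_{a_s}^r$ agents via the prefix $\binom{B}{2} \cup (W \times \{z_{a_s}^\ell\})$ and the $\pend$ tail. Ruling these out requires carefully using the master-poset order---which places $B$-agents above $a_s$ and thereby controls where the 2-sets $\{z_{a_s}^r, \cdot\}$ appear in the lists of $B$-agents---together with the inductive guarantee that all $a_i$ with $i < s$ are matched strictly above $p_{a_i}$ in $M_{s-1}$.
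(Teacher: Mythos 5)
Your plan follows essentially the same route as the paper: induction on $s$, with the forward direction handled by locating any would-be blocking 3-set relative to the new gadget, and the backward direction reduced to the unsolvability of $\mathcal{I}_{\operatorname{instable}}$ via \Cref{linstability}, using the placement of the cut-off 2-sets at the tail of the lists of agents $b \succ_{\ML} a_s$ together with the inductive guarantee $M(a_{s'}) \succ_{a_{s'}} p_{a_{s'}}$ for $s' < s$. Three slips are worth fixing. First, in case (iii) there are $\binom{6}{3}=20$ candidate 3-sets, not eighteen, and your stated reason is false: $z_{a_s}^2$ is \emph{not} matched to its top non-$a_s$ 2-set in $M_s$ (it is matched to $\{z_{a_s}^1, z_{a_s}^3\}$ but prefers $\{z_{a_s}^1, z_{a_s}^4\}$, among others); the correct justification is simply the exhaustive check of all 20 triples, as the paper does. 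Second, your case split omits triples of the form $\{z_{a_s}^r, z_{a_s}^{r'}, x\}$ with $x$ outside $\{a_s\} \cup \{z_{a_s}^\ell : \ell \in [6]\}$; the same argument as your case (iv) applies (the outside agent $x$ cannot prefer a 2-set containing a $z_{a_s}$-agent), but the case must be stated. Third, in the backward direction the step you flag as the main obstacle --- showing that no $z_{a_s}^r$ with $r \le 5$ escapes to a 2-set it prefers to all of $\mathcal{A}$ --- is indeed where the real work lies and is only gestured at; the paper's argument here is that every such preferred 2-set contains an agent $a'$ or $z_{a'}^q$ with $a' \succ_{\ML} a_s$, all of which (by the minimality/induction on the index and the already-established matching of their gadgets) are matched to 2-sets they strictly prefer, so no escape is possible. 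With those repairs your argument coincides with the paper's.
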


\begin{proof}
  We prove the lemma by induction on $s$.
  For $s =0$, there is nothing to show.
  Fix $s >0$.

  First, we show that for any stable matching $M\in \mathcal{I}_0$ with $M(a) \succeq_a p_a$ for all $a\in \{a_1, \dots, a_s\}$, it holds that $M_s\coloneqq M \cup \{\{z_b^1, z_b^2, z_b^3\}, \{z_b^4, z_b^5, z_b^6\}: b\in \{a_1, \dots, a_s\}\}$ is a stable matching in~$\mathcal{I}_s$.
  Let $a\coloneqq a_s$.
  No blocking 3-set consists solely of agents from $\{z_a^r : r\in [6]\}$;
  this can be seen by checking all 20 such 3-sets.
  Any blocking 3-set needs to contain at least one agent~$z_a^r$ (else it would already be a blocking 3-set in $\mathcal{I}_{s-1}$, contradicting the induction hypothesis).
  Agent $a$ is not part of such a blocking 3-set, as it ranks all 2-sets containing an agent $z_a^r$ after $M(a) \succeq_a p_a$.
  All other 2-sets preferred by agent~$z_a^r$ contain an agent $a'$ or $z_{a'}^q$ with $a' \succ_{\ML} a$.
  However, $a'$ is matched to a 2-set~$M(a')$ it prefers to~$p_{a'}$, and therefore $a'$~does not prefer any 2-set containing~$z_a^r$ to $M(a')$. Thus, $a'$~is not part of a blocking 3-set.
  Consequently, the blocking 3-set contains solely agents from cut-off gadgets for agents $a'$ with $a' \succ_{\ML} a$.
  However, no agent~$z_{a_i}^r$ prefers any pair~$\{z_{a_{i'}}^{i'}, z_{a_{i''}}^{i''}\}$ with $i' \ge i$ and $i'' > i$. 
  Thus, the blocking 3-set cannot contain agents from different cut-off gadgets, a contradiction.

	To see the second part of the lemma, let $M_s$ be a stable matching in $\mathcal{I}_s$.
  Let $M$ be the matching arising from~$M_s$ by deleting all 3-sets containing an agent $z_{a_i}^r$.
  We need to show that $M$ is stable and $M(a) \succeq_a p_a$  for all $a\in \{a_1, \dots, a_s\}$.
  If there exists some~$a\in \{a_1, \dots, a_s\}$ such that $M(a) \succeq_a p$ does not hold, then let $a_i$ be the agent of minimal index for which $p_{a_i} \succ M(a_i)$.
  Then for all $j < i$, matching~$M$ matches the agents from $\{z_{a_j}^r : r\in [6]\}$ in two 3-sets:
  These agents are not matched to $a_{j'}$ with $j' \le j$ since $a_{j'}$ is matched at least as good as $p_{a_{j'}}$, and if $z_{a_j}^r$ is not matched solely to agents from~$\CO_{a_j}$, then it forms a blocking 3-set together with any two other agents of $\CO_{a_j}$ which are not matched solely to agents from $\CO_{a_j}$.
  Therefore, no agent from $\CO_{a_i}$ including $a_i$ is matched to a 2-set which it prefers to any 2-set consisting of agents from $\CO_{a_i}$.
  By \Cref{linstability}, it follows that there is a blocking 3-set in $\{a_i\} \cup \{z_{a_i}^r: r\in [5]\}$, contradicting the stability of~$M$.
  Thus, $M(a) \succeq_a p$ holds for all $a\in \{a_1, \dots, a_s\}$.
  
  It remains to show that $M$ is stable in $\mathcal{I}_0$.
	First note that the above argument shows that $M_s$ matches every agent~$z_{a_j}^r$ from cut-off gadget $\CO_{a_j}$ to a 2-set consisting solely of agents from~$\CO_{a_j}$.
  Thus, every agent $a\in A$ is matched the same in $M_s$ and $M$.
  Hence, any blocking 3-set for $M$ is also a blocking 3-set for $M_s$, and thus, $M$ is stable.
\end{proof}

Having shown that the cut-off gadgets have the desired effect, we now show how we can model the selection of one vertex being part of the independent set.

\subparagraph{Vertex-Selection Gadget.}

A vertex-selection gadget has $6n$ agents $a_p$, $b_p$ ($p \in [n]$), $c_q$, $\barc_q$ ($q\in [n- 1]$), $d_r$, and $\bard_r$ ($r\in [n+1]$).
The intuitive idea is the following.
The agents~$a_p$ and $b_p$ want to be matched to 2-sets $\{c_q, \barc_{q'}\}$.
As $c_q$ prefers $\{a_p, \barc_{q'}\}$ to $\{b_{p'}, \barc_{q'}\}$ while~$\barc_{q'}$ prefers~$\{b_{p'}, c_q\}$ to~$\{a_p, c_q\}$, we can match the $n-1$ sets of size two of the form $\{c_q, \barc_q\}$ to the agents $\{a_p : p < \ell\} \cup \{b_{p} : p < n + 1 - \ell\}$ for any $\ell \in [n]$, corresponding to selecting the vertex~$v_\ell$ to be part of the independent set.
The agents $\{a_p : p \ge \ell\} \cup \{b_p : p \ge n + 1 -\ell\}$ are then matched to the 2-sets $\{ d_r, \bard_r \}$, and can form blocking 3-sets with the edge gadgets (which are described later).

Formally, the preferences look as follows.
\begin{align*}
  a_p  \colon &\{\{c_q, \barc_{q'}\} : q, q'\in [n  - 1]\} \succ \{\{d_r, \bard_{r'}\} : r, r' \in [n+ 1]\}\succ \CO_{a_p},\\
  b_p  \colon &\{\{c_q, \barc_{q'}\} : q, q'\in [n  - 1]\} \succ \{\{d_r, \bard_{r'}\} : r, r' \in [n+ 1]\}\succ \CO_{b_p},\\
  c_q \colon &\{\{a_1, \barc_{q'}\} : q'\in [n-1]\} \succ \{\{a_2, \barc_{q'}\} : q'\in [n-1]\} \succ \dots \succ \{\{a_n, \barc_{q'}\} : q'\in [n-1]\} \\
  & \succ \{\{b_1, \barc_{q'}\} : q'\in [n-1]\} \succ \{\{b_2, \barc_{q'}\} : q' \in [n-1]\} \succ \dots \succ \{\{b_k, \barc_{q'}\} : q' \in [n-1]\} \\
  & \succ\CO_{c_q},\\
  \barc_q \colon & \{\{b_1, c_{q'}\} : q'\in [n-1]\} \succ \{\{b_2, c_{q'}\} : q'\in [n-1]\} \succ \dots \succ \{\{b_n, c_{q'}\} : q'\in [ n-1]\}\\
  & \succ \{\{a_1, c_{q'}\} : q'\in [n-1]\} \succ \{\{a_2, c_{q'}\} : q'\in [n-1]\} \succ \dots \succ \{\{a_n, c_{q'}\} : q'\in [n-1]\} \\
  & \succ \CO_{\barc_q},\\
  d_r \colon & \{\{a_1, \bard_{r'}\} : r'\in [n + 1]\} \succ \{\{a_2, \bard_{r'}\} : {r'}\in [n + 1]\} \succ \dots \succ \{\{a_n, \bard_{r'}\} : {r'}\in [n + 1]\} \\
  & \succ \{\{b_1, \bard_{r'}\} : {r'}\in [n+1]\} \succ \{\{b_2, \bard_{r'}\} : {r'}\in [n+1]\} \succ \dots \succ \{\{b_n, \bard_{r'}\} : {r'}\in [n+1]\} \\
  & \succ \CO_{d_r},\\
  \bard_r  \colon & \{\{b_1, d_{r'}\} : {r'}\in [n-1]\} \succ \{\{b_2, d_{r'}\} : {r'}\in [n-1]\} \succ \dots \succ \{\{b_n, d_{r'}\} : {r'}\in [n-1]\} \\
  & \succ \{\{a_1, d_{r'}\} : {r'}\in [n+1]\} \succ \{\{a_2, d_{r'}\} : {r'}\in [n+1]\} \succ \dots \succ \{\{a_n, d_{r'}\} : {r'}\in [n+1]\} \\
  & \succ \CO_{\bard_r}.
\end{align*}
In the master poset, we have $x_p \succ x_q $ for $p< q$ for all $x\in \{a, b, c, \bar{c}, d, \bar{d}\}$ as well as $x_i \perp y_j$ for all $x\neq y$.
See \Cref{fig:vsg-width} for an example of the 3-sets $\{v_1, v_2, v_3\}$ such that $\{v_1, v_2, v_3\} \setminus \{v_p\}$ is before the cut-off gadget of $v_p$ in the preferences of $v_p$ for all $p\in [3]$.

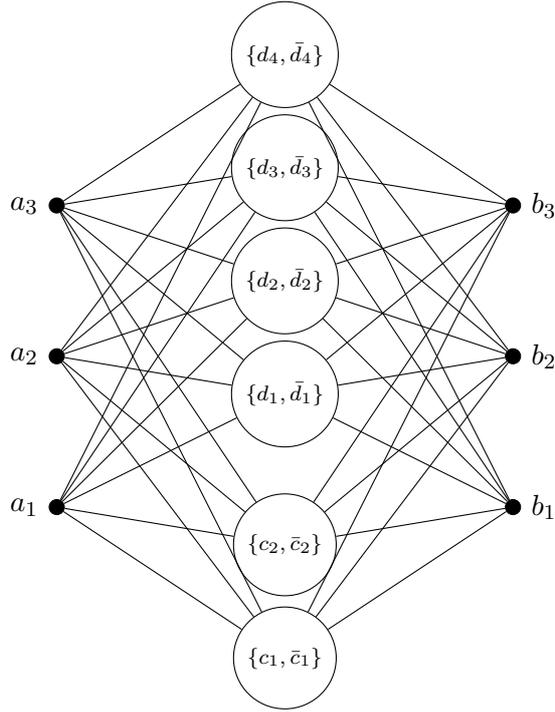
\begin{figure}
 \begin{center}
  \begin{tikzpicture}
      \node[vertex, label=180:$a_1$] (a1) at (-3,0) {};
      \node[vertex, label=180:$a_2$] (a2) at (-3,2) {};
      \node[vertex, label=180:$a_3$] (a3) at (-3,4) {};
      \node[vertex, label=0:$b_1$] (b1) at (3,0) {};
      \node[vertex, label=0:$b_2$] (b2) at (3,2) {};
      \node[vertex, label=0:$b_3$] (b3) at (3,4) {};

      \node[circle, draw] (cc1) at ($0.5*(a1) + 0.5 *(b1) + (0, -2)$) {\scriptsize{$\{c_1, \barc_1\}$}};
      \node[circle, draw] (cc2) at ($(cc1) + (0, 1.5)$) {\scriptsize{$\{c_2, \barc_2\}$}};
      \node[circle, draw] (cd1) at ($(cc2) + (0, 2)$) {\scriptsize{$\{d_1, \bard_1\}$}};
      \node[circle, draw] (cd2) at ($(cd1) + (0, 1.5)$) {\scriptsize{$\{d_2, \bard_2\}$}};
      \node[circle, draw] (cd3) at ($(cd2) + (0, 1.5)$) {\scriptsize{$\{d_3, \bard_3\}$}};
      \node[circle, draw] (cd4) at ($(cd3) + (0, 1.5)$) {\scriptsize{$\{d_4, \bard_4\}$}};

      \draw (a1) edge (cc1) edge (cc2) edge (cd1) edge (cd2) edge (cd3) edge (cd4);
      \draw (a2) edge (cc1) edge (cc2) edge (cd1) edge (cd2) edge (cd3) edge (cd4);
      \draw (a3) edge (cc1) edge (cc2) edge (cd1) edge (cd2) edge (cd3) edge (cd4);
      \draw (b1) edge (cc1) edge (cc2) edge (cd1) edge (cd2) edge (cd3) edge (cd4);
      \draw (b2) edge (cc1) edge (cc2) edge (cd1) edge (cd2) edge (cd3) edge (cd4);
      \draw (b3) edge (cc1) edge (cc2) edge (cd1) edge (cd2) edge (cd3) edge (cd4);
  \end{tikzpicture}

 \end{center}
 \caption{The acceptable 3-sets (i.e., 3-sets $\{x_1, x_2, x_3\}$ such that $\{x_1, x_2, x_3\} \setminus \{x_i\}$ is before the cut-off gadget of $x_i$ in the preferences of $x_i$ for all $i\in [3]$) of a vertex-selection gadget.
 For the sake of readability, we ignored 3-sets containing the agents $c_i$ and $\barc_j$ or $d_i$ and $\bard_j$ for $i \neq j$.
 Each edge corresponds to a 3-set containing the one endpoint $a_i$ or $b_i$ and the two vertices contained in the circle of the other endpoint of the edge.
 For example, the edge between $a_1$ and $\{c_1, \barc_1\}$ indicates that $\{a_1, c_1, \barc_1\}$ is an acceptable 3-set.}
 \label{fig:vsg-width}
\end{figure}

\begin{lemma}\label{lem:vsg}
  Let $\mathcal{I}$ be a \dsmpo-instance containing a vertex-selection gadget such that every agent $v$ in the vertex-selection gadget except for $\{a_i, b_i : i \in [3n'+1]\}$ prefers all 2-sets consisting of two agents in the vertex-selection gadget to a 2-set containing an agent outside the vertex-selection gadget.

  For every stable matching~$M$ in~$\mathcal{I}$, there exists some $p^*\in [n]$ such that $M(a_p) \in \{\{c_q, \barc_{q'}\} : q, q'\in [n - 1]\}$ for all $p < p^*$, $M(a_p) \in \{\{ d_r, \bard_{r'}\} : r, r'\in [n+1]\}$ for $p \ge p^*$, $M(b_p) \in \{\{ c_{q}, \barc_{q'}\} : q, q'\in [n-1]\}$ for $p \le n - p^*$, and $M(b_p) \in \{\{d_{r}, \bard_{r'}\} : r, r'\in [n+1]\}$ for $i> n - p^*$.
\end{lemma}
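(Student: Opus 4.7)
The plan is to combine Lemma~\ref{lem:cog} on cut-off gadgets with a blocking-3-set analysis driven by the asymmetric orderings in the $c_q$ and $\barc_q$ preference lists. Applying Lemma~\ref{lem:cog} to every cut-off gadget attached to an agent of the vertex-selection gadget, together with the hypothesis that non-interface agents prefer in-gadget 2-sets, forces in every stable matching~$M$ that each $a_p$ and $b_p$ is matched to a 2-set appearing strictly before $\CO_{a_p}$ (resp.\ $\CO_{b_p}$) in its preference list, i.e., to a $c$-pair $\{c_q,\barc_{q'}\}$ or a $d$-pair $\{d_r,\bard_{r'}\}$; each $c_q$ is matched in a triple $\{x, c_q, \barc_{q'}\}$ with $x\in\{a_p,b_p\}$, each $\barc_q$ in a triple $\{y, \barc_q, c_{q'}\}$, and analogously for $d_r$ and $\bard_r$. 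A counting argument then shows that $M$ contains exactly $n-1$ triples of the form $\{x, c_q, \barc_{q'}\}$ (one per $c$-agent) and exactly $n+1$ triples of the form $\{y, d_r, \bard_{r'}\}$, which together partition the $2n$ agents $\{a_p, b_p\}$. Let $\alpha$ (resp.\ $\beta$) be the number of $a_p$'s (resp.\ $b_p$'s) matched to a $c$-pair; then $\alpha + \beta = n - 1$.

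The main technical step is a monotonicity claim: the $a_p$'s matched to a $c$-pair form a prefix $a_1,\dots,a_\alpha$, and the $b_p$'s matched to a $c$-pair form a prefix $b_1,\dots,b_\beta$. For the $a$-side, suppose toward a contradiction that some $a_p$ is matched to a $d$-pair while some $a_{p'}$ with $p' > p$ satisfies $M(a_{p'}) = \{c_q, \barc_{q'}\}$. Then $\{a_p, c_q, \barc_{q'}\}$ blocks~$M$: $a_p$ prefers every $c$-pair to every $d$-pair; $c_q$ prefers $\{a_p, \barc_{q'}\}$ to $\{a_{p'}, \barc_{q'}\}$ because inside $c_q$'s top ``$a$-block'' the 2-sets are ordered by increasing $a$-index; and $\barc_{q'}$ prefers $\{a_p, c_q\}$ to $\{a_{p'}, c_q\}$ by the analogous ordering within its $a$-block. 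The $b$-side is symmetric, now using that $c_q$ orders its $b$-block and $\barc_{q'}$ orders its top $b$-block by increasing $b$-index. Setting $p^* := \alpha + 1 \in [n]$ then gives $\beta = n - p^*$, which is exactly the threshold structure stated in the lemma.

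The main obstacle is verifying that the blocking-3-set analysis remains valid in the presence of the master poset, which only relates agents of the same ``type'' ($x_p \succ_{\ML} x_q$ for $p < q$, and $x_i \perp_{\ML} y_j$ for different types). The point is that every preference comparison used above stays strictly inside a single $a$-block or $b$-block of a $c_q$ or $\barc_{q'}$ list, where the ordering is the strict index order; no comparison ever crosses from an $a$-block to a $b$-block. Once this is in place, the asymmetry that $c_q$ globally prefers $a$-pairs while $\barc_q$ globally prefers $b$-pairs is precisely what produces the complementary thresholds $p^*$ on the $a$-side and $n - p^*$ on the $b$-side.
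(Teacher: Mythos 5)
Your proposal is correct and follows essentially the same route as the paper's proof: the cut-off gadgets for the $c$-, $\barc$-, $d$-, and $\bard$-agents plus the counting identity $2n=(n-1)+(n+1)$ force every $a_p$ and $b_p$ into a $c$-pair or $d$-pair, and the prefix structure follows from the same blocking-3-set argument (the paper phrases it for consecutive indices $p,p+1$, you for general $p<p'$, which is equivalent). Your explicit bookkeeping with $\alpha$, $\beta$, and $p^*=\alpha+1$ just spells out what the paper leaves implicit.
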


\begin{proof}
  Consider any stable matching $M$.
  Due to the cut-off gadgets, any agent $c_q$ is matched to a 2-set $\{a_p, \barc_{q'}\}$ or $\{b_p, \barc_{q'}\}$ for some $p \in [n]$ and $q' \in [n-1]$, and any agent $d_r$ is matched to a 2-set $\{a_p, \bard_{r'}\}$ or $\{b_p, \bard_{r'}\}$ for some $p\in [n]$ and $r' \in [n+1]$.
  Since $|\{a_p, b_p : p \in [n]\}| = 2n = |\{c_q : q\in [n-1]\}| + |\{d_r : r\in [n+1]\}|$, it follows that any agent~$a_p$ or~$b_p$ is matched to a 2-set~$\{c_q, \barc_{q'}\} $ or $\{d_r, \bard_{r'}\}$.
  
  Assume for a contradiction that there exists some $p\in [n]$ such that $\{a_p,  c_q, \barc_{q'}\}\notin M$ for all $q, q' \in [n-1]$ but $\{a_{p+1}, c_s, \barc_{s'}\}\in M$ for some $s,s' \in [n-1]$ (the case that there exists such $b_p$ and $b_{p+1}$ is symmetric).
  Then $\{a_p, c_s, \barc_{s'}\}$ is a blocking 3-set, contradicting the stability of~$M$.
  \end{proof}

  We now show that it is indeed possible to select an arbitrary vertex in any vertex-selection gadget.

  \begin{lemma}\label{lem:vsgBackwards}
  Let $\mathcal{I}$ be an \dsmpo-instance containing a vertex-selection gadget.

  For any $p^*\in [n]$, the matching $M_{p^*} := \{\{a_p, c_p, \barc_{n - p^* + p}\}: p  < p^*\} \cup \{\{a_p, d_{p - p^* + 1}, \bard_{p+1}\}: p \ge p^*\}\cup \{\{b_p, c_{p + p^* -1}, \barc_{p}\}: p \le n  -p^*  \} \cup \{ \{b_p, d_{p+1}, \bard_{p - (n- p^*)}\} : p > n - p^*\}$ contains no blocking 3-set solely consisting from agents in the vertex-selection gadget.
  \end{lemma}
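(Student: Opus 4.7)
The plan is to first verify that $M_{p^*}$ is a valid 3-dimensional matching by a routine index count: the $c$-indices used split as $[p^*-1]\cup[p^*,n-1]$, the $\barc$-indices as $[n-p^*+1,n-1]\cup[n-p^*]$, the $d$-indices as $[n-p^*+1]\cup[n-p^*+2,n+1]$, and the $\bard$-indices as $[p^*+1,n+1]\cup[p^*]$, each covering the correct ground set exactly once, while the $a$- and $b$-indices split at $p^*$ and $n-p^*+1$, respectively. Inspecting the six preference lists and using that each agent's $M_{p^*}$-partner lies strictly before its cut-off gadget in its list, any blocking 3-set inside the gadget must have one of four forms: $\{a_\alpha,c_\beta,\barc_\gamma\}$, $\{b_\alpha,c_\beta,\barc_\gamma\}$, $\{a_\alpha,d_\rho,\bard_\sigma\}$, or $\{b_\alpha,d_\rho,\bard_\sigma\}$, since any other combination contains a pair which no involved agent prefers to its match.

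For $\{a_\alpha,c_\beta,\barc_\gamma\}$ I would read off necessary conditions directly from the explicit orderings: $a_\alpha$ preferring the 3-set forces $M(a_\alpha)$ to be a $\{d,\bard\}$-pair, hence $\alpha\ge p^*$; then if $\beta<p^*$, $c_\beta$ currently sits with $a_\beta$ and preferring would require $\alpha<\beta<p^*\le\alpha$, a contradiction, while if $\gamma\le n-p^*$, $\barc_\gamma$ sits with a $b$-agent and ranks every $\{b,c\}$-pair above any $\{a,c\}$-pair; the remaining subcase $\gamma>n-p^*$ forces $\alpha<\gamma-n+p^*$, i.e.\ $\gamma>n$, impossible. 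The type $\{b_\alpha,c_\beta,\barc_\gamma\}$ is symmetric: one derives $\alpha>n-p^*$ together with either $\alpha<\gamma\le n-p^*$ or $\beta>\alpha+p^*-1>n-1$, again a contradiction in both subcases.

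The $\{a,d,\bard\}$- and $\{b,d,\bard\}$-types are the hard part, because $a_\alpha$'s (resp.\ $b_\alpha$'s) ordering on $\{d,\bard\}$-pairs is only specified up to a linear extension of the master-poset dominance. The key step is that the conditions for $d_\rho$ and $\bard_\sigma$ to prefer the blocking 3-set force strict inequalities $\rho>\alpha-p^*+1$ and $\sigma>\alpha+1$ (in particular, the case where $d_\rho$'s $M_{p^*}$-partner is a $\{b,\bard\}$-pair, i.e.\ $\rho>n-p^*+1$, still yields $\rho>\alpha-p^*+1$ since $\alpha\le n$). These two strict inequalities imply that $M(a_\alpha)=\{d_{\alpha-p^*+1},\bard_{\alpha+1}\}$ \emph{strictly dominates} $\{d_\rho,\bard_\sigma\}$ coordinatewise in the master poset, so every valid linear extension ranks $M(a_\alpha)$ above $\{d_\rho,\bard_\sigma\}$ and $a_\alpha$ cannot prefer. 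The analogous argument for the $\{b,d,\bard\}$-type uses $\sigma>p^*\ge\alpha+p^*-n$ (again from $\alpha\le n$) to obtain coordinatewise strict dominance of $b_\alpha$'s partner over $\{d_\rho,\bard_\sigma\}$, ruling out blocking irrespective of the chosen extension; this strict-dominance step is the crux of the proof because it is where the ambiguity in the extension could otherwise bite.
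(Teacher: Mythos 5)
Your overall decomposition into the four candidate types of blocking 3-sets matches the paper's, and your treatment of the $\{a_\alpha,d_\rho,\bard_\sigma\}$ and $\{b_\alpha,d_\rho,\bard_\sigma\}$ types is sound: showing that the conditions under which $d_\rho$ and $\bard_\sigma$ could both prefer force $\{d_\rho,\bard_\sigma\}$ to be strictly dominated coordinatewise by the $a$- or $b$-agent's assigned pair is a valid (and arguably cleaner) restructuring of the paper's ``one of the three shifted indices is minimal'' argument.

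However, there is a genuine gap in your handling of the $\{a_\alpha,c_\beta,\barc_\gamma\}$ case (and, symmetrically, the $\{b_\alpha,c_\beta,\barc_\gamma\}$ case). You claim that ``$a_\alpha$ preferring the 3-set forces $M(a_\alpha)$ to be a $\{d,\bard\}$-pair, hence $\alpha\ge p^*$.'' This is false. For $\alpha<p^*$ we have $M_{p^*}(a_\alpha)=\{c_\alpha,\barc_{n-p^*+\alpha}\}$, and $a_\alpha$'s ordering \emph{within} the block of $\{c,\barc\}$-pairs is only a linear extension of the master-poset dominance, exactly as for the $\{d,\bard\}$ block. Hence $a_\alpha$ may strictly prefer any $\{c_\beta,\barc_\gamma\}$ that is not dominated by its match, e.g.\ $\{c_{\alpha+1},\barc_1\}$, which is incomparable to $\{c_\alpha,\barc_{n-p^*+\alpha}\}$. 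Your subsequent subcases all rely on the premise $\alpha\ge p^*$ (``$\alpha<\beta<p^*\le\alpha$'', ``$\alpha<\gamma-n+p^*$ implies $\gamma>n$''), so the configurations with $\alpha<p^*$ and $\gamma>n-p^*$ are simply not covered. They are not blocking, but establishing this needs the same dominance/shifted-index comparison you correctly identify as the crux of the $\{d,\bard\}$ cases: one must show that whichever of $\alpha$, $\beta$, $\gamma-n+p^*$ is smallest, the corresponding agent's assigned pair dominates the proposed one. This is precisely what the paper's conditions ``$p\le\min\{q,q'+p^*-n\}$'', ``$q\le\min\{p,q'+p^*-n\}$'', ``$q'\le\min\{p,q\}+n-p^*$'' accomplish. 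The same repair is needed for $b_\alpha$ with $\alpha\le n-p^*$ in the $\{b_\alpha,c_\beta,\barc_\gamma\}$ case, where you assert $\alpha>n-p^*$ by the same flawed reasoning.
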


  \begin{proof}
  Let $p^*\in [n]$ and assume that $M_{p^*}$ contains a blocking 3-set~$t$ inside the vertex-selection gadget.
  Then $t$ contains two agents $c_q$ and $\barc_{q'}$ or $d_r $ and $\bard_{r'}$, and an agent $a_p$ or $b_p$.

  \bfseries Case 1: \mdseries $ t = \{a_p, c_q, \barc_{q'}\}$.
  If $p > p^*$, then $\barc_{q'}$ does not prefer $t \setminus \{\barc_{q'}\}$ to $M (\barc_{q'}) $, so assume $p \le p^*$, implying that $\{a_p, c_p, \barc_{n-p^*+ i}\} \in M_{p^*}$.
  If $p\le \min\{q, q' + p^* -n\}$, then $a_p$ does not prefer $t \setminus\{a_p\}$ to~$M_{p^*} (a_p)$.
  If $q \le \min \{p, q' + p^* - n\}$, then $\{a_q, c_q, \barc_{n-p^*+ q}\} \in M_{p^*}$ and $c_q$ does not prefer $t\setminus \{c_q\}$ to $M_{q^*} (c_q)$.
  If $q' \le \min \{p, q\} + n - p^*$, then $\barc_{q'}$ does not prefer $t\setminus \{\barc_{q'}\}$ to~$M_{p^*} (\barc_{q'})$.

  \bfseries Case 2: \mdseries $ t = \{a_p, d_r, \bard_{r'}\}$.
  If $p\le \min\{r + p^* - 1, {r'}- 1\}$, then $a_p$ does not prefer $t \setminus \{a_p\}$ to~$M_{p^*} (a_p)$.
  If $r \le \min \{p + 1, {r'}\} - p^*$, then $\{a_{r+p^*  -1}, d_r, d_{r + p^*} \}\in M_{p^*}$ and $d_r$ does not prefer~$t\setminus\{d_r\}$ to $M_{p^*} (d_r)$.
  If~${r'} \le \min \{p, r + p^* - 1\} + 1$, then $\bard_{r'}$ does not prefer $t \setminus \{\bard_{r'}\}$ to~$M_{p^*} (\bard_{r'})$.

  \bfseries Case 3: \mdseries $ t = \{b_p, c_q, \barc_{q'}\}$.
  If $p > n -  p^*$, then $c_q$ does not prefer $t \setminus\{c_q\}$ to $M_{p^*} (c_q)$, so assume $p \le n - p^*$, implying that $\{b_p, c_{p+ p^* -1}, \barc_p\} \in M_{p^*}$.
  If $p\le \min\{q - p^* + 1, q'\}$, then $b_p$ does not prefer $t \setminus\{b_p\} $ to~$M_{p^*} (b_p)$.
  If $q \le \min \{p, q'\} + p^* - 1$, then $c_q$ does not prefer $t \setminus \{c_q\}$ to $M_{p^*} (c_q)$.
  If $q' \le \min \{p, q - p^* +1\}$, then $\barc_{q'}$ does not prefer $t \setminus \{\barc_{q'}\}$ to $M_{p^*} (\barc_{q'})$.

  \bfseries Case 4: \mdseries $ t = \{b_p, d_r, \bard_{r'}\}$.
  If $p\le \min\{r-  1, {r'} + n - p^*\}$, then $b_p$ does not prefer $t \setminus \{b_p\}$ to $M_{p^*} (b_p)$.
  If~$r \le \min \{p, {r'} + n - p^*\} + 1$, then $d_r$ does not prefer $t \setminus \{d_r\}$ to $M_{p^*} (d_r)$.
  If $r' \le \min \{p, r - 1\} - (n - p^*)$, then $\{b_{{r'} + n - p^*}, d_{{r'}+n-p^* + 1}, \bard_{r'} \}\in M_{p^*}$ and $\bard_{r'}$ does not prefer~$t\setminus \{\bard_{r'}\}$ to~$M_{p^*} (\bard_{r'})$.

  In each of the four cases, we have a contradiction, finishing the proof of the lemma.
\end{proof}

The parameterized reduction will create a vertex-selection gadget~$S^i$ for each set~$V^i$, $i\in [k]$.
We refer to agent $a_p$ (respectively $b_p, c_q, \barc_q, d_r$, or $\bard_r$) from the $i$-th vertex-selection gadget via~$a_p^i$ (respectively $b_p^i, c_q^i, \barc_q^i, d_r^i$, or $\bard_r^i$).

We say that a matching~$M$ \emph{selects} the vertex $v_{p^*}^i$ in vertex-selection gadget $S^i$ if, for all~$p < p^*$, we have $M (a_p) = \{c_q, \barc_{q'}\}$ for some $q, q' \in [n-1]$, and for all $p \ge p^*$, we have $M(a_p) = \{d_r, \bard_{r'}\}$ for some $r, {r'} \in [n+1]$.

We now turn to the edge gadgets, which model the edges of~$G$.

\subparagraph{Edge gadget.}

Fix a pair $(i,j) \in [k]^2$ with $i < j$, and let $E^{i,j}\coloneqq \{ \{v, w\}\in E(G): v\in V^i, w\in V^j\}$ be the set of edges between $V^i$ and $V^j$.
For each edge $e = \{v^i_{r_i}, v^j_{r_j}\}\in E^{i,j}$, our reduction contains an edge gadget between $S^i$ and $S^j$, containing the 15 agents $h^{e, i}_a$, $h^{e, i}_b$, $h^{e, j}_a$, $h^{e,j}_b$, $g^e_1$, $g^e_2$, $g^e_3$, $\barg^e_1$, $\barg^e_2$, $\barg^e_3$, $f^e$, $\barf^e$, $\alpha^e_1$, $\alpha^e_2$, and $\alpha^e_3$.

The intuitive function of the edge gadget is the following.
By \Cref{lem:vsg}, in any stable matching~$M$ there exists some $i^*$ such that $S^i$ selects $v^i_{i^*}$, i.e.,~$a_\ell^i$~is matched to a pair~$\{d_p, \bard_q\}$ if and only if $\ell \ge i^*$.
Given an edge~$e$, for any $\ell \ge i^*$, agent~$a_{\ell}^i$ prefers being matched to the edge gadget (more specifically, to the 2-set~$\{h_a^{e, i}, \alpha_1^{e}\}$) to being matched to $M$.
Similarly, $b_\ell$ for $\ell > n-i^*$ prefers to be matched to the edge gadget, namely to the 2-set~$\{h_b^{e, i}, \alpha_1^{e}\}$ to being matched to $M$.
The agent $h_a^{e, i}$ [$h_b^{e, i}$] prefers the 2-set $\{a^i_{i^*}, \alpha_1^{e}\}$ [$\{b_{n + 1- i^*}^i, \alpha_1^{e}\}$] to the 2-set~$\{f^e, \barf^e\}$ if $r_i \le i^*$ [$r_i \ge i^*$].
Thus, if $v^i_{r_i}= v^i_{i^*}$, i.e., if the vertex selected by $S^i$ is an endpoint of~$e$, then both $h_a^{e, i}$ and $h_b^{e,i}$ cannot be matched to $\{f^e, \barf^e\}$.
The edge gadget is now designed in such a way that an arbitrary of these vertices~$h_x^{e, y}$ ($x\in \{a, b\}$ and $y\in \{i,j\}$) has to be matched to $\{f^e, \barf^e\}$, which is possible if and only if $e$ is not an edge between the two vertices selected by the vertex-selection gadgets, i.e., $e \neq \{v^i_{i^*}, v^j_{j^*}\}$.

We now give a formal description of the edge gadget.
We fix an arbitrary order $\succ^{i,j}$ of the edges from $E^{i,j }$, and denote by $\betterEdgese$ the set of edges $e'\in E^{i, j}$ with $e' \succeq^{i,j} e$.
In the master poset, an agent $x^e$ is before $y^{e'}$ for $x\in \{h_a^{\cdot, i}, h_a^{\cdot, j}, h_b^{\cdot, i}, h_b^{\cdot, j}, g_1, g_2, g_3, \barg_1, \barg_2, \allowbreak \barg_3,\allowbreak f, \barf, \alpha_1, \alpha_2, \alpha_3\}$ if and only if $x= y$ and $e $ is before $e'$ in the order of edges from $E^{i,j}$.
Let $e = \{v_{r_i}^i, v_{r_j}^j\}$.
The preferences of the agents look as follows.
\begin{align*}
 h_a^{e,i} \colon &\{ \{ g_p^{e'}, \barg^{e''}_q\}: p, q\in [3], e', e''\in \betterEdgese\} \succ \{\{a_p^i, \alpha_1^{e'}\}: p \le r_i, e'\in \betterEdgese\} \\
 &\succ \{f^{e'}, \barf^{e''} : e', e''\in \betterEdgese\} \succ \CO_{h_a^{e, i}},\\
 h_a^{e,j} \colon  & \{ \{ g_p^{e'}, \barg^{e''}_q\}: p, q\in [3], e', e''\in \betterEdgese\} \succ \{\{a_p^j, \alpha_1^{e'}\}: p \le r_j, e'\in \betterEdgese\} \\
 & \succ \{f^{e'}, \barf^{e''}: e' , e''\in \betterEdgese\} \succ \CO_{h_a^{e, j}},\\
 h_b^{e,i} \colon  & \{ \{ g^{e'}_p, \barg^{e''}_q\}: p, q\in [3], e', e''\in \betterEdgese\} \succ \{\{b_p^i, \alpha_1^{e'}\} : p \le n + 1 - r_i, e'\in \betterEdgese\} \\
 & \succ \{f^{e'}, \barf^{e''}: e'\in \betterEdgese\} \succ \CO_{h_b^{e, i}},\\
 h_b^{e,j} \colon  & \{ \{ g^{e'}_p, \barg^{e''}_q\}: p, q\in [3], e', e''\in \betterEdgese\} \succ \{\{b_p^j, \alpha_1^{e'}\} : p \le n + 1 - r_j, e'\in \betterEdgese\} \\
 & \succ \{f^{e'}, \barf^{e''} : e', e''\in \betterEdgese\} \succ \CO_{h_b^{e, j}},\\
 g^e_{p} \colon  & \{\{h_a^{{e'}, i}, \barg^{e''}_p\}: e', e''\in \betterEdgese\} \succ \{\{h_b^{{e'}, i}, \barg^{e''}_p\} : e', e''\in \betterEdgese\} \\
 & \succ \{\{h_a^{{e'}, j}, \barg^{e''}_p\}: e', e''\in \betterEdgese\} \succ \{\{h_b^{{e'}, j}, \barg^{e''}_p\} : e', e''\in \betterEdgese\} \succ \CO_{g^e_{p}}, \\
 \barg^e_{p} \colon  & \{\{h_b^{{e'}, j}, g^{e''}_p\} :  e', e''\in \betterEdgese\} \succ \{\{h_a^{{e'}, j}, g^{e''}_p\} :  e', e''\in \betterEdgese\}\\
 & \succ \{\{h_b^{{e'}, i}, g^{e''}_p\} :  e', e''\in \betterEdgese\} \succ \{\{h_a^{{e'}, j}, g^{e''}_p\} :  e', e''\in \betterEdgese\} \succ \CO_{\barg^e_{p}}, \\
 f^e \colon  & \{h_a^{{e'}, i}, \barf^{e''} :  e', e''\in \betterEdgese\} \succ \{h_b^{{e'}, i}, \barf^{e''} :  e', e''\in \betterEdgese\} \\
 & \succ \{h_a^{{e'}, j}, \barf^{e''} :  e', e''\in \betterEdgese\} \succ \{h_b^{{e'}, j}, \barf^{e''} :  e', e''\in \betterEdgese\} \succ \CO_{f^e}, \\
 \barf^e \colon  &  \{h_b^{{e'}, j}, f^{e''} :  e', e''\in \betterEdgese\} \succ \{h_a^{{e'}, j}, f^{e''} :  e', e''\in \betterEdgese\} \\
 & \succ \{h_b^{{e'}, i}, f^{e''} :  e', e''\in \betterEdgese\} \succ \{h_a^{{e'}, j}, f^{e''} :  e', e''\in \betterEdgese\} \succ \CO_{\barf^e}, \\
 \alpha_1^e \colon  & \{\{a_p^i, h^{{e'}, i}_a\}, \{b_p^i, h^{{e'}, i}_a\}, \{a_p^j, h^{{e'}, j}_a\}, \{b_p^j, h^{{e'}, j}_a\} : p\in [n] :  e' \in \betterEdgese\} \\
 & \succ \{\{\alpha_2^{e'}, \alpha_3^{e''}\} :  e', e''\in \betterEdgese\} \succ \CO_{\alpha_1^e},\\
 \alpha_2^e \colon  & \{ \{\alpha^{e'}_1, \alpha_3^{e''} \}:  e', e''\in \betterEdgese\} \succ \CO_{\alpha_2^e},\\
 \alpha_3^e \colon  & \{\{\alpha_1^{e'}, \alpha_2^{e''} \}:  e', e''\in \betterEdgese\} \succ \CO_{\alpha_3^e}.
\end{align*}
\Cref{fig:eg-width} visualizes the 3-sets~$t=\{x_1, x_2, x_3\}$ such that $x_i$ prefers $t\setminus \{x_i\}$ to the 2-sets containing agents from its cut-off gadget for all $i\in [3]$.

\begin{figure}
 \begin{center}
  \begin{tikzpicture}
      \node[vertex, label=180:$a_1^i$] (a1) at (-6,0) {};
      \node[vertex, label=180:$a_2^i$] (a2) at ($(a1) + (0, 1)$) {};
      \node[vertex, label=180:$b_1^i$] (b1) at ($(a1) + (0, 3)$) {};
      \node[vertex, label=180:$b_2^i$] (b2) at ($(a1) + (0, 4)$) {};
      \node[vertex, label=0:$a_1^j$] (a1j) at ($(a1) + (12, 0)$) {};
      \node[vertex, label=0:$a_2^j$] (a2j) at ($(a2) + (12, 0)$) {};
      \node[vertex, label=0:$b_1^j$] (b1j) at ($(b1) + (12, 0)$) {};
      \node[vertex, label=0:$b_2^j$] (b2j) at ($(b2) + (12, 0)$) {};

      \node[vertex, label=90:$h_b^{e, i}$] (dbi) at ($0.5*(b1) + 0.5*(b2) + (1, 0)$) {};
      \node[vertex, label=270:$h_a^{e, i}$] (dai) at ($0.5*(a1) + 0.5*(a2) + (1, 0)$) {};
      \node[vertex, label=90:$h_b^{e, j}$] (dbj) at ($0.5*(b1j) + 0.5*(b2j) + (-1, 0)$) {};
      \node[vertex, label=270:$h_a^{e, j}$] (daj) at ($0.5*(a1j) + 0.5*(a2j) + (-1, 0)$) {};

      \node[ellipse, draw] (cc1) at ($0.5*(dai) + 0.5*(daj) + (0, -2.)$) {$\{g_3, \barg_1\}$};
      \node[ellipse, draw] (cc2)  at ($(cc1) + (0, 1.5)$) {$\{g_2, \barg_2\}$};
      \node[ellipse, draw] (cd1) at ($(cc2) + (0, 4)$) {$\{g_1, \barg_3\}$};
      \node[ ellipse, draw] (cd2) at ($(cd1) + (0, 1.75)$) {$\{f, \barf\}$};

      \node[vertex, label=270:$\alpha_1$] (al1) at ($(cc2) + (0, 1.5)$) {};
      \node[ellipse, draw] (al2) at ($(al1) + (-0., 1)$) {$\{\alpha_2, \alpha_3\}$};
      \draw (al2) -- (al1);

      \draw (dai) edge (cc1) edge (cc2) edge (cd1) edge (cd2);
      \draw (dbi) edge (cc1) edge (cc2) edge (cd1) edge (cd2);
      \draw (daj) edge (cc1) edge (cc2) edge (cd1) edge (cd2);
      \draw (dbj) edge (cc1) edge (cc2) edge (cd1) edge (cd2);

        \begin{scope}[on background layer]
          \newcommand{\colorBetweenTwoNodes}[3]{
            \fill[#1] ($(#2) + (0, .08)$) to ($(#2) - (0, .08)$) to ($(#3) - (0,.08)$) to ($(#3) + (0,.08)$) -- cycle;
          }
          \newcommand{\alternateColorBetweenTwoNodes}[4]{
            \fill[#1] ($(#2) + (0, .08)$) to ($(#2) - (0, .08)$) to ($0.75*(#2) + 0.25*(#3) - (0,.08)$) to ($0.75*(#2) + 0.25*(#3) + (0,.08)$) -- cycle;
            \fill[#4] ($0.75*(#2) + 0.25*(#3) + (0, .08)$) to ($0.75*(#2) + 0.25*(#3) - (0, .08)$) to ($0.5*(#2) + 0.5*(#3) - (0,.08)$) to ($0.5*(#2) + 0.5*(#3) + (0,.08)$) -- cycle;
            \fill[#1] ($0.5*(#2) + 0.5*(#3) + (0, .08)$) to ($0.5*(#2) + 0.5*(#3) - (0, .08)$) to ($0.25*(#2) + 0.75*(#3) - (0,.08)$) to ($0.25*(#2) + 0.75*(#3) + (0,.08)$) -- cycle;
            \fill[#4] ($0.25*(#2) + 0.75*(#3) + (0, .08)$) to ($0.25*(#2) + 0.75*(#3) - (0, .08)$) to ($(#3) - (0,.08)$) to ($(#3) + (0,.08)$) -- cycle;
          }
          \newcommand{\twoColorsBetweenNodes}[4]{
            \fill[#1] ($(#2) + (0, .08)$) to ($(#2) - (0, .0)$) to ($(#3) + (0,.)$) to ($(#3) + (0,.08)$) -- cycle;
            \fill[#4] ($(#2) + (0, -.08)$) to ($(#2) - (0, .0)$) to ($(#3) - (0,.0)$) to ($(#3) - (0,.08)$) -- cycle;
          }
          \colorBetweenTwoNodes{green}{a1}{dai}
          \colorBetweenTwoNodes{red}{a2}{dai}
          \twoColorsBetweenNodes{red}{dai}{al1}{green}
          \colorBetweenTwoNodes{brown}{b1}{dbi}
          \colorBetweenTwoNodes{yellow}{b2}{dbi}
          \twoColorsBetweenNodes{brown}{dbi}{al1}{yellow}
          \colorBetweenTwoNodes{gray}{a1j}{daj}
          \colorBetweenTwoNodes{blue}{a2j}{daj}
          \twoColorsBetweenNodes{blue}{daj}{al1}{gray}
          \colorBetweenTwoNodes{purple}{b1j}{dbj}
          \colorBetweenTwoNodes{orange}{b2j}{dbj}
          \twoColorsBetweenNodes{orange}{dbj}{al1}{purple}
        \end{scope}

  \end{tikzpicture}

 \end{center}
 \caption{The acceptable 3-sets (i.e., 3-sets which are preferred over the cut-off gadget by all agents they contain) of the edge gadget for the edge $\{v^i_1, v^j_2\}$.
 Acceptable 3-sets are drawn in two ways:
 If they contain one of the ellipses, then they consist of the two agents inside the ellipse and the other endpoint of an edge incident to the ellipse (for example, the edge between~$h^{e, i}_b$ and $\{ f, \barf\}$ corresponds to the acceptable 3-set $\{h^{e, i}_b, f, \barf\}$).
 Otherwise, they are marked by the bold colored paths of length two
 (for example, the red path $a^i_2$-$h^{e, i}_a$-$\alpha_1$ corresponds to the 3-set~$\{a^i_2, h^{e , i}_a, \alpha_1\}$).}
 \label{fig:eg-width}
\end{figure}

Furthermore, we extend the preferences of $a_p^\ell$ by inserting the 2-sets $\{h_a^{e, \ell}, \alpha_1^e\}$ for all $e\in E^{i,j}$ directly after $\{c^\ell_{n-1}, \barc^\ell_{n-1}\}$ (and before $\{d^\ell_1, \bard_1^\ell\}$), and the preferences of $b_p^\ell$ by the 2-sets $\{h_b^{e, \ell}, \alpha_1\}$ for all $e\in E^{i,j}$ directly after $\{c^\ell_{n-1}, \barc^\ell_{n-1}\}$ for every $p \in [n]$ and $\ell \in \{i, j\}$.

We now show that if $E^{i,j}$ contains an edge between the two vertices selected by $S^i$ and $S^j$ for a matching $M$, then $M$ is not stable.

\begin{lemma}\label{lem:eg}
  Let $\mathcal{I}$ consist of two vertex-selection gadgets $S^i$ and $S^j$ and the edge gadget for~$E^{i,j }$ between $S^i$ and $S^j$.

  If for a matching $M$ in $\mathcal{I}$ and an edge $e = \{v^i_r, v^j_s\} \in E^{i,j}$ vertex-selection gadgets $S^i$ and~$S^j$ select the vertices $v^i_r$ and $v^j_s$, then matching $M$ contains a blocking 3-set.
\end{lemma}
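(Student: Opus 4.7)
My plan is to assume $M$ is stable and exhibit a blocking $3$-set inside the edge gadget for $e$. As a first step, I apply \Cref{lem:vsg} to both vertex-selection gadgets: since $S^i$ selects $v^i_r$, agents $a^i_r$ and $b^i_{n+1-r}$ are matched in~$M$ to triples of the form $\{d^i_*,\bard^i_*\}$; symmetrically, $a^j_s$ and $b^j_{n+1-s}$ are matched in~$M$ to triples of the form $\{d^j_*,\bard^j_*\}$. Because the 2-sets $\{h^{e,\ell}_x,\alpha_1^e\}$ were inserted directly between $\{c^\ell_{n-1},\barc^\ell_{n-1}\}$ and $\{d^\ell_1,\bard^\ell_1\}$ in the preferences of these four ``interface'' agents, each of them strictly prefers the corresponding candidate triple $T_1=\{a^i_r,h_a^{e,i},\alpha_1^e\}$, $T_2=\{b^i_{n+1-r},h_b^{e,i},\alpha_1^e\}$, $T_3=\{a^j_s,h_a^{e,j},\alpha_1^e\}$, $T_4=\{b^j_{n+1-s},h_b^{e,j},\alpha_1^e\}$ over its $M$-match. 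It therefore suffices to exhibit some $k\in[4]$ for which $\alpha_1^e$ and $h^{e,\ell(k)}_{x(k)}$ also prefer $T_k$ to their $M$-matches.

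I then split on $M(\alpha_1^e)$, which by \Cref{lem:cog} must be acceptable: either (a) $M(\alpha_1^e)=\{x_p^\ell,h_x^{e',\ell},\alpha_1^e\}$ for some $e'\in\betterEdgese$, or (b) $M(\alpha_1^e)=\{\alpha_2^{e'},\alpha_3^{e''},\alpha_1^e\}$ with $e',e''\in\betterEdgese$. In case~(b), $\alpha_1^e$ strictly prefers every $T_k$ to its $M$-match, and it suffices to locate an $h$-agent of the gadget for $e$ that is matched to a triple worse than~$T_k$. The key combinatorial claim I establish is that at most three of the four $h$-agents of~$e$ can simultaneously be matched into $\{g^{e'}_p,\barg^{e''}_q\}$-triples while leaving $f^e$, $\barf^e$, $\alpha_2^e$, $\alpha_3^e$ also acceptably matched; hence some $h$-agent of~$e$ must be matched to a $\{f^{e'},\barf^{e''}\}$-triple. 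Because $r_i=r$ and $r_j=s$, each of $h_a^{e,i},h_b^{e,i},h_a^{e,j},h_b^{e,j}$ ranks the corresponding $T_k$ strictly above every such $\{f,\barf\}$-triple, so this yields the desired blocking $T_k$. Case~(a) is handled by the same counting argument, with the additional constraint that one $h$-agent of $e$ is already consumed by $\alpha_1^e$'s triple and that $\alpha_2^e,\alpha_3^e$ still need acceptable triples; a symmetric argument produces the blocking $T_k$ among the remaining three interface agents.

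The main obstacle is the cross-edge flexibility in the preferences: an $h$-agent of~$e$ may in principle pair with $g,\barg,f,\barf,\alpha_*$ agents of any edge $e'\in\betterEdgese$, and $h$-agents may also be matched to triples $\{x_p^\ell,h_x^{e,\ell},\alpha_1^{e'}\}$ with $\alpha_1^{e'}\succ_{\ML}\alpha_1^e$ (for $e'\succ e$), which they prefer to $T_k$. The cleanest way I see to rule this out is to induct on the rank of $e$ in~$\succ^{i,j}$: for the top edge $\betterEdgese$ is a singleton, so the counting is purely internal to the gadget for~$e$ and immediate; for lower edges, the induction hypothesis guarantees that every strictly higher-ranked gadget has already absorbed all of its own $h$-agents internally via its own $g,\barg,f,\barf,\alpha_*$ agents, so no ``spare'' resources from higher gadgets are available to absorb an $h$-agent of~$e$. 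Combining this induction with the case split on~$M(\alpha_1^e)$ forces a blocking $3$-set among $T_1,\dots,T_4$, contradicting the stability of~$M$.
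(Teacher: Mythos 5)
Your proposal is correct and follows essentially the same route as the paper: the cut-off gadgets for $f^e$ and $\barf^e$ (together with the observation that higher-ranked edge gadgets absorb their own agents) force some $h^{e,\ell}_x$ to be matched to $\{f^e,\barf^e\}$, and that agent then forms the blocking triple $T_k$ with its interface agent and $\alpha_1^e$, exactly as in the paper's case distinction on $\ell\in\{i,j\}$ and $x\in\{a,b\}$. The only difference is bookkeeping: you make the cross-edge counting/induction explicit (which the paper leaves implicit), and your case~(a) is in fact vacuous, since the cut-off gadget for $\alpha_2^e$ already forces $M(\alpha_1^e)=\{\alpha_2^e,\alpha_3^e\}$.
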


\begin{proof}
  We need to show that $M$ contains a blocking 3-set.
  By the cut-off gadgets, we know that any 3-set from $M$ contains three agents belonging to the same edge.
  By the cut-off gadgets for~$f^e$ and $\barf^e$, we know that $M' $ must contain a 3-set $\{h^{e, \ell}_x, f^e, \barf^e\}$ for some $\ell\in \{i,j\}$ and $x\in \{a, b\}$.
	We assume $\ell = i$; the case~$ \ell =j $ is symmetric.
	We make a case distinction on whether $x =a $ or $x= b$.

  If $x = a$, then
  we claim that $\{a^i_r, h^{e, i}_a, \alpha_1^e\}$ is a blocking 3-set.
  Agent $a^i_r$ prefers $\{h^{e, i}_a, \alpha^e_1\}$ to~$M(a^i_r) = \{d^i_p, \bard^i_q\}$.
  Agent $h^{e, i}_a$ prefers $\{a^i_r, \alpha^e_1\}$ to $M'(h^{e, i}_a) = \{f^e, \barf^e\}$ as $v^i_r$ is an endpoint of~$e$.
  By the cut-off gadget for $\alpha^e_2$, we have $M' (\alpha^e_1) = \{\alpha^e_2, \alpha^e_3\}$, and therefore, $\alpha_1^e $ prefers~$\{a^i_r, d^{e, i}_a\}$ to~$M'$.

  If $x = b$, then we claim that $\{b^i_{n - r + 1}, h^{e, i}_b, \alpha_1^e\}$ is a blocking 3-set.
  Agent $b^i_{n - r + 1}$ prefers $\{h^{e, i}_b, \alpha^e_1\}$ to $M(b^i_{n - r + 1} ) = \{d^i_p, \bard^i_q\}$ for some $p, q \in [n + 1]$.
  Also agent $h^{e, i}_b$ prefers $\{b^i_{n - r + 1}, \alpha^e_1\}$ to $M'(h^{e, i}_a) = \{f^e, \barf^e\}$ as $v^i_r$ is an endpoint of $e$.
  By the cut-off gadget for $\alpha^e_2$, we have $M' (\alpha^e_1) = \{\alpha^e_2, \alpha^e_3\}$, and therefore, $\alpha_1^e $ prefers~$\{a^i_r, d^{e, i}_a\}$ over $M'$.
\end{proof}

We now turn to the reverse direction, i.e., if $S^i$ and $S^j$ select two vertices which are not connected by an edge from $E^{i,j}$, then we can find a stable matching inside the edge gadget.

\begin{lemma}\label{lem:egBackwards}
	Let $\mathcal{I}$ consist of two vertex-selection gadgets $S^i$ and $S^j$ and the edge gadget for every edge~$e\in E^{i,j }$ between $S^i$ and~$S^j$.

  Given a matching~$M$ inside the vertex-selection gadgets such that at least one endpoint of~$e$ is not selected by a vertex-selection gadget for every $e\in E^{i,j}$, then there exists a matching~$M'$ containing~$M$ such that there is no blocking 3-set containing an agent from an edge gadget for an edge~$e\in E^{i,j}$.
\end{lemma}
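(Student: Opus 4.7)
The plan is to define $M'$ explicitly and then reduce the stability check to a case analysis confined to a single edge gadget. Write $i^*$ and $j^*$ for the vertices selected by $S^i$ and $S^j$. Re-running the blocking-triple argument from the proof of \Cref{lem:eg}, matching $h_a^{e,i}$ (resp.\ $h_b^{e,i}$, $h_a^{e,j}$, $h_b^{e,j}$) with $\{f^e,\barf^e\}$ avoids the problematic blocking $3$-sets $\{a_p^\ell, h_a^{e,\ell}, \alpha_1^e\}$ and $\{b_p^\ell, h_b^{e,\ell}, \alpha_1^e\}$ exactly when $i^* > r_i$ (resp.\ $i^* < r_i$, $j^* > r_j$, $j^* < r_j$). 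Since $r_i \neq i^*$ or $r_j \neq j^*$ by hypothesis, at least one of these four $h$-agents is \emph{safe}; pick one such $h_e^*$ and define the edge-gadget part of $M'$ at $e=\{v_{r_i}^i,v_{r_j}^j\}$ as
\[
\{h_e^*, f^e, \barf^e\},\ \{\alpha_1^e, \alpha_2^e, \alpha_3^e\},\ \text{and}\ \{h_?, g_p^e, \barg_p^e\}\ \text{for each}\ p\in[3],
\]
distributing the three remaining $h$-agents arbitrarily over the three $\{g_p^e,\barg_p^e\}$ pairs. Setting $M'$ equal to $M$ together with these triples over every $e\in E^{i,j}$ completes the construction.

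Verifying that $M'$ admits no blocking $3$-set meeting an edge-gadget agent of $E^{i,j}$ reduces to a within-edge check via the following structural fact: every edge-gadget agent $v$ of an edge $e$ lists, before its cut-off marker $\CO_v$, only $2$-sets whose members lie in edges of $E^{\succeq e}$ (plus some vertex-selection agents, in the case of $h_\cdot^{e,\ell}$ or $\alpha_1^e$). Combined with \Cref{lem:cog}, which forces $v$ to be matched at least as well as its cut-off pair, this implies that every edge-gadget member of a hypothetical blocking $3$-set $T$ must belong to the same edge $e$, with its (at most one) non-edge-gadget member being $a_p^\ell$ or $b_p^\ell$.

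Within a fixed edge $e$, three shapes of candidate triple remain. \textbf{(i)} $\{(a\,|\,b)_p^\ell, h_\cdot^{e,\ell}, \alpha_1^e\}$: excluded for the safe $h_e^*$ by construction, and for every other $h$-agent because it is matched with a $\{g,\barg\}$ pair ranked strictly above every $\{(a\,|\,b)_p^\ell, \alpha_1^e\}$ pair in its listed preferences. \textbf{(ii)} $\{h_\cdot, g_p^e, \barg_q^e\}$ with $p\neq q$: impossible since the listed preferences of $g_p^e$ (resp.\ $\barg_q^e$) only contain pairs involving $\barg_p^\cdot$ (resp.\ $g_q^\cdot$). \textbf{(iii)} $\{h_\cdot, g_p^e, \barg_p^e\}$ and $\{h_\cdot, f^e, \barf^e\}$ with $h_\cdot$ different from the current partner: these are killed by a \emph{mirror symmetry}, namely that $g_p^e$ and $f^e$ order the four $h$-agents as $h_a^{e,i}\succ h_b^{e,i}\succ h_a^{e,j}\succ h_b^{e,j}$, while $\barg_p^e$ and $\barf^e$ use the exact reverse order, so any alternative $h$-agent is preferred by exactly one of the two endpoints and dispreferred by the other, which produces a veto. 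Triples among $\alpha$-agents are handled by the same structural fact, since $\alpha_2^e$ and $\alpha_3^e$ only list pairs from $E^{\succeq e}$ before their cut-offs. The main obstacle is case~\textbf{(iii)}: it is the only step where the concrete orderings of $h$-agents in the preferences of $g/\barg$ and $f/\barf$ are essential, and the mirror-veto argument must be checked uniformly over every permutation used to distribute the three remaining $h$-agents.
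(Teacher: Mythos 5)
Your proposal is correct and follows essentially the same route as the paper's proof: match one $h$-agent with $\{f^e,\barf^e\}$ chosen according to the comparison of $r_i$ with $i^*$ (resp.\ $r_j$ with $j^*$), pair the remaining $h$-agents with the $\{g_p^e,\barg_p^e\}$ couples and $\{\alpha_1^e,\alpha_2^e,\alpha_3^e\}$ together, then rule out blocking triples via the edge ordering $\succ^{i,j}$ and the reversed rankings of $g_p^e$ versus $\barg_p^e$ (and $f^e$ versus $\barf^e$). Your version is a mild generalization (the paper fixes a specific safe $h$-agent and a specific assignment of the other three to $g_1^e,g_2^e,g_3^e$, whereas you allow any safe choice and any distribution), and the mirror-veto check you flag as the main obstacle does indeed go through uniformly because each $h$-agent ranks all $\{g_p^{e'},\barg_q^{e''}\}$ pairs above everything else while $g_p^e$ and $\barg_p^e$ order the four $h$-agents in exactly opposite ways.
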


\begin{proof}
  Assume that $S^i$ selects $v^i_{i^*}$ and $S^j$ selects~$v^j_{j^*}$.
  Let~$e = \{v^i_r, v^j_s\} \in E^{i,j}$.
  Since $e \neq \{v_{i^*}^i, v_{j^*}^j\}$, we have $r\neq i^*$ or $s\neq j^*$.
  We assume that $r \neq i^*$ (the case $s \neq j^*$ is symmetric by switching the roles of $i$ and $j$).
  If $r < i^*$, then we extend $M$ to the edge gadget by adding the 3-sets $\{h_a^{e, i}, f^e, \barf^e\}$, $\{h_b^{e, i}, g^e_1, \barg^e_1\}$, $\{h_a^{e,j}, g^e_2, \barg^e_2\}$, $\{h_b^{e,j}, g^e_3, \barg^e_3\}$, and $\{\alpha_1^e, \alpha^e_2, \alpha_3^e\}$.
  If $r > i^*$, then we extend $M$ to the edge gadget by adding the 3-sets $\{h_a^{e, i}, g^e_1, \barg^e_1\}$, $\{h_b^{e, i}, f^e, \barf^e\}$, $\{h_a^{e,j}, g^e_2, \barg^e_2\}$, $\{h_b^{e,j}, g^e_3, \barg^e_3\}$, and $\{\alpha_1^e, \alpha^e_2, \alpha_3^e\}$.

  It remains to show that this does not lead to a blocking 3-set.
  First, note that no blocking 3-set can contain two agents belonging to different edge gadgets (because for two edges $e, e' \in E^{i,j}$ with $e \succ^{i,j} e'$, every agent~$x$ from the edge gadget for $e'$ prefers $M(x)$ to any 2-set containing an agent from the edge gadget for $e'$).
  We assume again without loss of generality that $r \neq i^*$.
  None of the agents $g_p^e$, $\barg^e_p$, $f^e$, and $\barf^e$ is contained in a blocking 3-set (any blocking 3-set containing~$g_p^e$ also contains $\barg_p^e$; however, $\barg_p^e$~ranks the acceptable triples in reverse order to $g_p^e$; a symmetric argument applies to $f^e$ and~$\barf^e$).

  If $r < i^*$, then agents $h^{e, i}_b$, $h^{e,j}_a$, and $h^{e,j}_b$ are not part of a blocking 3-set because all 2-sets which they prefer to $\{g^e_p, \barg^e_p\}$ contain an agent $g_{p'}^{e'}$.
  The only remaining possible blocking 3-set is $\{a^i_p, \alpha_1^e, h_a^{e, i}\}$ for $p \le r < i^*$.
  However, $M(a^i_p) = \{c^i_q, \barc^i_{q'}\}$ and thus~$a^i_p$ prefers $M (a^i_p)$ to $\{\alpha_1^e, h_a^{e, i}\}$.

  If $r > i^*$, then agents $h^{e, i}_a$, $h^{e, j}_a$, and $h^{e,j}_b$ are not part of a blocking 3-set because all 2-sets which they prefer to $\{g^e_p, \barg^e_p\}$ contain an agent $g_{p'}^{e'}$.
  The only remaining possible blocking 3-set is thus $\{b^i_p, \alpha_1^e, h_b^{e, i}\}$ for $p \le n- r + 1$.
  However, since $n - r + 1 \le n- i^*$, it follows that $M(b^i_p ) = \{c^i_q, \barc^i_{q'}\}$ and therefore $b^i_p$ prefers $M (b^i_p)$ to $\{\alpha_1^e, h_b^{e, i}\}$.
\end{proof}

Having described the reduction and the crucial properties of the gadgets, the correctness of the reduction now easily follows.

\paragraph{Proof of the forward direction.}

We split the proof of correctness into two parts and start by showing that a multicolored independent set implies a stable matching.

\begin{lemma}\label{lem:fw}
  If $G$ contains a multicolored independent set $I$, then there exists a stable matching.
\end{lemma}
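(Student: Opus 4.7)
The plan is to construct a stable matching directly from the multicolored independent set $I = \{v^1_{p_1}, v^2_{p_2}, \dots, v^k_{p_k}\}$, by assembling it gadget by gadget using the backward lemmas already proved, and then arguing that no blocking 3-set can appear across gadget boundaries thanks to the cut-off gadgets.

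First, for each color class $i \in [k]$, I would make the vertex-selection gadget $S^i$ select the vertex $v^i_{p_i}$ by taking the matching $M_{p_i}$ given by \Cref{lem:vsgBackwards}; this already avoids blocking 3-sets consisting only of agents of $S^i$. Next, for each pair $i<j$ and each edge $e=\{v^i_r,v^j_s\} \in E^{i,j}$, the independence of $I$ means that $e \neq \{v^i_{p_i}, v^j_{p_j}\}$, so $r \neq p_i$ or $s \neq p_j$; hence at least one endpoint of $e$ is not the vertex selected by its color's gadget. \Cref{lem:egBackwards} then gives an extension of the currently chosen matching to the edge gadget for $e$ that avoids blocking 3-sets containing any agent of that edge gadget. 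Finally, for every agent $a$ equipped with a cut-off gadget $\CO_a$, we add the two 3-sets $\{z_a^1,z_a^2,z_a^3\}$ and $\{z_a^4,z_a^5,z_a^6\}$; by \Cref{lem:cog}, provided the $M$-image of every agent $a$ is at least as good as $p_a$ in $a$'s preference list, the resulting matching is stable with respect to all blocking 3-sets involving any $z_a^r$.

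What remains is to verify the hypothesis of \Cref{lem:cog} and to rule out blocking 3-sets that use agents from more than one of (a) a vertex-selection gadget, (b) an edge gadget, (c) a cut-off gadget. The cut-off hypothesis is straightforward from the design: every agent of a vertex-selection or edge gadget is placed by the chosen matching on a 2-set that precedes its $\CO$-block in the preference list (the agents of $S^i$ are matched inside $S^i$, the $h$'s, $g$'s, $\bar g$'s, $f$'s, $\bar f$'s, $\alpha$'s of each edge gadget are matched to 2-sets that appear strictly above $\CO$ in their lists by the construction in \Cref{lem:egBackwards}), so \Cref{lem:cog} applies directly and takes care of every blocking 3-set containing a $z$-agent. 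Blocking 3-sets entirely inside a single vertex-selection gadget are excluded by \Cref{lem:vsgBackwards}, and blocking 3-sets entirely inside a single edge gadget (or touching an edge gadget at all) are excluded by \Cref{lem:egBackwards}. What is left is cross-gadget blocking 3-sets that touch two different edge gadgets, or an edge gadget and a vertex-selection gadget not incident to it; these are ruled out because an agent $x^e$ of the edge gadget for $e$ ranks every 2-set containing an agent from a different edge gadget, or from a non-incident vertex-selection gadget, after its cut-off block (the preferences of $h_a^{e,i}, h_b^{e,i}, g^e_p, \bar g^e_p, f^e, \bar f^e, \alpha^e_\ell$ only list 2-sets involving agents of $S^i$, $S^j$, or edge gadgets for $e'\succeq^{i,j} e$ before $\CO$).

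The main obstacle will be book-keeping the possible cross-gadget blocking triples carefully: in particular, making sure that when $S^i$ selects $v^i_{p_i}$, no agent $a^i_\ell$ or $b^i_\ell$ that is matched to a $\{d,\bar d\}$-pair can form a blocking 3-set with a $h^{e,i}_\bullet$ and an $\alpha_1^e$ for an edge $e$ incident to $v^i_{p_i}$. This is exactly where independence of $I$ matters: for every edge $e$ incident to $v^i_{p_i}$, the other endpoint $v^j_{s}$ satisfies $s \neq p_j$, so the case analysis of \Cref{lem:egBackwards} places $\{f^e,\bar f^e\}$ opposite to the ``selected side,'' which prevents $h^{e,i}_\bullet$ from preferring any 2-set of the form $\{a^i_\ell,\alpha^e_1\}$ or $\{b^i_\ell,\alpha^e_1\}$ to its current partner. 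Once this cross-check is made, stability of the constructed matching follows, completing the lemma.
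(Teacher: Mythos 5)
Your proposal is correct and follows essentially the same route as the paper's (much terser) proof: select $v^i_{p_i}$ in each $S^i$, extend via \Cref{lem:egBackwards} (using independence to guarantee its hypothesis) and \Cref{lem:cog}, and conclude stability from \Cref{lem:cog,lem:egBackwards,lem:vsgBackwards}. The extra cross-gadget bookkeeping you flag is exactly what those lemmas already absorb, so no new argument is needed.
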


\begin{proof}
  We construct a stable matching as follows.
  Each vertex-selection gadget $S^i$ selects the vertex from $I\cap V^i$.
  This matching is extended to the edge gadgets and cut-off gadgets as described in \Cref{lem:cog,lem:egBackwards}.
  The stability of the constructed matching follows from \Cref{lem:cog,lem:egBackwards,lem:vsgBackwards}.
\end{proof}

\paragraph{Proof of the backward direction.}

We now turn to the backward direction, showing that a stable matching implies a multicolored independent set.

\begin{lemma}\label{lem:bw}
  If there exists a stable matching, then $G$ contains a multicolored independent set.
\end{lemma}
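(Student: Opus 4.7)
The plan is a direct assembly of the gadget lemmas already established. Given a stable matching $M$ in the constructed \tdsrp-instance, I first apply \Cref{lem:vsg} to each of the $k$ vertex-selection gadgets $S^i$. The hypothesis of that lemma---that every non-$a/b$ agent in $S^i$ prefers all within-gadget 2-sets to any 2-set containing an outside agent---holds by construction: each of the agents $c^i_q$, $\barc^i_q$, $d^i_r$, $\bard^i_r$ places its cut-off gadget at the very end of its preference list, and only within-gadget 2-sets appear above it. \Cref{lem:cog} ensures moreover that the cut-off-gadget agents are indeed matched inside their own cut-off gadgets, so \Cref{lem:vsg} applies and yields indices $p^*_1,\dots,p^*_k\in[n]$ such that $S^i$ selects $v^i_{p^*_i}$ in $M$.

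I then set $I:=\{v^i_{p^*_i}:i\in[k]\}$, which by construction contains exactly one vertex from each color class, so it is multicolored. To show independence, I argue by contradiction: suppose there exist $i<j$ with $e=\{v^i_{p^*_i},v^j_{p^*_j}\}\in E^{i,j}$. Then both endpoints of $e$ are selected by their respective vertex-selection gadgets, so \Cref{lem:eg}, applied to the subinstance formed by $S^i$, $S^j$ and the edge gadget for $e$, produces a blocking 3-set---one of $\{a^i_{p^*_i},h^{e,i}_a,\alpha^e_1\}$, $\{b^i_{n+1-p^*_i},h^{e,i}_b,\alpha^e_1\}$, or the corresponding $j$-side triples. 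Since being a blocking 3-set depends only on the $M$-partners of its three agents, and these partners are identical in the subinstance and in the full instance (by \Cref{lem:cog} no outside cut-off or edge gadget can pull the relevant $a^i$, $b^i$, $h^{e,\cdot}$, or $\alpha^e_1$ out of the described matches), the same 3-set is blocking in the full instance, contradicting the stability of $M$. Hence $I$ is an independent set.

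The heavy technical work has already been absorbed by \Cref{lem:cog,lem:vsg,lem:eg}, so the proof here is essentially a short assembly argument. The single point worth emphasizing is the legitimacy of lifting the ``local'' blocking 3-set produced by \Cref{lem:eg} from its two-gadget subinstance to the full instance; this is immediate since blocking is a local property determined by $M$ on the three involved agents, and their $M$-partners are unaffected by the presence of additional vertex-selection and edge gadgets.
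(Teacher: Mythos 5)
Your proof is correct and follows exactly the same route as the paper's: apply \Cref{lem:vsg} to each vertex-selection gadget to obtain the selected vertices, then invoke \Cref{lem:eg} to rule out any edge between two selected vertices, so they form a multicolored independent set. The paper's version is just a two-line assembly of these lemmas; your additional checks (that the hypotheses of \Cref{lem:vsg} hold and that the blocking 3-set lifts from the subinstance to the full instance) are sound elaborations of the same argument.
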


\begin{proof}
  By \Cref{lem:vsg}, every vertex-selection gadget selects a vertex.
  By \Cref{lem:eg}, no two selected vertices are adjacent.
  Thus, the selected vertices form a multicolored independent set.
\end{proof}

\paragraph{The parameter.}\label{sec:par-poset}

It remains to show that the preferences of the constructed \dsmpo-instance can be derived from a poset~$\succ_{\ML}$ of bounded width.

The poset~$\succ_{\ML}$ looks as follows.
For each vertex-selection gadget $S^i$, we have $a_j^i \succ_{\ML} a_{j'}^i$, $b_j^i \succ_{\ML} b_{j'}^i$, $c_j^i \succ_{\ML} c_{j'}^i$, $\barc_j^i \succ_{\ML} \barc_{j'}^i$, $d_j^i \succ_{\ML} d_{j'}^i$, and $\bard_j^i \succ_{\ML} \bard_{j'}^i$ if and only if $j < j'$.
Furthermore, we have for every $q \in [6]$ that $z^q_{a_j^i} \succ_{\ML} z^q_{a_{j'}^i}$, $z^q_{b_j^i} \succ_{\ML} z^q_{b_{j'}^i}$, $z^q_{c_j^i} \succ_{\ML} z^q_{c_{j'}^i}$, $z^q_{\barc_j^i} \succ_{\ML} z^q_{\barc_{j'}^i}$, $z^q_{d_j^i} \succ_{\ML} z^q_{d_{j'}^i}$, and $z^q_{\bard_j^i} \succ_{\ML} z^q_{\bard_{j'}^i}$.

For each $i, j\in [k]$ with $i < j$, each $e, e' \in E^{i,j}$ and $p \in [3]$, we have $h^{e, i}_a \succml h^{e', i}_a$, $h^{e, i}_b \succml h^{e', i}_b$, $h^{e, j}_a \succml h^{e', j}_a$, $h^{e, j}_b \succml h^{e', j}_b$, $h^{e, i}_b \succml h^{e', i}_b$, $g^e_p \succml g^{e'}_p$, $f^e \succml f^{e'}$, and $\alpha^e_p \succml \alpha^{e'}_p$ if and only if $ e \succ^{i, j} e'$.
Furthermore, we have for every $p\in [3]$ and $q\in [6]$ that $z^q_{h^{e, i}_a} \succml z^q_{h^{e', i}_a}$, $z^q_{h^{e, i}_b} \succml z^q_{h^{e', i}_b}$, $z^q_{h^{e, j}_a} \succml z^q_{h^{e', j}_a}$, $z^q_{h^{e, j}_b} \succml z^q_{h^{e', j}_b}$, $z^q_{h^{e, i}_b} \succml z^q_{h^{e', i}_b}$, $z^q_{g^e_p} \succml z^q_{g^{e'}_p}$, $z^q_{f^e} \succml z^q_{f^{e'}}$, and $z^q_{\alpha^e_p} \succml z^q_{\alpha^{e'}_p}$ if and only if $ e \succ^{i, j} e'$.

We arrive at the following observation.

\begin{observation}\label{lem:pw}
	The preferences are derived from a poset of width at most $O(k^2)$.
\end{observation}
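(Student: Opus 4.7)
The plan is to give an explicit chain decomposition of the master poset $\succ_{\ML}$ and then invoke Dilworth's theorem.

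First I would observe that the poset defined in Section~\ref{sec:par-poset} only places strict order relations within a single ``type'' of agent. For the vertex-selection gadgets, the six types are $a^i_\cdot, b^i_\cdot, c^i_\cdot, \barc^i_\cdot, d^i_\cdot, \bard^i_\cdot$; agents of a given type, as the subscript ranges over $[n]$ (or $[n-1]$, or $[n+1]$), form one chain by construction, and distinct types within the same $S^i$ are declared incomparable. Moreover, attached to every such agent $x$ is a cut-off gadget contributing six further agents $z^1_x, \dots, z^6_x$; for each fixed $q\in[6]$, all agents $z^q_x$ as $x$ ranges over one type in $S^i$ form a single chain (this follows from the relations $z^q_{a^i_j}\succ_{\ML} z^q_{a^i_{j'}}$ for $j<j'$, and analogously for the other types). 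This gives $6 + 36 = 42$ chains per vertex-selection gadget, and hence $42k$ chains overall from the $k$ vertex-selection gadgets.

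Next I would do the analogous count for the edge gadgets. For every pair $(i,j)$ with $i<j$ and every edge $e\in E^{i,j}$, the gadget contributes $15$ agents. Fixing a type (e.g.\ $h^{\cdot,i}_a$, or $g^\cdot_1$, or $\alpha^\cdot_2$, etc.) and letting $e$ range over $E^{i,j}$ ordered by $\succ^{i,j}$ yields one chain; this gives $15$ chains per pair~$(i,j)$. Each of these $15$ agents again has a cut-off gadget of six further agents, and exactly as for the vertex-selection gadgets these are partitioned into $15\cdot 6 = 90$ chains per pair. Thus every pair $(i,j)$ contributes $105$ chains, and summing over all $\binom{k}{2}$ pairs yields $105\binom{k}{2} = O(k^2)$ chains from the edge gadgets.

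Putting the two counts together, the whole agent set is partitioned into $42k + 105\binom{k}{2} = O(k^2)$ chains of $\succ_{\ML}$. By Dilworth's theorem, the width of $\succ_{\ML}$ is at most the minimum number of chains needed to cover the poset, and hence is $O(k^2)$. The only thing that needs a small sanity check is that no relations across gadgets were introduced in Section~\ref{sec:par-poset} (so the chains really do cover the poset without conflict), which is immediate from the definition of $\succ_{\ML}$ there; I see no real obstacle, the observation is essentially a bookkeeping argument.
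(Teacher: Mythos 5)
Your proof is correct and follows essentially the same route as the paper: an explicit chain decomposition of $\succ_{\ML}$ (one chain per agent ``type'' per gadget, plus six chains per type for the attached cut-off gadgets) followed by Dilworth's theorem. Your per-pair count of $15\cdot 7 = 105$ chains for the edge gadgets is in fact a more careful tally than the paper's stated $77$ (which omits the $\barg^{\cdot}_p$ and $\barf^{\cdot}$ chains and double-counts $h^{\cdot,i}_b$), but since both counts are constant per pair, the $O(k^2)$ bound is unaffected.
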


\begin{proof}
 It is easy to verify that the preferences are derived from~$\succ_{\ML}$, so it remains to show that $\succ_{\ML}$ has width $O(k^2)$.
 Note that $\succ_{\ML}$ decomposes in $O(k^2)$ chains, from which the observation follows by Dilworth's Theorem~\cite{Dilworth50}:
 For every vertex-selection gadget~$S^i$, we have 42 chains $a^i_1 \succ_{\ML} \dots \succ_{\ML} a^i_n$, $b^i_1 \succml \dots \succml b^i_n$, $c^i_1 \succml \dots \succml c^i_n$, $\barc^i_1 \succml \dots \succml \barc^i_n$, $d^i_1 \succml \dots \succml d^i_n$, and $\bard^i_1 \succml \dots \succml \bard^i_n$ as well as for every $q \in [6]$, chains $z^q_{a^i_1} \succ_{\ML} \dots \succ_{\ML} z^q_{a^i_n}$, $z^q_{b^i_1} \succml \dots \succml z^q_{b^i_n}$, $z^q_{c^i_1} \succml \dots \succml z^q_{c^i_n}$, $z^q_{\barc^i_1} \succml \dots \succml z^q_{\barc^i_n}$, $z^q_{d^i_1} \succml \dots \succml z^q_{d^i_n}$, and $z^q_{\bard^i_1} \succml \dots \succml z^q_{\bard^i_n}$.
 For every $i, j \in [k]$ with $i < j$, let $E^{i,j} = \{e_1, \dots, e_s\}$ such that $e_r \succ^{i,j} e_{r+1}$ for every $r \in [s-1]$.
 We have 77 chains $h^{e_1, i}_a \succml \dots \succml h^{e_s, i}_a$, $h^{e_1, i}_b \succml \dots \succml h^{e_s, i}_b$, $h^{e_1, j}_a \succml \dots \succml h^{e_s, j}_a$, $h^{e_1, j}_b \succml \dots  \succml h^{e_s, j}_b$, $h^{e_1, i}_b \succml\dots \succml  h^{e_s, i}_b$, $g^{e_1}_p \succml \dots \succml g^{e_s}_p$ for every $p\in[3]$, $f^{e_1} \succml \dots \succml f^{e_s}$, and $\alpha^{e_1}_p \succml \alpha^{e_s}_p$ for every $p\in[3]$ as well as for every $q\in [6]$, chains~$z^q_{h^{e_1, i}_a} \succml \dots \succml z^q_{h^{e_s, i}_a}$, $z^q_{h^{e_1, i}_b } \succml \dots \succml z^q_{h^{e_s, i}_b}$, $z^q_{h^{e_1, j}_a }\succml \dots \succml z^q_{h^{e_s, j}_a}$, $z^q_{h^{e_1, j}_b} \succml \dots  \succml z^q_{h^{e_s, j}_b}$, $z^q_{h^{e_1, i}_b} \succml\dots \succml  z^q_{h^{e_s, i}_b}$, $z^q_{g^{e_1}_p} \succml \dots \succml z^q_{g^{e_s}_p}$ for every $p\in[3]$, $z^q_{f^{e_1}} \succml \dots \succml z^q_{f^{e_s}}$, and $z^q_{\alpha^{e_1}_p} \succml z^q_{\alpha^{e_s}_p}$ for every $p\in[3]$.
\end{proof}

We now have all ingredients to obtain the following main result.

\begin{theorem}\label{thm:width_w-h}
  \tdsrp\ parameterized by poset width is \Wone-hard.
\end{theorem}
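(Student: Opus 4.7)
The plan is to assemble the reduction pieces already developed throughout \Cref{sec:width} into a single parameterized reduction from \textsc{Multicolored Independent Set} (parameterized by the size~$k$ of the multicolored independent set) to \textsc{3-DSR-Poset} parameterized by the width of the master poset, and then invoke the correctness and parameter-bound lemmas.

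First, given an instance $(G = (V^1 \disunion \dots \disunion V^k, E))$ of \textsc{Multicolored Independent Set}, I would construct the \tdsrp-instance~$\mathcal{I}'$ as follows: one vertex-selection gadget~$S^i$ for each color class~$V^i$, one edge gadget for each edge~$e \in E^{i,j}$ (for every $i < j$), and a cut-off gadget $\CO_a$ for every agent~$a$ mentioned in the construction. The master poset is the poset~$\succ_{\ML}$ described in \Cref{sec:par-poset}. This construction is clearly computable in polynomial time, hence in FPT-time.

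Second, I would verify correctness by combining \Cref{lem:fw} and \Cref{lem:bw}: the former shows that a multicolored independent set in~$G$ yields a stable matching in~$\mathcal{I}'$, and the latter shows the converse. Thus $G$ has a multicolored independent set of size~$k$ if and only if $\mathcal{I}'$ admits a stable matching.

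Third, I would bound the new parameter in terms of~$k$. By \Cref{lem:pw}, the preferences of~$\mathcal{I}'$ are derived from a poset of width $O(k^2)$. Hence the mapping $(G,k) \mapsto (\mathcal{I}', O(k^2))$ is a parameterized reduction, and the \Wone-hardness of \textsc{Multicolored Independent Set} parameterized by~$k$~\cite{DBLP:series/txcs/DowneyF13,Pietrzak03} transfers to \tdsrp parameterized by poset width. The main conceptual work (designing the gadgets so that the vertex-selection gadget really forces the selection of exactly one vertex per color, the edge gadget forbids selecting both endpoints of an edge, and the cut-off gadget confines each agent's ``useful'' matches to a narrow window) has already been carried out in the preceding lemmas; the only remaining obstacle in this concluding proof is to cleanly state that the number of chains needed to cover~$\succ_{\ML}$ depends only on~$k$ (and constantly many auxiliary chains per vertex or edge), so that Dilworth's theorem gives the desired $O(k^2)$ width bound.
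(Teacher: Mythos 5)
Your proposal is correct and follows essentially the same route as the paper's own proof: it assembles the gadget construction into a polynomial-time reduction from \textsc{Multicolored Independent Set}, invokes \Cref{lem:fw,lem:bw} for correctness, and uses \Cref{lem:pw} (via Dilworth's theorem) to bound the poset width by $O(k^2)$. No gaps; this matches the paper's argument.
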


 \begin{proof}
    The reduction clearly runs in polynomial time.
	\Cref{lem:fw,lem:bw} prove its correctness, and \Cref{lem:pw} shows that the width of the poset is bounded by $O(k^2)$.
	Thus, we found a parameterized reduction from \textsc{Multicolored Independent Set} to \tdsrp\ parameterized by the width of the poset, proving \Wone-hardness.
 \end{proof}

After an \FPT-algorithm for the parameter maximum number of agents incomparable to a single agent in \Cref{sec:fpt}, we have seen \Wone-hardness for the stronger parameter width of the master poset.
It remains open whether \tdsrp\ parameterized by the width of the master poset lies in \XP.

We now investigate a third parameter measuring similarity to a strictly ordered master poset, namely the (agent) deletion distance to a strictly ordered master poset.

\subsection{Deletion distance to a strictly ordered master poset}
\label{sec:del-dist}

We saw that \dsmpo is fixed-parameter tractable when parameterized by the maximum number of agents incomparable to a single agent, but it is \Wone-hard when parameterized by the width of the poset.
We now consider another natural parameter measuring the similarity to a strict order, namely the deletion distance to a strict order, i.e., the minimum number of agents which need to be deleted such that the resulting preferences are derived from a strict order (note that this does not pose any condition on the preferences of the deleted agents).
Notably, this parameter is orthogonal to the two parameters investigated before (width of the poset and maximum number of agents incomparable to an agent), as can be seen from the following two examples:
If the master poset is the weak order~$a_1 \perp_{\ML} a_2 \succ_{\ML} a_3 \perp_{\ML} a_4 \succ_{\ML} a_5 \perp_{\ML} a_6 \succ_{\ML} \dots \succ_{\ML} a_{n-1} \perp_{\ML} a_{n}$, thus $\kappa (\ML) = 2$, while one has to delete $n/2$~agents in order to obtain a strict order.
If the preferences of all but one agent are derived from a strict order, and the last agent's preferences are derived from the inverse of this strict order, then the deletion distance is one while any master poset from which this preferences are derived contains a tie of size at least $n-1$ and thus has width at least~$n-1$.
In this section, reducing from \textsc{Multicolored clique} we show that \mdsr\ is \Wone-hard parameterized by the deletion distance to a strictly ordered master poset.
First, we formally introduce the parameter.

\begin{definition}
  For an \mdsr-instance $\mathcal{I}$, let $\lambda (\mathcal{I})$ denote the minimum number of agents such that the preferences of the instance arising through the deletion of these agents are derived from a strict order.
\end{definition}

Note that the agents which were deleted to arrive at a strictly ordered master poset may have arbitrary preferences.
We now present an example for this parameter.

\begin{example}
 Consider an \mdsr\ instance $\mathcal{I}$ with $d = 3$ and the following preferences.
 \begin{align*}
  a_1 : \{a_2, a_3\} \succ \{a_3, a_4\} \succ \{a_2, a_5\}\succ \{a_2, a_4\} \succ \{a_3, a_5\} \succ \{a_4, a_5\},\\
  a_2 : \{a_1, a_3\} \succ \{a_3, a_4\} \succ \{a_1, a_5\}\succ \{a_1, a_4\} \succ \{a_3, a_5\} \succ \{a_4, a_5\},\\
  a_3 : \{a_1, a_2\} \succ \{a_1, a_4\} \succ \{a_2, a_5\}\succ \{a_2, a_4\} \succ \{a_1, a_5\} \succ \{a_4, a_5\},\\
  a_4 : \{a_1, a_5\} \succ \{a_1, a_2\} \succ \{a_2, a_5\}\succ \{a_1, a_3\} \succ \{a_3, a_5\} \succ \{a_2, a_3\}, \\
  a_5 : \{a_2, a_3\} \succ \{a_3, a_4\} \succ \{a_1, a_2\}\succ \{a_2, a_4\} \succ \{a_1, a_3\} \succ \{a_1, a_4\}.
 \end{align*}
 After the deletion of $a_5$, the preferences are
 \begin{align*}
  a_1 : \{a_2, a_3\} \succ \{a_3, a_4\} \succ \{a_2, a_4\} ,\\
  a_2 : \{a_1, a_3\} \succ \{a_3, a_4\} \succ \{a_1, a_4\} ,\\
  a_3 : \{a_1, a_2\} \succ \{a_1, a_4\} \succ \{a_2, a_4\} ,\\
  a_4 : \{a_1, a_2\} \succ \{a_1, a_3\} \succ \{a_2, a_3\}.
 \end{align*}
 which are derived from the strict order $a_1 \succ a_2 \succ a_3 \succ a_4$.
 Hence, $\lambda(\mathcal{I} ) = 1$.
\end{example}

To show that \textsc{3-DSR} parameterized by $\lambda ( \mathcal{I})$ is \Wone-hard,
we give a parameterized reduction from \textsc{Multicolored Clique}, which is \Wone-complete \cite{DBLP:series/txcs/DowneyF13,Pietrzak03}.
\defProblemTask{\textsc{Multicolored Clique}}
{A $k$-partite graph $G = (V^1 \disunion V^2 \disunion \dots \disunion V^k, E)$.}
{Decide whether $G$ contains a clique $C$ with $C\cap V^i\neq \emptyset$ for all $i\in [k]$.}

The sets $V^1,\dots, V^k$ are called \emph{color classes}.
Let $E^{i,j}$ be the set of edges with one endpoint in $V^i$ and one endpoint in $V^j$.
By adding vertices and edges, we may assume without loss of generality that there exists some $n'\in \mathbb{N}$ such that $|V^i| = 3n' +1$ for all $i\in [k]$, and that there exists some $m'\in \mathbb{N}$ such that $|E^{i,j}|= 3m'+1$ for all~$i, j \in [k]$.
For each color class $V^i$, we fix an arbitrary order of the vertices, i.e., $V^i = \{v^i_1, v^i_2, \dots, v^i_{3n'+1}\}$.
Furthermore, for each $i<j \in [k]$, we fix an arbitrary order of $E^{i,j} = \{e^{i,j}_1, \dots, e^{i,j}_{3m'+1}\}$, and we set $e^{j, i}_\ell\coloneqq e^{i,j}_\ell$.
For a vertex $v\in V(G)$, we denote by $\delta (v)$ the set of all edges incident to~$v$.

As in \Cref{sec:width}, we only describe the beginning of the preferences of an agent, followed by $\pend$.
The remaining acceptable 2-sets can be added in an arbitrary way obeying the strictly ordered master poset (extended to the $\lambda (\mathcal{I})$ agents not contained in this poset).

The basic idea of the reduction is that we create for every $i\in [k]$ a vertex-selection gadget~$S^i$ and for every $i < j \in [k]$ an incidence-checking gadget.
The vertex-selection gadget~$S^i$ then encodes the selection of a vertex from~$V^i$ to be part of the multicolored clique, and the incidence-checking gadgets ensure that all vertices selected by the vertex-selection gadgets are indeed incident.

We begin by describing the agents in the \dsmpo instance constructed by the reduction, and the master poset.
Afterwards, we describe the gadgets used in the reduction.

\subsubsection{Master poset}
\label{sec:master-poset}
Every vertex-selection gadget $S^i$ has $3n' +3 $ agents, namely $s^i$, $\overleftarrow{s}^i$, and every vertex from $V^i$ will also be an agent in~$S^i$, i.e., $S^i$ also contains agent~$v_1^i, \dots, v_{3n'+1}^i$.
Every incidence-checking gadget $B^{i,j}$ contains $3 (3m' + 1) + 6 $ agents:
$b^{i, j}_\ell$ and $b^{j, i}_\ell$ for $\ell \in [3m' + 1]$, agent~$c^{i,j}_\ell$ for $\ell \in [3m']$, and agents $x^{i,j}$, $\overleftarrow{x}^{i,j}$, $x^{j, i}$, $\overleftarrow{x}^{j, i}$, $ z^{i,j}_1$, $z^{i,j}_2$, and~$z^{i,j}_3$.
Additionally, there are $O(k^2)$ agents contained in so-called ``cut-off gadgets'' (see \Cref{sec:cog});
however, all these agents will be deleted in order to have the preferences derived from a strictly ordered master poset, and thus, are not considered here.
Furthermore, in order to arrive at a strict order as master poset, we delete agents $s^i$ and $\overleftarrow{s}^i$ for every $i\in [k]$ as well as agents $x^{i,j}$, $\overleftarrow{x}^{i,j}$, $x^{j, i}$, $\overleftarrow{x}^{j, i}$, $ z^{i,j}_1$, $z^{i,j}_2$, and~$z^{i,j}_3$ for every $i < j \in [k]$.
Inside the vertex-selection gadget~$S^i$, the master poset has the form $\mathcal{A}^i \coloneqq v^i_1 \succml v^i_2 \succml \dots \succml v^i_{3n'+1}$.
Inside incidence-checking gadget~$B^{i,j}$, the master poset has the form $\mathcal{B}^{i,j} \coloneqq 
    b_{1}^{i, j} \succ b_{1}^{j, i} \succ c^{i,j}_1 \succ b_{e}^{i, j} \succ b_{2}^{j, i} \succ c_2^{i, j} \succ \dots \succ c_{3m'}^{i, j} \succ b_{{3m'+1}}^{i, j} \succ b_{{3m'+1}}^{j, i}$.
The complete master poset looks as follows.
\begin{align*}
  \mathcal{A}^1 & \succ_{\ML} \mathcal{A}^2 \succ_{\ML} \dots \succ_{\ML} \mathcal S^k \succml \mathcal{B}^{1,2} \succml \mathcal{B}^{1,3} \succml \dots \succml \mathcal{B}^{1, k}\\
  &\succml \mathcal{B}^{2,3} \succml \mathcal{B}^{2, 4} \succml \dots \succml \mathcal{B}^{2,k} \succml \mathcal{B}^{3, 4} \succml \dots \succml \mathcal{B}^{k-1, k}.
\end{align*}

We continue by describing the gadgets used in the reduction.

\subsubsection{Cut-off gadget}
\label{sec:cog}

As in \Cref{sec:width}, we have a cut-off gadget ensuring for a given agent~$a$ and a 2-set~$p$ that $a$~is matched to a 2-set it likes at least as much as~$p$.
Indeed, we use the cut-off gadget from~\Cref{sec:width}.
Since we will use only $O(k^2)$ many cut-off gadgets, adding these agents may increase the deletion distance to a strict master poset by at most $O(k^2)$;
thus, we can choose the preferences arbitrarily and do not have to consider the master poset.

We already showed in \Cref{lem:cog} that cut-off gadgets work as desired.
Having described the cut-off gadgets, we can now describe the remaining gadgets, namely the vertex-selection gadget and the incidence-checking gadget.
We start with the vertex-selection gadget.

\subsubsection{Vertex-selection gadget}

For each color class $V^i$, the reduction adds a vertex selection gadget $S^i$.
This vertex-selection gadget contains an agent for each vertex $v^i_p$ with $p\in [3n' + 1]$;
we identify the agent and the vertex and call both $v^i_p$.
Furthermore, the vertex-selection gadget contains two agents $s^i$ and~$\overleftarrow{s}^i$.
The vertex-selection gadget also contains a cut-off gadget for each of the three agents $s^i$, $\overleftarrow{s}^i$, and~$v^i_{3n'+1 }$.

The intuitive idea behind the vertex-selection gadget is as follows.
Selecting a vertex $v^i_\ell$ to be part of a multicolored clique corresponds to matching this vertex to $\{s^i, \overleftarrow{s}^i\}$, and partitioning the remaining agents from~$\{v^i_1, \dots, v^i_{3n'+1}\} \setminus \{v^i_\ell\}$ into $n'$ 3-sets which are contained in the stable matching.
Since $s^i$ prefers being matched to~$\{\overleftarrow{s}^i, v^i_\ell\}$ with small $\ell$ while $\overleftarrow{s}^i$ prefers being matched to $\{s^i, v^i_\ell\}$ with large $\ell$, how much $s^i$ and $\overleftarrow{s}^i$ like their partners in a matching encodes which vertex is selected.
The cut-off gadget for $v^i_{3n' + 1}$ actually implies that in every stable matching~$M$, there has to exist some $\ell \in [3n'+1]$ such that $\{v_\ell^i, s^i, \overleftarrow{s}^i\} \in M$ (see \Cref{lem:vertex-sel}).
This implies that every vertex has to be matched this way, and thus, the vertex-selection gadget has to select a vertex in each stable matching.

The preferences of an agent $v^i_p \in V^i$ start with all 2-sets containing an agent from a vertex-selection gadget~$S^j$ with $j < i$, and continue with $\{s^i, \overleftarrow{s}^i\} \succ \bigl\{\{v^i_1, v^i_{q} \}: q\in [3n'+1] \bigr\} \succ \bigl\{\{v^i_2, v^i_{q} \}: q\in [3n'+1] \bigr\} \succ \dots \succ \bigl\{\{v^i_{3n' + 1},  v^i_{q}\} : q\in [3n'+1] \bigr\} \pend$.

For the agent $v^i_{3n'+1}$, a cut-off gadget follows; for the other agents, the preferences are extended in an arbitrary way obeying the master poset.

The preferences of $s^i$ and $\overleftarrow{s}^i$ are as follows (the agents $x^{i,j}$, $\overleftarrow{x}^{i,j}$, and $b_\ell^{i,j}$ are contained in incidence-checking gadgets (see \Cref{sec:incidence})).
\begin{align*}
  s^i \colon &\{\overleftarrow{s}^i, v^i_1\}\succ \bigl\{\{b_\ell^{i, j}, \overleftarrow{x}^{i, j}\} : {j\in [k] \setminus \{i\}, e^{i,j}_\ell \in E^{i,j} \cap \delta (v^i_1)} \bigr\}
  \succ \{\overleftarrow{s}^i, v^i_2\}\\
  & \succ \bigl\{\{b^{i, j}_\ell, \overleftarrow{x}^{i,j }\} : {j\in [k] \setminus \{i\}, e^{i,j}_\ell \in E^{i,j} \cap \delta (v^i_2)} \bigr\}\succ \{\overleftarrow{s}^i, v^i_3\} \succ \dots \succ \{\overleftarrow{s}^i, v^i_{3n' + 1}\} \succ \CO_{s_i}\\
  \overleftarrow{s}^i \colon &\{s^i, v^i_{3n' + 1} \} \succ \bigl\{\{b^{i,j}_\ell, x^{i,j }\} : {j\in [k] \setminus \{i\}, e^{i,j}_\ell \in E^{i,j} \cap \delta (v^i_{3n'+1})}\bigr\} \succ \{s^i, v^i_{3n'}\}, \\
  & \succ \bigl\{\{b_\ell^{i,j}, x^{i,j} \} : {j\in [k] \setminus \{i\}, e^{i,j}_\ell\in E^{i,j} \cap \delta (v^i_{3n'})} \bigr\} \succ \{s^i, v^i_{3n' -1}\} \succ\dots \succ \{s^i, v^i_1\} \succ \CO_{\overleftarrow{s}_i}.
\end{align*}

Next, we turn to the incidence-selection gadget.

\subsubsection{Incidence-checking gadget}
\label{sec:incidence}

For each pair $(V^i, V^j)$ of color classes with $i < j$, we add an incidence-checking gadget $B^{i,j}$.
For each edge $\ell\in [3m' + 1]$, the incidence-checking gadget~$B^{i,j}$ contains two agents $b_{\ell}^{i,j} $ and~$b_{\ell}^{j,i}$.
Furthermore, there are $3m'$ agents $c_1^{i,j}, \dots, c_{3m'}^{i,j}$.
The gadget also contains seven agents $x^{i,j}$, $\overleftarrow{x}^{i,j}$, $x^{j,i}$, $\overleftarrow{x}^{j,i}$, $z^{i, j}_1$, $z^{i, j}_2$, and~$z^{i, j}_3$, which do not have the master preferences, and a cut-off gadget for~$z^{i, j}_2$.

The idea of the gadget is as follows:
Every stable matching has to contain $\{b_\ell^{i,j}, b_\ell^{j, i}, c_\ell^{i,j}\}$ or $\{b_\ell^{i,j}, b_\ell^{j, i}, c_{\ell - 1}^{i,j}\}$ for all $\ell \in [3m'+1]\setminus\{r\}$ for some $r\in [3m' + 1]$, while for the remaining~$r$, agent~$b_r^{i,j}$ must be matched to $\{x^{i, j}, \overleftarrow{x}^{i,j}\}$ and $b_r^{j, i}$ must be matched to $\{x^{j,i}, \overleftarrow{x}^{j, i}\}$.
Thus, the better $x^{i,j}$ is matched, the worse $\overleftarrow{x}^{i,j}$ is matched.
Unless $e_r$ is incident to the vertex selected by vertex-selection gadget~$S^i$, it follows that either $x^{i,j}$ and~$s^i$ or $\overleftarrow{x}^{i,j}$ and $\overleftarrow{s}^i$ are part of a blocking 3-set together with the vertex selected by~$S^i$.
By symmetric arguments, $e_r$ is incident to the vertex selected by $S^j$ in any stable matching, and thus, the vertices selected by the vertex-selection gadgets form a clique.

Let $Z^{i,j} $ be the set of agents which are before $b_1^{i,j}$ in the master poset.
As the agents from~$Z^{i,j}$ are before every agent from~$B^{i', j'}$ in the master poset, we have to add them in the beginning of the preferences of agent~$b^{i,j}_\ell$ and $b^{j, i}_\ell$.
Let $X\coloneqq \{x^{i,j}, \overleftarrow{x}^{i,j}, x^{j, i}, \overleftarrow{x}^{j, i}\}$ and let $A$ denote the set of all agents.

The preferences of $b_{\ell}^{i,j}$ (resp.~$b_{\ell}^{j, i}$) are as follows (deleting all 2-sets containing $b_{\ell}^{i,j}$ (resp.~$b_{\ell}^{j, i}$)).
\begin{align*}
  \{ \{z, a\} & : z\in Z^{i,}, a\in A\} \succ
  \bigl\{\{a, x\} : a\in \{s^i,\overleftarrow{s}^i, s^j, \overleftarrow{s}^j\}, x\in  X \bigr\} \\
  & \succ \{\{x, x'\}: x, x' \in X\}
  \succ  \bigl\{ \{x, b_q^{i,j}\}, \{x, b_q^{j, i}\}: x\in X,  q\in [3m' + 1]\bigr\}\\
  &\succ \bigl\{\{b_q^{i,j}, z^{i,j}_1 \}, \{b_q^{j, i}, z^{i, j}_1\} : q\in [\ell - 1] \bigr\} \succ \binom{\{b_q^{i,j}, b_q^{j, i}, c_q^{i,j } : q\in [\ell]\}}{2} \\
  & \succ \bigl\{\{b_q^{i,j}, z^{i, j}_1\}, \{b_q^{j, i}, z^{i, j}_1\} : q\in [\ell, 3m' + 1]\bigr\} \pend.
\end{align*}
The preferences of $c_q^{i,j}$ are arbitrary preferences obeying the master poset.\\
The preferences of $z^{i, j}_1$ are as follows.
$\bigl\{\{b_p^{i,j}, b_q^j\} : p\in [3m'+ 1], q\in [3m' + 1]\bigr\} \succ \{z^{i,j}_2, z^{i,j}_3\} \pend$.
The preferences of $z^{i,j}_2$ are as follows.
$\{z^{i,j}_1, z^{i,j}_3\} \succ \CO_{z^{i,j}_2}$.\\
The preferences of $z^{i,j}_3$ are as follows.
$\{z^{i,j}_1, z^{i,j}_2\} \pend$.

To describe the preferences of $x^{i,j}$ and $\overleftarrow{x}^{i,j}$, we define sublists $\mathcal C_\ell^\alpha$ and $\overleftarrow{\mathcal C}_\ell^\alpha$ for $\ell \in {[3n' + 1]}$ and $\alpha \in \{i, j\}$.
The sublist~$C_\ell^i$ contains the 2-sets $\{\overleftarrow{x}^{i,j}, b_r^{i,j}\}$ for all $e_r^{i,j}\in E^{i,j} \cap \delta (v^i_{\ell})$, ordered increasingly by $r$.
The sublist~$\overleftarrow{\mathcal C}_\ell^i$ contains the 2-sets~$\{x^{i,j}, b_r^{i,j}\}$ for $e_r^{i,j}\in E^{i,j} \cap \delta (v^i_{\ell})$, but is ordered \emph{decreasingly} by~$r$.
Similarly, sublist $\mathcal C_\ell^j$ contains the 2-sets $\{\overleftarrow{x}^{j, i}, b_r^{j, i}\}$ for all~$e_r^{j, i}\in E^{i,j} \cap \delta (v^i_{\ell})$, ordered increasingly by~$r$, and sublist $\overleftarrow{\mathcal C}_\ell^j$ contains 2-sets~$\{x^{j, i}, b_r^{j, i}\}$ ordered decreasingly by~$r$.
The preferences of $x^{i,j}$ and~$\overleftarrow{x}^{i,j}$ look as follows.
\begin{align*}
  x^{i,j} \colon& \mathcal C_{3n' + 1}^i \succ \{\overleftarrow{s}^i, v_i^{3n'+1}\}  \succ \mathcal C_{3n'}^i \succ \{\overleftarrow{s}^i, v_i^{3n'}\} \succ \dots \succ \mathcal C_1^i  \succ \{\overleftarrow{s}^i, v_i^{1}\} \succ \CO_{x^{i,j}},\\
  \overleftarrow{x}^{i,j} \colon& \overleftarrow{\mathcal C}_1^i \succ \{{s}^i, v_i^{1}\} \succ \overleftarrow{\mathcal C}_2^i \succ \{{s}^i, v_i^2\} \succ \dots \succ \overleftarrow{\mathcal C}_{3n'+1}^i \succ\{{s}^i, v_i^{3n'+1}\}\succ \CO_{\overleftarrow{x}^{i,j}},\\
  x^{j, i} \colon& \mathcal C_{3n' + 1}^j \succ \{\overleftarrow{s}^j, v_j^{3n'+1}\}  \succ \mathcal C_{3n'}^j \succ \{\overleftarrow{s}^j, v_j^{3n'}\} \succ \dots \succ \mathcal C_1^j  \succ \{\overleftarrow{s}^j, v_j^{1}\} \succ \CO_{x^{j, i}},\\
  \overleftarrow{x}^{j, i} \colon& \overleftarrow{\mathcal C}_1^j \succ \{{s}^j, v_j^{1}\} \succ \overleftarrow{\mathcal C}_2^j \succ \{{s}^j, v_j^2\} \succ \dots \succ \overleftarrow{\mathcal C}_{3n'+1}^j \succ\{{s}^j, v_j^{3n'+1}\}\succ \CO_{\overleftarrow{x}^{j, i}}.
\end{align*}

\subsubsection{The reduction}

Given an instance $(G,k)$ of \textsc{Multicolored Clique}, we construct an \textsc{MDSR}-instance $\mathcal{I}'$ as follows.
Instance~$\mathcal{I}'$ contains $k$~vertex-selection gadgets $S^i$, one for each color class $V^i$.
Between each pair $(S^i, S^j)$ of vertex-selection gadgets with $i< j$, there is an incidence-checking gadget~$B^{i,j}$.

We first show that our parameter $\lambda $ is indeed bounded by $O(k^2)$ for the constructed instance~$\mathcal{I}'$.

\begin{lemma}\label{okappa}
  $\lambda (\mathcal{I}')= O(k^2)$.
\end{lemma}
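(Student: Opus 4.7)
The plan is to exhibit a concrete set $D$ of $O(k^2)$ agents whose deletion yields an instance whose preferences are derived from a strict order, and to take this strict order to be the total order $\mathcal{A}^1 \succ \ldots \succ \mathcal{A}^k \succ \mathcal{B}^{1,2} \succ \ldots \succ \mathcal{B}^{k-1,k}$ already written down in \Cref{sec:master-poset}. The natural candidate for $D$ is precisely the collection of agents that were not placed inside any of the sublists $\mathcal{A}^i$ or $\mathcal{B}^{i,j}$: for each $i\in[k]$ the two agents $s^i,\overleftarrow{s}^i$; for each $i<j\in[k]$ the seven agents $x^{i,j},\overleftarrow{x}^{i,j},x^{j,i},\overleftarrow{x}^{j,i},z^{i,j}_1,z^{i,j}_2,z^{i,j}_3$; and the six auxiliary agents of every cut-off gadget. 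Counting $3k$ cut-off gadgets inside the vertex-selection gadgets (for $s^i$, $\overleftarrow{s}^i$, and $v^i_{3n'+1}$) and $5\binom{k}{2}$ inside the incidence-checking gadgets (for $x^{i,j}$, $\overleftarrow{x}^{i,j}$, $x^{j,i}$, $\overleftarrow{x}^{j,i}$, and $z^{i,j}_2$), this gives $|D| = 2k + 7\binom{k}{2} + 6\bigl(3k + 5\binom{k}{2}\bigr) = O(k^2)$.

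The remainder of the proof verifies, one agent type at a time, that the surviving preference lists, after all 2-sets involving $D$ are removed, respect the domination rule induced by the strict order above. The case of $c^{i,j}_\ell$ is immediate because its preferences were declared arbitrary but obeying the master poset. For $v^i_p$, the blocks of its list that involve deleted agents (the pair $\{s^i,\overleftarrow{s}^i\}$ together with every 2-set inherited from the cut-off tails) evaporate, and what remains is the explicit enumeration of the pairs $\{v^i_q, v^i_r\}$ already written in the order dictated by $\mathcal{A}^i$, followed by the $\pend$ tail which by definition obeys the master poset.

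The only delicate case is $b^{i,j}_\ell$ (and, by symmetry, $b^{j,i}_\ell$), because its preference list interleaves several blocks involving the deleted set $X = \{x^{i,j}, \overleftarrow{x}^{i,j}, x^{j,i}, \overleftarrow{x}^{j,i}\}$ and $z^{i,j}_1$. Once those blocks vanish, the surviving portion is: first the pairs meeting $Z^{i,j}$, then the pairs drawn from $\binom{\{b^{i,j}_q, b^{j,i}_q, c^{i,j}_q : q\in[\ell]\}}{2}$, then the $\pend$ tail. Compatibility of this residue with the strict order reduces to the key observation that $\{b^{i,j}_q, b^{j,i}_q, c^{i,j}_q : q\in[\ell]\}$ is exactly a prefix of the chain $\mathcal{B}^{i,j}$, so every pair drawn from this set dominates every pair containing an element that lies later in $\mathcal{B}^{i,j}$ or in a later $\mathcal{B}^{i',j'}$. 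This prefix check is where the verification concentrates; once it is made, the bound $\lambda(\mathcal{I}') = O(k^2)$ follows immediately from the counting in the first paragraph.
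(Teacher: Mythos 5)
Your proposal is correct and follows essentially the same route as the paper: the paper's proof simply points to the strict order already laid out in \Cref{sec:master-poset}, notes that the excluded agents ($s^i$, $\overleftarrow{s}^i$, the $x$- and $z$-agents of the incidence-checking gadgets, and the cut-off gadget agents) number $O(k^2)$, and asserts that the remaining preferences obey that order. You merely make the counting and the "easy to verify" step explicit, including the prefix observation for the $b^{i,j}_\ell$ lists.
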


\begin{proof}
  The master poset is described in \Cref{sec:master-poset}.
  It is a strict order and contains all agents but the $O(k^2)$ agents~$s^i$, $\overleftarrow{s}^i$, $x^{i,j}$, $\overleftarrow{x}^{i,j}$, $z^{i,j}$, $z^{i,j}_1$, $z^{i,j}_2$ for $i, j \in [k]$ with $i \neq j$, and all agents contained in cut-off gadgets.
  It is easy to verify that the preferences obey this master poset.
\end{proof}

\subsubsection{Proof of the forward direction}

We prove that if $G$ contains a clique of size $k$, then $\mathcal{I}'$ admits a stable matching.
So let $\{v^1_{p_1}, \dots, v^k_{p_k}\}$ be a multicolored clique.
We construct a stable matching $M$ as follows.

For the vertex-selection gadget $S^i$, we add the 3-set $\{v_i^{p_i}, s^i, \overleftarrow{s}^i\}$.
All other vertex agents~$v_i^q$ are matched to each other, according to their index $q$ (i.e., we match the three agents with lowest index together, then the next three agents, and so on).
Next, we consider an incidence gadget~$B^{i,j}$.
Assume that $\{v^i_{p_i}, v^j_{p_j}\}$ is the $\alpha$-th edge in the order of $E^{i,j}$ fixed in \Cref{sec:incidence}.
We add $\{x^{i,j}, \overleftarrow{x}^{i,j}, b^{i,j}_\alpha\}$ and $\{x^{j, i}, \overleftarrow{x}^{j, i}, b_\alpha^j\}$ to $M$.
Furthermore, we add the 3-sets $\{b_\ell^{i,j}, b_\ell^{j, i}, c_\ell^{i, j}\}$ for $\ell < \alpha $, and $\{b^{i,j}_\ell, b_\ell^{j, i}, c_{\ell- 1}^{i,j }\}$ for $\ell > \alpha$.
Finally, we add the 3-set~$\{z_1^{i,j}, z_2^{i,j }, z_3^{i,j}\} $.
The agents from the cut-off gadgets are matched as described in \Cref{lem:cog}.
We call the resulting matching~$M$.

It remains to show that $M$ is stable.
In order to do so, we will show step by step that no agent is part of a blocking 3-set.
We start with the agents $s^i$ and $\overleftarrow{s}^i$.

\begin{lemma}\label{lsip}
  For any $i\in [k]$, agents $s^i$ and $\overleftarrow{s}^i$ are not part of a blocking 3-set.
\end{lemma}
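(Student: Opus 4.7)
The plan is to enumerate the 2-sets that $s^i$ (resp.\ $\overleftarrow{s}^i$) strictly prefers to its partner in $M$, and for each candidate to exhibit one of the other two agents in the proposed blocking 3-set who refuses the deviation. Since the preferences and gadget structure for $\overleftarrow{s}^i$ are obtained from those of $s^i$ by the symmetry $s^i \leftrightarrow \overleftarrow{s}^i$ and $x^{i,j} \leftrightarrow \overleftarrow{x}^{i,j}$ together with reversal of the vertex-indexing, I would argue only the case of $s^i$ in detail and then state the symmetric case.

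First I would recall that $M$ matches $s^i$ with $\{\overleftarrow{s}^i, v^i_{p_i}\}$. Reading off the preference list of $s^i$, the 2-sets ranked strictly above $\{\overleftarrow{s}^i, v^i_{p_i}\}$ fall into exactly two families: (i) $\{\overleftarrow{s}^i, v^i_\ell\}$ for some $\ell < p_i$, and (ii) $\{b^{i,j}_\ell, \overleftarrow{x}^{i,j}\}$ with $j \in [k]\setminus\{i\}$ and $e^{i,j}_\ell \in \delta(v^i_q)$ for some $q < p_i$. Any blocking 3-set containing $s^i$ must therefore have the form $\{s^i\} \cup P$ with $P$ in family (i) or (ii).

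For family (i) the candidate is $\{s^i, \overleftarrow{s}^i, v^i_\ell\}$ with $\ell < p_i$. In $M$ agent $\overleftarrow{s}^i$ is matched with $\{s^i, v^i_{p_i}\}$, and by construction $\overleftarrow{s}^i$ lists the 2-sets $\{s^i, v^i_\cdot\}$ in order of \emph{decreasing} vertex index; thus $\overleftarrow{s}^i$ ranks its $M$-partner above $\{s^i, v^i_\ell\}$ and refuses the deviation. For family (ii) the candidate is $\{s^i, b^{i,j}_\ell, \overleftarrow{x}^{i,j}\}$, and I would use \Cref{lem:cog} together with the sublist structure of $\overleftarrow{x}^{i,j}$'s preferences. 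In $M$, $\overleftarrow{x}^{i,j}$ is matched with $\{x^{i,j}, b^{i,j}_\alpha\}$, where $\alpha$ is the index of the edge $\{v^i_{p_i}, v^j_{p_j}\}$ in $E^{i,j}$; since $e^{i,j}_\alpha \in \delta(v^i_{p_i})$, this 2-set lies inside $\overleftarrow{\mathcal C}^i_{p_i}$, hence strictly before $\CO_{\overleftarrow{x}^{i,j}}$. Since no 2-set of the form $\{s^i, b^{i,j}_\ell\}$ appears among the explicitly listed items of $\overleftarrow{x}^{i,j}$ above its cut-off gadget, every such 2-set is ranked below the $M$-partner of $\overleftarrow{x}^{i,j}$, so $\overleftarrow{x}^{i,j}$ refuses the deviation as well.

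The main obstacle is family (ii): it requires pinpointing where $\overleftarrow{x}^{i,j}$'s $M$-partner sits inside the alternating pattern $\overleftarrow{\mathcal C}^i_1 \succ \{s^i,v^i_1\} \succ \overleftarrow{\mathcal C}^i_2 \succ \dots$, and checking that the would-be deviating 2-set cannot slip above the cut-off gadget. Once this is established, the symmetric argument for $\overleftarrow{s}^i$ handles the families (a') $\{s^i, v^i_\ell\}$ with $\ell > p_i$, rejected by $s^i$ (which prefers smaller-index vertices), and (b') $\{b^{i,j}_\ell, x^{i,j}\}$ with $e^{i,j}_\ell \in \delta(v^i_q)$ for $q > p_i$, rejected by $x^{i,j}$ via the analogous cut-off argument with $\mathcal C^i_{p_i}$ in place of $\overleftarrow{\mathcal C}^i_{p_i}$, completing the proof.
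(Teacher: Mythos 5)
Your overall decomposition is the same as the paper's: you enumerate the 2-sets that the $s$-type agent strictly prefers to its partner $\{\overleftarrow{s}^i, v^i_{p_i}\}$, split them into the family of 2-sets $\{\overleftarrow{s}^i, v^i_\ell\}$ and the family of 2-sets pairing a $b$-agent with an $x$-type agent, and exhibit a refusing third agent for each; the paper does exactly this (working with $\overleftarrow{s}^i$ rather than $s^i$, which is immaterial). Your family (i) argument coincides with the paper's: the other $s$-type agent orders the vertex agents in the opposite direction and therefore prefers its $M$-partner.

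The gap is in family (ii). You argue that $\overleftarrow{x}^{i,j}$ refuses because $\{s^i, b^{i,j}_\ell\}$ does not occur among its explicitly listed 2-sets above $\CO_{\overleftarrow{x}^{i,j}}$ and hence lies below its $M$-partner. Note that this argument never uses the hypothesis that the $V^i$-endpoint $v^i_q$ of $e^{i,j}_\ell$ satisfies $q < p_i$; as stated it would show that \emph{no} 3-set of the form $\{s^i, \overleftarrow{x}^{i,j}, b^{i,j}_\ell\}$ is ever blocking. That proves too much: the backward direction of the reduction (\Cref{licg}) relies on exactly such 3-sets being blocking when the edge matched into $\{x^{i,j},\overleftarrow{x}^{i,j},b^{i,j}_r\}$ is not incident to the selected vertex, so an argument that never consults the relative order of $v^i_q$ and $v^i_{p_i}$ cannot be capturing the intended mechanism. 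The paper's proof instead performs the positional comparison you yourself flagged as the main obstacle: the $x$-type agent's $M$-partner $\{x^{i,j}, b^{i,j}_\alpha\}$ sits in the block associated with $v^i_{p_i}$ (as you correctly locate it, using \Cref{lem:cog} only to place it above the cut-off), the deviating 2-set is associated with the block for $v^i_q$, and the refusal follows from the inequality between $q$ and $p_i$ together with the direction in which that agent's list runs through $\overleftarrow{\mathcal C}^i_1, \dots, \overleftarrow{\mathcal C}^i_{3n'+1}$. You need to carry out that comparison rather than dismiss the deviating 2-set as unranked.
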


\begin{proof}
  We prove the lemma for $\overleftarrow{s}^i$;
  the proof for $s^i$ is symmetric.
  All 2-sets which $\overleftarrow{s}^i$ ranks better than $\{s^i, v^i_{p_i}\}$ are of the form $\{s^i, v^i_\ell\}$ for~$\ell > p_i$ or~$\{\overleftarrow{x}^{i,j}, b^{i,j}_{\ell}\}$ for an edge $e_\ell \in E^{i,j}$ whose endpoint in $V^i$ is $v^i_q$ with $q> p_i$.

  For the 2-sets $\{s^i, v^i_\ell\}$ with $\ell > p_i$, note that $s^i$ does not prefer $\{\overleftarrow{s}^i, v^i_\ell\}$ to $\{\overleftarrow{s}^i, v^i_{p_i}\}$, and thus, $\{\overleftarrow{s}^i, s^i, v^i_\ell\}$ is not blocking.

  For the 2-set $\{\overleftarrow{x}^{i,j}, b^{i,j}_{\ell}\}$, note that $\overleftarrow{x}^{i,j}$ does not prefer $\{b_{\ell}^{i,j}, \overleftarrow{s}^i\}$ to $\{\overleftarrow{x}^{i,j}, b_{r}^{i,j}\}$ for the edge~$e_r = \{v^i_{p_i}, v^j_{p_j}\}$ as $q > p_i$.

  Thus, $\overleftarrow{s}^i$ is not part of a blocking 3-set.
\end{proof}

We now turn to the remaining agents from vertex-selection gadgets.

\begin{lemma}\label{lnbv}
  For each $i\le k$, no agent from $V^i$ is part of a blocking 3-set.
\end{lemma}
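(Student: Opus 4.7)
Fix $q\in[3n'+1]$ and consider an agent $v^i_q\in V^i$. The plan is to classify every 2-set $P$ that $v^i_q$ strictly prefers to $M(v^i_q)$ and rule out a blocking 3-set $\{v^i_q\}\cup P$ in each case. Reading off the preference list of $v^i_q$ declared in the vertex-selection gadget, every such $P$ falls into one of three categories: (i) $P$ contains an agent from some gadget $S^j$ with $j<i$; (ii) $P=\{s^i,\overleftarrow s^i\}$; or (iii) $P\subseteq V^i\setminus\{v^i_{p_i}\}$. (For $q=p_i$, category (iii) is vacuous because $M(v^i_{p_i})=\{s^i,\overleftarrow s^i\}$ already precedes all pairs of vertex-agents in $v^i_{p_i}$'s list.)

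To handle category (i), I would invoke the fact that every agent in an earlier vertex-selection gadget $S^j$ ranks all 2-sets involving an agent outside $S^j$ and outside its ``own'' part of the master poset only after a 2-set it has already been assigned in $M$. Concretely, by the construction of $M$, an agent $u\in S^j$ with $j<i$ is matched either to $\{s^j,\overleftarrow s^j\}$ (if $u=v^j_{p_j}$), to two consecutive vertex-agents from $V^j$ (if $u$ is another vertex-agent), or within $\{s^j,\overleftarrow s^j,v^j_{p_j}\}$; in all three cases one checks directly from the preference lists that $u$ prefers its current partners to any pair containing $v^i_q$, so no blocking triple arises. Category (ii) is immediate from \Cref{lsip}, since it would give a blocking triple containing $s^i$ and $\overleftarrow s^i$; an alternative (and slightly more elementary) justification is that $s^i$ would want the swap only for $q<p_i$, while $\overleftarrow s^i$ would want it only for $q>p_i$, which is impossible simultaneously.

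Category (iii) is the most technical piece and is, I expect, the main obstacle. Here we need that the restriction of $M$ to $V^i\setminus\{v^i_{p_i}\}$ admits no blocking 3-set. The preferences of every vertex-agent on pairs in $\binom{V^i}{2}$ are derived from the strict order $v^i_1\succ_{\ML}\cdots\succ_{\ML}v^i_{3n'+1}$, and $M$ partitions $V^i\setminus\{v^i_{p_i}\}$ into the consecutive triples in this order (after collapsing out $v^i_{p_i}$). I would therefore re-run the argument used for \Cref{tconsistent}: take a putative blocking triple $\{v^i_{q_1},v^i_{q_2},v^i_{q_3}\}$ with $q_1<q_2<q_3$ (all distinct from $p_i$), look at the agent of smallest index $v^i_{q_1}$, and observe that its current partners in $M$ are the two agents immediately after $v^i_{q_1}$ in the sequence $V^i\setminus\{v^i_{p_i}\}$, so both of them are weakly above $v^i_{q_2}$ and $v^i_{q_3}$ in the master poset; this forces $v^i_{q_1}$ to prefer $M(v^i_{q_1})$ to $\{v^i_{q_2},v^i_{q_3}\}$, a contradiction.

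The only remaining subtlety is that a triple in category (iii) could in principle also include the selected vertex $v^i_{p_i}$ itself; but this is ruled out at the outset, because in $v^i_{p_i}$'s list the pair $\{s^i,\overleftarrow s^i\}=M(v^i_{p_i})$ precedes every pair contained in $\binom{V^i\setminus\{v^i_{p_i}\}}{2}$. Combining the three cases finishes the proof.
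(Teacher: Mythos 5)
Your proof is correct and follows essentially the same route as the paper's, which compresses your three cases into a terse double induction (outer on the gadget index $i$, inner on the vertex index $p$) together with \Cref{lsip}; your case (iii) just spells out the \Cref{tconsistent}-style argument that the paper leaves implicit. The one spot to tighten is case (i): if the other agent of the preferred pair lies in an even earlier gadget $S^{j'}$ with $j'<j$, then the agent $u\in S^j$ may in fact prefer that pair (such pairs head $u$'s list), so you must pass to the agent from the earliest gadget involved --- which is precisely what the paper's induction on $i$ buys.
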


\begin{proof}
  We prove the statement by induction.
  For $i= 0$ there is nothing to show.

  Fix $i \in [k]$.
  Note that for all $j\in [k]$, agents $v^j_{p_j}$ are matched to their first choice, and thus not part of a blocking 2-set.
  By \Cref{lsip}, no agent $s^j$ or $\overleftarrow{s}^j$ is involved in a blocking 3-set.
  Thus, by induction on $p$, one easily sees that all 2-sets which $v^i_p$ prefers to the 2-set it is matched to in $M$, contain an agent about which we already know that it is not contained in a blocking 3-set, implying that also $v_i^p$ is not contained in a blocking 3-set.

  Thus, no agent of $V^i$ is part of a blocking 3-set.
\end{proof}

Next, we turn to the incidence-checking gadgets and start with agents $x^{i,j}$ and $\overleftarrow{x}^{i,j}$.

\begin{lemma}\label{lx}
  For any $i, j \in [k]$ with $i \neq j$, agents $x^{i,j}$ and $\overleftarrow{x}^{i,j}$ are not part of a blocking 3-set.
\end{lemma}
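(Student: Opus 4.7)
The plan is to show that $x^{i,j}$ is not in a blocking 3-set by enumerating the 2-sets it strictly prefers to its current partner under $M$ and argue for each that a third agent blocks the potential blocking 3-set; the argument for $\overleftarrow{x}^{i,j}$ is symmetric.

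Recall that $M$ matches $x^{i,j}$ to $\{\overleftarrow{x}^{i,j}, b^{i,j}_\alpha\}$, where $\alpha$ is the index of the edge $e_\alpha = \{v^i_{p_i}, v^j_{p_j}\}$ in $E^{i,j}$, and this 2-set lies in the sublist $\mathcal C^i_{p_i}$ of $x^{i,j}$'s preferences. Since $x^{i,j}$'s preference list, up to $\CO_{x^{i,j}}$, consists exclusively of sublists $\mathcal C^i_\ell$ and singletons $\{\overleftarrow{s}^i, v^i_\ell\}$, every 2-set $x^{i,j}$ strictly prefers to $M(x^{i,j})$ falls into exactly one of three types: (a) $\{\overleftarrow{x}^{i,j}, b^{i,j}_r\} \in \mathcal C^i_\ell$ with $\ell > p_i$; (b) $\{\overleftarrow{s}^i, v^i_\ell\}$ with $\ell > p_i$; or (c) $\{\overleftarrow{x}^{i,j}, b^{i,j}_r\} \in \mathcal C^i_{p_i}$ with $r < \alpha$.

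For types (a) and (c), the putative blocking 3-set has the form $\{x^{i,j}, \overleftarrow{x}^{i,j}, b^{i,j}_r\}$, so one has to check that $\overleftarrow{x}^{i,j}$ does not prefer $\{x^{i,j}, b^{i,j}_r\}$ to its current partner $M(\overleftarrow{x}^{i,j}) = \{x^{i,j}, b^{i,j}_\alpha\} \in \overleftarrow{\mathcal C}^i_{p_i}$. In type (a), since $e_r$ is incident to $v^i_\ell$ with $\ell > p_i$, the 2-set $\{x^{i,j}, b^{i,j}_r\}$ lies in $\overleftarrow{\mathcal C}^i_\ell$, which appears strictly after $\overleftarrow{\mathcal C}^i_{p_i}$ in $\overleftarrow{x}^{i,j}$'s preferences, so $\overleftarrow{x}^{i,j}$ does not prefer it. In type (c), $\{x^{i,j}, b^{i,j}_r\}$ also lies in $\overleftarrow{\mathcal C}^i_{p_i}$, but recall this sublist is ordered \emph{decreasingly} by the edge index $r$; since $r < \alpha$, the 2-set $\{x^{i,j}, b^{i,j}_r\}$ appears after $\{x^{i,j}, b^{i,j}_\alpha\}$ in $\overleftarrow{x}^{i,j}$'s list, and again $\overleftarrow{x}^{i,j}$ does not prefer it. For type (b), the putative blocking 3-set contains $\overleftarrow{s}^i$, which by Lemma~\ref{lsip} is not contained in any blocking 3-set.

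A fully symmetric case analysis disposes of $\overleftarrow{x}^{i,j}$: its preferences consist of sublists $\overleftarrow{\mathcal C}^i_\ell$ and singletons $\{s^i, v^i_\ell\}$, and the 2-sets strictly preferred to $M(\overleftarrow{x}^{i,j})$ are either in $\overleftarrow{\mathcal C}^i_\ell$ with $\ell < p_i$, of the form $\{s^i, v^i_\ell\}$ with $\ell < p_i$, or in $\overleftarrow{\mathcal C}^i_{p_i}$ with $r > \alpha$; the first and third are blocked by $x^{i,j}$ (whose alternative would lie in $\mathcal C^i_\ell$ with $\ell < p_i$, respectively in $\mathcal C^i_{p_i}$ at a position $r > \alpha$, in both cases later than its current partner), and the second is blocked by $s^i$ via Lemma~\ref{lsip}. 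The main care needed is only in keeping track of the opposite orderings of $\mathcal C^i_\ell$ (increasing in $r$) and $\overleftarrow{\mathcal C}^i_\ell$ (decreasing in $r$), which is precisely what makes the ``scissor'' between $x^{i,j}$ and $\overleftarrow{x}^{i,j}$ close.
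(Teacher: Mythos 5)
Your proposal is correct and follows essentially the same route as the paper's proof: first dispose of any blocking 3-set containing $s^i$, $\overleftarrow{s}^i$ (or their $j$-counterparts) via \Cref{lsip}, so that only 3-sets of the form $\{x^{i,j}, \overleftarrow{x}^{i,j}, b^{i,j}_r\}$ remain, and then observe that the opposite orderings of $\mathcal C^i_\ell$ and $\overleftarrow{\mathcal C}^i_\ell$ force $\overleftarrow{x}^{i,j}$ to rank any such candidate no better than its current partner whenever $x^{i,j}$ ranks it strictly better. Your explicit three-case decomposition is just a more detailed spelling-out of the same "scissor" argument the paper states in one line.
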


\begin{proof}
  A blocking 3-set cannot contain $s^i$, $\overleftarrow{s}^i$, $s^j$, or $\overleftarrow{s}^j$ by \Cref{lsip}.
  Thus, it is of the form~$\{x^{i,j}, \overleftarrow{x}^{i,j}, b^{i,j}_\alpha\}$ for some $\alpha \in [3m' + 1]$.
  Since~$x^{i,j}$ prefers $\{\overleftarrow{x}^{i,j}, b^{i,j}_\alpha\}$ to $M(x^{i,j}) = \{\overleftarrow{x}^{i,j}, b^{i,j}_{\beta}\}$ for some $\beta \in [3m' +1]$, agent~$\overleftarrow{x}^{i,j}$ does not prefer $\{x^{i,j}, b^{i,j}_\alpha\}$ to $\{x^{i,j}, b^{i,j}_{\beta}\}$, and thus, $\{x^{i,j}, \overleftarrow{x}^{i, j}, b^{i,j}_\alpha\}$ is not blocking.
\end{proof}

Now we consider agents $z_1^{i,j}$, $z_2^{i, j}$, and $z_3^{i,j}$.

\begin{lemma}\label{lem:z}
 For every $i < j \in [k]$, agents $z_1^{i,j}$, $z_2^{i,j}$, and $z_3^{i,j}$ are not part of a blocking 3-set.
\end{lemma}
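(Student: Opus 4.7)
The plan is to handle the three agents separately, exploiting the fact that in the constructed matching $M$, both $z_2^{i,j}$ and $z_3^{i,j}$ receive their top-ranked 2-set. Since $\{z_1^{i,j}, z_2^{i,j}, z_3^{i,j}\} \in M$, we have $M(z_2^{i,j}) = \{z_1^{i,j}, z_3^{i,j}\}$, which is the first 2-set in the preference list of $z_2^{i,j}$, and symmetrically $M(z_3^{i,j}) = \{z_1^{i,j}, z_2^{i,j}\}$ is the first 2-set in the preference list of $z_3^{i,j}$. Consequently, neither $z_2^{i,j}$ nor $z_3^{i,j}$ can be contained in a blocking 3-set.

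It remains to rule out blocking 3-sets containing $z_1^{i,j}$. Since $M(z_1^{i,j}) = \{z_2^{i,j}, z_3^{i,j}\}$, and the only 2-sets ranked ahead of this one in $z_1^{i,j}$'s preferences are those of the form $\{b_p^{i,j}, b_q^{j,i}\}$ with $p, q \in [3m'+1]$, every candidate blocking 3-set has the shape $\{z_1^{i,j}, b_p^{i,j}, b_q^{j,i}\}$.

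The main step is to derive a contradiction from the structure of the preference lists of $b_p^{i,j}$ and $b_q^{j,i}$. Recall that the 2-sets containing $z_1^{i,j}$ appear in two separate blocks in the preference list of $b_\ell^{\alpha,\beta}$: the sets $\{b_r^{i,j}, z_1^{i,j}\}$ and $\{b_r^{j,i}, z_1^{i,j}\}$ for $r \in [\ell - 1]$ sit before the block $\binom{\{b_r^{i,j}, b_r^{j,i}, c_r^{i,j}: r \in [\ell]\}}{2}$, while those for $r \in [\ell, 3m'+1]$ sit after it. In $M$, the agent $b_p^{i,j}$ is matched either to $\{x^{i,j}, \overleftarrow{x}^{i,j}\}$ (if $p$ equals the chosen edge index $\alpha$), which lies in the very early $\{\{x,x'\}: x, x' \in X\}$-block, or otherwise to a 2-set in $\binom{\{b_r^{i,j}, b_r^{j,i}, c_r^{i,j}: r \in [p]\}}{2}$. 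In either case, $M(b_p^{i,j})$ strictly precedes the late $z_1$-block, so $b_p^{i,j}$ prefers $\{z_1^{i,j}, b_q^{j,i}\}$ to $M(b_p^{i,j})$ only if that 2-set lies in the early $z_1$-block, i.e., only if $q < p$ (and in particular $p \neq \alpha$).

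Applying the symmetric argument to $b_q^{j,i}$, we need $p < q$ as well. Since $q < p$ and $p < q$ cannot both hold, no such 3-set is blocking, proving the lemma. The main obstacle I anticipate is bookkeeping the exact positions of the $z_1$-containing 2-sets relative to the $\binom{..}{2}$-block, and checking the edge case $p = \alpha$ or $q = \alpha$ where the $M$-partner sits even further up the list; both cases only strengthen the position of $M(b_p^{i,j})$ relative to the late $z_1$-block, so they do not affect the argument.
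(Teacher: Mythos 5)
Your proposal is correct and follows essentially the same route as the paper: $z_2^{i,j}$ and $z_3^{i,j}$ are matched to their top choices, and for a candidate blocking 3-set $\{z_1^{i,j}, b_p^{i,j}, b_q^{j,i}\}$ the positions of the two $z_1$-blocks in the preferences of $b_p^{i,j}$ and $b_q^{j,i}$ force $q<p$ and $p<q$ simultaneously, which is impossible. The paper phrases this as a WLOG on the smaller index rather than deriving two contradictory inequalities, but the underlying argument is identical.
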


\begin{proof}
 Agents $z_2^{i, j}$ and $z_3^{i,j}$ are matched to the first 2-sets in their preference lists and thus are not part of a blocking 3-set.

 All 2-sets which $z_1^{i,j}$ prefers to $\{z_2^{i,j}, z_3^{i,j} \}$ are of the form $\{b_p^\ell, b_q^{\ell'}\}$ for $\ell = i,j$ or $\ell = j,i$ and $\ell' = {i,j}$ or $\ell' = {j, i}$.
 We assume without loss of generality that $p \le q$ and $\ell = i, j$;
 the case $q > p $ or $\ell = j$ is symmetric.
 However, agent $b_p^{i,j}$ is matched to $\{c_p^{i,j}, b_p^{j, i}\}$ if~$ p < p_i$, to $\{x^{i,j}, \overleftarrow{x}^{i, j}\} $ if~$p  = p_i$, and to~$\{c_{p-1}^{i, j}, b_p^{j, i}\}$ if $p > p_i$.
 In all cases, agent $b_p^{i,j}$ prefers $M(b_p^{i,j})$ to~$\{z_1^{i,j}, b_q^{\ell'}\}$, and thus, $z_1^{i, j}$ is not part of a blocking 3-set.
\end{proof}

Finally, we turn to the remaining agents from the incidence-checking gadgets.
\begin{lemma}\label{lev}
  For every $i < j\in [k]$ and each $\ell \in [3m']$, agents $b^{i,j}_\ell$, $b_\ell^{j, i}$, and $c_\ell^{i,j}$ as well as~$b^{i,j}_{3m' + 1}$ and $b^{j, i}_{3m'+1}$ are not part of a blocking 3-set.
\end{lemma}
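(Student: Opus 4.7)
I would prove the lemma by induction on $\ell$, with induction hypothesis that $b^{i,j}_p$, $b^{j,i}_p$, and $c^{i,j}_p$ are in no blocking 3-set for all $p<\ell$. This induction sits inside an outer induction on the master-poset order of incidence-checking gadgets, so that when processing $B^{i,j}$ I may also treat every agent of $Z^{i,j}$ as already settled. Together with \Cref{lsip,lnbv,lx,lem:z}, this means that at step $\ell$ every agent in $Z^{i,j}\cup\{s^i,\overleftarrow{s}^i,s^j,\overleftarrow{s}^j\}\cup X\cup\{z^{i,j}_1,z^{i,j}_2,z^{i,j}_3\}\cup V^1\cup\dots\cup V^k\cup\{b^{i,j}_p,b^{j,i}_p,c^{i,j}_p:p<\ell\}$ is known to be in no blocking 3-set, so any 2-set touching one of them cannot yield a blocking 3-set with $b^{i,j}_\ell$, $b^{j,i}_\ell$, or $c^{i,j}_\ell$.

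For $b^{i,j}_\ell$ I split according to $\alpha$: $M(b^{i,j}_\ell)$ is $\{b^{j,i}_\ell,c^{i,j}_\ell\}$ when $\ell<\alpha$, $\{x^{i,j},\overleftarrow{x}^{i,j}\}$ when $\ell=\alpha$, and $\{b^{j,i}_\ell,c^{i,j}_{\ell-1}\}$ when $\ell>\alpha$. Inspecting the explicit preference list of $b^{i,j}_\ell$, every 2-set ranked strictly above $M(b^{i,j}_\ell)$ lies in one of the first six preference blocks. The first five blocks only contain pairs touching the settled set above, so no such pair yields a blocking 3-set. In the sixth block $\binom{\{b^{i,j}_q,b^{j,i}_q,c^{i,j}_q:q\in[\ell]\}}{2}$, the induction hypothesis absorbs every pair touching an agent of index $<\ell$, leaving only $\{b^{j,i}_\ell,c^{i,j}_\ell\}$. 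In the case $\ell<\alpha$ this pair equals $M(b^{i,j}_\ell)$; in the case $\ell=\alpha$ the match already lies in the third block, so the sixth block is not preferred at all; and in the case $\ell>\alpha$ the master-poset relation $c^{i,j}_{\ell-1}\succ_{\ML} c^{i,j}_\ell$ forces $\{b^{j,i}_\ell,c^{i,j}_{\ell-1}\}$ to dominate $\{b^{j,i}_\ell,c^{i,j}_\ell\}$, so $b^{i,j}_\ell$ does not prefer the latter to the former. Thus $b^{i,j}_\ell$ is in no blocking 3-set; a fully symmetric argument, swapping the roles of $i$ and $j$, handles $b^{j,i}_\ell$.

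For $c^{i,j}_\ell$, the match is $\{b^{i,j}_p,b^{j,i}_p\}$ with $p\in\{\ell,\ell+1\}$. Any 2-set preferred to this by $c^{i,j}_\ell$ either contains a non-master-poset agent (all settled by the earlier lemmas), an agent with smaller index $q<\ell$ (induction hypothesis), or an agent from an earlier incidence-checking gadget (outer induction). The only remaining option would be a 2-set of master-poset agents from $\{b^{i,j}_q,b^{j,i}_q,c^{i,j}_q:q>\ell\}$ or from later gadgets, but the position of $c^{i,j}_\ell$ in the master poset forces every such pair to be dominated by $M(c^{i,j}_\ell)$, and preferences derived from the master poset cannot rank a dominated pair above its dominator. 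The analogous argument, specialised to $\ell=3m'+1$ (where no $c$-agent arises), covers $b^{i,j}_{3m'+1}$ and $b^{j,i}_{3m'+1}$. I expect the main obstacle to be case $\ell>\alpha$ for $b^{i,j}_\ell$: ruling out the single pair $\{b^{j,i}_\ell,c^{i,j}_\ell\}$ requires carefully invoking the master-poset inequality between $c^{i,j}_{\ell-1}$ and $c^{i,j}_\ell$, and hinges on preferences being derived from the master poset rather than being an arbitrary extension of it.
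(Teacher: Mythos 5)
Your proof is correct and follows essentially the same route as the paper's: an outer induction over the incidence-checking gadgets in master-poset order, using \Cref{lsip,lnbv,lx,lem:z} to rule out all agents outside the current gadget, so that any remaining blocking 3-set must consist of agents $b^{i,j}_q$, $b^{j,i}_q$, $c^{i,j}_q$. The paper then finishes in one line by noting that the agent of minimal master-poset position in such a triple already weakly prefers its matched pair (a domination argument as in \Cref{tconsistent}), whereas you substitute an inner induction on $\ell$ together with a block-by-block inspection of the preference lists and an explicit case distinction on $\alpha$ --- more laborious, but equally valid.
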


\begin{proof}
  The argument is similar to the proof of \Cref{lnbv}.

  We show that agents $b^{i,j}_\ell, b^{j, i}_\ell$, and $c^{i, j}_\ell$ are not contained in a blocking 3-set via induction on~$k i + j$.
  For $ k i + j < k+1$, there are no agents $b^{i,j}_\ell $, $b^{j, i}_\ell$, and $c^{i,j}_\ell$, and thus, there is nothing to show.
  So fix an incidence-checking gadget~$B^{i,j}$ with $ki + j \ge k +1$.
  A blocking 3-set cannot contain an agent from a vertex-selection gadget (by \Cref{lsip,lem:z}), an agent from an incidence-checking gadget~$B^{i', j'}$ with $k i' + j' < ki + j$ (by the induction hypothesis), or an agent~$x^{i', j'}$ or $\overleftarrow{x}^{i', j'}$ (by \Cref{lx}).
  Thus, every blocking 3-set only consists of agents $b_\ell^{i,j}, b_\ell^{j, i}$, or $c^{i, j}_\ell$.
  However, the agent with minimal index does not prefer the 3-set to $M$, a contradiction.
  Therefore, no blocking 3-set contains an agent~$b^{i,j}_\ell$, $b^{j, i}_\ell$, or $c^{i,j}_\ell$.
\end{proof}

Now, we can conclude the stability of~$M$.

\begin{lemma}\label{lforward2}
  Matching $M$ is stable.
\end{lemma}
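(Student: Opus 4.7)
The plan is to assemble the earlier structural lemmas to conclude stability by exhausting all possible locations of a blocking 3-set. Concretely, I would first verify that $M$ as defined is indeed a 3-dimensional matching: every $v^i_{p_i}$ is matched via $\{v^i_{p_i}, s^i, \overleftarrow{s}^i\}$, the remaining $3n'$ vertex-agents of $S^i$ are grouped into $n'$ triples by index order, each incidence gadget $B^{i,j}$ fully partitions its agents among $\{x^{i,j}, \overleftarrow{x}^{i,j}, b^{i,j}_\alpha\}$, $\{x^{j,i}, \overleftarrow{x}^{j,i}, b^{j,i}_\alpha\}$, the triples $\{b^{i,j}_\ell, b^{j,i}_\ell, c^{i,j}_\ell\}$ or $\{b^{i,j}_\ell, b^{j,i}_\ell, c^{i,j}_{\ell-1}\}$, and $\{z^{i,j}_1, z^{i,j}_2, z^{i,j}_3\}$, and the cut-off-gadget agents are matched as prescribed by \Cref{lem:cog}. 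A quick bookkeeping confirms disjointness and that every agent is covered exactly once.

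Next, to show that $M$ admits no blocking 3-set $T$, I would argue by case analysis on the agents contained in $T$. Every agent of the instance falls into one of the following categories, and in each case the corresponding lemma proved above already rules out membership of that agent in a blocking 3-set: agents $s^i$ and $\overleftarrow{s}^i$ are excluded by \Cref{lsip}; vertex agents $v^i_p$ by \Cref{lnbv}; agents $x^{i,j}$ and $\overleftarrow{x}^{i,j}$ by \Cref{lx}; agents $z^{i,j}_1$, $z^{i,j}_2$, $z^{i,j}_3$ by \Cref{lem:z}; and all $b^{i,j}_\ell$, $b^{j,i}_\ell$, $c^{i,j}_\ell$ agents by \Cref{lev}.

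The remaining category consists of the agents belonging to cut-off gadgets. For these I would invoke \Cref{lem:cog}: the matching $M$ restricted to the ``outer'' instance satisfies $M(a) \succeq_a p_a$ for every cut-off-protected agent $a$ (this is immediate from how we placed the cut-off thresholds in the definition of each gadget, and it is built into $M$ by construction), and consequently no blocking 3-set can involve any cut-off-gadget agent.

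Since every agent of the instance falls into one of these categories and none of them can appear in a blocking 3-set, we conclude that no blocking 3-set exists and hence $M$ is stable. The main obstacle is really only hygienic: making sure that in the verification of ``$M(a) \succeq_a p_a$ for every cut-off-protected agent'' we tick off each such $a$ (namely $s^i$, $\overleftarrow{s}^i$, $v^i_{3n'+1}$, $x^{i,j}$, $\overleftarrow{x}^{i,j}$, $z^{i,j}_2$), and that the case lists in the application of the earlier lemmas cover every agent type. Beyond that the proof is a short combination argument.
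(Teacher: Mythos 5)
Your proposal matches the paper's proof: the paper also concludes stability by combining \Cref{lsip,lnbv,lx,lem:z,lev} to exclude all agents outside cut-off gadgets from blocking 3-sets, and then invokes \Cref{lem:cog} to handle the cut-off-gadget agents. Your additional bookkeeping (verifying $M$ is a matching and that each cut-off threshold is met) is sound and only makes explicit what the paper leaves implicit.
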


\begin{proof}
  By \Cref{lsip,lx,lev,lem:z,lnbv}, no agent outside a cut-off gadget is contained in a blocking 3-set.
  Thus, by \Cref{lem:cog}, there is no blocking 3-set.
\end{proof}

\subsubsection{Proof of the backward direction}

Finally, we show that if $\mathcal{I}'$ admits a stable matching, then
$G$ contains a clique of size $k$.
We start by showing that every stable matching selects a vertex $v^i_{p_i}$ in every vertex-selection gadget~$S^i$, meaning that it contains the 3-set $\{s^i, \overleftarrow{s}^i, v_i^{p_i}\}$ for some $p_i\in [3n' + 1]$.

\begin{lemma}\label{lem:vertex-sel}
  For  $i\in [k]$, every stable matching $M$ contains a 3-set $\{s^i, \overleftarrow{s}^i, v^i_{p_i}\}$ for some~$p_i\in [3n'+ 1]$, and all other agents $v^i_q$ are matched to each other for $q\neq p_i$.
\end{lemma}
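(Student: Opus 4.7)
The plan is to proceed by induction on $i$. Fix a stable matching $M$ and assume the lemma holds for every $S^{j}$ with $j<i$, so each agent of such an $S^j$ is already used inside $S^j$. The argument has two parts: first show that $M$ contains a triple of the form $\{s^i, \overleftarrow{s}^i, v^i_{p_i}\}$, and then show that the remaining $3n'$ agents of $V^i \setminus \{v^i_{p_i}\}$ are all matched inside $V^i \setminus \{v^i_{p_i}\}$.

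For the first part, I apply \Cref{lem:cog} to the cut-off gadget attached to $s^i$: it forces $M(s^i)$ to appear before $\CO_{s^i}$ in $s^i$'s preference list, so $M(s^i)$ is either some $\{\overleftarrow{s}^i, v^i_p\}$ or some $\{b^{i,j}_\ell, \overleftarrow{x}^{i,j}\}$. Suppose for contradiction that the latter holds; then $\overleftarrow{x}^{i,j}$ is matched to the pair $\{s^i, b^{i,j}_\ell\}$. Inspecting the preferences of $\overleftarrow{x}^{i,j}$, the 2-sets appearing before $\CO_{\overleftarrow{x}^{i,j}}$ are exactly the pairs coming from some $\overleftarrow{\mathcal C}_p^i$ (each of the form $\{x^{i,j}, b^{i,j}_r\}$) together with the pairs $\{s^i, v^i_p\}$; the pair $\{s^i, b^{i,j}_\ell\}$ is of neither shape. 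Hence $\overleftarrow{x}^{i,j}$ would be matched strictly after its cut-off 2-set, contradicting \Cref{lem:cog}. Therefore $M(s^i) = \{\overleftarrow{s}^i, v^i_{p_i}\}$ for some $p_i \in [3n'+1]$, and the triple of $M$ containing $s^i$ is precisely $\{s^i, \overleftarrow{s}^i, v^i_{p_i}\}$.

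For the second part, let $r$ be the number of $v^i_q$ with $q \neq p_i$ whose triple in $M$ contains at least one agent outside $V^i \setminus \{v^i_{p_i}\}$. The $3n'-r$ other agents of $V^i \setminus \{v^i_{p_i}\}$ then partition into complete triples inside $V^i \setminus \{v^i_{p_i}\}$, so $3$ divides $r$. If $r \geq 3$, choose three such outside-matched agents $v^i_{q_1}, v^i_{q_2}, v^i_{q_3}$. By the induction hypothesis every agent of each $S^{j}$ with $j<i$ is already used inside $S^j$, and the agents $s^i, \overleftarrow{s}^i, v^i_{p_i}$ are used in the triple built in the first part; hence the outside agent appearing in each $M(v^i_{q_t})$ must lie in a gadget ranked strictly after $\mathcal A^i$ in the master poset. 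By construction of the preference list of $v^i_p$, any 2-set containing such a later agent is placed after the ``$\pend$'' mark, whereas every within-$V^i$ pair is placed before it. Consequently each $v^i_{q_t}$ strictly prefers the within-$V^i$ pair $\{v^i_{q_s}, v^i_{q_u}\}$ (where $\{s,t,u\}=\{1,2,3\}$) to $M(v^i_{q_t})$, so $\{v^i_{q_1}, v^i_{q_2}, v^i_{q_3}\}$ is a blocking 3-set, contradicting the stability of $M$. Hence $r = 0$, completing the induction step.

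The hardest part will be the bookkeeping in the first step: one must verify precisely which 2-sets appear before $\CO_{\overleftarrow{x}^{i,j}}$ in $\overleftarrow{x}^{i,j}$'s list and confirm that the cross pair $\{s^i, b^{i,j}_\ell\}$ is genuinely absent there. This absence hinges on the asymmetric way the sublists $\mathcal C_\ell^{\alpha}$ and $\overleftarrow{\mathcal C}_\ell^{\alpha}$ are defined inside the incidence-checking gadget, and this interlocking is the very feature that ties a vertex chosen in $S^i$ to the companion agents in the incidence-checking gadget.
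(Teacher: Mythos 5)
Your route is genuinely different from the paper's, and most of it is sound. For the first claim you argue directly through the cut-off gadget of $s^i$ (via \Cref{lem:cog}, $M(s^i)$ must be $\{\overleftarrow{s}^i,v^i_p\}$ or $\{b^{i,j}_\ell,\overleftarrow{x}^{i,j}\}$) and then kill the second case by observing that $\{s^i,b^{i,j}_\ell\}$ never appears before $\CO_{\overleftarrow{x}^{i,j}}$ in $\overleftarrow{x}^{i,j}$'s list; that bookkeeping checks out against the stated preferences of $\overleftarrow{x}^{i,j}$ (only pairs $\{x^{i,j},b^{i,j}_r\}$ and $\{s^i,v^i_p\}$ precede the cut-off). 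The paper instead never looks inside the incidence-checking gadget here: it assumes no $v^i_p$ is matched to $\{s^i,\overleftarrow{s}^i\}$, lets the agents $v^i_1,\dots,v^i_{3n'+1}$ pair up greedily as in \Cref{tconsistent}, and derives the contradiction from the cut-off gadget of the leftover agent $v^i_{3n'+1}$ (note $|V^i|=3n'+1\not\equiv 0 \pmod 3$). Your version buys a more self-contained identification of $M(s^i)$; the paper's version buys a uniform treatment of both claims of the lemma by reusing the \Cref{tconsistent} argument.

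There is one concrete gap in your second part: you define $r$ as the number of agents of $V^i\setminus\{v^i_{p_i}\}$ whose \emph{triple} contains an outside agent, and then assert that the remaining $3n'-r$ agents ``partition into complete triples inside $V^i\setminus\{v^i_{p_i}\}$, so $3$ divides $r$.'' This fails if some of these agents are unmatched --- a matching here is just a set of disjoint $3$-sets, so nothing a priori prevents, say, one outside-matched agent and two unmatched ones, in which case $r=1$, your divisibility claim is false, and your case analysis ($r=0$ or $r\ge 3$) does not cover the situation even though the lemma's conclusion would be violated. The repair is immediate: let $r'$ count the agents of $V^i\setminus\{v^i_{p_i}\}$ that are unmatched \emph{or} matched to a $2$-set containing an outside agent; then indeed $3\mid r'$, and since an unmatched agent prefers any $2$-set to being unmatched, any three such agents form a blocking $3$-set exactly as in your argument, forcing $r'=0$. (A second, purely cosmetic point: the outside agent in $M(v^i_{q_t})$ need not lie ``after $\mathcal{A}^i$ in the master poset'' --- cut-off-gadget agents and the agents $s^j,\overleftarrow{s}^j,x^{j,j'},\dots$ are not in the master poset at all; what you actually need, and what does hold, is that the only $2$-sets before the $\pend$ mark of $v^i_{q_t}$ are those with an agent from some $S^{j}$, $j<i$ (excluded by the induction hypothesis), the pair $\{s^i,\overleftarrow{s}^i\}$ (excluded by your first part), and pairs inside $V^i$.)
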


\begin{proof}
  Consider a stable matching $M$.
  We prove the statement by induction on $i$.
  For $i = 0$, there is nothing to show.
  So fix $i> 0$.
	By the induction hypothesis, we know that no agent from a vertex-selection gadget~$S^{i'}$ with $i' < i$ is matched to a 2-set containing an agent from~$S^i$.
  Ignoring 2-sets containing an agent from~$S^{i'}$ for $i'  < i$, agent $v^i_p$ prefers most to be matched to $\{v^i_q, v^i_{q'}\}$ for $q, q'\in [3n'+1]$, or to~$\{s^i, \overleftarrow{s}^i\}$.
  If no agent $v^i_p$ is matched to $\{s^i, \overleftarrow{s}^i\}$, then $\{v^i_{3r-2}, v^i_{3r-1}, v^i_{3r}\} \in M$ for all $r\in [n']$ by the same arguments as in the proof of~\Cref{tconsistent} (as after ignoring also $\{s^i, \overleftarrow{s}^i\}$, the beginning of the preferences is derived from the strict order $v^i_1 \succ v^i_2 \succ \dots \succ v^i_{3n'+ 1}$).
  The cut-off gadget for $v^i_{3n' + 1}$ then implies that $M$ is not stable.
  Thus, there exists some $p_i\in [3n' + 1]$ such that $\{v^i_{p_i}, s^i, \overleftarrow{s}^i\} \in M$.
  It follows that all other agents $v^i_q$ are matched to each other for $q\neq p_i$ by the same arguments as in the proof of~\Cref{tconsistent}.
\end{proof}

\begin{lemma}
  \label{lem:incidence-checking}
  For every stable matching~$M$ and every incidence-checking gadget~$B^{i,j}$, every 3-set~$t$ containing at least one agent from $B^{i,j}$ contains three agents from~$B^{i,j}$.
\end{lemma}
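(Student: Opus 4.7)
The proof is by induction on the position of the incidence-checking gadget $B^{i,j}$ in the master poset ordering $\mathcal{B}^{1,2} \succ_{\ML} \mathcal{B}^{1,3} \succ_{\ML} \cdots \succ_{\ML} \mathcal{B}^{k-1,k}$. For the base and the inductive hypothesis I assume the statement holds for every $B^{i',j'}$ strictly earlier in this order. Combining Lemma \ref{lem:vertex-sel}, Lemma \ref{lem:cog} applied to every cut-off gadget, and this inductive hypothesis yields that each agent placed before $b_1^{i,j}$ in the master poset is matched in $M$ to a 3-set consisting entirely of agents of its own gadget. In particular, no such agent can share a 3-set with an agent of $B^{i,j}$.

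I next pin down the distinguished agents inside $B^{i,j}$. Applying Lemma \ref{lem:cog} to the cut-off gadget of $z_2^{i,j}$ forces $M(z_2^{i,j}) = \{z_1^{i,j}, z_3^{i,j}\}$, hence $\{z_1^{i,j}, z_2^{i,j}, z_3^{i,j}\} \in M$. Applying Lemma \ref{lem:cog} to the cut-off gadget of $x^{i,j}$ forces $M(x^{i,j})$ to be at least as good as some 2-set appearing before $\CO_{x^{i,j}}$ in its preference list; by construction every such 2-set is either of the form $\{\overleftarrow{s}^i, v^i_\ell\}$, excluded by the first paragraph via Lemma \ref{lem:vertex-sel}, or of the form $\{\overleftarrow{x}^{i,j}, b_r^{i,j}\}$, so $\{x^{i,j}, \overleftarrow{x}^{i,j}, b_r^{i,j}\} \in M$ for some $r \in [3m'+1]$. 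Symmetric reasoning yields some $r' \in [3m'+1]$ with $\{x^{j,i}, \overleftarrow{x}^{j,i}, b_{r'}^{j,i}\} \in M$.

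The main obstacle is the last step: showing that every remaining agent $b_\ell^{i,j}$ (for $\ell \neq r$), $b_\ell^{j,i}$ (for $\ell \neq r'$), and $c_\ell^{i,j}$ is also matched in a 3-set contained in $B^{i,j}$. For this I would inspect the explicit preference list of $b_\ell^{i,j}$: each block of 2-sets listed above the subblock $\binom{\{b_q^{i,j}, b_q^{j,i}, c_q^{i,j} : q \in [\ell]\}}{2}$ consists only of pairs whose members are now known to be unavailable, namely agents of $Z^{i,j}$, the $s$-agents, the $X$-agents, or $z_1^{i,j}$. Hence, were $M(b_\ell^{i,j})$ not already inside $B^{i,j}$, it would be ranked strictly below every 2-set in this subblock; I would then exhibit a blocking 3-set among the remaining agents of $B^{i,j}$ via a secondary induction on $\ell$, following the sorted-pair argument of Proposition \ref{tconsistent} (at each step the triple of lowest-index unmatched agents forms a blocking 3-set, as each member prefers it to any tail of its list). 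The parallel argument applied to $b_\ell^{j,i}$ and $c_\ell^{i,j}$ then yields the lemma.
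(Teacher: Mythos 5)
Your proposal is correct and follows essentially the same route as the paper: induction over the ordering of the incidence-checking gadgets, exclusion of all earlier agents via \Cref{lem:vertex-sel}, \Cref{lem:cog} and the inductive hypothesis, forcing $\{z_1^{i,j},z_2^{i,j},z_3^{i,j}\}$ and the two $x$-triples into $M$, and finally a blocking 3-set among the leftover $b$- and $c$-agents. The only cosmetic difference is in that last step, where the paper produces the three leftover agents by a counting argument on the $9m'+9$ agents of the gadget rather than by your secondary induction; your choice of the three \emph{lowest-index} such agents is precisely what makes the master-poset domination argument for the blocking triple go through.
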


\begin{proof}
  Assume for a contradiction that there exists a stable matching~$M$, an incidence-checking gadget $B^{i,j}$, and a 3-set~$t$ containing one agent~$b$ from $B^{i,j}$ and one agent~$v$ not contained in~$B^{i,j}$.
  Assume that the incidence-checking gadget is chosen such that $i$ is minimal, and such that $j$ is minimal among all incidence-checking gadgets with minimal $i$.
  By \Cref{lem:vertex-sel}, no agent from $B^{i,j}$ is matched to a 2-set containing an agent from a vertex-selection gadget.
  By the choice of $B^{i,j}$, no agent from $B^{i,j}$ is matched to a 2-set containing an agent from an incidence-checking gadget $B^{i', j'}$ with $i'< i$ or $i' = i $ and $j' \le j$.
  Since there are $9 m' + 9$ agents in the incidence-checking gadget, there are three agents $a_1, a_2$, and $a_3$ which are matched to a 2-set containing at least one agent which is not contained in $B^{i,j}$.
  The cut-off gadget for~$z_2^{i,j}$ implies that $M$ contains~$\{z_1^{i,j}, z_2^{i,j}, z_3^{i,j}\}$.
  If $x^{i,j}$ is not matched to a 2-set~$\{\overleftarrow x^{i,j}, b^{i,j}_r\}$ for some $r \in [3m'+ 1]$, then $\{x^{i,j}, \overleftarrow x^{i,j}, b^{i,j}_r\}$ is blocking, a contradiction to the stability of~$M$.
  By symmetric arguments, $\{x^{i,j}, \overleftarrow x^{i,j}, b^{i,j}_r\}\in M$ for some $r\in [3m' + 1]$.
  Thus, we have that $a_r \in \{b^{i,j}_p, b^{j, i}_p : p \in [3m' + 1]\} \cup \{c^{i,j}_q : q\in [3m']\}$ for every $r \in [3]$.
  It follows that $\{a_1, a_2, a_3\}$ blocks~$M$, a contradiction.
\end{proof}

We now show that for any pair of vertex-selection gadgets, the vertices selected by the vertex-selection gadgets are adjacent in $G$.

\begin{lemma}\label{licg}
  Let $M$ be a stable matching such that vertex-selection gadget $S^i$ selects vertex~$v^i_p$ and $S^j$ selects $v^j_q$.
  Then $G$ contains the edge $\{v^i_p, v^j_q\}$.
\end{lemma}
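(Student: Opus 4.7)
The plan is to determine the structure of $M$ inside the incidence-checking gadget $B^{i,j}$ and then show that the ``edge'' selected there must be exactly $\{v^i_p, v^j_q\}$, so that this pair lies in $E$.

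First, \Cref{lem:incidence-checking} tells us that every agent of $B^{i,j}$ is matched inside $B^{i,j}$. The cut-off gadget for $z^{i,j}_2$, combined with its preference list, forces $\{z^{i,j}_1, z^{i,j}_2, z^{i,j}_3\}\in M$. The cut-off gadget for $x^{i,j}$ forces $x^{i,j}$ to be matched to a 2-set ranked above $\CO_{x^{i,j}}$ in its preferences; the only such 2-sets that stay inside $B^{i,j}$ are those in $\mathcal{C}_{\ell}^i$ (since the $\{\overleftarrow{s}^i,v^i_\ell\}$-entries involve agents from $S^i$, which are excluded by \Cref{lem:incidence-checking}). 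Hence $\{x^{i,j},\overleftarrow{x}^{i,j},b^{i,j}_\alpha\}\in M$ for some $\alpha\in[3m'+1]$, and symmetrically $\{x^{j,i},\overleftarrow{x}^{j,i},b^{j,i}_{\alpha'}\}\in M$ for some $\alpha'\in[3m'+1]$.

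Write $e^{i,j}_\alpha=\{v^i_{q_1},v^j_{q_2}\}$ and $e^{i,j}_{\alpha'}=\{v^i_{q_3},v^j_{q_4}\}$. The core of the proof is showing $q_1=p$. Suppose not; without loss of generality $q_1>p$ (the case $q_1<p$ is symmetric after exchanging the roles of $s^i$ with $\overleftarrow{s}^i$ and of $x^{i,j}$ with $\overleftarrow{x}^{i,j}$). Since $e_\alpha$ is incident to $v^i_{q_1}$ with $q_1>p$, the 2-set $\{b^{i,j}_\alpha,x^{i,j}\}$ sits strictly above $\{s^i,v^i_p\}=M(\overleftarrow{s}^i)$ in $\overleftarrow{s}^i$'s preference list; moreover $b^{i,j}_\alpha$ strictly prefers any 2-set of type $\{a,x\}$ with $a\in\{s^i,\overleftarrow{s}^i,s^j,\overleftarrow{s}^j\}$ and $x\in X$ to its current partner $\{x^{i,j},\overleftarrow{x}^{i,j}\}$ (which lies in the strictly worse $\{x,x'\}$-block). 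Combining these two preferences with a carefully selected third agent (matched in the ``remaining'' part of $B^{i,j}$ whose preferences still rank some $\{a,x\}$-combination above its current match) yields a blocking 3-set, contradicting the stability of $M$.

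Applying the same reasoning from the $V^j$-side of $B^{i,j}$ forces $q_4=q$, and a further comparison of the interior structure of $B^{i,j}$---where the remaining $9m'$ non-$x$/$z$-agents form a strict suborder of the master poset and are therefore forced into a specific canonical matching via the argument of \Cref{tconsistent}---then forces $\alpha=\alpha'$, so that $e^{i,j}_\alpha=\{v^i_{q_1},v^j_{q_2}\}=\{v^i_p,v^j_q\}\in E$.

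The hard part is constructing the blocking 3-set in the middle paragraph. The naive candidate $\{\overleftarrow{s}^i,b^{i,j}_\alpha,x^{i,j}\}$ is unsatisfactory because $x^{i,j}$'s preference list does not explicitly list the 2-set $\{\overleftarrow{s}^i,b^{i,j}_\alpha\}$ above its current partner; the correct 3-set has to be assembled by a finer case analysis, either by choosing $\alpha$-independent witnesses from the $S^j$-side (exploiting that the symmetric argument pins down $q_2$ as well) or by chaining the ``frustration'' of $\overleftarrow{s}^i$ through an auxiliary $b$-agent in the remaining part of $B^{i,j}$ whose preferences do cooperate.
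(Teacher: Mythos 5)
Your overall strategy mirrors the paper's: pin down that $\{x^{i,j},\overleftarrow{x}^{i,j},b^{i,j}_r\}$ and $\{x^{j,i},\overleftarrow{x}^{j,i},b^{j,i}_s\}$ lie in $M$, argue that $e_r$ must be incident to $v^i_p$ (and $e_s$ to $v^j_q$), and then rule out $r\neq s$ by a misalignment argument on the interior chain of $B^{i,j}$. However, there is a genuine gap at the central step: you never actually exhibit and verify a blocking $3$-set for the case $q_1\neq p$. You explicitly defer this (``a carefully selected third agent \dots yields a blocking 3-set'', ``has to be assembled by a finer case analysis''), which is precisely the part that carries the proof. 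For the record, the paper commits to the very triple you reject as ``naive'': it uses $\{b^{i,j}_r,\overleftarrow{s}^i,x^{i,j}\}$ (resp.\ $\{b^{i,j}_r,s^i,\overleftarrow{x}^{i,j}\}$) depending on whether the $V^i$-endpoint of $e_r$ lies above or below $v^i_p$; your observation that $x^{i,j}$'s explicit preference prefix does not place $\{\overleftarrow{s}^i,b^{i,j}_r\}$ above its current partner $\{\overleftarrow{x}^{i,j},b^{i,j}_r\}$ is a legitimate concern about how the ``$\pend$'' completion is resolved, but raising a concern is not the same as resolving it --- a complete proof must either verify all three agents' preferences for some concrete triple or fix the preference lists so that one of the candidates works. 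As it stands, the incidence-forcing step is missing.

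The final step is also only gestured at, and your appeal to the argument of \Cref{tconsistent} is not enough. If $r\neq s$, the remaining $9m'$ agents of the chain $b^{i,j}_1\succ b^{j,i}_1\succ c^{i,j}_1\succ\dots$ \emph{can} be partitioned into consecutive triples, so no blocking triple arises among the $b$- and $c$-agents alone; the contradiction in the paper is obtained through $z^{i,j}_1$, whose preference list ranks the pairs $\{b^{i,j}_p,b^{j,i}_q\}$ above its assigned partner $\{z^{i,j}_2,z^{i,j}_3\}$, so that after an induction establishing $\{b^{i,j}_\ell,b^{j,i}_\ell,c^{i,j}_\ell\}\in M$ for all $\ell<\min(r,s)$ the triple $\{z^{i,j}_1,b^{i,j}_{r+1},b^{j,i}_r\}$ blocks. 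Your sketch records $\{z^{i,j}_1,z^{i,j}_2,z^{i,j}_3\}\in M$ but never uses $z^{i,j}_1$ as the witness, so the ``forces $\alpha=\alpha'$'' claim is unsupported as written.
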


\begin{proof}
  We assume for a contradiction that the edge $\{v_i^p, v_j^q\}$ is not contained in $G$.
  We need to show that $M$ contains a blocking 3-set.

  Since \Cref{lem:vertex-sel} implies that neither $s^i$ nor $\overleftarrow{s}^i$ is matched to a 2-set containing $x^{i, j}$ or~$ \overleftarrow{x}^{i,j}$, the cut-off gadgets for $x^{i,j}$ and $\overleftarrow{x}^{i,j}$ imply that $x^{i,j}$ and $\overleftarrow{x}^{i,j}$ are matched to an agent~$b^{i,j}_r$, i.e., $\{x^{i,j}, \overleftarrow{x}^{i,j}, b^{i,j}_r\}\in M$.
  Similarly, $x^{j, i}$ and $\overleftarrow{x}^{j,i}$ are matched to an agent $b_s^{j, i}$.

  Let $e_r^{i,j } = \{v^i_{r_i}, v^j_{r_j}\}$.
  If $r_i < p$, then $\{b_{r}^{i,j}, \overleftarrow{s}^i, x^{i,j}\}$ is a blocking 3-set.
  If $r_i > p$, then $\{b^{i,j}_{r}, {s}^i, \overleftarrow{x}^{i,j}\}$ is a blocking 3-set.
  Thus, $v^i_p$ is an endpoint of $e_r$.
  By symmetric arguments, we get that $v^j_q$ is an endpoint of $e_s$.
  Since $\{v_i^p, v_j^q\} \notin E^{i,j}$, it follows that $s \neq r$.

  First assume $r < s$.
  The cut-off gadget for $z^{i,j}_2$ implies that matching $M$ contains the 3-set~$\{z^{i,j}_1, z^{i,j}_2, z^{i,j}_3\}$.
  First, we show by induction on $\ell $ that for every $\ell <r$, matching~$M$ contains $\{b^{i,j}_{\ell'}, b_{\ell'}^{j, i}, c_{\ell'}^{i,j }\}$ for every $\ell' < \ell$.
  For $\ell = 0$, there is nothing to show.
  So fix $\ell  > 0$.
  
  By the induction hypothesis, $M$ contains $\{b^{i,j}_{\ell'}, b_{\ell'}^{j, i}, c_{\ell'}^{i,j }\}$ for every $\ell' < \ell$.
  Every 2-set which $b_{\ell}^{i,j} $ prefers to $\{b^{j, i}_{\ell}, c_\ell^{i,j}\}$ contains an agent which is before $b^{i,j}_1$ in the master poset, an agent from~$X:= \{x^{i,j}, \overleftarrow x^{i,j}, x^{j, i}, \overleftarrow x^{j,i}\}$, or contains an agent $b^{i,j}_{\ell'}, b^{j, i}_{\ell'}$ or $c^{i,j}_{\ell'}$ for some $\ell ' < \ell$.
  By \Cref{lem:incidence-checking}, agents~$b^{i,j}_\ell$, $b^{j, i}_\ell$, and $c^{i,j}_\ell$ are matched to 2-sets containing only agents from the incidence checking gadget.
  Every 2-set which~$b^{i,j}_\ell$ prefers to~$\{b^{j,i}_\ell, c^{i,j}_\ell\}$ contains an agent outside~$B^{i,j}$, an agent from $X := \{x^{i,j}, \overleftarrow x^{i,j}, x^{j, i}, \overleftarrow x^{i,j}\}$, or an agent~$b^{i,j}_q$, $b^{j, i}_q$, or $c^{i,j}_q$ with $q < \ell$.
  Thus, $b^{i,j}_\ell$ is not matched to a 2-set it prefers to $\{b^{j, i}_\ell, c^{i,j}_\ell\}$.
  By symmetric arguments, we have that $b^{j, i}_\ell$ is not matched to a 2-set it prefers to $\{b^{i,j}_\ell, c^{i,j}_\ell\}$.
  Every 2-set which~$c^{i,j}_\ell$ prefers to~$\{b^{i,j}_\ell, b^{j, i}_\ell\}$ contains an agent outside~$B^{i,j}$ (which is not matched to $c^{i,j}_\ell$ by \Cref{lem:vertex-sel,lem:incidence-checking}), an agent from~$X$ (which is not matched to $c^{i, j}_\ell$), an agent~$z_r^{i,j}$ for some $r \in [3]$ (which is not matched to $c^{i,j}_\ell$ as~$\{z_1^{i,j}, z_2^{i,j}, z_3^{i,j}\} \in M$), or an agent $b^{i,j}_{\ell'}, b^{j, i}_{\ell'}$, or $c^{i,j}_{\ell'}$ for some $\ell ' < \ell$ (by the induction hypothesis on~$\ell$).
  Thus, if $\{b^{i,j}_\ell  , b^{j, i}_\ell, c^{i,j}_\ell\} \notin M$, then $\{b^{i,j}_\ell, b^{j, i}_\ell, c^{i,j}_\ell\}$ blocks~$M$, contradicting the stability of~$M$.
  
  We now show that the 3-set $t= \{z^{i, j}_1, b^{i,j}_{{r+1}}, b^{j, i}_{r}\}$ blocks~$M$.
  Agent~$z^{i,j}_1$ prefers~$\{b^{i,j}_{r+1}, b^{j, i}_r\}$ to~$M(z^{i,j}_1) = \{z^{i,j}_2, z^{i,j}_3\}$.
  Every 2-set which~$b_{r+1}^{i,j}$ prefers to $\{z^{i,j}_1, b^{j, i}_r\}$ contains an agent outside~$B^{i,j}$, an agent from $X = \{x^{i,j}, \overleftarrow x^{i,j}, x^{j, i}, \overleftarrow x^{j, i}\}$, or~$z^{i,j}_1$.
  Since $b_{r+1}^{i,j}$ is not matched to a 2-set containing any of these agents, it follows that $b_{r+1}^{i,j}$ prefers $\{z^{i,j}_1, b^{j, i}_r\}$ to $M(b_{r+1}^{i,j})$.
  Every 2-set which~$b_r^{i,j}$ prefers to $\{z^{i,j}_1, b^{j, i}_{r+1}\}$ contains an agent outside~$B^{i,j}$, an agent from $X$, agent~$z^{i,j}_1$, or two agents from~$\{b^{i,j}_q, b^{j, i}_q, c^{i,j}_q\}$.
  From all these agents, $b^{i,j}_r$ can only be matched to $c^{i,j}_r$.
  It follows that $b^{i,j}_r$ prefers $\{z^{i,j}_1, b^{i,j}_{r+1}\}$ to~$M (b^{i,j}_r)$.
  Thus, $t$ blocks $M$, a contradiction to the stability of~$M$.

  If $s < r $, then symmetric arguments show that $\{z^{i,j}_1, b^{i,j}_{s}, b_{{s+1}}^{j, i}\}$ is a blocking 3-set for $M$.
\end{proof}

It now easily follows that $G$ contains a multicolored clique.

\begin{lemma}\label{lbackward2}
  If $\mathcal{I}'$ admits a stable matching, then $G$ admits a clique of size $k$.
\end{lemma}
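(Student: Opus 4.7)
The plan is to combine the two structural lemmas already established for stable matchings in $\mathcal{I}'$, namely \Cref{lem:vertex-sel} and \Cref{licg}, to extract a multicolored clique directly from any stable matching. The argument is essentially immediate once those two ingredients are in hand, so the only real work is to spell out how they fit together.

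First, I would take an arbitrary stable matching $M$ in $\mathcal{I}'$, whose existence is the hypothesis of the lemma. Applying \Cref{lem:vertex-sel} once to each index $i \in [k]$ yields, for every vertex-selection gadget $S^i$, some index $p_i \in [3n'+1]$ such that $\{s^i, \overleftarrow{s}^i, v^i_{p_i}\} \in M$; this is precisely the statement that $S^i$ selects the vertex $v^i_{p_i}$. Let $C \coloneqq \{v^1_{p_1}, v^2_{p_2}, \dots, v^k_{p_k}\}$; by construction $C \cap V^i = \{v^i_{p_i}\} \neq \emptyset$ for every $i \in [k]$, so $C$ already meets the ``multicolored'' requirement.

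Second, I would argue that $C$ is a clique. Fix any two distinct indices $i, j \in [k]$; without loss of generality $i < j$. Since $S^i$ selects $v^i_{p_i}$ and $S^j$ selects $v^j_{p_j}$, \Cref{licg} applied to the pair $(i,j)$ immediately gives that the edge $\{v^i_{p_i}, v^j_{p_j}\}$ belongs to $E^{i,j} \subseteq E(G)$. Ranging over all pairs $i < j$ shows that $C$ induces a complete subgraph on $k$ vertices, one from each color class, i.e.\ a multicolored clique of size $k$ in $G$.

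I do not expect any real obstacle here: the two hard parts of the backward direction, namely forcing each vertex-selection gadget to actually select a vertex and forcing adjacent vertex-selection gadgets to select adjacent vertices, have already been shouldered by \Cref{lem:vertex-sel} and \Cref{licg}. The proof of \Cref{lbackward2} is therefore little more than a one-paragraph assembly step. Together with \Cref{lforward2}, \Cref{okappa}, and the observation that the reduction runs in polynomial time, this yields the main \Wone-hardness result \Cref{tWh} for \tdsrp\ parameterized by deletion distance to a strictly ordered master poset.
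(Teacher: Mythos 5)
Your proposal is correct and follows exactly the same route as the paper's proof: invoke \Cref{lem:vertex-sel} to get a selected vertex $v^i_{p_i}$ for each gadget $S^i$, then invoke \Cref{licg} on each pair $(i,j)$ to conclude that the selected vertices are pairwise adjacent and hence form a multicolored clique. Your write-up is simply a more detailed version of the paper's two-line assembly argument.
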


\begin{proof}
  By \Cref{lem:vertex-sel}, every vertex-selection gadget selects a vertex, and by \Cref{licg}, these $k$ vertices form a clique.
\end{proof}

\Cref{tWh} now directly follows from \Cref{lforward2,lbackward2,okappa}.

\begin{theorem}\label{tWh}
  \textsc{3-DSR} parameterized by $\lambda (\mathcal{I})$ is \Wone-hard, where
  $\lambda (\mathcal{I})$ denotes the minimum number of agents such that the preferences of the instance arising through the deletion of these agents are derived from a strict order.
\end{theorem}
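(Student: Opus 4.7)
The plan is to simply assemble the pieces that the preceding lemmas have already forged. The overall approach is a parameterized reduction from \textsc{Multicolored Clique} (which is \Wone-hard when parameterized by the number~$k$ of color classes) to \textsc{3-DSR}, where the parameter~$\lambda(\mathcal{I}')$ of the produced instance is bounded by a function of~$k$ alone. Since every step of the reduction is already carried out above, the theorem itself is a one-line assembly result; the difficulty has been pushed into the individual lemmas.

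First, I would point at the reduction constructed in the preceding subsections: given an instance~$(G,k)$ of \textsc{Multicolored Clique} (with the harmless normalizations $|V^i| = 3n'+1$ and $|E^{i,j}|=3m'+1$), the reduction produces the \textsc{3-DSR}-instance~$\mathcal{I}'$ consisting of~$k$ vertex-selection gadgets~$S^i$, one incidence-checking gadget~$B^{i,j}$ per pair of color classes, and cut-off gadgets attached as described. Since each gadget has size polynomial in the input and only $\binom{k}{2}+k$ gadgets are produced, the whole reduction clearly runs in polynomial (in particular, \FPT) time.

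Next, I would invoke the three lemmas. \Cref{lforward2} establishes that whenever $G$ contains a multicolored clique, the matching obtained by selecting the clique vertices through the $S^i$, matching the incidence-checking gadgets according to the unique incident edge, and closing the cut-off gadgets via \Cref{lem:cog}, is stable. \Cref{lbackward2} establishes the converse: a stable matching must (by \Cref{lem:vertex-sel}) select one vertex per vertex-selection gadget, and (by \Cref{licg}) every pair of selected vertices must be joined by an edge of~$G$, so these $k$~vertices form a multicolored clique. Together they give $(G,k)\in\textsc{Multicolored Clique}\Longleftrightarrow\mathcal{I}'\in\textsc{3-DSR}$. Finally, \Cref{okappa} certifies that $\lambda(\mathcal{I}')=O(k^2)$, so the reduction is indeed parameterized.

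I do not anticipate any obstacle in this final step; the only subtlety worth explicitly mentioning is that the bound on~$\lambda(\mathcal{I}')$ is a quadratic (not linear) function of~$k$, which is fine for \FPT-reductions since any computable bound suffices. Concluding, the existence of an \FPT-algorithm for \textsc{3-DSR} parameterized by~$\lambda$ would, via this reduction, yield an \FPT-algorithm for \textsc{Multicolored Clique} parameterized by~$k$, contradicting the known \Wone-hardness of the latter under the standard assumption $\FPT\neq\Wone$. Hence \textsc{3-DSR} parameterized by~$\lambda(\mathcal{I})$ is \Wone-hard, as claimed.
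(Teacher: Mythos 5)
Your proposal is correct and follows essentially the same route as the paper: it assembles \Cref{lforward2,lbackward2} for correctness and \Cref{okappa} for the parameter bound into a parameterized reduction from \textsc{Multicolored Clique}. One tiny quibble: \Wone-hardness follows directly from the existence of the parameterized reduction from a \Wone-hard problem; the assumption $\FPT \neq \Wone$ is only needed for the further conclusion that the problem is not fixed-parameter tractable, not for the hardness claim itself.
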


\begin{proof}
  The reduction clearly runs in polynomial time.
  By \Cref{lforward2,lbackward2} it is correct.
  By \Cref{okappa}, we have that $\lambda (\mathcal{I}) = O(k^2)$.
  Thus, we have a parameterized reduction from \textsc{Multicolored Clique} parameterized by solution size~$k$ to \textsc{3-DSR} parameterized by $\lambda (\mathcal{I})$.
  Since \textsc{Multicolored Clique} parameterized by solution size~$k$ is \Wone-hard~\cite{DBLP:series/txcs/DowneyF13,Pietrzak03}, \textsc{3-DSR} is \Wone-hard parameterized by $\lambda (\mathcal{I})$.
\end{proof}

We have shown that \tdsrp\ parameterized by $\lambda $ is \Wone-hard.
A natural question is whether there is an \XP-algorithm for \tdsrp.

Next, we drop the assumption that preferences are complete (i.e., every agent is allowed to be matched to any set of $d-1$ other agents), but instead require that the master poset is a strict order.

\subsection{Incomplete preferences derived from a strictly ordered master poset}
\label{sec:incomplete}

Let \textsc{MDSRI} be the \textsc{MDSR} problem with incomplete preference lists, i.e., $\succ_a$ is not a strict order of $\binom{A\setminus \{a\}}{d -1 }$, but a strict order of a subset $X_a \subseteq \binom{A\setminus \{a\}}{d-1}$ for each~$a\in A$.
In this case, we call a $d$-set~$t$ \emph{acceptable} if $t\setminus \{a\}\in X_a$ for every $a\in t$, and define a \emph{matching}~$M$ to be a set of disjoint, acceptable $d$-sets.
\defProblemTask{\textsc{MDSRI}}
{A set $A$ of agents together with preference lists $\succ_a$ over $X_a$ for a subset $X_a \subseteq \binom{A\setminus \{a\}}{d-1}$ for each $a\in A$.}
{Decide whether a stable matching exists.}
Similarly, \textsc{MDSRI-ML} is the \textsc{MDSRI} problem restricted to instances where the preferences are derived from a strict order of agents (which we call \emph{master poset}), and \textsc{$\ell$-DSRI} is \textsc{MDSRI} for the special case $d = \ell$.
Here, preferences of an agent~$a$ are derived from the master poset~$\succ_{\ML}$ if they are the restriction to~$X_a$ of a preference list derived from $\succ_{\ML}$.
\defProblemTask{\textsc{MDSRI-ML}}
{An \textsc{MDSRI} instance, and a strict order $\succ_{\ML}$ of the agents (called \emph{master poset}) such that for each agent~$a$, $\succ_a$ arises from $\succ_{\ML}$ through the deletion of some $(d-1)$-sets.}
{Decide whether there exists a stable matching.}

In this section, we show that \textsc{3-DSRI-ML}, the restriction of \textsc{MDSRI-ML} to $d = 3$, is \NP-complete, even if the master poset is strictly ordered.
In order to do so, we reduce from \textsc{Perfect-SMTI-ML}.
The input of this problem is an instance of \textsc{Maximum Stable Marriage with Ties and Incomplete Preferences}, where the preferences of men are derived from a weak order of women with maximum tie size two (called \emph{master list of women}) and the preferences of women are derived from a strict order of men (called \emph{master list of men}).
Here, the preferences of men (women) are derived from a master list $\succ_w$ ($\succ_m$) if the preferences of each man $m$ (woman $w$) arise through the deletion of a set of agents from $\succ_w$ ($\succ_m$).
\textsc{Perfect-SMTI-ML} then asks whether there exists a perfect (weakly) stable matching, i.e., a set $M $ of man-woman pairs such that every man and every woman is contained in exactly one pair, and there is no pair $(m, w)$ preferring each other to their partner assigned in the matching.
Note that for \textsc{Perfect-SMTI-ML}, we denote the assignments of a matching (as well as blocking pairs) as pairs in order to avoid confusion with 2-sets contained in the preferences of an agent in a \textsc{3-DSRI-ML} instance.
\defProblemTask{\textsc{Perfect-SMTI-ML}}
{
A \textsc{Stable Marriage with Ties and Incomplete Preferences} instance, where the preferences are derived from two master lists $\succ_w$ (which is a strict order) and $\succ_m$ (which may contain ties of size at most two).
}
{Decide whether there exists a perfect stable matching.}
\textsc{Perfect-SMTI-ML} is \NP-complete~\cite{IMS08}.

For the rest of this section, we fix a \textsc{Perfect-SMTI-ML} instance $\mathcal{I} = (G, \succ_m, \succ_w)$, where $G$ is the acceptability graph (i.e., the graph where each agent is a vertex, and two agents are connected by an edge if and only if they are contained in each other's preference list), and $\succ_m$ and $\succ_w$ are the master lists of men and women, respectively.
We denote the set of men by~$U$, and the set of women by~$W$.
We assume that $|U| = |W|$.
Let $W = \{w_1,\dots, w_{|W|}\}$ such that $w_i \succ_m w_{i+1}$ or $w_i \perp_m w_{i+1}$ for all $i\in [|W|- 1]$ and let $U = \{m_1, \dots, m_{|U|}\}$ such that $m_i \succ_w m_{i+1}$ for all~$ i\in [|U|-1]$.

The basic idea of the reduction is as follows.
For each~$m_i$, we add an agent $a_i$, and for each~$w_j$, we add an agent $b_j$.
For every acceptable pair~$(m_i, w_j)$, we add an agent $c_{i,j}$, and the 3-set~$\{a_i, b_j, c_{i,j}\}$ will be acceptable.
Intuitively, a stable 3-dimensional matching matches each agent~$a_i$ to a 2-set~$\{b_j, c_{i,j}\}$, which corresponds edge~$(m_i, w_j)$ being part of a stable matching.
Thus, the preferences of $a_i$ and~$b_j$ correspond to those of $m_i$ and $w_j$, i.e., $a_i$ prefers $\{b_j, c_{i,j}\}$ to~$\{b_{j'}, c_{i,j'}\}$ if $m_i $ prefers $w_j$ to $w_{j'}$ (but $a_i$ preferring $\{b_j, c_{i,j}\}$ to~$\{b_{j'}, c_{i,j'}\}$ does not imply $m_i$ prefering $w_j$ to $w_{j'}$ as $m_i$ may tie $w_j$ and $w_{j'}$), and $b_j$ prefers $\{a_i, c_{i,j}\}$ to $\{a_{i'}, c_{i',j}\}$ if and only if $w_j$ prefers~$m_i$ to $m_{i'}$.
However, the preferences of men may contain ties (of size two), while the preferences of $a_i$ must not contain ties.
We will use so-called \emph{tie gadgets} to model such ties.
Finally, the reduction shall ensure that every $a_i$ is matched (as this implies that every $m_i$ is matched).
This will be done by a \emph{cut-off gadget}.

We now describe the two gadgets (tie gadget and cut-off gadget) used in  the reduction.
Afterwards, we describe the reduction in detail and prove its correctness.

\subsubsection{Tie gadget}\label{sec:tie-gadget}
Given a man $m_i\in U$ who ties two women $w_j $ and $w_{j+1}$, we construct a tie gadget $T^{j}_i$.
This gadget models this tie, i.e., it allows~$a_i$ to be matched to $b_j$ or $b_{j+1}$.
The idea
is the following:
There are two stable matchings inside the gadget, one leaving $c_{i,j}$ unmatched while the other matches~$c_{i,j}$.
The first one allows to match $m_i$ to $w_j$ via the 3-set $\{a_i, b_j, c_{i,j}\}$, while the second allows to match $m_i$ to $w_{j+1}$ via $\{a_i, b_{j+1}, c_{i,j+1}\}$ (note that in this case $c_{i,j}$ prevents the 3-set~$\{a_i, b_j, c_{i,j}\}$ from being blocking).
In this case, the 3-set $\{a_i, b_j, c_{i,j}'\}$ ensures that if $\{a_i, b_{j+1}, c_{i, j+1}\}$ is not part of the matching, then the 3-set $\{a_i, b_j, c_{i,j}'\}$ can be blocking to represent the possibly blocking pair $(m_i, w_j)$.

We add nine agents $c_{i,j}'$ and $d_{i,j}^1, \dots, d_{i,j}^8$, together with the acceptable 3-sets $\{a_i, c_{i,j}', b_{j+1}\}$, $\{c_{i,j}, d_{i,j}^5, d_{i,j}^8\}$, $\{d_{i,j}^1, d_{i,j}^2, d_{i,j}^8\}$, $\{d_{i,j}^1, d_{i,j}^4, d_{i,j}^6\}$, $\{d_{i,j}^2, d_{i,j}^3, d_{i,j}^7\}$, and $\{d_{i,j}^3, d_{i,j}^4, d_{i,j}^5\}$.
See \Cref{fig:tg} for an example.

\begin{figure}[t]
  \begin{center}
    \begin{tikzpicture}
      \node[vertex, label=180:$a_i$] (a) at (-2,0) {};
      \node[vertex, label={[xshift = 0.28cm, yshift = -0.1cm]270:$c_{i,j+ 1}$}] (c1) at (0.5,1.3) {};
      \node[vertex, label={[xshift = 0.15cm, yshift = 0.05cm]90:$c_{i,j}'$}] (c2p) at (0.5, -2.35) {};
      \node[vertex, label=0:$b_{j+1}$] (bj) at ($(c1) + (5.5, 0)$) {};
      \node[vertex, label=0:$b_j$] (bl) at ($(c2p) + (5.5,0)$) {};
      \node[vertex, label={[xshift = 0.15cm, yshift = 0.0cm]225:$c_{i,j}$}] (c2) at (2,-1.6) {};
      \begin{scope}[xshift = 2cm]
      \node[vertex, label={[xshift = 0.cm, yshift = -0.1cm]270:$d^1_{i,j}$}] (d1) at (1,-0.8) {};
      \node[vertex, label=0:$d^2_{i,j}$] (d2) at ($(d1) + (1, 0)$) {};
      \node[vertex, label=0:$d^3_{i,j}$] (d3) at (2,0.) {};
      \node[vertex, label=0:$d^4_{i,j}$] (d4) at (1, -0) {};
      \node[vertex, label=180:$d^5_{i,j}$] (d5) at (0, 0) {};
      \node[vertex, label=180:$d^6_{i,j}$] (d6) at (1,0.7) {};
      \node[vertex, label=0:$d^7_{i,j}$] (d7) at ($(d6) + (1, 0)$) {};
      \node[vertex, label=180:$d^8_{i,j}$] (d8) at ($(d1) + (-1, 0)$) {};
      \end{scope}

      \begin{pgfonlayer}{background}
      \draw \convexpath{a,c1}{6pt};
      \draw \convexpath{c1,bj}{6pt};
      \draw \convexpath{bl,c2}{6pt};
      \draw \convexpath{a,c2}{6pt};
      \draw \convexpath{bl,c2p}{6pt};
      \draw \convexpath{a,c2p}{6pt};
\tikzset{
  pics/carc/.style args={#1:#2:#3}{
    code={
      \draw[pic actions] (#1:#3) arc(#1:#2:#3);
    }
  }
}

 \draw[white, thick] (c2) ++ (0:6pt) arc (0:60:6pt);
 \draw[white, thick] (c2) pic{carc=85:180:6pt};
 \draw[white, thick] (c2p) pic[white]{carc= 70:205:6pt};
 \draw[white, thick] (c2p) pic[white]{carc= -75:65:6pt};
 \draw[white, thick] (c1) pic{carc=-70:75:6pt};
 \draw[white, thick] (c1) pic{carc=135:280:6pt};

 \draw[white, thick] (c2) ++ (0:5pt) arc (0:60:5pt);
 \draw[white, thick] (c2) pic{carc=85:180:5pt};
 \draw[white, thick] (c2p) pic[white]{carc= 70:205:5pt};
 \draw[white, thick] (c2p) pic[white]{carc= -75:65:5pt};
 \draw[white, thick] (c1) pic{carc=-70:75:5pt};
 \draw[white, thick] (c1) pic{carc=135:280:5pt};

\draw[white, thick] (c2) pic{carc=0:55:5.5pt};

\draw[white, thick] (c2) ++ (0:5.5pt) arc (0:60:5.5pt);
\draw[white, thick] (c2) pic{carc=85:180:5.5pt};
\draw[white, thick] (c2p) pic[white]{carc= 70:205:5.5pt};
\draw[white, thick] (c2p) pic[white]{carc= -75:65:5.5pt};
\draw[white, thick] (c1) pic{carc=-70:75:5.5pt};
\draw[white, thick] (c1) pic{carc=135:280:5.5pt};

      \draw \convexpath{d1,d4,d6}{6pt};
      \draw \convexpath{d5,d4,d3}{6pt};
      \draw \convexpath{d2,d3,d7}{6pt};
      \draw \convexpath{d1,d2,d8}{6pt};
      \draw \convexpath{c2,d8,d5}{6pt};
      \end{pgfonlayer}

        \begin{scope}[on background layer]
          \newcommand{\colorBetweenTwoNodes}[3]{
            \fill[#1] ($(#2) + (0, .08)$) to ($(#2) - (0, .08)$) to ($(#3) - (0,.08)$) to ($(#3) + (0,.08)$) -- cycle;
        }
        \end{scope}
    \end{tikzpicture}

  \end{center}
  \caption{The acceptable 3-sets of a tie gadget $T_i^{j}$.
  For example, the line around $a_i$, $c_{i,j+1}$, and~$b_{j+1}$ indicates that the 3-set $\{a_i, c_{i,j+1}, b_{j+1}\}$ is acceptable.}
  \label{fig:tg}
\end{figure}

The preferences of any agent arise from the following preferences through the deletion of all 2-sets which are not acceptable for an agent.
$\{d_{i,j}^1, d_{i,j}^2\} \succ \{d_{i,j}^1, d_{i,j}^4\} \succ \{d_{i,j}^2, d_{i,j}^3\} \succ \{d_{i,j}^3, d_{i,j}^4\} \succ \{d_{i,j}^1, d_{i,j}^6\} \succ \{d_{i,j}^3, d_{i,j}^5\} \succ\{d_{i,j}^4, d_{i,j}^5\} \succ \{d_{i,j}^2, d_{i,j}^7\} \succ \{d_{i,j}^3, d_{i,j}^7\} \succ \{d_{i,j}^1, d_{i,j}^8\} \succ \{d_{i,j}^2, d_{i,j}^8\} \succ \{d_{i,j}^4, d_{i,j}^6\} \succ \{d_{i,j}^5, d_{i,j}^8\} \succ \{d_{i,j}^5, c_{i,j}\} \succ \{d_{i,j}^8, c_{i, j}\} \succ \{a_i, c_{i, j}\}\succ \{a_i, c_{i,j+1}\} \succ \{a_i, c_{i, j}'\} \succ \{b_j, c_{i, j }\} \succ \{b_{j+ 1}, c_{i, j + 1}\} \succ \{b_j, c_{i, j}'\}$, which can be derived from the following strict order of agents: $d_{i,j}^1 \succ d_{i,j}^2\succ\dots \succ d_{i,j}^8 \succ a_i \succ b_j \succ b_{j+1} \succ c_{i, j} \succ c_{i,j+1} \succ c_{i, j}'$.

The following observation shows that the tie gadget indeed models ties, i.e., it contains a stable matching which matches $a_i$ to $ b_j$ (corresponding to matching~$m_i$ to~$w_j$) and one which matches $a_i$ to $b_{j+1}$ (corresponding to matching~$m_i$ to~$w_{j+1}$).
Furthermore, given a matching $M$ which matches $a_i$ or both $w_j$ and $w_{j+1}$ to 2-sets they prefer to every 2-set of the tie gadget, we can extend~$M$ to the tie gadget without introducing a blocking 3-set.

We consider $a_i$, $b_j$, and $b_{j+1}$ to be part of the tie gadget~$T^j_i$.
Note that $a_i$, $b_j$, and $b_{j+1}$ may also be part of other tie gadgets.
For a set~$X$ of agents, we denote by $T^{j}_i - X$ the instance arising from $T^j_i$ through the deletion of all agents from $X$ as well as every 2-sets containing an agent from $X$ which appears in the preferences of some agent.

\begin{observation}\label{otg}
  Let $T^{j}_i$ be a tie gadget.
  The matchings $M_1 = \{\{a_i, c_{i,j+1}, b_{j+1}\},\allowbreak \{c_{i,j}, d_{i,j}^5, d_{i,j}^8\},\allowbreak \{d_{i,j}^2, d_{i,j}^3,d_{i,j}^7\},\{d_{i,j}^1, d_{i,j}^4, d_{i,j}^6\}\}$ and $M_2 = \{ \{a_i, c_{i, j}, b_j\}, \{d_{i,j}^1, d_{i,j}^2,d_{i,j}^8\},\allowbreak\{d_{i,j}^3, d_{i,j}^4, d_{i,j}^5\}\}$ are stable.
	In $T^{ j}_i - \{a_i\}$ or $T^{ j}_i - \{b_j, b_{j+1}\}$, also the matching $M = \{\{c_{i,j}, d_{i,j}^5, d_{i,j}^8\},\allowbreak \{d_{i,j}^2, d_{i,j}^3,d_{i,j}^7\},\allowbreak\{d_{i,j}^1, d_{i,j}^4, d_{i,j}^6\}\}$ is stable.
\end{observation}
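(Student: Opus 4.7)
The plan is to prove the observation by directly enumerating the small set of acceptable 3-sets in $T^j_i$ and checking, case by case, that none of them blocks any of the three matchings. Concretely, the tie gadget admits exactly eight acceptable 3-sets: the six triples listed in the construction, namely $\{a_i,c_{i,j}',b_j\}$, $\{c_{i,j},d_{i,j}^5,d_{i,j}^8\}$, $\{d_{i,j}^1,d_{i,j}^2,d_{i,j}^8\}$, $\{d_{i,j}^1,d_{i,j}^4,d_{i,j}^6\}$, $\{d_{i,j}^2,d_{i,j}^3,d_{i,j}^7\}$, $\{d_{i,j}^3,d_{i,j}^4,d_{i,j}^5\}$, together with the two triples $\{a_i,c_{i,j},b_j\}$ and $\{a_i,c_{i,j+1},b_{j+1}\}$ contributed by the main reduction. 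A short inspection confirms that each of $M_1$, $M_2$, and $M$ is a valid matching of disjoint acceptable 3-sets, so the task reduces to ruling out blocking 3-sets.

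For each of $M_1$ and $M_2$ I would iterate through the eight candidates, discard those already contained in the matching, and for every remaining candidate exhibit one agent whose assigned partner is preferred to the alternative, reading the comparison off the master preference list $\{d_{i,j}^1,d_{i,j}^2\}\succ\dots\succ\{b_j,c_{i,j}'\}$ stated in Section~\ref{sec:tie-gadget}. For example, under $M_2$ the candidate $\{a_i,c_{i,j+1},b_{j+1}\}$ is ruled out because $a_i$ prefers its current partner $\{c_{i,j},b_j\}$ via the entry $\{b_j,c_{i,j}\}\succ\{b_{j+1},c_{i,j+1}\}$; the candidate $\{a_i,c_{i,j}',b_j\}$ is ruled out via $\{b_j,c_{i,j}\}\succ\{b_j,c_{i,j}'\}$; the candidate $\{d_{i,j}^2,d_{i,j}^3,d_{i,j}^7\}$ because $d_{i,j}^3$ prefers its partner $\{d_{i,j}^4,d_{i,j}^5\}$ via $\{d_{i,j}^4,d_{i,j}^5\}\succ\{d_{i,j}^2,d_{i,j}^7\}$; and so forth for the remaining cases, all of which reduce to analogous lookups.

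The one subtle point, and the main hurdle, is that the 2-set $\{a_i,b_j\}$, which is needed to compare partners of $c_{i,j}$, does not appear explicitly in the listed master preference. I would resolve this by appealing to the master poset of agents $d_{i,j}^1\succ\dots\succ d_{i,j}^8\succ a_i\succ b_j\succ b_{j+1}\succ c_{i,j}\succ c_{i,j+1}\succ c_{i,j}'$ from which all preferences are derived: since $d_{i,j}^5\succ_{\ML} a_i$ and $d_{i,j}^8\succ_{\ML} b_j$, dominance forces $c_{i,j}$ to rank $\{d_{i,j}^5,d_{i,j}^8\}$ strictly above $\{a_i,b_j\}$. This single dominance observation, combined with the explicit master list, settles every comparison that arises and in particular rules out $\{c_{i,j},d_{i,j}^5,d_{i,j}^8\}$ as a blocker under $M_2$, which is the only situation where that triple is not already in the matching.

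Finally, for the partial matching $M$ in $T^j_i-\{a_i\}$ or $T^j_i-\{b_j,b_{j+1}\}$, deleting the indicated agents eliminates every acceptable 3-set touching them, leaving only the five 3-sets $\{c_{i,j},d_{i,j}^5,d_{i,j}^8\}$, $\{d_{i,j}^1,d_{i,j}^2,d_{i,j}^8\}$, $\{d_{i,j}^1,d_{i,j}^4,d_{i,j}^6\}$, $\{d_{i,j}^2,d_{i,j}^3,d_{i,j}^7\}$, $\{d_{i,j}^3,d_{i,j}^4,d_{i,j}^5\}$, of which three lie in $M$. For the two remaining candidates I apply the same preference-list lookup as above. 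Any agent left unmatched by $M$ in the restricted instance (such as $c_{i,j+1}$ or $c_{i,j}'$, or, in $T^j_i-\{b_j,b_{j+1}\}$, the agent $a_i$) has no acceptable 3-set remaining in that restricted instance and therefore cannot participate in any blocking 3-set, so no further checks are needed.
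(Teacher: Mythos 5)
Your proof is correct and is exactly the routine exhaustive verification that the paper omits (the observation is stated there without any proof): you enumerate the eight acceptable 3-sets of the gadget and discharge each non-matching candidate by a single agent's preference, and your checks (e.g.\ $d_{i,j}^2$ killing $\{d_{i,j}^1,d_{i,j}^2,d_{i,j}^8\}$ under $M_1$, $d_{i,j}^5$ killing $\{c_{i,j},d_{i,j}^5,d_{i,j}^8\}$ under $M_2$) all go through. You also correctly handle the one non-obvious comparison, namely that $c_{i,j}$ prefers $\{d_{i,j}^5,d_{i,j}^8\}$ to $\{a_i,b_j\}$ via dominance in the agent master poset, and you correctly read the acceptable 3-set involving $c_{i,j}'$ as $\{a_i,b_j,c_{i,j}'\}$ (consistent with the figure and the preference lists, despite the $b_{j+1}$ slip in the paper's prose).
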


\subsubsection{Cut-off gadget}\label{sig}

A cut-off gadget for an agent~$a$ consists of $a$ together with five agents $x_2^a, \dots, x_6^a$.
The only acceptable 3-sets are $\{a, x_{5}^a, x_6^a\}$, $\{x_2^a, x_4^a, x_6^a\}$, and $\{x_3^a, x_{4}^a, x_{5}^a\}$.
See \Cref{fig:ig} for an example.
\begin{figure}[t]
  \begin{center}
    \begin{tikzpicture}
      \node[vertex, label=90:$a$] (d1) at (3,2) {};
      \node[vertex, label=270:$x_2^a$] (d2) at (2, -1) {};
      \node[vertex, label=90:$x_3^a$] (d3) at (1.,2.) {};
      \node[vertex, label=180:$x_4^a$] (d4) at (1, -0) {};
      \node[vertex, label=0:$x_5^a$] (d5) at (2, 1) {};
      \node[vertex, label=0:$x_6^a$] (d6) at (3,0) {};

      \begin{pgfonlayer}{background}
      \draw \convexpath{d5,d1,d6}{6pt};
      \draw \convexpath{d6,d2,d4}{6pt};
      \draw \convexpath{d4,d3,d5}{6pt};
\tikzset{
  pics/carc/.style args={#1:#2:#3}{
    code={
      \draw[pic actions] (#1:#3) arc(#1:#2:#3);
    }
  }
}

      \end{pgfonlayer}

        \begin{scope}[on background layer]
          \newcommand{\colorBetweenTwoNodes}[3]{
            \fill[#1] ($(#2) + (0, .08)$) to ($(#2) - (0, .08)$) to ($(#3) - (0,.08)$) to ($(#3) + (0,.08)$) -- cycle;
        }
        \end{scope}
    \end{tikzpicture}

  \end{center}
  \caption{The acceptable 3-sets of a cut-off gadget.}
  \label{fig:ig}
\end{figure}

Each agent derives its preferences from $\{x_2^a, x_4^a\} \succ \{a, x_5^a\} \succ \{a, x_6^a\} \succ \{x_3^a, x_4^a\} \succ\{x_3^a, x_5^a\} \succ \{x_2^a, x_6^a\} \succ \{x_4^a, x_5^a\}\succ \{x_4^a, x_6^a\}\succ \{x_5^a, x_6^a\}$.
Note that this list can be derived from $a\succ x_2^a \succ x_3^a \succ x_4^a \succ x_5^a \succ x_6^a$.

We now observe that a cut-off gadget does not admit a stable matching, implying that one of the agents has to be matched outside the gadget.
As agent $a$ is the only agent which accepts 2-sets containing agents not contained in the cut-off gadget, it follows that $a$ is matched to a 2-set outside the cut-off gadget.

\begin{lemma}\label{lig}
  A cut-off gadget does not admit a stable matching.
\end{lemma}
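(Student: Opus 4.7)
The plan is to enumerate all possible matchings inside a cut-off gadget and exhibit a blocking 3-set for each. Since the only acceptable 3-sets are $T_1 = \{a, x_5^a, x_6^a\}$, $T_2 = \{x_2^a, x_4^a, x_6^a\}$, and $T_3 = \{x_3^a, x_4^a, x_5^a\}$, any matching is a set of pairwise disjoint acceptable 3-sets. Since $T_1 \cap T_2 \ni x_6^a$, $T_1 \cap T_3 \ni x_5^a$, and $T_2 \cap T_3 \ni x_4^a$, no two of these 3-sets are disjoint. Hence any matching contains at most one 3-set, so there are exactly four candidate matchings: $\emptyset$, $\{T_1\}$, $\{T_2\}$, and $\{T_3\}$.

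First I would list, for each agent, the 2-sets ranked in preference order, by restricting the given master list to acceptable 2-sets. This gives $x_4^a\colon \{x_3^a,x_5^a\} \succ \{x_2^a,x_6^a\}$, $x_5^a\colon \{a,x_6^a\} \succ \{x_3^a,x_4^a\}$, and $x_6^a\colon \{x_2^a,x_4^a\} \succ \{a,x_5^a\}$, while $a$, $x_2^a$, and $x_3^a$ each have a unique acceptable 2-set. The key observation is that in the cyclic pattern $T_1 \to T_2 \to T_3 \to T_1$, the "middle" agents $x_6^a$, $x_4^a$, and $x_5^a$ each prefer their role in the next 3-set over their role in the previous one.

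Then I would check each of the four matchings in turn:
\begin{itemize}
 \item For $M = \emptyset$: $T_1$ blocks, as $a$, $x_5^a$, $x_6^a$ are all unmatched and prefer being matched.
 \item For $M = \{T_1\}$: $T_2$ blocks, because $x_2^a$ and $x_4^a$ are unmatched and $x_6^a$ prefers $\{x_2^a, x_4^a\}$ to its current partner $\{a, x_5^a\}$.
 \item For $M = \{T_2\}$: $T_3$ blocks, because $x_3^a$ and $x_5^a$ are unmatched and $x_4^a$ prefers $\{x_3^a, x_5^a\}$ to its current partner $\{x_2^a, x_6^a\}$.
 \item For $M = \{T_3\}$: $T_1$ blocks, because $a$ and $x_6^a$ are unmatched and $x_5^a$ prefers $\{a, x_6^a\}$ to its current partner $\{x_3^a, x_4^a\}$.
\end{itemize}

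Each verification is an immediate preference-list comparison, so the main (very small) obstacle is just being careful about which 2-set each agent is currently matched to in each case. Having exhibited a blocking 3-set for every possible matching, we conclude that no stable matching exists.
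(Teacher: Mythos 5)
Your proposal is correct and follows essentially the same route as the paper: observe that the three acceptable 3-sets pairwise intersect so any matching contains at most one of them, then exhibit the same cyclic pattern of blocking 3-sets ($T_1$ blocked by $T_2$, $T_2$ by $T_3$, $T_3$ by $T_1$), with all preference comparisons correctly read off the master list. Your explicit treatment of the empty matching is a small completeness point the paper leaves implicit, but it does not change the argument.
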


\begin{proof}
  Note that any acceptable 3-set contains two agents from $\{x_4^a, x_5^a, x_6^a\}$.
  Thus, any matching contains only one 3-set of a cut-off gadget.
  For each of the three possible matchings, we give a blocking 3-set in \Cref{tig}.
  \begin{table}
    \begin{center}
      \begin{tabular}{c | c}
        Matching & Blocking 3-set \\
        \hline
        $\{a, x_5^a, x_6^a\}$ & $\{x_2^a, x_4^a, x_6^a\}$ \\
        $\{x_2^a, x_4^a, x_6^a\}$ & $\{x_3^a, x_4^a, x_5^a\}$\\
        $\{x_3^a, x_4^a, x_5^a\}$ & $\{a, x_5^a, x_6^a\}$
      \end{tabular}
	    \caption{The blocking 2-sets in the subinstance from \Cref{lig}.}
\label{tig}
    \end{center}
  \end{table}
\end{proof}

We observe that if agent $a$ is matched outside the cut-off gadget, then the cut-off gadget does not contain a blocking 3-set.

\begin{observation}\label{oig}
  The cut-off gadget without $a$ admits a stable matching, namely $\{x_3^a, x_4^a, x_5^a\}$.
\end{observation}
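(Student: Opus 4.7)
The plan is to verify stability of $M=\{\{x_3^a, x_4^a, x_5^a\}\}$ in the cut-off gadget with $a$ removed by a direct, short case analysis. First, I would restrict attention to the acceptable 3-sets: the three acceptable 3-sets listed for the cut-off gadget are $\{a, x_5^a, x_6^a\}$, $\{x_2^a, x_4^a, x_6^a\}$, and $\{x_3^a, x_4^a, x_5^a\}$, and removing $a$ eliminates the first, leaving only $\{x_2^a, x_4^a, x_6^a\}$ and $\{x_3^a, x_4^a, x_5^a\}$. Clearly $M$ is a valid matching of the remaining agents (and $x_2^a$, $x_6^a$ are unmatched).

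Next, I would observe that any blocking 3-set must itself be acceptable, so the only candidate is $t := \{x_2^a, x_4^a, x_6^a\}$. To rule it out, I would consider whether $x_4^a$ prefers $t\setminus\{x_4^a\}=\{x_2^a,x_6^a\}$ to $M(x_4^a)=\{x_3^a,x_5^a\}$. Consulting the preference list from \Cref{sig}, the 2-set $\{x_3^a,x_5^a\}$ appears before $\{x_2^a,x_6^a\}$, so after restricting to $x_4^a$'s acceptable 2-sets (which are exactly these two), $x_4^a$ strictly prefers its current partners. Hence $t$ is not blocking.

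Since no acceptable 3-set other than $t$ remains and $t$ is not blocking, $M$ admits no blocking 3-set and is therefore stable. The argument has no real obstacle; the only care needed is to recall that in the incomplete-preference setting the only candidates for blocking sets are acceptable ones, which immediately shrinks the analysis to a single 3-set and a single comparison for $x_4^a$.
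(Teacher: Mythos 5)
Your proof is correct and matches the paper's (implicit) argument: the paper states this as an observation without proof, relying on exactly the direct check you perform, namely that after deleting $a$ the only remaining acceptable 3-set besides $\{x_3^a,x_4^a,x_5^a\}$ is $\{x_2^a,x_4^a,x_6^a\}$, which cannot block since $x_4^a$ prefers $\{x_3^a,x_5^a\}$ to $\{x_2^a,x_6^a\}$ in the given preference list. This is also consistent with the paper's Table for \Cref{lig}, where the only blocking 3-set against $\{x_3^a,x_4^a,x_5^a\}$ is $\{a,x_5^a,x_6^a\}$, which disappears once $a$ is removed.
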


Having described the gadgets needed for the reduction, we can now describe the complete reduction.

\subsubsection{The reduction}

  Our reduction is structured similarly to the \NP-completeness proof of \textsc{3-Dimensional Stable Marriage with Incomplete Cyclic Preferences} by Bir\'{o} and McDermid~\cite{BM10}.
  In both reductions, there is one agent for each man and each woman.
  Each such agent is forced to be matched in any stable matching by a gadget based on a small unsolvable instance.
  However, modelling the ties in the preferences is a bit more complicated in our case, and is done by the tie gadget described in \Cref{sec:tie-gadget}.

  We construct a \textsc{3-DSRI-ML} instance $\mathcal{I}'$ with a strictly ordered master poset from an instance~$\mathcal{I}$ of \textsc{Perfect-SMTI-ML} as follows.
  For each man~$m_i$, we add an agent~$a_i$, and for each woman~$w_j$, we add an agent~$b_j$.
  For each man~$m_i$, and each woman $w_j$ who is not tied with another woman in $m_i$'s preference list, we add an agent~$c_{i,j}$.
  For each man $m_i$, and each tie $w_j \perp_m w_{j+1}$ in $m_i$'s preference list, we add a tie gadget $T^j_{i}$ (described in \Cref{sec:tie-gadget}).

  It remains to describe the preferences. 
  For each man $m_i$, we define a sublist $\mathcal{A}_i$ as follows.
  Process all woman $w_j$ adjacent to~$m_i$ by increasing $j$.
  If the woman is not tied with another woman adjacent to~$m_i$, then add the 2-set $\{a_i, c_{i,j}\}$, followed by $\{b_j, c_{i,j}\}$.
  Otherwise, $w_j$ is tied with $w_{j+1}$ in the preference list of $m_i$.
  Then add the 2-sets $\{a_i, c_{i, j}\}\succ_{\ML} \{a_i, c_{i,j+1}\} \succ_{\ML} \{a_i, c_{i, j}'\} \succ_{\ML} \{b_j, c_{i, j }\} \succ_{\ML} \{b_j, c_{i, j + 1}\} \succ_{\ML} \{b_j, c_{i, j}'\}$ to $\mathcal{A}_i$ (see \Cref{sec:tie-gadget}).
  
  For each man $m_i$, we add a cut-off gadget $I_i$ (described in \Cref{sig}) for agent $a_i$.

  We start by showing that the preferences are indeed derived from a strictly ordered master poset of agents.
  \begin{observation}\label{lem:consistent-ml}
    The master poset $\succ_{\ML}$ is derived from a strict order.
  \end{observation}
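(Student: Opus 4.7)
The plan is to exhibit one explicit strict linear order on the full set of agents of $\mathcal{I}'$ and then check gadget by gadget that every preference list prescribed by the construction is derived from this order, in the sense of the definition preceding \dsmpo. I would describe the order in six consecutive ``levels'', with any element of an earlier level placed strictly above every element of a later one: (i) all $d^{r}_{i,j}$ from the tie gadgets, ordered lexicographically by $(j,i,r)$; (ii) the agents $a_i$, with $a_i \succ_{\ML} a_{i+1}$ (consistent with $\succ_w$); (iii) the agents $b_j$, with $b_1 \succ_{\ML} b_2 \succ_{\ML} \cdots$ (consistent with $\succ_m$, ties broken by ascending~$j$); (iv) the $c_{i,j}$, ordered lexicographically by $(j,i)$; (v) the $c'_{i,j}$, ordered lexicographically by $(j,i)$; (vi) all $x^{a_i}_r$ from the cut-off gadgets, ordered lexicographically by $(i,r)$.

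For each tie gadget $T^j_i$, \Cref{sec:tie-gadget} already shows that the prescribed preferences are derived from the local strict order $d^1_{i,j}\succ \cdots \succ d^8_{i,j}\succ a_i \succ b_j \succ b_{j+1} \succ c_{i,j} \succ c_{i,j+1} \succ c'_{i,j}$. By construction this local order is precisely the restriction of $\succ_{\ML}$ to these eleven agents, so the same preference list is also derived from $\succ_{\ML}$. An identical argument, using the local order $a_i \succ x^{a_i}_2 \succ \cdots \succ x^{a_i}_6$, handles each cut-off gadget $I_i$.

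The only remaining check concerns the extra pairs introduced by the sublists $\mathcal{A}_i$. For an untied $w_j$, the forced inequality $\{a_i, c_{i,j}\} \succ \{b_j, c_{i,j}\}$ is implied by $a_i \succ_{\ML} b_j$ (level~(ii) above level~(iii)). In the tied case, the chain $\{a_i, c_{i,j}\} \succ \{a_i, c_{i,j+1}\} \succ \{a_i, c'_{i,j}\} \succ \{b_j, c_{i,j}\} \succ \{b_j, c_{i,j+1}\} \succ \{b_j, c'_{i,j}\}$ consists of dominance steps that follow from $c_{i,j} \succ_{\ML} c_{i,j+1} \succ_{\ML} c'_{i,j}$ and $a_i \succ_{\ML} b_j$, except for the single transition $\{a_i, c'_{i,j}\} \succ \{b_j, c_{i,j}\}$, where one verifies that neither of the two possible bijections witnesses dominance in either direction, so the derivation may freely order these two 2-sets. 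The only subtle point is ensuring $c_{i,j} \succ_{\ML} c_{i,j+1}$ simultaneously across all tie gadgets; ordering the $c$-level lexicographically by $(j,i)$ achieves this because the position index $j$ of the tie $\{w_j, w_{j+1}\}$ in $\succ_m$ is shared among all men with this tie. Once the order is fixed, the verification is a routine case check.
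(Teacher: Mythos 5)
Your proposal is correct and follows essentially the same route as the paper: both exhibit an explicit layered strict order (all tie-gadget agents $d^r_{i,j}$ first, then the $a_i$, then the $b_j$, then the $c$-agents grouped primarily by the woman index $j$ and secondarily by the man index $i$) and then verify derivability gadget by gadget. The only differences are cosmetic and both check out — the paper interleaves each $c'_{i,j}$ directly into its block as $c_{i,j}\succ c_{i,j+1}\succ c'_{i,j}$ while you push all $c'$-agents into a separate bottom sublevel (harmless, since no single agent's preference list contains both a pair with a $c'_{i,j}$ and a pair with a later $c_{i',j'}$ in a dominance relation), and you explicitly place the cut-off-gadget agents $x^{a_i}_r$ at the very bottom, a detail the paper's proof silently omits.
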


  \begin{proof}
  First, we order the tie gadgets $T_{i_1}^{ j_1}$, $T_{i_2}^{ j_2}$, \dots, $T_{i_r}^{ j_r}$ such that $i_{\ell } \le i_{\ell + 1}$ and if $i_\ell = i_{\ell + 1}$, then $j_{\ell} < j_{\ell + 1}$.
  For each tie gadget $T_{i}^{ j}$, define the sublist $\mathcal{D}_{i}^{j}$ via $d_{i,j}^1 \succ d_{i,j } ^2 \succ \dots \succ d_{i,j}^8$.
  Furthermore, we define the sublist $\mathcal{C}_i^j$ via $c_{i,j} \succ c_{i,j+1} \succ c_{i,j}'$.
  For every $i$ and $j$ such that $m_i$ does not tie $w_j$ with another woman but $w_j$ is contained in the preferences of $m_i$, we define $\mathcal{C}_i^j$ to be the sublist containing only $c_{i,j}$.
  For every $i$ and $j$ such that $w_j$ is not contained in the preferences of $m_i$ or $m_i$ ties $w_j$ with $w_{j-1}$, we define $\mathcal{C}$ to be an empty sublist.
	  The master poset now looks as follows:
  \begin{align*}
    \mathcal{D}_{i_1}^{j_1} \succ \dots \succ \mathcal{D}_{i_r}^{ j_r} \succ a_1 \succ a_2 \succ \dots \succ a_{|M|} \succ b_1 \succ \dots \succ b_{|W|}\\
    \succ \mathcal{C}_1^1 \succ \mathcal{C}_2^1 \succ \dots \mathcal{C}_{|U|}^1 \succ \mathcal{C}_1^2 \succ \dots \succ\mathcal{C}_{|U|}^2 \succ \mathcal{C}_{1}^3 \succ \dots \succ \mathcal{C}_{|U|}^{|W|}.
  \end{align*}
  It is easy to verify that the preferences are indeed derived from this master poset.
  \end{proof}
  Notably, the preferences of every agent are not only derived from the master poset of agents, but also from a master list of 2-sets (where again every agent can declare an arbitrary set of 2-sets to be unacceptable).

  Having described the construction, we continue by showing that the corresponding reduction is indeed correct.

  \subsubsection{Proof of the forward direction}

  We first show that a perfect stable matching in the \textsc{Perfect-SMTI-ML} instance~$\mathcal{I}$ implies a perfect stable matching in the constructed \textsc{3-DSRI-ML} instance~$\mathcal{I}'$.

  \begin{lemma}\label{lforwardi}
	  If the \textsc{Perfect-SMTI-ML} instance $\mathcal{I}$ admits a perfect stable matching $M$, then the \textsc{3-DSRI-ML} instance $\mathcal{I}'$ admits a stable matching.
  \end{lemma}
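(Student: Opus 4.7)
The plan is to construct $M'$ directly from $M$, using $M$ to decide which 3-set to activate in each gadget. For each pair $(m_i, w_j) \in M$, if $m_i$'s preference list does not tie $w_j$ with any other woman, I add the 3-set $\{a_i, b_j, c_{i,j}\}$ to $M'$. If $m_i$ ties $w_j$ with $w_{j+1}$ (so the tie gadget $T_i^j$ exists) and $M(m_i) = w_j$, I use the stable matching $M_2 = \{\{a_i, c_{i,j}, b_j\}, \{d_{i,j}^1, d_{i,j}^2, d_{i,j}^8\}, \{d_{i,j}^3, d_{i,j}^4, d_{i,j}^5\}\}$ of $T_i^j$ from \Cref{otg}; if instead $M(m_i) = w_{j+1}$, I use $M_1$. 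For each tie gadget $T_i^j$ whose associated man satisfies $M(m_i) \notin \{w_j, w_{j+1}\}$, agent $a_i$ is absorbed by some other 3-set of $M'$, so I apply the internal matching $\{\{c_{i,j}, d_{i,j}^5, d_{i,j}^8\}, \{d_{i,j}^2, d_{i,j}^3, d_{i,j}^7\}, \{d_{i,j}^1, d_{i,j}^4, d_{i,j}^6\}\}$, which is stable in $T_i^j - \{a_i\}$ by \Cref{otg}. Finally, within each cut-off gadget $I_i$ (whose designated agent $a_i$ is now matched outside), I add $\{x_3^{a_i}, x_4^{a_i}, x_5^{a_i}\}$, stable in $I_i - \{a_i\}$ by \Cref{oig}.

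I would then verify that $M'$ is a valid set of pairwise disjoint, acceptable 3-sets: since $M$ is perfect, each $a_i$ and each $b_j$ appears in exactly one 3-set, while the gadget-internal agents (the $d$'s, the $x$'s, each $c_{i,j}'$, and each $c_{i,j}$) appear in at most one 3-set by construction, and the case-split above ensures the three choices for a given tie gadget never collide.

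For stability, I would consider each type of potential blocking 3-set. Any blocking 3-set contained entirely within a cut-off gadget $I_i$ is ruled out by \Cref{oig}, and any blocking 3-set contained entirely within a tie gadget $T_i^j$ is ruled out by the relevant part of \Cref{otg}. Since the only remaining acceptable 3-sets are of the form $\{a_i, b_j, c_{i,j}\}$ (for non-tie pairs or appearing inside a tie gadget) and $\{a_i, b_j, c_{i,j}'\}$ (inside a tie gadget $T_i^j$), I treat these separately. For the first form: if $a_i$ prefers $\{b_j, c_{i,j}\}$ to $M'(a_i)$ and $b_j$ prefers $\{a_i, c_{i,j}\}$ to $M'(b_j)$, then by the way preferences are derived from the master lists $\succ_m$ and $\succ_w$ (reflected in each sublist $\mathcal{A}_i$), $m_i$ strictly prefers $w_j$ to $M(m_i)$ and $w_j$ strictly prefers $m_i$ to $M(w_j)$, so $(m_i, w_j)$ blocks $M$, a contradiction.

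The main obstacle is handling potential blocking 3-sets of the form $\{a_i, b_j, c_{i,j}'\}$, since the auxiliary agent $c_{i,j}'$ is always unmatched in $M'$ and hence ``consents'' to any such 3-set. The tie gadget is designed so that if $M(m_i) \in \{w_j, w_{j+1}\}$ then $a_i$ ranks $M'(a_i)$ strictly above $\{b_j, c_{i,j}'\}$ (using the master-list order $\{a_i, c_{i,j}\} \succ \{a_i, c_{i,j+1}\} \succ \{a_i, c_{i,j}'\}$ inside $\mathcal{A}_i$). In the remaining case where $M(m_i) = w_k$ with $k \notin \{j, j+1\}$, I would trace the order of $\mathcal{A}_i$ to show that $a_i$ preferring $\{b_j, c_{i,j}'\}$ to $M'(a_i)$ forces $m_i$ to strictly prefer $w_j$ to $w_k$ in the weak order, and, dually, $b_j$ preferring $\{a_i, c_{i,j}'\}$ to $M'(b_j)$ forces $w_j$ to strictly prefer $m_i$ to $M(w_j)$. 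Once again $(m_i, w_j)$ would be a blocking pair for $M$, contradicting its stability and completing the proof.
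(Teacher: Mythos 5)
Your construction of $M'$ is the same as the paper's, and your treatment of the cut-off gadgets, the gadget-internal 3-sets, and the 3-sets of the form $\{a_i, b_j, c_{i,j}'\}$ is sound. However, there is a genuine gap in your stability argument for 3-sets of the form $\{a_i, b_j, c_{i,j}\}$. You claim that if $a_i$ prefers $\{b_j, c_{i,j}\}$ to $M'(a_i)$, then $m_i$ strictly prefers $w_j$ to $M(m_i)$. This implication is false exactly when $m_i$ ties $w_j$ with $w_{j+1}$ and $M(m_i) = w_{j+1}$: the preferences of $a_i$ in $\mathcal{I}'$ are a \emph{strict} order with $\{a_i, c_{i,j}\} \succ \{a_i, c_{i,j+1}\}$, so $a_i$ strictly prefers $\{b_j, c_{i,j}\}$ to its assigned $\{b_{j+1}, c_{i,j+1}\}$ even though $m_i$ is indifferent between $w_j$ and $w_{j+1}$. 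Moreover, $b_j$ may genuinely prefer $\{a_i, c_{i,j}\}$ to $M'(b_j)$ in this situation without $(m_i, w_j)$ being a blocking pair for the weakly stable $M$ (weak stability requires \emph{both} agents to strictly prefer). So your argument would wrongly conclude that $M$ has a blocking pair, or, read the other way, fails to show that $\{a_i, b_j, c_{i,j}\}$ is not blocking.

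What actually rules out this 3-set is the consent of the third agent $c_{i,j}$, which you never check. When $M(m_i) = w_{j+1}$ you (correctly) install the matching $M_1$ in the tie gadget $T_i^j$, which contains $\{c_{i,j}, d_{i,j}^5, d_{i,j}^8\}$; since $\{d_{i,j}^5, d_{i,j}^8\}$ is $c_{i,j}$'s top choice, $c_{i,j}$ does not prefer $\{a_i, b_j\}$ to $M'(c_{i,j})$ and the 3-set cannot block. This is precisely the purpose of including $c_{i,j}$ in the tie gadget for this direction of the tie (dual to the role of the always-unmatched $c_{i,j}'$, which you did analyze carefully). Adding this one observation closes the gap and brings your proof in line with the paper's.
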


  \begin{proof}
    We construct a stable matching~$M'$ as follows, starting with $M' = \emptyset$.
	  For every edge~$(m_i, w_j)\in M$, we add the 3-set $\{a_i, b_j, c_{i,j}\}$ to~$M'$.
    If $w_j$ is tied with woman~$w_{j-1}$, then we add the 3-sets~$\{c_{i,\ell}, d_{i,j}^5, d_{i,j}^8\}$, $\{d_{i,j}^2, d_{i,j}^3,d_{i,j}^7\}$, and $\{d_{i,j}^1, d_{i,j}^4, d_{i,j}^6\}$.
    If $w_j$ is tied with woman~$w_{j+1}$, then we add the 3-sets~$\{d_{i,\ell}^1, d_{i,\ell}^2,d_{i,\ell}^8\}$ and $\{d_{i,\ell}^3, d_{i,\ell}^4, d_{i,\ell}^5\}$.
    For each tie gadget~$T_i^{j}$ between~$m_i$, $w_j$ and~$w_{j+1}$ such that $(m_i, w_j)\notin M$ and $(m_i, w_{j + 1})\notin M$, we add the 3-sets~$\{c_{i,j}, d_{i,j}^5, d_{i,j}^8\}$, $\{d_{i,j}^2, d_{i,j}^3,d_{i,j}^7\}$, and $\{d_{i,j}^1, d_{i,j}^4, d_{i,j}^6\}$.
	  For each cut-off gadget~$I_i$ for man~$m_i$, we add the 3-set~$\{x_3, x_4, x_5\}$ to~$M'$.

    We claim that $M'$ is a stable matching.
    Since $M$ is perfect, every agent $a_i$ is matched to a 2-set it prefers to any 2-set of agents from its cut-off gadget.
    By \Cref{oig}, we get that no agent from cut-off gadget~$I_i $ except for $a_i$ can be part of a blocking 3-set.
    For every tie gadget~$T_i^{j}$, \Cref{otg} tells us that no blocking 3-set contains only agents of the form~$d_{i,j}^k$.
    All other acceptable 3-sets are of the form $\{a_i, b_j, c_{i,j}\}$ or $\{a_i, b_j, c_{i,j}'\}$.
    First, we consider 3-sets of the form $\{a_i, b_j , c_{i,j}\}$.
    Agent~$a_i$ prefers $\{b_j, c_{i,j}\}$ to~$M'(a_i) = \{b_\ell, c_{i, \ell}\}$ if and only if $m_i$ prefers~$w_j$ to $M(m_i)$ or $m_i$ ties $w_j$ and $w_\ell$ and $\ell = j + 1$.
    However, in the latter case (i.e., $m_i$~ties~$w_j$ and~$w_\ell$, and we have $\ell = j+ 1$), we have $\{c_{i,\ell}, d_{i,j}^5, d_{i,j}^8\}\in M'$, and thus $c_{i_\ell}$ does not prefer~$\{a_i, b_\ell\}$ to~$M' (c_{i_\ell})$.
    Agent $b_j$ prefers $\{a_i, c_{i, j}\}$ to~$M' (b_j)$ if and only if $w_j$ prefers $m_i$ to $M( w_j)$.
    Thus, by the stability of $M$, $\{a_i, b_j, c_{i, j}\}$ is not blocking.
    Next, we consider an acceptable 3-set~$\{a_i, b_j, c_{i,j}'\}$.
    Agent~$a_i$ prefers~$\{b_j, c_{i,j}'\}$ to $M' (a_i)$ if and only if $m_i$ prefers~$w_j $ to $M (m_i)$.
    Similarly, agent~$b_i$ prefers $\{a_i, c_{i, j}'\}$ to $M' (b_j)$ if and only if $w_j $ prefers~$m_i$ to $M (w_j)$.
    Therefore, the stability of $M$ implies that $\{a_i, b_j, c_{i,j}'\}$ is not blocking.
    Altogether, $M'$ is stable.
  \end{proof}

\subsubsection{Proof of the backward direction}

We now turn to the reverse direction, i.e., showing that a stable matching in the \textsc{3-DSRI-ML} instance implies a perfect stable matching in the \textsc{Perfect-SMTI-ML} instance.

\begin{lemma}\label{lbackwardi}
	If the \textsc{3-DSRI-ML} instance $\mathcal{I}'$ admits a stable matching $M'$, then the \textsc{Perfect-SMTI-ML} instance $\mathcal{I}$ admits a perfect stable matching.
\end{lemma}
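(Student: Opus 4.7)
The plan is to read off a perfect matching $M$ for $\mathcal{I}$ directly from $M'$ and then verify its weak stability by producing, from any hypothesized blocking pair $(m_i,w_j)$, an explicit blocking 3-set in $M'$.

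I would start by applying \Cref{lig} to each cut-off gadget $I_i$: it forces $a_i$ to be matched in $M'$ to a 3-set outside $I_i$, and a direct inspection of $a_i$'s acceptable 3-sets shows that such a 3-set must be of the form $\{a_i,b_j,c_{i,j}\}$ or $\{a_i,b_j,c_{i,j}'\}$, with $w_j$ in $m_i$'s preference list. I then define $M(m_i)\coloneqq w_j$ whenever $a_i$'s 3-set in $M'$ contains $b_j$. Because each $b_j$ lies in at most one 3-set of $M'$, $M$ is a well-defined matching; combined with $|U|=|W|$ and the fact that every $m_i$ is matched, this forces every $w_j$ to be matched as well, so $M$ is perfect.

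For stability, I would suppose toward a contradiction that $(m_i,w_j)$ is a blocking pair for $M$. The natural witnesses to a blocking 3-set in $M'$ are $c_{i,j}$ and (if the tie gadget $T_i^j$ exists) $c_{i,j}'$, and the argument splits according to the position of $w_j$ in $m_i$'s list. In the \emph{non-tie case}---when $w_j$ is either not tied for $m_i$ at all or is the upper element of a tie $(w_{j-1},w_j)$ accepted by $m_i$---the only acceptable 2-set of $c_{i,j}$ is $\{a_i,b_j\}$ (in the upper-of-tie case this uses that the gadget $T_i^{j-1}$ links $c_{i,j-1}$, not $c_{i,j}$, to the $d$-agents), so $c_{i,j}$ is unmatched in $M'$, and I would show that $\{a_i,b_j,c_{i,j}\}$ blocks. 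In the \emph{lower-tie case}---when $w_j$ is the lower element of a tie $(w_j,w_{j+1})$ accepted by $m_i$---I first rule out $M(m_i)=w_{j+1}$, which would make $m_i$ only weakly prefer $w_j$ to $M(m_i)$ and contradict the blocking assumption; otherwise $a_i$ is matched outside $T_i^j$, which forces $c_{i,j}'$ (whose sole acceptable 2-set is $\{a_i,b_j\}$) to be unmatched, and I would show $\{a_i,b_j,c_{i,j}'\}$ blocks.

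The blocking conditions on $a_i$ and $b_j$ in each case reduce to comparisons in the master poset from \Cref{lem:consistent-ml}. The sublist $\mathcal{A}_i$ arranges the 2-sets $\{a_i,c_{i,\ell}\}$ and, inside each tie block, $\{a_i,c_{i,\ell}'\}$ consistently with $m_i$'s preferences over women, so $m_i$'s strict preference for $w_j$ over $M(m_i)$ gives $a_i$'s strict preference for $\{b_j,c_{i,j}\}$ or $\{b_j,c_{i,j}'\}$ over $M'(a_i)$; analogously, the master poset places $\mathcal{C}_i^j$ strictly before $\mathcal{C}_{i'}^j$ whenever $i<i'$, so $w_j$'s strict preference for $m_i$ over $M(w_j)=m_{i'}$ translates into $b_j$'s strict preference for the corresponding 2-set over $M'(b_j)$. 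The main obstacle is the lower-tie case: one has to combine the fact that $c_{i,j}'$ (unlike $c_{i,j}$, which may well be matched to the preferred pair $\{d^5_{i,j},d^8_{i,j}\}$) is not linked to any $d$-agent, with a careful position check showing that $\{a_i,c_{i,j}'\}$ still precedes every $\{a_i,c_{i,\ell}\}$ for $w_\ell$ appearing after $w_{j+1}$ in $m_i$'s list, so that $a_i$'s preference indeed flips in favour of the replacement triple.
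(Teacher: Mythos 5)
Your proposal is correct and follows essentially the same route as the paper: read off $M$ from the 3-sets of $M'$ containing the agents $a_i$ (whose matching outside the cut-off gadgets is forced by \Cref{lig}), and refute a hypothetical blocking pair $(m_i,w_j)$ by exhibiting the blocking 3-set $\{a_i,b_j,c_{i,j}\}$ in the non-tie/upper-tie case and $\{a_i,b_j,c_{i,j}'\}$ in the lower-tie case, using that the respective third agent has no other acceptable 3-set and is therefore unmatched. Your extra verification of how the sublists $\mathcal{A}_i$ and $\mathcal{C}_i^j$ translate the preferences of $m_i$ and $w_j$ into those of $a_i$ and $b_j$ is detail the paper leaves implicit, but it is not a different argument.
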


\begin{proof}
  Let $M'$ be a stable matching in $\mathcal{I}'$.
  By \Cref{lig}, each agent $a_i$ has to be matched to a 2-set outside its cut-off gadget.
  Any such 2-set involves an agent $b_j$.
  Thus, this defines a perfect matching $M\coloneqq \{(m_i, w_j) : \exists v \text{ s.t.\ } \{a_i, b_j, v\}\in M'\}$.

  We claim that $M$ is stable.
  Assume that $M$ admits a blocking pair $(m_i, w_j)$.
  If $m_i$ does not tie $w_j$ with another woman, then $\{a_i, b_j, c_{i,j}\}$ is a blocking 3-set ($a_i$ and $b_j$ prefer this 3-set as $(m_i, w_j)$ is blocking, and $c_{i,j}$ as it is the only acceptable 3-set for $c_{i,j}$).
  If~$m_i$ ties $w_j$ with a woman~$w_\ell$, then $M$ cannot contain one of the edges $(m_i, w_j)$ or $(m_i, w_\ell)$ (as else $(m_i, w_j)$ was not blocking).
  Thus, $M'$ does not contain the 3-set $t\coloneqq \begin{cases}
\{a_i, b_j, c_{i,j}\} & \text{ if }  \ell = j - 1\\
\{a_i, b_{j}, c_{i,j}'\} & \text{ if } \ell = j + 1
  \end{cases}$.
  Since $c_{i,j}$ if $\ell = j -1$ or $c_{i,j}'$ if $\ell = j + 1$ is unmatched, and~$(m_i, w_j)$ is a blocking pair, we get that $t$ is a blocking 3-set, contradicting the stability of $M'$.
\end{proof}

The \NP-completeness of \textsc{3-DSRI-ML} now easily follows.

\begin{theorem}\label{thm:incomplete-w-h}
  \textsc{3-DSRI-ML} is \NP-complete, even if the master poset is a strict order.
\end{theorem}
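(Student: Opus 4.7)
The plan is to package together the technical work that has already been assembled in the preceding subsections. The theorem has two pieces to verify: containment in \NP, and \NP-hardness via the constructed reduction from \textsc{Perfect-SMTI-ML}.

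For \NP membership I would argue as follows. Given a candidate matching $M'$ in a \textsc{3-DSRI-ML} instance, one can check in polynomial time that $M'$ is a set of pairwise disjoint acceptable 3-sets, and then verify stability by iterating over all $O(n^3)$ 3-sets of agents and checking for each whether it is blocking (which takes $O(1)$ time per candidate since preferences are given explicitly). Hence \textsc{3-DSRI-ML} lies in \NP.

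For \NP-hardness I would reduce from \textsc{Perfect-SMTI-ML}, which is known to be \NP-complete by Irving, Manlove and Scott~\cite{IMS08}. The reduction is the one described in \Cref{sec:incomplete}: given an instance $\mathcal{I}$ of \textsc{Perfect-SMTI-ML}, build the \textsc{3-DSRI-ML} instance $\mathcal{I}'$ with agents $a_i$ (for each man), $b_j$ (for each woman), auxiliary agents $c_{i,j}$ for untied entries of men's lists, a tie gadget $T_i^j$ for each tied pair $w_j\perp_m w_{j+1}$ in $m_i$'s list, and a cut-off gadget $I_i$ for each man. The construction is clearly carried out in polynomial time, so it remains to justify the three requirements of a many-one reduction.

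The correctness of the reduction is established by \Cref{lforwardi} (any perfect stable matching in $\mathcal{I}$ yields a stable matching in $\mathcal{I}'$) and \Cref{lbackwardi} (any stable matching in $\mathcal{I}'$ induces a perfect stable matching in $\mathcal{I}$). Finally, \Cref{lem:consistent-ml} shows that the preference lists of $\mathcal{I}'$ are derived from a strict order on the agents, so the additional restriction in the theorem statement is satisfied. Combining these facts yields the claim. No part of this is a real obstacle, because all the structural work (designing the tie gadget, verifying instability of the cut-off gadget via \Cref{lig}, and the two-directional correctness proofs) has already been done; the theorem is essentially a bookkeeping assembly of \Cref{lforwardi,lbackwardi,lem:consistent-ml} together with the \NP membership argument.
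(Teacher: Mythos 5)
Your proposal matches the paper's own proof essentially verbatim: \NP\ membership via the $O(n^3)$ check of all 3-sets, hardness via the reduction from \textsc{Perfect-SMTI-ML} with correctness delegated to \Cref{lforwardi,lbackwardi} and the strict-order property to \Cref{lem:consistent-ml}. No differences worth noting.
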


\begin{proof}
	\textsc{3-DSRI-ML} is clearly in \NP, as we can check the stability of a matching in $O(n^3) $ time by checking for every set of three agents whether it is blocking, where $n$ is the number of agents.
	The reduction adds an agent $a_i$ together with a cut-off gadget or $b_j$ for every man~$m_i$ or woman~$w_j$.
	Furthermore, for every acceptable pair~$(m_i, w_j)$, we add an agent $c_{i,j}$ together with the acceptable 3-set~$\{a_i, b_j, c_{i,j}\}$ or a tie gadget.
	As tie and cut-off gadgets have constant size, the reduction can clearly be performed in linear time.
	The correctness of the reduction is proven in \Cref{lforwardi,lbackwardi}.
  \Cref{lem:consistent-ml} shows that the master poset is a strict order.  
\end{proof}

	Note that every acceptable 3-set contains exactly one agent from $A:= \{a_i : i \in [|U|]\} \cup \{d_{i,j}^8, d_{i,j}^4, d_{i,j}^7\} \cup \{x_4^{a_i} : i \in [|U|]\}$, one agent from $B:= \{b_j : j\in [|W|]\} \cup \{d^5_{i,j}, d^6_{i,j}, d^2_{i,j}\}\cup \{x_2^{a_i}, x_5^{a_i} : i \in [|U|]\}$, and one agent from $C:= \{c_{i,j}, c_{i,j'}, c_{i,j + 1}, d^1_{i,j}, d^3_{i,j}\}\cup \{x_3^{a_i}, x_6^{a_i} : i \in [|U|]\}$.
	Thus, \Cref{thm:incomplete-w-h} also shows \NP-completeness for the tripartite version of \textsc{3-DSRI-ML}.

	By ``cloning'' each agent corresponding to a man $d-3$ times (and for each ``acceptable 3-set'', adding the cloned men to this 3-set, and adding all $(d-1)$-subsets of the resulting $d$-set at their corresponding place in the preferences), one can derive \NP-completeness of \textsc{$d$-DSRI-ML} for any fixed $d \ge 3$.

\section{Conclusion}
\label{sec:conclusion}
Being a fundamental problem within the fields of stable matching~\cite{Manlove13} and
the analysis of hedonic games~\cite{AS16}, our work provides a seemingly first
systematic study on the parameterized complexity of the NP-hard
\textsc{Multidimensional Stable Roommates}. 
Focusing on the natural and well-motivated
concept of master lists with the goal to identify efficiently solvable
special cases, we reported partial success.
While we have one main algorithmically positive result, namely
fixed-parameter tractability for the parameter ``maximum number of
agents incomparable to a single agent'', all other (single) parameterizations
led to (often surprising) hardness results (see Table~\ref{tab:results}).

As to challenges for future research,
first, it remains open whether our fixed-parameter tractability
result mentioned above also transfers to the setting
of \textsc{Multidimensional Stable Marriage}.
Second, addressing the quest for identifying more islands of
tractability, the study of further, perhaps also combined parameters
is a worth-while goal.
One possible parameter here would be to consider the setting that there are few strictly ordered master posets, and every agent derives its preferences of one of the master lists.
Third, a natural open question is whether our W[1]-hardness results can be accompanied by XP-algorithms, or whether \tdsrp\ is paraNP-hard for the considered parameters.

\section*{Acknowledgements}
\acktext

\bibliographystyle{abbrv}
\bibliography{3DSM}

\begin{thebibliography}{10}

\bibitem{ArkinBEOMP09}
E.~M. Arkin, S.~W. Bae, A.~Efrat, K.~Okamoto, J.~S.~B. Mitchell, and
  V.~Polishchuk.
\newblock Geometric stable roommates.
\newblock {\em Information Processing Letters}, 109(4):219--224, 2009.

\bibitem{AS16}
H.~Aziz and R.~Savani.
\newblock Hedonic games.
\newblock In {\em Handbook of Computational Social Choice}, chapter~15, pages
  356--376. Cambridge University Press, 2016.

\bibitem{BALINSKI99}
M.~Balinski and T.~Sönmez.
\newblock A tale of two mechanisms: Student placement.
\newblock {\em Journal of Economic Theory}, 84(1):73--94, 1999.

\bibitem{BiroIS11}
P.~Bir{\'{o}}, R.~W. Irving, and I.~Schlotter.
\newblock Stable matching with couples: An empirical study.
\newblock {\em {ACM} Journal of Experimental Algorithmics}, 16, 2011.

\bibitem{BM10}
P.~Bir{\'{o}} and E.~McDermid.
\newblock Three-sided stable matchings with cyclic preferences.
\newblock {\em Algorithmica}, 58(1):5--18, 2010.

\bibitem{BraunDK10}
S.~Braun, N.~Dwenger, and D.~Kübler.
\newblock Telling the truth may not pay off: An empirical study of centralized
  university admissions in {G}ermany.
\newblock {\em The B.E. Journal of Economic Analysis \& Policy}, 10(1), 2010.

\bibitem{DBLP:conf/wine/BredereckHKN20}
R.~Bredereck, K.~Heeger, D.~Knop, and R.~Niedermeier.
\newblock Multidimensional stable roommates with master list.
\newblock In {\em Proceedings of the 16th International Conference on Web and
  Internet Economics {WINE} 2020}, volume 12495 of {\em Lecture Notes in
  Computer Science}, pages 59--73. Springer, 2020.

\bibitem{CFH19}
{\'{A}}.~Cseh, T.~Fleiner, and P.~Harj{\'{a}}n.
\newblock Pareto optimal coalitions of fixed size, 2019.
\newblock Available on arXiv:1901.06737.

\bibitem{CuiJ13}
L.~Cui and W.~Jia.
\newblock Cyclic stable matching for three-sided networking services.
\newblock {\em Computer Networks}, 57(1):351--363, 2013.

\bibitem{CyganFKLMPPS15}
M.~Cygan, F.~V. Fomin, L.~Kowalik, D.~Lokshtanov, D.~Marx, M.~Pilipczuk,
  M.~Pilipczuk, and S.~Saurabh.
\newblock {\em Parameterized Algorithms}.
\newblock Springer, 2015.

\bibitem{Danilov03}
V.~I. Danilov.
\newblock Existence of stable matchings in some three-sided systems.
\newblock {\em Mathematical Social Sciences}, 46(2):145--148, 2003.

\bibitem{deBrujin58}
N.~G. de~Bruijn.
\newblock {\em Asymptotic Methods in Analysis}.
\newblock Bibliotheca Mathematica. Vol. 4. North-Holland Publishing Co.,
  Amsterdam; P. Noordhoff Ltd., Groningen; Interscience Publishers Inc., New
  York, 1958.

\bibitem{DeinekoW13}
V.~G. Deineko and G.~J. Woeginger.
\newblock Two hardness results for core stability in hedonic coalition
  formation games.
\newblock {\em Discrete Applied Mathematics}, 161(13-14):1837--1842, 2013.

\bibitem{Dilworth50}
R.~P. Dilworth.
\newblock A decomposition theorem for partially ordered sets.
\newblock {\em Annals of Mathematics. Second Series}, 51:161--166, 1950.

\bibitem{DBLP:series/txcs/DowneyF13}
R.~G. Downey and M.~R. Fellows.
\newblock {\em Fundamentals of Parameterized Complexity}.
\newblock Texts in Computer Science. Springer, 2013.

\bibitem{EO18}
G.~Escamocher and B.~O'Sullivan.
\newblock Three-dimensional matching instances are rich in stable matchings.
\newblock In {\em Proceedings of the 15th International Conference on
  Integration of Constraint Programming, Artificial Intelligence, and
  Operations Research, {CPAIOR} 2018}, volume 10848 of {\em Lecture Notes in
  Computer Science}, pages 182--197, 2018.

\bibitem{FG06}
J.~Flum and M.~Grohe.
\newblock {\em Parameterized Complexity Theory}.
\newblock Texts in Theoretical Computer Science. An {EATCS} Series. Springer,
  2006.

\bibitem{GaiLMMRV07}
A.~Gai, D.~Lebedev, F.~Mathieu, F.~de~Montgolfier, J.~Reynier, and L.~Viennot.
\newblock Acyclic preference systems in {P2P} networks.
\newblock In {\em Proceedings of the 13th European Conference on Parallel
  Processing, {Euro-Par} 2007}, volume 4641 of {\em LNCS}, pages 825--834.
  Springer, 2007.

\bibitem{GaleShapley1962}
D.~Gale and L.~S. Shapley.
\newblock College admissions and the stability of marriage.
\newblock {\em American Mathematical Monthly}, 69(1):9--15, 1962.

\bibitem{Gonzalez85}
T.~F. Gonzalez.
\newblock Clustering to minimize the maximum intercluster distance.
\newblock {\em Theoretical Computer Science}, 38:293--306, 1985.

\bibitem{GHN04}
J.~Guo, F.~H{\"{u}}ffner, and R.~Niedermeier.
\newblock A structural view on parameterizing problems: Distance from
  triviality.
\newblock In {\em Proceedings of the First International Workshop on
  Parameterized and Exact Computation}, pages 162--173. Springer, 2004.

\bibitem{HKMO19}
T.~Hanaka, H.~Kiya, Y.~Maei, and H.~Ono.
\newblock Computational complexity of hedonic games on sparse graphs.
\newblock In {\em Proceedings of the 22nd International Conference on
  Principles and Practice of Multi-Agent Systems, {PRIMA} 2019}, volume 11873
  of {\em Lecture Notes in Computer Science}, pages 576--584, 2019.

\bibitem{Hofbauer16}
J.~Hofbauer.
\newblock $d$-dimensional stable matching with cyclic preferences.
\newblock {\em Mathematical Social Sciences}, 82:72--76, 2016.

\bibitem{DBLP:conf/esa/Huang07}
C.~Huang.
\newblock Two's company, three's a crowd: Stable family and threesome roommates
  problems.
\newblock In {\em Proceedings of the 15th Annual European Symposium on
  Algorithms, {ESA} 2007}, volume 4698 of {\em Lecture Notes in Computer
  Science}, pages 558--569. Springer, 2007.

\bibitem{Irving85}
R.~W. Irving.
\newblock An efficient algorithm for the ``{S}table {R}oommates" problem.
\newblock {\em Journal of Algorithms}, 6(4):577--595, 1985.

\bibitem{IMS08}
R.~W. Irving, D.~Manlove, and S.~Scott.
\newblock The stable marriage problem with master preference lists.
\newblock {\em Discrete Applied Mathematics}, 156(15):2959--2977, 2008.

\bibitem{IwamaMO07}
K.~Iwama, S.~Miyazaki, and K.~Okamoto.
\newblock Stable roommates problem with triple rooms.
\newblock In {\em Proc.\ of Korea-Japan Joint Workshop on Algorithms and
  Computation, {WAAC} 2007}, pages 105--112, 2007.

\bibitem{Kamiyama19}
N.~Kamiyama.
\newblock Many-to-many stable matchings with ties, master preference lists, and
  matroid constraints.
\newblock In {\em Proceedings of the 18th International Conference on
  Autonomous Agents and MultiAgent Systems, {AAMAS} 2019}, pages 583--591.
  International Foundation for Autonomous Agents and Multiagent Systems, 2019.

\bibitem{KavithaNN14}
T.~Kavitha, M.~Nasre, and P.~Nimbhorkar.
\newblock Popularity at minimum cost.
\newblock {\em Journal of Combinatorial Optimization}, 27(3):574--596, 2014.

\bibitem{Knuth76}
D.~E. Knuth.
\newblock {\em Mariages stables et leurs relations avec d'autres probl\`emes
  combinatoires}.
\newblock Les Presses de l'Universit\'{e} de Montr\'{e}al, Montreal, Que.,
  1976.

\bibitem{Kunysz16}
A.~Kunysz.
\newblock The strongly stable roommates problem.
\newblock In {\em Proceedings of the 24th Annual European Symposium on
  Algorithms, {ESA} 2016}, volume~57 of {\em LIPIcs}, pages 60:1--60:15.
  Schloss Dagstuhl - Leibniz-Zentrum f{\"{u}}r Informatik, 2016.

\bibitem{LamP19}
C.~Lam and C.~G. Plaxton.
\newblock On the existence of three-dimensional stable matchings with cyclic
  preferences.
\newblock In {\em Proceedings of the 12th International Symposium on
  Algorithmic Game Theory, {SAGT} 2019}, volume 11801 of {\em Lecture Notes in
  Computer Science}, pages 329--342. Springer, 2019.

\bibitem{LiCYWC19}
B.~Li, Y.~Cheng, Y.~Yuan, G.~Wang, and L.~Chen.
\newblock Three-dimensional stable matching problem for spatial crowdsourcing
  platforms.
\newblock In A.~Teredesai, V.~Kumar, Y.~Li, R.~Rosales, E.~Terzi, and
  G.~Karypis, editors, {\em Proceedings of the 25th {ACM} {SIGKDD}
  International Conference on Knowledge Discovery {\&} Data Mining, {KDD} 2019,
  Anchorage, AK, USA, August 4-8, 2019}, pages 1643--1653. {ACM}, 2019.

\bibitem{ManloveIIMM02}
D.~Manlove, R.~W. Irving, K.~Iwama, S.~Miyazaki, and Y.~Morita.
\newblock Hard variants of stable marriage.
\newblock {\em Theoretical Computer Science}, 276(1-2):261--279, 2002.

\bibitem{Manlove13}
D.~F. Manlove.
\newblock {\em Algorithmics of Matching Under Preferences}, volume~2 of {\em
  Series on Theoretical Computer Science}.
\newblock WorldScientific, 2013.

\bibitem{NH91}
C.~Ng and D.~S. Hirschberg.
\newblock Three-dimensional stable matching problems.
\newblock {\em {SIAM} Journal on Discrete Mathematics}, 4(2):245--252, 1991.

\bibitem{Nie06}
R.~Niedermeier.
\newblock {\em Invitation to Fixed-Parameter Algorithms}.
\newblock Oxford University Press, 2006.

\bibitem{OMalley07}
G.~O'Malley.
\newblock {\em Algorithmic Aspects of Stable Matching Problems}.
\newblock PhD thesis, University of Glasgow, Department of Computing Sciences,
  2007.

\bibitem{OstrovskyR14}
R.~Ostrovsky and W.~Rosenbaum.
\newblock It's not easy being three: The approximability of three-dimensional
  stable matching problems, 2014.
\newblock Available on arXiv:1412.1130.

\bibitem{PerachPR08}
N.~Perach, J.~Polak, and U.~G. Rothblum.
\newblock A stable matching model with an entrance criterion applied to the
  assignment of students to dormitories at the {T}echnion.
\newblock {\em International Journal of Game Theory}, 36:519--535, 2008.

\bibitem{Pet16}
D.~Peters.
\newblock Graphical hedonic games of bounded treewidth.
\newblock In {\em Proceedings of the 30th {AAAI} Conference on Artificial
  Intelligence, {AAAI} 2016}, pages 586--593. {AAAI} Press, 2016.

\bibitem{Pietrzak03}
K.~Pietrzak.
\newblock On the parameterized complexity of the fixed alphabet shortest common
  supersequence and longest common subsequence problems.
\newblock {\em Journal of Computer and System Sciences}, 67(4):757--771, 2003.

\bibitem{Ronn90}
E.~Ronn.
\newblock {NP}-complete stable matching problems.
\newblock {\em Journal of Algorithms}, 11(2):285--304, 1990.

\bibitem{RothSU04}
A.~E. Roth, T.~Sönmez, and M.~U. Ünver.
\newblock {Kidney Exchange}.
\newblock {\em The Quarterly Journal of Economics}, 119(2):457--488, 05 2004.

\bibitem{Rothe15}
J.~Rothe, editor.
\newblock {\em Economics and Computation}.
\newblock Springer, 2015.

\bibitem{Scott05}
S.~Scott.
\newblock {\em A Study of Stable Marriage Problems with Ties}.
\newblock PhD thesis, University of Glasgow, Department of Computing Sciences,
  2005.

\bibitem{Woe13}
G.~J. Woeginger.
\newblock Core stability in hedonic coalition formation.
\newblock In {\em Proceedings of the 39th International Conference on Current
  Trends in Theory and Practice of Computer Science, {SOFSEM} 2013}, volume
  7741 of {\em LNCS}, pages 33--50. Springer, 2013.

\bibitem{Wu16}
J.~Wu.
\newblock Stable matching beyond bipartite graphs.
\newblock In {\em Proceedings of 30th {IEEE} International Parallel and
  Distributed Processing Symposium Workshops, {IPDPS} 2016}, pages 480--488.
  {IEEE} Computer Society, 2016.

\bibitem{ZhongB19}
L.~Zhong and Y.~Bai.
\newblock Three-sided stable matching problem with two of them as cooperative
  partners.
\newblock {\em Journal of Combinatorial Optimization}, 37(1):286--292, 2019.

\end{thebibliography}

\end{document}